\renewcommand{\arraystretch}{1.4} 
\newtheorem{theorem}{Theorem}
\newtheorem{lemma}[theorem]{Lemma}
\newcommand{\doublewidetilde}[1]{{%
  \mathpalette\double@widetilde{#1}%
}}
\newcommand{\double@widetilde}[2]{%
  \sbox\z@{$\m@th#1\map{#2}$}%
  \ht\z@=.9\ht\z@
  \map{\box\z@}%
}
\newcommand{\dket}[1]{\vert#1\rangle\hspace{-.8mm}\rangle}
\newcommand{\dbra}[1]{\langle\hspace{-.8mm}\langle #1\vert}
\newcommand{\dketbra}[2]{\vert #1 \rangle\!\rangle \langle\!\langle #2 \vert}
\newcommand{\map}[1]{\widetilde{#1}}
\newcommand{\smap}[1]{\doublewidetilde{{#1}_{}}}
\renewcommand{\mathbb}{\mathds}  
\newcommand{\id}{\mathbb{1}}  
\newcommand{\ChoiU}{\dketbra{U}{U}} 
\definecolor{LavenderBlue}{RGB}{95, 0, 255} 
\renewcommand{\H}{\mathcal{H}}
\renewcommand{\L}{\mathcal{L}}
\renewcommand{\P}{\textup{P}}
\newcommand{\F}{\textup{F}}
\newcommand{\M}{\textup{M}}
\newcommand{\I}{\textup{I}}
\renewcommand{\O}{\textup{O}}
\newcommand{\A}{\textup{A}}
\title{Higher-order quantum computing with known input states}
\author{Vanessa Brzi\'c}
\affiliation{Sorbonne Universit\'{e}, CNRS, LIP6, F-75005 Paris, France}
\author{Satoshi Yoshida}
\affiliation{Department of Physics, Graduate School of Science, The University of Tokyo, 7-3-1 Hongo, Bunkyo-ku, Tokyo 113-0033, Japan}
\author{Mio Murao}
\affiliation{Department of Physics, Graduate School of Science, The University of Tokyo, 7-3-1 Hongo, Bunkyo-ku, Tokyo 113-0033, Japan}
\author{Marco Túlio Quintino}
\affiliation{Sorbonne Universit\'{e}, CNRS, LIP6, F-75005 Paris, France}
\date{}
\begin{document}

\maketitle

\abstract{
In higher-order quantum computing (HOQC), one typically considers the universal transformation of unknown quantum operations, treated as blackboxes. It is also implicitly assumed that the resulting operation must act on arbitrary, and thus unknown, input states. In this work, we explore a variant of this framework in which the operation remains unknown, but the input state is fixed and known. We argue that this assumption is well-motivated in certain practical contexts, such as unitary programming, and show that classical knowledge of the input state can significantly enhance performance. We demonstrate that in the SAR protocol, this knowledge leads to an exponential advantage through a repeat-until-success strategy, highlighting the operational power of known-state higher-order transformations. Moreover, this assumption allows us to distinguish between protocols designed for pure, bipartite, and mixed states, which enables us to identify the class of mixed states for which deterministic and exact implementation becomes possible.}

\tableofcontents

\section{Introduction}

In the standard approach to quantum information, quantum states are treated as objects that evolve under quantum operations, such as channels or instruments—often referred to as quantum gates. These operations act on quantum states to produce other quantum states. In higher-order quantum computing (HOQC), by contrast, the operations themselves are treated as objects that can be transformed. Such transformations are carried out by what are known as higher-order quantum operations, which take quantum operations as inputs and return quantum operations as outputs. \cite{taranto2025review,Chiribella_2008,Chiribella_2009}. 
For example, in the case of unitary operations, a standard HOQC task is to transform an arbitrary unitary operation  \( U \) into another unitary operation \( f(U) \), where \( f:\mathrm{SU}(d)\to \mathrm{SU}(d)\) is a function mapping unitary operators to unitary operators. The HOQC framework can be used, for instance, to design quantum circuits that invert an arbitrary unitary operation $f(U)=U^{-1}$, transpose it, $f(U)=U^T$, or take its complex conjugate, $f(U)=U^*$ ~\cite{Quintino2019prob,Quintino_2022det,Ebler_2023}. Another relevant task that can be naturally analysed within the HOQO framework is unitary storage-and-retrieval (SAR) \cite{Sedl_k_2019}, also known as unitary learning \cite{Bisio_2010}, and closely related to unitary programming~\cite{Nielsen_1997}, which
corresponds to the delayed input state variant of $f(U) = U$.

In this work, we consider higher-order quantum operations in a setting where the output operation is applied to a quantum state known to the user. This classical knowledge allows the HOQO to be specifically tailored to the given input state (see Fig.~\ref{fig:comparison}), leading to significant performance gains in certain tasks. Interestingly, we also identify tasks, such as probabilistic exact conjugation, for which knowledge of the input state offers no advantage.

\begin{figure}[t!]
  \centering
  \begin{subfigure}[t!]{0.45\textwidth}
    \centering
    \includegraphics[width=\linewidth]{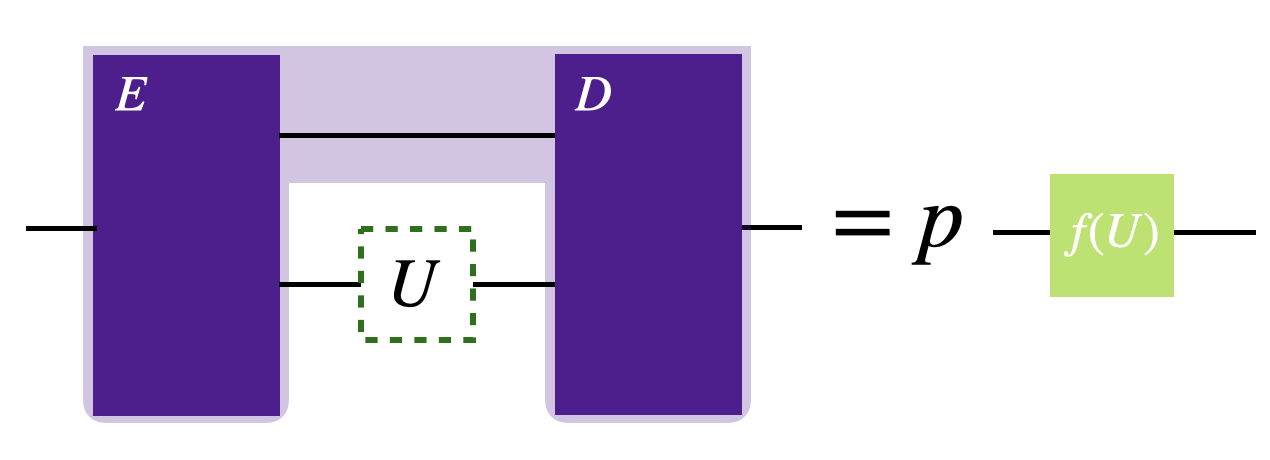}
    \captionsetup{width=0.9\textwidth}
    \caption{Standard HOQC case where the input operation and the input state are unknown. Since the input state is unknown, this scenario does not differentiate between mixed, pure, or bipartite inputs.}
  \end{subfigure}
  \quad \quad 
  \begin{subfigure}[t!]{0.45\textwidth}
    \centering
    \includegraphics[width=\linewidth]{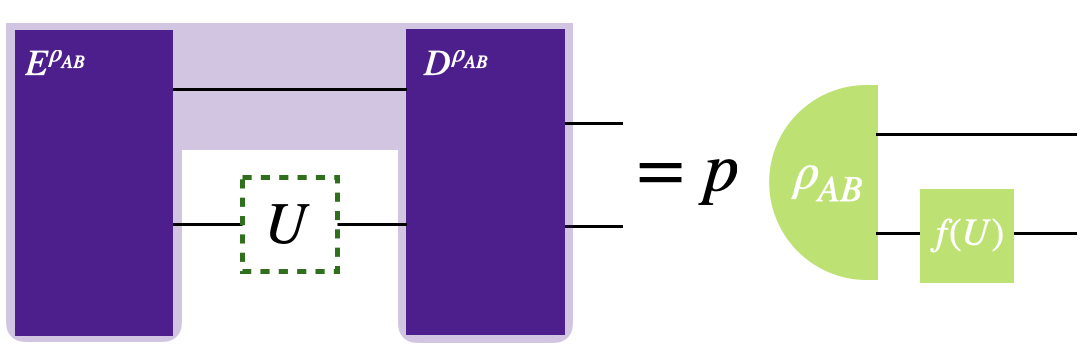}
    \captionsetup{width=0.9\textwidth}
    \caption{The scenario considered in our work assumes classical knowledge of the input state, enabling the design of the encoder and decoder to implement a transformation of interest on the known input state.}
  \end{subfigure}
\captionsetup{width=0.9\textwidth}
  \caption{A protocol implementing a unitary transformation via the action of a supermap. Elements shown in purple are optimized; dashed elements represent unknown and universal components; green indicates fixed behaviour determined by the other elements.}
  \label{fig:comparison}
\end{figure}

The assumption of a known input state is particularly well motivated in the operational setting of unitary SAR.

Namely, in the SAR task, one has access to $k$ uses of a unitary operation $U$, with the goal of storing its action into a quantum state so that it can later be retrieved (see Fig.~\ref{Fig_delay}). One may imagine, for instance, that today we have access to a powerful quantum computer implementing $U$, but not yet the input state on which we wish to perform the computation. In such a case, one can apply the $k$ calls of $U$ on one half of an entangled state $\ket{\phi_{AB}}$, which can be stored until the desired input state becomes available. At a later time, when the state $\ket{\psi}$ is ready, we perform a global operation, called the retrieving instrument, to reconstruct $U\ket{\psi}$. Previous works~\cite{Bisio_2010,Sedl_k_2019} studied the case where the retrieval operation is independent of the input state $\ket{\psi}$. However, in some scenarios it is natural to assume classical knowledge of the state $\ket{\psi}$. In this case, the retrieval operation can depend on $\ket{\psi}$ and be optimised accordingly. As we show in this work, such classical knowledge fundamentally changes the problem, yielding an exponential improvement in the performance of unitary SAR with respect to the number of calls $k$. The same also applies to unitary programming \cite{Nielsen_1997}, which becomes equivalent to SAR when the program state is itself prepared by a quantum encoder.\footnote{In unitary programming the program state is assumed to be given a priori, and in principle it may originate from any more general mechanism. In contrast, in SAR the program state is not arbitrary nor externally supplied: it is explicitly constructed by the protocol itself, via the encoder followed by the uses of the unknown unitary.}

Since the input state is now assumed to be known, the structure of the HOQO protocol depends on the nature of the state, whether it is pure known state implementation $U \mapsto f(U)\ketbra{\psi}{\psi} f(U)^{\dagger}$, mixed $U \mapsto f(U)\rho f(U)^{\dagger}$, or acting on part of a bipartite system $U \mapsto (\mathbb{1}_{A}  \otimes f(U)_{B})\ketbra{\psi}{\psi}_{AB} (\mathbb{1}_{A} \otimes f(U)^{\dagger}_{B})$, with each case requiring a different characterization and leading to different optimal strategies. For bipartite protocols, we focus on the probabilistic case, which includes, as a special case, the single-subsystem pure-state $f(U)\ketbra{\psi}{\psi} f(U)^{\dagger}$ probabilistic implementation. Our results demonstrate that, the success probability increases with the amount of entanglement. For mixed states, we prove that a deterministic and exact implementation is possible using only a single call to the input unitary, for a certain class of states of the form $\rho = \eta \ketbra{\psi}{\psi} + \frac{1-\eta}{d}\mathbb{1}$, with the particular values of visibility $\eta$, as discussed in the Thm.~\ref{thm:mix_eta}.

\begin{figure}[t!]
  \centering
  \begin{subfigure}[t]{0.45\textwidth}
    \includegraphics[width=\linewidth]{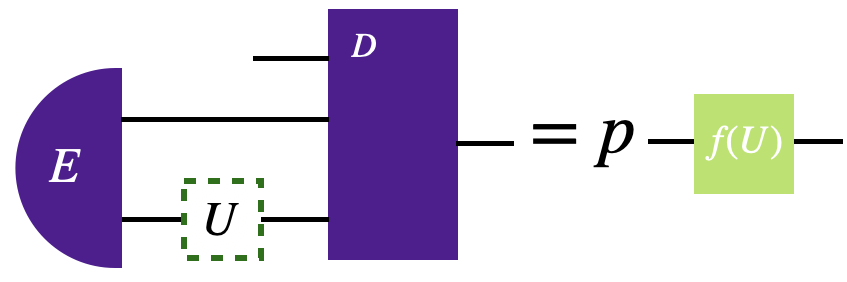}
    \captionsetup{width=0.9\textwidth}
    \caption{The standard HOQC case where the input operation and the input state are unknown and where the input state is delayed. }
  \end{subfigure}
    \quad \quad 
  \begin{subfigure}[t]{0.5\textwidth}
    \includegraphics[width=\linewidth]{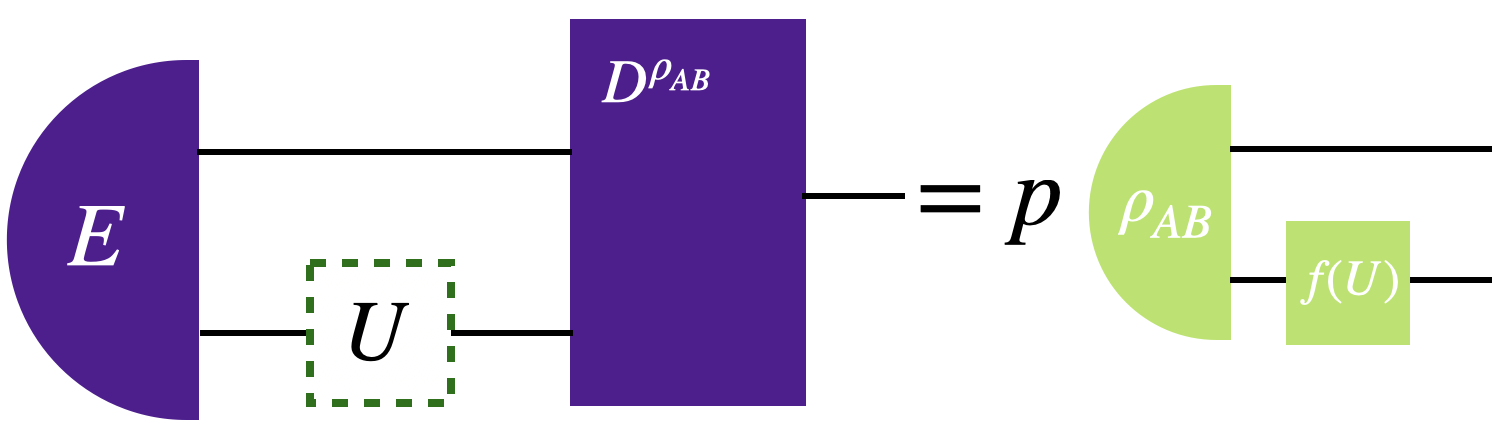}
    \captionsetup{width=0.9\textwidth}
    \caption{The scenario considered in our work assumes that the known input state is delayed, arriving after the storage phase. This enables the decoder to be designed using classical knowledge of the state, in general enhancing the probability of successful retrieval.}
    \label{fig:fig2}
  \end{subfigure}
  \captionsetup{width=0.9\textwidth}
  \caption{A protocol implementing unitary transformation via action of a supermap with delayed input state. }
  \label{Fig_delay}
\end{figure}

\section{Summary of main results}

In this section, we state the main theorems and present a summary of the main results. 

\subsection{HOQC with a single call of the input operation and known bipartite input state}

Let us consider a class of protocols that transforms an arbitrary $d$-dimensional unitary operations described by a unitary operator $U\in\mathrm{SU}(d)$ into the state $(\mathbb{1}_{\A}  \otimes f(U)_{B}  )\ket{\psi}_{AB}$,
where $f:\mathrm{SU}(d) \to \mathrm{SU}(d)$ is a function that transforms unitary operators into unitary operators and $\ket{\psi}_{AB}\in\mathcal{H}_A \otimes \mathcal{H}_B$ is a fixed and known bipartite state with $\mathcal{H}_{A}\cong \mathbb{C}^d$, and $\mathcal{H}_B$ is an arbitrary finite dimensional complex linear space. For a general case of mixed bipartite states, this protocol is depicted in Fig.~\ref{fig:comparison}. We present the optimal probabilistic protocols for a single use of the input unitary, given a pure bipartite input state.

\begin{restatable}{theorem}{BipProb}\label{thm:probab_gen_bip}
Given a single use of a $d$-dimensional unitary $U \in \mathrm{SU}(d)$, 
the optimal probabilistic exact protocol that transforms $U$ into 
$(\mathbb{1}_{\A} \otimes f(U)_{B})\ket{\psi}_{AB}$, for a known bipartite 
pure state $\ket{\psi}_{AB} \in \mathcal{H}_A \otimes \mathcal{H}_B$ and for all $U$, has the following maximal 
success probabilities:
\begin{itemize}
    \item for the unitary transposition, $f(U) = U^{T}$,
    \begin{align}
    p^{\psi_{AB}}_\textup{max,trans}=\frac{1}{d\norm{\Tr_{A}(\ketbra{\psi}{\psi}_{AB} )}_{\textup{op}}}
    \end{align} 
    \item for the unitary conjugation, $f(U) = \overline{U}$,
\begin{equation}
 p^{\psi_{AB}}_{\textup{max,conj}}  =
\begin{cases}
       1,& \textup{if } d = 2\\
    0,             &  \textup{if } d > 2
\end{cases}
\end{equation}
\item for the unitary inversion, $f(U) = U^{-1}$,
\begin{equation}
 p^{\psi_{AB}}_{\textup{max,inv}}  =
\begin{cases}
        \frac{1}{2\norm{\Tr_{A}(\ketbra{\psi}_{AB} )}_{\textup{op}}},& \textup{if } d = 2\\
    0,             &  \textup{if } d > 2
\end{cases} 
\end{equation}
\end{itemize}
where $\norm{\cdot}_\textup{op}$ denotes the operator norm~\cite{Nielsen_Chuang_2010}, corresponding to the largest eigenvalue of the operator. $\mathbb{1}$ denotes the identity operator on the corresponding Hilbert space.
\end{restatable}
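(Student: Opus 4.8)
The plan is to first reduce an arbitrary single-call probabilistic exact protocol to a canonical form. Any such protocol prepares an input state on $\mathcal{H}_\O\otimes\mathcal{H}_\M$, feeds the $\mathcal{H}_\O$ register into the call of $U$ (producing $\mathcal{H}_\F\cong\C^d$), and applies the success branch of an instrument from $\mathcal{H}_\F\otimes\mathcal{H}_\M$ to $\mathcal{H}_\A\otimes\mathcal{H}_B$. By purifying the memory the input may be taken pure, $\ket{\Phi}\in\mathcal{H}_\O\otimes\mathcal{H}_\M$, and by Stinespring dilation of the instrument the success branch may be taken as a single Kraus operator $K$ with $K^\dagger K\preceq\id$. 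The problem then becomes: maximize $p$ subject to $K(U\otimes\id_\M)\ket{\Phi}=\sqrt{p}\,(\id_\A\otimes f(U)_B)\ket{\psi}_{AB}$ for all $U\in\mathrm{SU}(d)$, where taking the norm of both sides forces the constant $p$ to be the ($U$-independent) success probability.

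Next I would exploit that the left-hand side is, entrywise, a holomorphic degree-one function of the matrix elements $U_{ij}$, hence lies in the isotypic component of $L^2(\mathrm{SU}(d))$ carrying the fundamental representation $\square$, whereas the right-hand side lies in the component carrying the representation attached to $f$. For $f(U)=U^T$ the entries are $U_{ji}$, still holomorphic, so the constraint is consistent for every $d$. For $f(U)=\overline{U}$ and $f(U)=U^{-1}=U^\dagger$ the entries are antiholomorphic and sit in the conjugate representation $\overline{\square}$; since $\square\not\cong\overline{\square}$ for $d>2$, Schur orthogonality forces both sides to vanish, giving $p=0$. For $d=2$ the identities $\overline{U}=\sigma_y U\sigma_y$ and $U^{-1}=\sigma_y U^T\sigma_y$ turn these into holomorphic tasks, so the $d=2$ cases reduce to the transposition-type analysis up to a fixed $\sigma_y$ rotation.

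For the holomorphic cases I would solve the optimization explicitly. Writing $\ket{\psi}_{AB}=(\id_\A\otimes M)\dket{\id}_{AB}$ and using the transpose trick $(\id\otimes U^T)\dket{\id}=(U\otimes\id)\dket{\id}$, the target becomes $\sqrt{p}\,(M^T U\otimes\id)\dket{\id}$; decomposing $\ket{\Phi}=\sum_i\ket{i}_\O\otimes\ket{\phi_i}_\M$ and matching the coefficients of the linearly independent functions $U_{ji}$ yields the finite constraint set $K(\ket{j}_\F\otimes\ket{\phi_i}_\M)=\sqrt{p}\,(M^T\ket{j})_\A\otimes\ket{i}_B$. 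The existence of a contraction $K$ with $K^\dagger K\preceq\id$ realizing these is equivalent to the Gram-matrix inequality $p\,(P\otimes\id)\preceq\id\otimes Q$, where $P=\overline{MM^\dagger}=\overline{\Tr_\A\ketbra{\psi}{\psi}_{AB}}$ and $Q_{i'i}=\braket{\phi_{i'}}{\phi_i}$ with $\Tr Q=1$. Since the two operators act on different tensor factors this collapses to $p\,\lambda_{\max}(P)\le\mu_{\min}(Q)$; maximizing $\mu_{\min}(Q)$ under $\Tr Q=1$ gives $Q=\id/d$ and $p_{\max}=1/(d\norm{\Tr_\A\ketbra{\psi}{\psi}_{AB}}_{\textup{op}})$, with the optimal $K$ an explicit contraction on $\spn\{\ket{j}_\F\otimes\ket{\phi_i}_\M\}$. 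The inversion case at $d=2$ is identical after absorbing the fixed $\sigma_y$ rotations, which preserve the operator norm of the reduced state; conjugation at $d=2$ is instead a genuine homomorphism (no transpose) and is implemented deterministically ($p=1$) by preparing $(\id\otimes\sigma_y)\ket{\psi}$, sending the relevant register through $U$, and undoing $\sigma_y$.

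I expect the main obstacle to be the optimality direction: establishing that the Gram-matrix criterion is both necessary and sufficient for a valid $K$, and that the resulting semidefinite condition cannot be improved by enlarging the memory or using non-extremal protocols. A clean cross-check, which I would carry out in parallel, is to cast the task as a semidefinite program for the Choi operator of the supermap and exhibit a matching dual feasible point; symmetrizing by twirling over $U$ then guarantees that the dual bound coincides with the value $1/(d\norm{\Tr_\A\ketbra{\psi}{\psi}_{AB}}_{\textup{op}})$ constructed above. The reduction to the canonical single-Kraus form and the Schur-orthogonality vanishing argument for $d>2$ are the two further steps that require care to state rigorously.
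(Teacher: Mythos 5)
Your route is genuinely different from the paper's: you work at the level of state vectors and Kraus operators and invoke Peter--Weyl/holomorphicity, whereas the paper twirls the Choi operator, uses Schur's lemma to reduce $S$ to two operator coefficients ($A_{\O\A}\otimes\ketbra{\phi^+}{\phi^+}_{\I\F}+B_{\O\A}\otimes(\id-\ketbra{\phi^+}{\phi^+})_{\I\F}$ for transposition, $A\otimes P^{S}+B\otimes P^{A}$ for conjugation/inversion), and then solves a small SDP. Your Gram-matrix criterion for the existence of a contraction $K$ is a clean alternative derivation of both achievability and the bound $p\le 1/(d\norm{\Tr_A\ketbra{\psi}{\psi}}_{\textup{op}})$, and your $d=2$ reductions via $\sigma_y$ agree with the paper.

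However, two steps you flag as ``requiring care'' are genuine gaps rather than routine polishing. First, the reduction to a single Kraus operator with a $U$-independent amplitude is not justified: the success branch may have Kraus operators $\{K_\alpha\}$, and rank-one-ness of the target only forces $K_\alpha(U\otimes\id)\ket{\Phi}=c_\alpha(U)(\id\otimes f(U))\ket{\psi}$ with $\sum_\alpha|c_\alpha(U)|^2=p$; each individual $|c_\alpha(U)|^2$ may depend on $U$, and even for a single Kraus operator one only gets $c(U)=e^{i\theta_U}\sqrt{p}$ with an arbitrary $U$-dependent phase (the paper's own Appendix~G is explicit about both points). Second, and more seriously, this phase defeats the Schur-orthogonality vanishing argument as you state it: from $v(U)=e^{i\theta_U}\sqrt{p}\,w(U)$ with $v$ holomorphic of degree one and $w$ antiholomorphic, no contradiction follows for arbitrary $\theta_U$, and once you pass to phase-insensitive quantities such as $v_a\overline{v_b}=p\,w_a\overline{w_b}$ both sides lie in the \emph{same} span $\{U_{ij}\overline{U_{kl}}\}$, so orthogonality of $\square$ and $\overline{\square}$ no longer bites. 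The paper's $d>2$ no-go instead works at the (phase-free) Choi level: positivity forces $A*P^{S}=\tfrac{1}{2}(\Tr_\O(A)\otimes\id+A^{T_\F})$ to be full rank unless $A=0$, and $\tfrac{1}{2}(\Tr_\O(B)\otimes\id-B^{T_\F})$ to have rank $>1$ unless $B=0$ when $d>2$ (using $\rho_{AB}\le d\,\rho_A\otimes\id_B$ and, for the genuinely bipartite case, the tensoring-with-an-ideal-channel rank result of Amosov--Holevo--Werner). To salvage your version you would need either this rank/positivity mechanism or the diagonal-unitary character argument with moduli (linear independence of $e^{i\vec{\gamma}\cdot\vec{\phi}}$ plus pigeonhole), which is exactly how the paper handles the $k$-call generalisation.
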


Note that, for unitary transposition and qubit inversion (and also for the unitary storage and retrieval problem, discussed later), we see that greater entanglement with an external reference 
leads to a higher attainable success probability. In the limiting cases of no entanglement of the input state, we obtain the result for a supermap acting on a single pure unentangled state, we have $\psi_{AB}=\ketbra{\psi}{\psi}_A \otimes \ketbra{\phi}{\phi}_B$, and hence
$
p^{\psi_{AB}} = \frac{1}{d}
$, where $\ket{\phi}$ is an arbitrary pure state. On the other hand, when $\psi_{AB}=\ketbra{\phi^{+}}{\phi^{+}}_{AB}$ is the maximally entangled state, we have $p^{\psi_{AB}} = 1$. 

\subsection{HOQC with a single call of the input operation and known mixed input state}

The results of the previous section, in the limiting case of no entanglement of the input state, reduce to the scenario in which a unitary operator \( U \in \mathrm{SU}(d) \) is mapped to the pure state \( f(U)\ket{\psi} \). In this section, we first present an analogous result for the deterministic approximate realisation. We then use this result to construct a protocol that deterministically and exactly implements the transformation \( f(U)\rho f(U)^{\dagger} \), given a single input unitary \( U \in \mathrm{SU}(d) \).

\begin{restatable}{theorem}{ThmPureDet}\label{thm:single_det}
Given a single use of a $d$-dimensional unitary $U \in \mathrm{SU}(d)$, 
the optimal probabilistic exact protocol that transforms $U$ into 
$f(U)\ket{\psi}$, for a known pure state $\ket{\psi}\in\mathbb{C}^d$ and for all $U$ $\in \mathrm{SU}(d)$, 
has the following maximal success probabilities:
\begin{itemize}
    \item the transposition operation $U^{T}$ can be implemented, attaining optimal average fidelity 
    \begin{equation}
    \expval{F^{\psi}}_{\textup{trans}} = \frac{2d + 1}{d^2 + d}
 \end{equation}
\item the conjugation operation $\overline{U}$  can be implemented, attaining optimal average fidelity 
    \begin{equation}
 \expval{F^{\psi}}_{\textup{conj}}  =
\begin{cases}
       1,& \textup{if } d = 2\\
    \frac{2}{d + 1},              &  \textup{if } d >  2 
\end{cases} 
\end{equation}
\item the inversion operation $U^{-1}$ can be implemented, attaining optimal average fidelity 
\begin{equation}
 \expval{F^{\psi}}_{\textup{inv}}  =
\frac{d+3}{d(d+1)} 
\end{equation}
\end{itemize}
\end{restatable}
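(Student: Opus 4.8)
The plan is to obtain each value as the optimum of a semidefinite program over single-slot quantum combs, collapsed by the symmetry of the task. First I would remove the dependence on $\ket{\psi}$. For transposition, $U^{T}V=(V^{T}U)^{T}$, so running an optimal protocol for $\ket{\psi}$ on the effective call $V^{T}U$ (realised by inserting the fixed gate $V^{T}$ after the single use of $U$) produces $U^{T}(V\ket{\psi})$ with the same Haar-averaged fidelity; the analogous identities $\overline{U}V=\overline{U\overline{V}}$ and $U^{-1}V=(V^{\dagger}U)^{-1}$ handle conjugation and inversion. Hence the optimal average fidelity is independent of the known state, and I may fix $\ket{\psi}=\ket{0}$ throughout.

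Next I would write the most general deterministic one-call protocol as a single-slot comb: prepare a purified state across the input register $1$ of $U$ and a memory, apply $U$, and send the output register $2$ together with the memory through a decoding channel onto the output register. Its Choi operator $C\ge 0$ on $\H_{1}\otimes\H_{2}\otimes\H_{\text{out}}$ obeys the deterministic constraints $\Tr_{\text{out}}C=\id_{2}\otimes\rho_{1}$ with $\Tr\rho_{1}=1$, and the output for a call $U$ is the link product of $C$ with $\dketbra{U}{U}_{12}$. Because the figure of merit is linear in $\rho_{\text{out}}(U)$, averaging over the Haar measure gives $\expval{F^{\psi}}=\Tr[\,C\,\Omega\,]$ with the performance operator $\Omega=\int dU\,\big(\dketbra{U}{U}\big)_{12}^{T_{12}}\otimes\big(f(U)\ketbra{\psi}{\psi}f(U)^{\dagger}\big)_{\text{out}}$, so the problem becomes: maximise $\Tr[C\Omega]$ subject to the comb constraints, a semidefinite program.

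The computational heart is the evaluation of $\Omega$. In each of the three cases the integrand carries exactly two tensor factors transforming as $U$ and two as $\overline{U}$ (one pair from $\dketbra{U}{U}$ and one from the output, since $U^{T}$ transforms like $U$ and $U^{\dagger}$ like $\overline{U}$), so $\Omega$ is assembled from $\int dU\,U^{\otimes 2}\otimes\overline{U}^{\otimes 2}$ and the three transformations differ only in how the output register is wired into this integral. As this integral is supported on the commutant of $U^{\otimes 2}\otimes\overline{U}^{\otimes 2}$, which is spanned by a fixed, $d$-independent set of operators (those of the walled Brauer algebra $B_{2,2}$), I would invoke the full left/right $\mathrm{SU}(d)$-covariance of the task -- the fixed gates placed before and after the single call -- to twirl $C$ into this commutant without changing either the objective or feasibility. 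The program then reduces to an optimisation over a bounded number of scalars; solving it analytically and exhibiting a matching dual-feasible operator yields $\tfrac{2d+1}{d^{2}+d}$, $\tfrac{2}{d+1}$ and $\tfrac{d+3}{d(d+1)}$.

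Finally I would treat the degenerate qubit case separately. For $d=2$ the self-duality $\overline{U}=\sigma_{y}U\sigma_{y}$ of $\mathrm{SU}(2)$ gives an exact, unit-fidelity conjugation protocol (prepare $\sigma_{y}\ket{\psi}$, call $U$, apply $\sigma_{y}$), which the generic formula $\tfrac{2}{d+1}$ cannot see because the relevant invariant subspace degenerates at $d=2$; the same identity yields $U^{-1}=\sigma_{y}U^{T}\sigma_{y}$, explaining why transposition and inversion share the value $5/6$ there. I expect the main obstacle to be the bookkeeping of the Haar integral and the index contractions required to express $\Omega$ in a basis of the commutant, together with constructing the dual certificate that promotes achievability to optimality; the covariance reduction and the $d=2$ arguments are comparatively routine.
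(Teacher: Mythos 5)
Your overall skeleton --- reduce to $\ket{\psi}=\ket{0}$ by composing with fixed gates, write the one-call protocol as a single-slot comb, express the average fidelity as $\Tr[C\Omega]$ with a Haar-averaged performance operator, collapse the SDP by symmetry to a few scalars, and treat $d=2$ separately via $\overline{U}=\sigma_y U\sigma_y$ --- is exactly the paper's strategy, and your state-independence reductions and qubit arguments are correct. The paper evaluates $\Omega^{\psi}$ more cheaply than you propose, by linking the known-state performance operator to the already-computed unknown-state one, $\Omega^{\psi}_{\I\O\F}=\ketbra{\psi}{\psi}_{\P}*\Omega_{\P\I\O\F}$, rather than redoing a degree-$(2,2)$ Weingarten integral; but your route would reach the same operator.

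The genuine gap is in the symmetry reduction. The known-input-state task is \emph{not} covariant under the full left/right action $U\mapsto AUB$ with $A,B\in\mathrm{SU}(d)$: for transposition, a protocol run on the effective call $AU$ outputs $U^{T}A^{T}\ket{\psi}$, which can be post-corrected to $U^{T}\ket{\psi}$ by a fixed gate only when $A^{T}\ket{\psi}=\ket{\psi}$. So one side of the call carries the full $\mathrm{SU}(d)$ covariance (pairing register $\I$ with $\F$ and giving the two-dimensional commutant $\{\ketbra{\phi^{+}}{\phi^{+}}_{\I\F},\,\id_{\I\F}-\ketbra{\phi^{+}}{\phi^{+}}_{\I\F}\}$), while the other side --- the register $\O$ in the transposition case --- is covariant only under the stabilizer of $\ket{\psi}$. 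This is precisely the distinction the paper draws between its ``input-state-independent'' and ``input-state-dependent'' covariances, and it is load-bearing: if you twirl $C$ over the full group on the $\O$ register, as ``full left/right $\mathrm{SU}(d)$-covariance'' would entail, the $\O$-coefficient is forced proportional to $\id_{\O}$, the distinguished direction $\ketbra{\psi}{\psi}_{\O}$ in the optimal comb is erased, and the objective collapses to $1/d$ instead of $\tfrac{2d+1}{d^{2}+d}$. (Relatedly, the commutant you should land in lives on the three registers $\I\O\F$ and is a product of a two-dimensional algebra on $\I\F$ with $\mathrm{span}\{\ketbra{\psi}{\psi}_{\O},\id_{\O}-\ketbra{\psi}{\psi}_{\O}\}$, not the walled Brauer algebra $B_{2,2}$ on four registers, which is the structure of the \emph{unknown}-input-state problem.) Once the covariance group is corrected, your reduction to finitely many scalars and the ensuing optimisation go through and reproduce the stated values, so the defect is repairable --- but as written the twirl you invoke is not a symmetry of the objective and the step fails.
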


The results presented here exhibit a quadratic improvement compared to the case of an unknown input state~\cite{ChiribellaEbler_2016,Quintino_2022det}. More specifically, when the input state is unknown, we have that: $\expval{F}_{\textup{trans}} = \frac{1}{d^2}$, $\expval{F}_{\textup{conj}} = \frac{2}{d(d-1)}$ and $\expval{F}_{\textup{inv}} = \frac{1}{d^2}$. Also, we remark that, as detailed in Sec.~\ref{sec:tasks_applications}, without loss in performance, we can always restrict our analysis to covariant protocols, hence, for the tasks presented in Thm.~\ref{thm:single_det}, the optimal average fidelity coincides with the worst-case fidelity. 

As we will discuss in detail in Sec.~\ref{sec:mixed}, we make use of the optimal average fidelity for the case of the pure input state presented above to prove the following theorem, 

\begin{theorem}\label{thm:mix_eta}
Let $\rho$ be a known state of the form $\rho = \eta \ketbra{\psi}{\psi} + \frac{1-\eta}{d}\mathbb{1}$, where $\mathbb{1}$ denotes the identity operator on the corresponding Hilbert space. 

There exists a deterministic and exact transformation from single copy of any $U\in\mathrm{SU}(d)$ to $f(U)\rho f(U)^{\dagger}$ if and only if $\eta \leq \frac{d\expval{F^{\psi}} - 1}{d-1}$, where $\expval{F^{\psi}}$ is optimal average fidelity of the known pure state protocol. Then the largest value of visibilities for which the perfect implementation is still possible are:
 \begin{itemize}
        \item transposition $\eta^{*} = \frac{d}{d^2 - 1}$
        \item conjugation $\eta^{*} = \begin{cases}
       1,& \textup{if } d = 2\\
    \frac{1}{d + 1}              &  \textup{if } d >  2 
\end{cases} $
        \item inversion $\eta^{*} = \frac{2}{d^2 - 1}$. 
    \end{itemize}
\end{theorem}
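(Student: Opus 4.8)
The plan is to reduce the mixed-state problem to the pure-state fidelity result of Theorem~\ref{thm:single_det} through two elementary observations. First, since $f(U)$ is unitary we have $f(U)\mathbb{1}f(U)^{\dagger}=\mathbb{1}$, so the target decomposes as
\begin{equation}
f(U)\rho f(U)^{\dagger}=\eta\, f(U)\ketbra{\psi}{\psi}f(U)^{\dagger}+\frac{1-\eta}{d}\mathbb{1},
\end{equation}
and the overlap of this target with the ideal pure output is independent of $U$:
\begin{equation}
\bra{\psi}f(U)^{\dagger}\big(f(U)\rho f(U)^{\dagger}\big)f(U)\ket{\psi}=\bra{\psi}\rho\ket{\psi}=\frac{(d-1)\eta+1}{d}.
\end{equation}

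For the \emph{only if} direction I would argue by optimality. Suppose a single-call deterministic protocol outputs $f(U)\rho f(U)^{\dagger}$ exactly for every $U$. Viewing this same protocol as an (in general suboptimal) realisation of the pure-state task $U\mapsto f(U)\ket{\psi}$, its fidelity with the ideal output equals $\bra{\psi}\rho\ket{\psi}=\frac{(d-1)\eta+1}{d}$ for every $U$, hence its average fidelity is also this value. Since $\expval{F^{\psi}}$ is by definition the maximal average fidelity achievable with a single call (Theorem~\ref{thm:single_det}), we must have $\frac{(d-1)\eta+1}{d}\le \expval{F^{\psi}}$, which rearranges to $\eta\le\frac{d\expval{F^{\psi}}-1}{d-1}$.

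For the \emph{if} direction I would construct the protocol explicitly. Start from the optimal covariant pure-state protocol of Theorem~\ref{thm:single_det}. Using the covariance argument already invoked there—symmetrising over the group acting on the unknown $U$ together with the stabiliser $\{V\in\mathrm{SU}(d):V\ket{\psi}=\ket{\psi}\}$ of the known state—the optimal output can be taken isotropic about $\ket{\psi}$, namely an operator commuting with that stabiliser and conjugated by $f(U)$:
\begin{equation}
f(U)\big[a\,\ketbra{\psi}{\psi}+b\,(\mathbb{1}-\ketbra{\psi}{\psi})\big]f(U)^{\dagger},\qquad a=\expval{F^{\psi}},\quad b=\frac{1-\expval{F^{\psi}}}{d-1},
\end{equation}
where $a$ is the overlap with $\ket{\psi}$ fixed by the fidelity and $b$ is fixed by normalisation. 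A direct computation shows that the bracketed operator equals $\eta^{*}\ketbra{\psi}{\psi}+\frac{1-\eta^{*}}{d}\mathbb{1}$ with $\eta^{*}=\frac{d\expval{F^{\psi}}-1}{d-1}$, so this protocol realises $f(U)\rho f(U)^{\dagger}$ exactly at $\eta=\eta^{*}$. For any $\eta<\eta^{*}$, postcomposing with a depolarising channel $X\mapsto q\,X+(1-q)\frac{\Tr(X)}{d}\mathbb{1}$ commutes with conjugation by $f(U)$ and rescales $\eta^{*}\mapsto q\,\eta^{*}$, covering all values in $[0,\eta^{*}]$ and completing the construction. The explicit $\eta^{*}$ for transposition, conjugation and inversion then follow by substituting the three fidelities of Theorem~\ref{thm:single_det} into $\eta^{*}=\frac{d\expval{F^{\psi}}-1}{d-1}$.

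The main obstacle is the exactness claim in the \emph{if} direction: one must show that the optimal pure-state protocol can be chosen so that its output is \emph{precisely} the isotropic state above, not merely a state with the correct overlap $\expval{F^{\psi}}$. This is where the stabiliser-twirling step is essential, and care is needed to verify that this symmetrisation is implementable within the single-call supermap class without degrading the fidelity. The remaining steps are routine algebra.
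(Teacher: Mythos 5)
Your proposal is correct and follows essentially the same route as the paper: the key identity $\bra{\psi}\rho\ket{\psi}=\eta+\tfrac{1-\eta}{d}$ together with the observation that the optimal covariant pure-state protocol outputs exactly the isotropic state $f(U)\big[\eta^{*}\ketbra{\psi}{\psi}+\tfrac{1-\eta^{*}}{d}\mathbb{1}\big]f(U)^{\dagger}$, which the paper packages as Lemma~\ref{lemma:mix_vis} (white-noise visibility) by importing the corresponding structural result for universal supermaps and linking with the known state. Your explicit spelling-out of the ``only if'' direction (reading the exact mixed-state protocol as a pure-state protocol with $U$-independent fidelity $\bra{\psi}\rho\ket{\psi}$) and the depolarising post-composition covering $\eta<\eta^{*}$ are slightly more self-contained than the paper's appeal to the figure-of-merit equivalence of App.~\ref{sec:proofs_fig_mer}, but they are the same argument in substance.
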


\begin{figure}[t!]
    \centering
\includegraphics[width=0.68\textwidth]{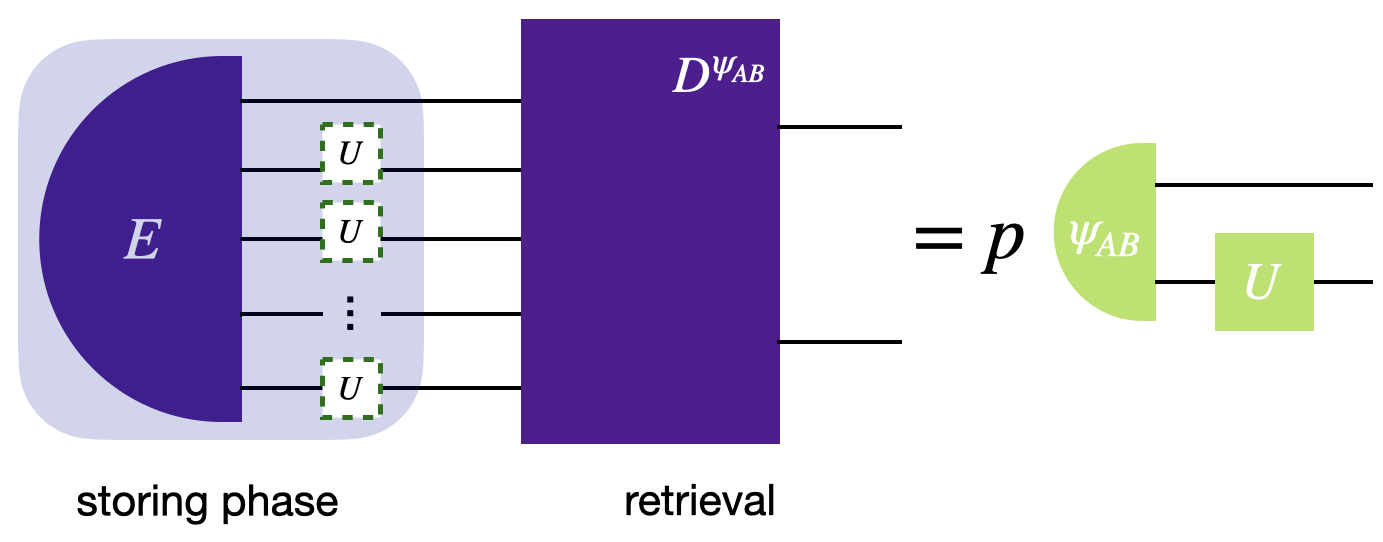}
    \caption{ Storage-and-retrieval protocol using $k$ parallel queries to the unknown unitary $U$, and known input state $\rho$. Note that during the storage phase, the input state $\rho$ is not known, and hence, the encoder operator $E$ cannot depend on $\rho$ (otherwise, the problem would be trivial, as we could simply take $E^\rho = \rho$). However, differently from the standard SAR protocol~\cite{Sedl_k_2019,Bisio_2010}, in the retrieval step, the information of the input state $\rho$ is available, hence the decoder operation $D_\rho$ has an explicit dependence on $\rho$. }
    \label{fig:SAR_k}
\end{figure}

\subsection{Exponential advantage in known state SAR protocol with multiple calls }

The storage-and-retrieval protocol (SAR)~\cite{Sedl_k_2019}, also known as unitary learning~\cite{Bisio_2010}, protocol consists of two stages, as depicted in Fig.~\ref{fig:SAR_k}, and falls within the class of delayed input state protocols (see Fig.~\ref{Fig_delay}). In the \emph{storage phase}, while \(U\) is accessible, its action is encoded into a quantum memory state, typically entangled with an ancillary system. In the \emph{retrieval phase}, once the target state \(\rho\) becomes available and access to unitary \(U\) is no longer possible, a quantum channel \(\map{R}\) is applied to the stored memory and \(\rho\), with the goal of achieving \(U\rho U^\dagger\). The SAR protocol is closely related to the problem of programming quantum gates~\cite{Nielsen_1997}, and can be viewed as a special case in which the program state is prepared by sending part of an entangled state to the operation one desired to store. Consequently, by the no-programming theorem, it is not possible to construct deterministic and exact SAR transformation that works for $\forall U \in \mathrm{SU}(d)$. Hence, the goal is to approximate, or to obtain the exact state $U\rho U^\dagger$ with some probability $p$.

The assumption of a known input state is particularly natural in the SAR setting. In practice, the user typically knows the state on which they wish to retrieve the unitary at the time of retrieval. The storage phase can therefore be viewed as fully universal and independent of the input, whereas the retrieving operation may exploit classical information that is available in the retrieval phase. Operationally, one may think of the storage part of the protocol as prepared on the sender’s side and inaccessible to the user, while the retrieval part is tailored to the specific known input state of interest to the user. This scenario is reminiscent of delegated quantum computation, where a user relies on an external provider to prepare resources while retaining control over the final computation applied to their chosen input. In both cases, part of the process is carried out by an external party, while the user only interacts at the ``end'' on a specific known input.

The distinction between the standard SAR protocol and its known state variant closely parallels the difference between quantum teleportation, which operates on an arbitrary, unknown state, and remote state preparation, where the state to be transmitted is known to the sender. As discussed in the manuscript, we show that SAR protocol in the known state setting, as in the unknown state case, is equivalent to implementing the transposition operation \cite{Quintino2019prob}. 

The results for SAR apply not only to unitaries but also to general completely positive trace preserving (CPTP) maps, i.e., quantum channels, as we will explain further in Sec.~\ref{sec:SAR}. Accordingly, we state the following theorem.

\begin{restatable}{theorem}{Single_U_SAR}\label{thm:SAR_all}
Given a single call of an arbitrary quantum channel 
$\map{C}: \mathcal{L}(\mathbb{C}^{d_{\I}}) \to \mathcal{L}(\mathbb{C}^{d_{\O}})$, the probabilistic exact storage-and-retrieval (SAR) protocol with known input state can be implemented with success probability
    \begin{itemize}
        \item probabilistic exact SAR protocol retrieving on a known pure input state $\ket{\psi}\in \mathbb{C}^{d_{\I}}$ can be implemented with a maximal success probability,
\begin{equation}
    p^{\psi}_{\textup{max,SAR}} = \frac{1}{d} 
\end{equation}
while deterministic approximate SAR protocol retrieving on a known pure input state $\ket{\psi}\in \mathbb{C}^{d_{\I}}$ can be implemented with optimal average fidelity  \begin{equation}
    \expval{F^{\psi}}_{\textup{SAR}} = \frac{2d + 1}{d^2 + d}.
\end{equation}

\item Probabilistic SAR on a part of a known bipartite pure state $\ket{\psi}_{AB}\in \mathbb{C}^{d_{\I}}\otimes \mathbb{C}^{d_{\A}}$, can be implemented with the maximal success probability,
    \begin{align}
    p^{\psi_{AB}}_{\textup{max,SAR}}=\frac{1}{d\norm{\Tr_{A}(\ketbra{\psi}{\psi}_{AB} )}_{\textup{op}}}
    \end{align}. 
\item For a known mixed state of the form $\rho = \eta \ketbra{\psi}{\psi} + \frac{1-\eta}{d}\mathbb{1}\in \mathcal{L}(\mathbb{C}^{d_{\I}})$, the SAR task can be implemented deterministically and exactly, if and only if
\begin{equation}
    \eta\leq \frac{d}{d^2 - 1}.
\end{equation}
    \end{itemize}
\end{restatable}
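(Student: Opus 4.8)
The plan is to reduce the single-call, known-state SAR task to the single-call unitary transposition state-task already solved above, and then to read off the three quantities directly from Theorems~\ref{thm:probab_gen_bip}, \ref{thm:single_det}, and \ref{thm:mix_eta}. The conceptual engine is the transpose trick. Since the storage phase must be universal and cannot depend on $\rho$, the covariance reduction of Sec.~\ref{sec:tasks_applications} lets me take it (without loss of optimality) to apply the single use of $U$ to one half of a maximally entangled state $\dket{\id}=\sum_i\ket{ii}$, producing the Choi state $\dket{U}=(\id\otimes U)\dket{\id}=(U^{T}\otimes\id)\dket{\id}$. The right-hand form is exactly the output $(f(U)\otimes\id)\dket{\id}$ of the transposition task with $f(U)=U^{T}$ evaluated on the maximally entangled input; thus storing $U$ and implementing $U^{T}$ on a fixed state are the same operation, which is the precise sense in which SAR is equivalent to transposition.

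Having identified the memory with the transposition output, retrieval of $U\rho U^{\dagger}$ from $\dket{U}$ and the known $\rho$ is a gate-teleportation step: for pure $\rho=\ketbra{\psi}{\psi}$ I would Bell-measure the wire carrying $\ket{\psi}$ against the appropriate half of $\dket{U}$, which conditioned on the $\dket{\id}$ outcome leaves $U\ket{\psi}$ on the remaining wire, succeeding with probability $1/d$. This yields achievability, $p^{\psi}_{\textup{max,SAR}}\ge 1/d$, and more generally maps any transposition protocol to a SAR protocol of equal figure of merit. For the converse I would run the same Choi identification in reverse, so that any known-state SAR protocol defines a transposition protocol with the same success probability, fidelity, and visibility threshold; optimality of the transposition values then forbids SAR from doing better, and the two reductions together force equality. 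The extension from unitaries to an arbitrary CPTP map $\map{C}$ is immediate: universal storage produces the Choi operator of $\map{C}$, the teleportation retrieval outputs $\map{C}(\rho)$ exactly, and the relevant dimension is the input dimension $d_{\I}$, so the same expressions hold verbatim.

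With the equivalence in hand, each bullet follows by specialization. The pure-state case is the unentangled limit of Theorem~\ref{thm:probab_gen_bip}, giving $p^{\psi}_{\textup{max,SAR}}=1/d$, while its deterministic approximate counterpart is the transposition fidelity of Theorem~\ref{thm:single_det}, $\expval{F^{\psi}}_{\textup{SAR}}=\frac{2d+1}{d^2+d}$. The bipartite case is the bipartite transposition probability of Theorem~\ref{thm:probab_gen_bip}, $\frac{1}{d\norm{\Tr_{A}(\ketbra{\psi}{\psi}_{AB})}_{\textup{op}}}$, whose no-entanglement limit recovers $1/d$ consistently. Finally, the mixed-state case is exactly the visibility threshold of Theorem~\ref{thm:mix_eta} for $f(U)=U^{T}$, namely $\eta\le\eta^{*}=\frac{d}{d^2-1}$.

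The main obstacle is making the equivalence tight in both directions rather than merely exhibiting an achievable protocol. Achievability is the easy direction via the gate-teleportation retrieval above; the real work is the converse, showing that the best universal storage combined with a state-dependent decoder $D_\rho$ cannot outperform transposition. Here I would lean on the covariance reduction to restrict the storage to a covariant, Choi-type form, and then argue carefully that any resulting SAR protocol induces a transposition protocol of identical merit, so that the optimal transposition values of Theorems~\ref{thm:probab_gen_bip}--\ref{thm:mix_eta} are genuine upper bounds for SAR and are not loosened by the symmetrization. A secondary technical point is bookkeeping the conjugates and transposes in the Bell-measurement step and in the bipartite generalization, so that the retrieved output is genuinely $U\rho U^{\dagger}$ (respectively $(\id_{\A}\otimes U_{B})\ket{\psi}_{AB}$) rather than a transposed variant, and verifying that the channel generalization preserves exactness of the retrieved state.
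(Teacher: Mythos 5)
Your overall route---reducing known-state SAR to known-state unitary transposition via $(\id\otimes U)\dket{\id}=(U^{T}\otimes\id)\dket{\id}$, handling the converse through the covariance constraint on the necessarily state-independent encoder (which forces it to be maximally entangled), and then reading the three quantities off Theorems~\ref{thm:probab_gen_bip}, \ref{thm:single_det} and \ref{thm:mix_eta}---is exactly the paper's strategy (Thm.~\ref{thm:SAR_transp_Eq} and Sec.~\ref{sec:SAR}), including the observation that the channel case follows because the optimal protocol never uses unitarity while unitary channels are a subset of all channels.

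However, your achievability step contains a concrete error. The retrieval you describe---``Bell-measure the wire carrying $\ket{\psi}$ against the appropriate half of $\dket{U}$, conditioned on the $\dket{\id}$ outcome''---is gate teleportation through a maximally entangled resource, and the $\dket{\id}$ outcome occurs with probability $1/d^{2}$, not $1/d$; for an unknown $U$ no other Bell outcome admits a correction, so this protocol only reproduces the \emph{unknown}-state value $1/d^{2}$. It also presupposes a physical register prepared in $\ket{\psi}$, which is beside the point in the known-state setting, where $\ket{\psi}$ enters only as classical data parametrizing the decoder. The correct decoder, and the entire source of the quadratic improvement, is a \emph{single-system} rank-one measurement on one half of the stored state $(U\otimes\id)\ket{\phi^{+}}$ with the state-dependent POVM element $\ketbra{\overline{\psi}}{\overline{\psi}}$ (equivalently, the Choi operator $D=\ketbra{\psi}{\psi}_{\O}\otimes\dketbra{\id}{\id}_{\M\F}$ of Thm.~\ref{thm:trans_single}); this projects the remaining half onto $U\ket{\psi}/\sqrt{d}$ and succeeds with probability $1/d$. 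With that substitution, the rest of your argument---the two-way equivalence forcing equality with the optimal transposition values, the bipartite filter protocol, and the mixed-state visibility threshold---goes through as in the paper.
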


From this, we observe that pure input state SAR offers a quadratic improvement, both in the deterministic and probabilistic settings, compared to the single-query SAR with an unknown input state. More specifically, when the input state is unknown, we have that $p_{\textup{max,SAR}} = \frac{1}{d^2} $ and $\expval{F^{\psi}}_{\textup{SAR}}=\frac{2}{d^2}$.

In contrast, the performance in the bipartite and mixed state cases depends on particular
properties of the states: for bipartite inputs, it scales with the amount of entanglement, while for mixed inputs, it is dependent on the level of noise. These results follow from the corresponding results for the supermap realising transposition transformation, as we will discuss in details in Sec.~\ref{sec:SAR}.

In most general setting, we can also consider the case where in the storing phase, we have access to $k$ calls of $d-$dimensional unitary, as depicted in Fig.~\ref{fig:SAR_k}. In this case, we construct a repeat-until-success strategy~\cite{Dong_2021} that not only achieves an exponential improvement in success probability, but also reduces the overall resource requirements, compared to the unknown-input state case, which can be summarized in the following theorem:
\begin{restatable}{theorem}{SARk}\label{thm:RUS}
   Given $k$ uses of an arbitrary quantum channel $\map{C}$, that is, a completely positive trace-preserving (CPTP) linear map $\map{C}: \mathcal{L}(\mathbb{C}^{d_{\I}}) \rightarrow \mathcal{L}(\mathbb{C}^{d_{\O}})$, a probabilistic exact storage-and-retrieval (SAR) protocol with a known pure input state $\ket{\psi}\in\mathbb{C}^d$ can be implemented with success probability 
\begin{equation}\label{p_SAR_k*}
    p^{\psi}_{\textup{SAR}} \geq 1 - \left(1-\frac{1}{d}\right)^{k} .
\end{equation}
Furthermore, an SAR protocol that retrieves the action of an arbitrary channel $\map{C}: \mathcal{L}(\mathbb{C}^{d_{\I}}) \rightarrow \mathcal{L}(\mathbb{C}^{d_{\O}})$ on part of a known bipartite state $(\map{C} \otimes \map{\mathbb{1}})\ketbra{\psi}{\psi}_{AB}$, using $k$ calls of $\map{C}$ in the storing phase, achieves a success probability of
\begin{equation}\label{p_SAR_kintr}
    p^{\psi_{AB}}_{\textup{SAR}} \geq 1 - \left(1-\frac{1}{d\norm{\Tr_{A}(\ketbra{\psi}{\psi}_{AB} )}_{\textup{op}}}\right)^{k}.
\end{equation}
\end{restatable}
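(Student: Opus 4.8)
The plan is to derive Theorem~\ref{thm:RUS} from the one-shot result of Theorem~\ref{thm:SAR_all} by promoting the single-call protocol into a repeat-until-success (RUS) scheme in the spirit of~\cite{Dong_2021}. Theorem~\ref{thm:SAR_all} supplies the building block: a single stored use of $\map{C}$ realises an exact retrieval on the known pure state $\ket{\psi}$ with \emph{heralded} success probability $p_1 = 1/d$, and on part of a known bipartite state with $p_1 = \frac{1}{d\norm{\Tr_{A}(\ketbra{\psi}{\psi}_{AB})}_{\textup{op}}}$. The $k$-call protocol of Fig.~\ref{fig:SAR_k} consumes all $k$ calls of $\map{C}$ during the storage phase, encoding $k$ \emph{independently} stored blocks into the quantum memory; the RUS logic is then carried out entirely in the retrieval phase, once $\ket{\psi}$ (respectively $\ket{\psi}_{AB}$) becomes available. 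This is the conceptual difference from the naive ``use the next copy of $U$'' RUS: here the resource is the stored memory, and the repetition happens on the retrieving instrument.

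The essential ingredient I would establish is that the one-shot retrieval is a \emph{success-or-draw} instrument: a heralded operation whose success branch outputs the exact target $\map{C}(\ketbra{\psi}{\psi})$, while its failure (``draw'') branch returns the program state $\ket{\psi}$ undisturbed, up to a known, reversible correction determined by the heralding outcome. Concretely, I would inspect the Choi/Kraus decomposition of the optimal one-shot retrieving instrument constructed in the proof of Theorem~\ref{thm:SAR_all}, and exploit the fact that known-state SAR is equivalent to probabilistic exact transposition~\cite{Quintino2019prob}. Using that equivalence, I would show the failure Kraus operator acts as a scalar multiple of an isometry on the relevant support, so that applying its inverse deterministically restores $\ket{\psi}$ and leaves the program state available for a fresh attempt.

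Given this property, I would chain the $k$ stored blocks: attempt retrieval with block $1$; on heralded success output $\map{C}(\ketbra{\psi}{\psi})$ and halt; on a draw, undo the known correction to recover $\ket{\psi}$ and feed it into block $2$, and so on up to block $k$. Because the blocks are prepared independently in the storage phase and the draw branch restores the program state exactly, the $k$ attempts are mutually independent, each succeeding with probability $p_1$. Hence the probability that all $k$ fail is $(1-p_1)^k$, which yields the claimed bound $p^{\psi}_{\textup{SAR}} \geq 1 - (1-\tfrac{1}{d})^{k}$ in the pure case, and, by the identical argument with the bipartite one-shot value substituted for $p_1$, $p^{\psi_{AB}}_{\textup{SAR}} \geq 1 - (1-\frac{1}{d\norm{\Tr_{A}(\ketbra{\psi}{\psi}_{AB})}_{\textup{op}}})^{k}$. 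The inequality rather than equality reflects that this is one achievable strategy, leaving room for protocols that correlate the stored blocks.

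The step I expect to be the main obstacle is the second one: verifying the success-or-draw structure, i.e.\ that the failure branch of the optimal one-shot instrument is genuinely correctable and returns $\ket{\psi}$ intact rather than an irreversibly damaged state. This is where the specific algebraic form of the transposition-type protocol matters, since transposition does not commute cleanly with generic corrections; I would therefore need to check that the failure operator, restricted to the subspace on which $\ket{\psi}$ resides after the stored $\map{C}$ has acted, is proportional to a unitary, so that its deterministic inversion is well defined and the independence of successive attempts is rigorously justified.
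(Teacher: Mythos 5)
Your overall skeleton --- store $k$ independent single-call blocks, chain heralded attempts, conclude $p \geq 1-(1-p_1)^k$ with $p_1$ taken from Thm.~\ref{thm:SAR_all} --- is exactly the paper's repeat-until-success construction (Sec.~\ref{sec:RUS} and Sec.~\ref{sec:SAR_k}). The problem is the step you yourself flag as the crux: establishing a ``success-or-draw'' structure in which the failure branch returns $\ket{\psi}$ undisturbed up to a known reversible correction. That step is both unnecessary and, as stated, false. In the optimal one-shot protocol the program state never enters as a physical system that could be damaged: each stored block is $(\map{C}\otimes\map{\id})\ketbra{\phi^{+}}{\phi^{+}}$, and the retrieval is a local measurement whose POVM elements $\{\ketbra{\psi}{\psi}^{T},\,\id-\ketbra{\psi}{\psi}^{T}\}$ are parametrised only by the \emph{classical} description of $\ket{\psi}$. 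Upon failure the output register is left in the $U$-dependent state $U^{T}(\id-\ketbra{\psi}{\psi})\overline{U}/(d-1)$; no correction ``determined by the heralding outcome'' (hence independent of $U$) can map this back to $\ketbra{\psi}{\psi}$ for all $U$, and the failure Kraus operator is not proportional to an isometry whose inverse you could apply. Indeed, the impossibility of correcting this branch is precisely why the deterministic single-call protocol must resort to the approximate universal NOT and only reaches fidelity $(2d+1)/(d^2+d)$ (Sec.~\ref{sec:transp_det_single}). So the mechanism you propose for justifying independence of successive attempts would not go through.

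The justification the paper uses is much simpler and is the whole point of the known-state setting: since $\ket{\psi}$ (resp.\ $\ket{\psi}_{AB}$) is classically known, a fresh instance is re-prepared for every attempt --- in the optimal protocols it is in fact generated internally, either through the $\psi$-dependent measurement or, in the bipartite case, through the filter $F=\sqrt{\sigma}/\sqrt{\norm{\sigma}_{\textup{op}}}$ acting on a stored maximally entangled pair. The failed output is simply discarded. Because the $k$ stored blocks are prepared independently in the storage phase and each retrieval attempt touches only its own block plus classical data, the attempts are independent Bernoulli trials with success probability $p_1=1/d$ (resp.\ $p_1=1/(d\norm{\Tr_{A}(\ketbra{\psi}{\psi}_{AB})}_{\textup{op}})$), and the bound of Thm.~\ref{thm:RUS} follows with no analysis of the failure branch at all. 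Replace your correctability argument with this re-preparation argument and the proof closes.
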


From Thm.~\ref{thm:RUS}, we see that the probability of success approaches one exponentially on the number of calls $k$. We remark that the optimal performance of unitary SAR for the unknown state case is $p_{\textup{SAR}}=1-\frac{d^2 -1}{k+d^2-1}$ \cite{Sedl_k_2019}, hence we have an exponential improvement. One implementation that attains the probability in Eq.~\eqref{p_SAR_kintr} is repeat-until-success strategy which we will discuss further in Sec.~\ref{sec:RUS}. 

As discussed in Sec.~\ref{sec:numerics_intro} and Sec.~\ref{sec:numerics_main}, numerical results indicate that the repeat-until-success strategy described in Thm.~\ref{thm:RUS} is 
optimal for probabilistic SAR when the known input state is pure. Also, in Sec.~\ref{sec:Garazi} we compare the SAR task with a related, but not equivalent, recent results on port-based remote state preparation \cite{Muguruza_2024}.

\subsection{Implementation of homomorphisms and antihomomorphisms of the unitary group \texorpdfstring{$\mathrm{SU}(d)$}{SU(d)} with multiple calls}

In this section, we present results for protocols that employ \( k \) calls to an unknown unitary \( U \), with the goal of implementing a transformation \( f(U) \) on a known pure input state \( \ket{\psi} \). Specifically, for a given function \( f : \mathrm{SU}(d) \to \mathrm{SU}(d) \) and a fixed state \( \rho \in \mathcal{L}(\mathbb{C}^d) \), we aim to transform an arbitrary unitary operation $\map{U}(\cdot) = U (\cdot) U^\dagger$ into the state \( f(U)\rho f(U)^\dagger \) using \( k \) calls to \( U \).

When multiple queries to the unknown operation are allowed, we distinguish three classes of protocols based on the ordering of the queries: \emph{parallel} protocols, where all calls to \( U \) occur simultaneously, \emph{sequential} protocols, where the calls occur one after another,\footnote{With additional processing operations inserted between queries.} and \emph{general} protocols, which are consistent with quantum theory but go beyond standard causally ordered circuit structures.

As detailed and proven in Sec.~\ref{sec:SAR}, for any number of calls \( k \), the task of parallel unitary transposition is equivalent to unitary SAR with \( k \) calls. A protocol for parallel unitary transposition can be converted into a protocol for unitary SAR with the same success probability, and \textit{vice versa}. It then follows from Thm.~\ref{thm:RUS} that parallel unitary transposition can be achieved with success probability \( p_{\textup{trans}} \geq 1 - \left(1 - \frac{1}{d} \right)^k \). As in the case of unitary SAR, we have numerical evidence that this upper bound is tight when the known input state is pure.

For the $k>1$ unitary inversion and conjugation protocols, we have the following theorem,
\begin{restatable}{theorem}{kProbInvConjg}\label{thm:k_prob_inv}
Given $k<d-1$ calls of the unknown unitary $U \in \mathrm{SU}(d)$, any probabilistic exact quantum protocol realising $\overline{U}\ketbra{\psi}{\psi}U^{T}$ unitary conjugation or $U^{-1}\ketbra{\psi}{\psi}U$ unitary inversion on arbitrary pure state $\ket{\psi}\in\mathbb{C}^d$ necessarily has a success probability of $p=0$.

This no-go result holds for any type of protocols, parallel, sequential, and general protocols. That is, when $k<d-1$, even processes without a definite causal order cannot convert a $d$-dimensional unitary into $\overline{U}\ketbra{\psi}{\psi}\overline{U}^\dagger$  or $U^{-1}\ketbra{\psi}{\psi}U$.  
\end{restatable}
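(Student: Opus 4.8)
The plan is to show that the success branch of \emph{any} protocol—parallel, sequential, or causally indefinite—is captured by a single positive operator, and that exactness together with representation theory forces this operator to act trivially whenever $k<d-1$.

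First I would fix the most permissive description of the resources. Any protocol that consumes $k$ forward uses of $U$ and delivers an output state on a target system $P\cong\mathbb{C}^d$ (ancillae being traced out), retaining only the ``success'' outcome, has a success branch that is a completely positive map from the $k$ input/output wires to $P$. Writing its Choi operator as $T\geq 0$ on $\mathcal{H}_I\otimes\mathcal{H}_O\otimes\mathcal{H}_P$ with $\mathcal{H}_I\cong\mathcal{H}_O\cong(\mathbb{C}^d)^{\otimes k}$, the unnormalised output is the link product
\begin{equation}
\rho(U)=\Tr_{IO}\!\left[\,T\left(\dketbra{U^{\otimes k}}{U^{\otimes k}}^{T_{IO}}\otimes\id_P\right)\right].
\end{equation}
The constraints that distinguish parallel, sequential and general (acausal) processes are only additional linear equalities on $T$, so a no-go derived from $T\geq 0$ alone applies to all of them at once—precisely what is needed to cover processes without a definite causal order. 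Exactness then reads $\rho(U)=\lambda(U)\,g(U)\ketbra{\psi}{\psi}g(U)^\dagger$ for all $U$, with $\lambda(U)\geq 0$ and $g(U)=\overline{U}$ (conjugation, since $\overline{U}\ketbra{\psi}{\psi}U^T=\overline{U}\ketbra{\psi}{\psi}\overline{U}^\dagger$) or $g(U)=U^{-1}$ (inversion); the claim $p=0$ is the statement that $\lambda\equiv 0$ is forced.

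Second I would invoke the covariance reduction (Sec.~\ref{sec:tasks_applications}): twirling $T$ over the left action $U\mapsto VU$ for conjugation (respectively the right action $U\mapsto UV$ for inversion) maps processes to processes, preserves both $T\geq 0$ and exactness, and renders $\lambda$ constant. For conjugation this twirl conjugates the output wires $O$ by $V^{\otimes k}$ and $P$ by $\overline{V}$, because $\dket{(VU)^{\otimes k}}=(V^{\otimes k}_O\otimes\id_I)\dket{U^{\otimes k}}$ while the target picks up a factor $\overline{V}(\cdot)\overline{V}^\dagger$. The resulting covariant $T$ must therefore intertwine the representation $V^{\otimes k}$ carried by $O$ with the antifundamental $\overline{V}$ carried by $P$. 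The rank-one nature of the output is harmless: as $U$ ranges over $\mathrm{SU}(d)$ the vectors $\overline{U}\ket{\psi}$ sweep out an orbit spanning all of $\mathbb{C}^d$ (irreducibility of $\overline{U}$ together with $\ket{\psi}\neq 0$), so a nonvanishing $\rho$ would require the \emph{full} antifundamental irrep to be realisable on $P$. For inversion I would run the same computation under the right action; using $\overline{V}^T=V^\dagger$, a short manipulation of the intertwining relation shows the required intertwiner again exists only if the antifundamental appears in $V^{\otimes k}$, so the two cases collapse to a single condition.

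Third comes the representation-theoretic heart: the antifundamental irrep of $\mathrm{SU}(d)$ is isomorphic to $\wedge^{d-1}\mathbb{C}^d$, whose Young diagram is a single column of $d-1$ boxes, and the minimal number of boxes of any diagram reducing to it (modulo full columns of height $d$) is $d-1$. Hence this irrep occurs in $(\mathbb{C}^d)^{\otimes k}$ only for $k\geq d-1$ (indeed only when $k\equiv d-1\bmod d$), and is absent for every $1\le k<d-1$. By Schur's lemma the intertwiner isolated above then vanishes identically, forcing $\rho(U)\equiv 0$ and therefore $\lambda\equiv 0$; since the reduction covered all process types, no parallel, sequential, or causally indefinite protocol can do better. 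I expect the main obstacle to be the rigorous book-keeping of the covariance reduction and the Schur step rather than the final counting argument: I must verify that twirling genuinely preserves membership in the most general process class without secretly strengthening the constraints, that $\lambda$ may be taken constant without loss, and—most delicately—that the bilinear dependence of $\rho(U)$ on the matrix coefficients of $V^{\otimes k}$ and $\overline{V}^{\otimes k}$ is matched to the target purely through the antifundamental block, so that Schur orthogonality cleanly separates the absent irrep. Once these are in place the bound $k\geq d-1$ is immediate.
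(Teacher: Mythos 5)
Your overall strategy shares the paper's two correct pillars: a no-go derived only from positivity of the success-branch operator automatically covers parallel, sequential and causally indefinite processes, and the ultimate obstruction is that the antifundamental irrep $(1^{d-1})$ is absent from $(\mathbb{C}^d)^{\otimes k}$ when $k<d-1$ (the paper realises this as a weight-counting/pigeonhole argument after restricting to diagonal unitaries). However, the middle of your argument has a genuine gap. Covariance of the twirled Choi operator does \emph{not} make it an intertwiner between the representation $V^{\otimes k}$ on $\O$ and the antifundamental on the target: the twirl for conjugation yields $[S,\id_{\I}\otimes V_{\O}^{\otimes k}\otimes V_{\F}]=0$, i.e.\ $S$ lies in the commutant of $V^{\otimes(k+1)}$ on $\O\otimes\F$ jointly, which by Schur--Weyl is the (nonzero, for every $k$) image of $\mathbb{C}[S_{k+1}]$. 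Likewise, the Schur-orthogonality separation you flag as delicate in fact fails at the density-operator level: the target $\overline{U}\ketbra{\psi}{\psi}\overline{U}^{\dagger}$, viewed as a function of $U$, lives entirely in $\mathbf{1}\oplus\mathrm{adj}$ (the decomposition of $\overline{U}\otimes U$), and both of these irreps occur in $U^{\otimes k}\otimes\overline{U}^{\otimes k}$ for every $k\geq 1$, so no irrep is ``absent'' at that level. Your remark that ``the rank-one nature of the output is harmless'' is exactly backwards --- exactness (rank one for every $U$) is the indispensable ingredient, since mixed covariant outputs such as $\id/d$ are trivially achievable for all $k$.

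The repair, which is what the paper's proof actually implements, is to descend to the amplitude level: decompose the success branch into Kraus operators, so that each one satisfies $K_m(\id\otimes U^{\otimes k})\ket{E}=c_m(U)\,\overline{U}\ket{\psi}$ with an uncontrolled scalar $c_m(U)=e^{i\theta_{U,m}}\sqrt{p(U,m)}$. The left side is a holomorphic polynomial of degree $k$ in the entries of $U$, while $\overline{U}\ket{\psi}$ has entries that are cofactors, i.e.\ holomorphic of degree $d-1$; the content of the theorem is that these degrees cannot match modulo $d$ when $k<d-1$. The remaining technical work is eliminating the $U$-dependent scalar $c_m(U)$, which your sketch does not address; the paper does it by restricting to diagonal unitaries, projecting onto each output basis vector $\ket{l}$, and comparing the resulting weight vectors $\vec{\gamma}+\vec{e}_l$ (nonnegative entries summing to $k+1<d$, hence never constant) across different choices of $l$ via the pigeonhole principle. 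Without this descent and scalar-elimination step, your argument as written does not force $p=0$.
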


We remark that an analogous result, proving null success probability for $k < d-1$, holds for the conjugation protocol with an unknown input state~\cite{Quintino2019prob}. However, the result presented here, discussed in detail in Sec.~\ref{sec:application_k>1} and proven in App.~\ref{proofsk}, is a strictly stronger result: even when the input state is known, if \(k < d-1\), any probabilistic protocol still has zero success probability. Additionally, in the case of a parallel protocol using \(k = d-1\) queries, it is known that, even for an unknown input state, one can deterministically implement the complex conjugation of an arbitrary unitary \(U\) with success probability \(p = 1\)~\cite{Miyazaki_2017}. Thus, interestingly, this implies that, for pure input state conjugation, having knowledge of the input state offers no advantage. Our theorem thus generalises the result established for unknown input states and proves a conjugation no-go theorem of conjugation under weaker assumptions.

\subsection{Numerical results for the \texorpdfstring{$k>1$}{k>1} calls of the unitary operation scenario}\label{sec:numerics_intro}

As shown in the main text, the optimal performance of unitary transposition  inversion, and complex conjugation can be formulated as a semidefinite program (SDP) and solved numerically. As proven in Thm.~\ref{thm:SAR_transp_Eq}, the parallel transposition protocol is equivalent to SAR. Therefore, all results established for transposition apply equally to SAR. The results of these optimizations are reported in Table~\ref{tab:transposition_intro} and Table~\ref{tab:inversion_intro}, with further details on the numerical methods provided in Sec.~\ref{sec:numerics_main}.
\begin{table}[!ht]
\centering
\renewcommand{\arraystretch}{1.4}
\setlength{\arrayrulewidth}{0.3mm}
\begin{subtable}[t]{0.46\textwidth}
\centering
\begin{tabular}{c|c|c|c}
\multicolumn{4}{c}{\( \boldsymbol{p^{\textup{max}}_{\textup{trans}}(k = 2, d)}\)} \\
\hline
\(d\) & \textbf{Parallel} & \textbf{Adaptive} & \textbf{General} \\
\hline
2 & 0.7500 & 0.8334  & 0.8334 \\
3 & 0.5556 & 0.6112 & 0.6112\\
\specialrule{0.8pt}{0.8pt}{0.8pt}
\multicolumn{4}{c}{\(\boldsymbol{p^{\textup{max}}_{\textup{trans}}(k = 3, d)}\)} \\
\hline
\(d\) & \textbf{Parallel} & \textbf{Adaptive} & \textbf{General} \\
\hline
2 &0.8750 & 1& 1 \\
\hline
\end{tabular}
\subcaption{Optimal success probabilities for the transposition protocol}
\label{tab:numerics_prob_intro}
\end{subtable}
\hspace{2em}
\begin{subtable}[t]{0.46\textwidth}
\centering
\begin{tabular}{c|c|c|c}
\multicolumn{4}{c}{\(\boldsymbol{\expval{F^{\textup{max}}_{\textup{trans}}}(k = 2, d)}\)} \\
\hline
\(d\) & \textbf{Parallel} & \textbf{Adaptive} & \textbf{General} \\
\hline
2 & 0.9396 & 0.9671 & 0.9671 \\
3 & 0.7530 & 0.8190 & 0.8190 \\
\specialrule{0.8pt}{0.8pt}{0.8pt}
\multicolumn{4}{c}{\(\boldsymbol{\expval{ F^{\textup{max}}_{\textup{trans}}}(k = 3, d)}\)} \\
\hline
\(d\) & \textbf{Parallel} & \textbf{Adaptive} & \textbf{General} \\
\hline
2 & 0.9769 & 1& 1 \\
\hline
\end{tabular}
\subcaption{Optimal average fidelities for the transposition protocol}
\label{tab:numerics_det_intro}
\end{subtable}
\caption{Numerical results for the transposition protocol with $k \in \{2,3\}$ calls and $d \in \{2,3\}$}
\label{tab:transposition_intro}
\end{table}


\begin{table}[!ht]
\centering
\renewcommand{\arraystretch}{1.4}
\setlength{\arrayrulewidth}{0.3mm}

\begin{subtable}[t]{0.46\textwidth}
\centering
\begin{tabular}{c|c|c|c}
\multicolumn{4}{c}{\( \boldsymbol{p^{\textup{max}}_{\textup{inv}}(k = 2, d)}\)} \\
\hline
\(d\) & \textbf{Parallel} & \textbf{Adaptive} & \textbf{General} \\
\hline
2 & 0.7500 & 0.8334  & 0.8334 \\
3 & 0.3333 & 0.3333 & 0.3333 \\
\specialrule{0.8pt}{0.8pt}{0.8pt}
\multicolumn{4}{c}{\(\boldsymbol{p^{\textup{max}}_{\textup{inv}}(k = 3, d)}\)} \\
\hline
\(d\) & \textbf{Parallel} & \textbf{Adaptive} & \textbf{General} \\
\hline
2 & 0.8750 & 1 & 1\\
\hline
\end{tabular}
\subcaption{Optimal success probabilities for the inversion protocol}
\label{tab:numerics_prob2_intro}
\end{subtable}
\hspace{2em}
\begin{subtable}[t]{0.46\textwidth}
\centering
\begin{tabular}{c|c|c|c}
\multicolumn{4}{c}{\(\boldsymbol{\expval{ F^{\textup{max}}_{\textup{inv}}}(k = 2, d)}\)} \\
\hline
\(d\) & \textbf{Parallel} & \textbf{Adaptive} & \textbf{General} \\
\hline
2 & 0.9396 & 0.9671 & 0.9671\\
3 & 0.6710 & 0.7443 & 0.7443 \\
\specialrule{0.8pt}{0.8pt}{0.8pt}
\multicolumn{4}{c}{\(\boldsymbol{\expval{F^{\textup{max}}_{\textup{inv}}}(k = 3, d)}\)} \\
\hline
\(d\) & \textbf{Parallel} & \textbf{Adaptive} & \textbf{General} \\
\hline
2 & 0.9769 & 1 & 1 \\
\hline
\end{tabular}
\subcaption{Optimal average fidelities for the inversion protocol}
\label{tab:numerics_det2_intro}
\end{subtable}

\caption{Numerical results for the inversion protocol with $k \in \{2,3\}$ calls and $d \in \{2,3\}$}
\label{tab:inversion_intro}
\end{table}

From the numerical tables, we can draw the following conclusions:
\begin{itemize}
    \item For $k=3$ and $d=2$, both unitary transposition and inversion on an arbitrary known state can be implemented deterministically and exactly, even in the pure input state setting. This stands in parallel to the universal input state scenario, where $k=4$ queries were required to achieve deterministic and exact implementation at $d=2$ \cite{yoshida_exact}.  
    \item Our numerical results indicate that for $k \in \{2,3\}$ and $d \in \{2,3\}$, the repeat-until-success strategy, with success probability $p_{\textup{RUS}} = 1 - \left(1 - \tfrac{1}{d}\right)^{k}$, is optimal for both the transposition and SAR tasks.
        \item General strategies appear to offer no advantage in the tested regime, $k \in \{2,3\}$ and $d \in \{2,3\}$, for none of the considered functions. By contrast, for protocols with an unknown input state that implement unitary inversion or transposition, it has been shown that general strategies with indefinite causal order outperform sequential ones.~\cite{Quintino_2022det,Quintino2019prob}.
        \item In the unknown input state case, the performance of deterministic parallel unitary transposition was shown to be equivalent to that of deterministic unitary inversion, with both attainable through an estimation protocol~\cite{Quintino_2022det,Bisio_2010}. In contrast, this equivalence no longer holds in the known input state setting. Furthermore, in the unknown input state case, when $k<d$, deterministic sequential protocols cannot outperform parallel ones~\cite{Yoshida2024}. By contrast, in the known input state case we find that, even when $k<d$, deterministic sequential protocols can indeed surpass parallel ones.

\end{itemize}

\section{Review of basic concepts on higher-order quantum computing} \label{sec:Review_HOQC} 

Let us now consider first higher-order formalism for the case of a 
single input unitary call. To set the stage for the analysis in the context 
of a known input state, in this section we summarize the formalism developed in the literature for the case of higher-order quantum computation (HOQC) with an unknown input state~\cite{taranto2025review,
Chiribella_2008,Chiribella2007architecture,Chiribella_2009}, reviewing the 
definitions and results relevant to our work.

\subsection{The Choi representation of linear operators and linear maps}

For convenience, instead of working directly with functions, we will use the fact that quantum operations can be encoded into a single vector or matrix form. This is known as the Choi representation of linear operators and maps and it will be a central tool in our analysis.

The Choi vector representation of a linear operator $U : \mathcal{H}_{\textup{in}} \rightarrow \mathcal{H}_{\textup{out}}$ is the vector  
$\dket{U} \in \mathcal{H}_{\textup{in}}\otimes \mathcal{H}_{\textup{out}}$, and is defined as, 
\begin{align}
\dket{U}\coloneqq\sum_i \ket{i}_\textup{in} \otimes \left(U\ket{i}\right)_\textup{out}.
\end{align}
where $\{\ket{i}\}$ is the computational basis of $\mathcal{H}_{\textup{in}}$.
Let $\mathcal{L}(\mathcal{H})$ be the set of linear operators acting on $\mathcal{H}$. In this work, all linear spaces are taken as finite-dimensional complex linear spaces, that is, $\H\cong \mathbb{C}^d$, for some integer $d\in\mathbb{N}$, hence $\mathcal{L}(\mathcal{H})$ is the set of $d$ by $d$ complex matrices. The Choi operator representation of a linear map $\map{C} : \L(\H_{\textup{in}}) \to \L( \H_{\textup{out}})$ is the operator $C \in \mathcal{L}(\mathcal{H}_{\textup{in}} \otimes \mathcal{H}_{\textup{out}})$ defined as, 
\begin{align}
    C\coloneqq\sum_{i j} \ketbra{i}{j} \otimes \map{C}(\ketbra{i}{j}).
\end{align}
The Choi representation provides convenient methods to represent and to compose quantum objects, particularly useful with circuit structures. Let $\map{A} : \mathcal{L}(\mathcal{H}_1) \to \mathcal{L}(\mathcal{H}_2)$ and $\map{B} : \mathcal{L}(\mathcal{H}_2) \to \mathcal{L}(\mathcal{H}_3)$ be linear maps, then the Choi operator of their composition $\map{C} \coloneqq \map{B} \circ \map{A}$ is given by the \textit{link product}~\cite{Chiribella_2009}, $C_{13} = B_{23} * A_{12}$. The link product is defined as
\begin{align} \label{eq:link_product}
    B_{23} * A_{12} = \Tr_2\left([A_{12}^{T_2} \otimes \mathbb{1}_3][\mathbb{1}_1 \otimes B_{23}]\right),
\end{align}
where $A \in \mathcal{L}(\mathcal{H}_{1} \otimes \mathcal{H}_{2})$ and $B \in \mathcal{L}(\mathcal{H}_{2} \otimes \mathcal{H}_{3})$ are the Choi operators of the corresponding maps, and $A^{T_2}$ denotes the partial transpose of $A$ over\footnote{When we keep track of the spaces where the linear operator act, the link product is a commutative and associative operation, we refer to section ``2.2.4 Link Product: Composing Linear Maps in the Choi Representation'' of Ref.~\cite{taranto2025review} for a detailed discussion and various illustrative examples.} $\mathcal{H}_2$.

\subsection{Supermaps, superchannels, and superinstruments (with a single call of the input operation)} \label{superuniversch} 

A linear supermap, or shortly, a supermap, is a linear transformation between linear maps. For instance,\footnote{We denote supermaps by a double tilde, quantum maps by a single tilde, and matrices by simple capital letters.}
\begin{equation}
\smap{S}: ( \mathcal{L}(\mathcal{H}_{\textup{in}})  \to \mathcal{L}(\mathcal{H}_{\textup{out}})) \to ( \mathcal{L}(\mathcal{H}_{\textup{in}^{\prime}}) \to \mathcal{L}(\mathcal{H}_{\textup{out}^{\prime}})).
\end{equation}
is a supermap $\smap{S}$ that transforms maps  $\map{C}:\mathcal{L}(\mathcal{H}_{\textup{in}}) \to \mathcal{L}(\mathcal{H}_{\textup{out}})$ into linear maps $\smap{S}(\map{C}): \mathcal{L}(\mathcal{H}_{\textup{in}^{\prime}}) \to \mathcal{L}(\mathcal{H}_{\textup{out}^{\prime}})$.
In this work, we label the spaces associated with the input operation by $\textup{in}\coloneqq \textup{I}$ and $\textup{out} \coloneqq \textup{O}$, and the spaces associated with the output operation by $\textup{in}^{\prime}\coloneqq \textup{P}$ and $\textup{out}^{\prime} \coloneqq \textup{F}$, standing for ``input'', ``output'', ``past'', and ``future'', respectively.
\begin{figure}[t!]
    \centering
\includegraphics[width=0.8\textwidth]{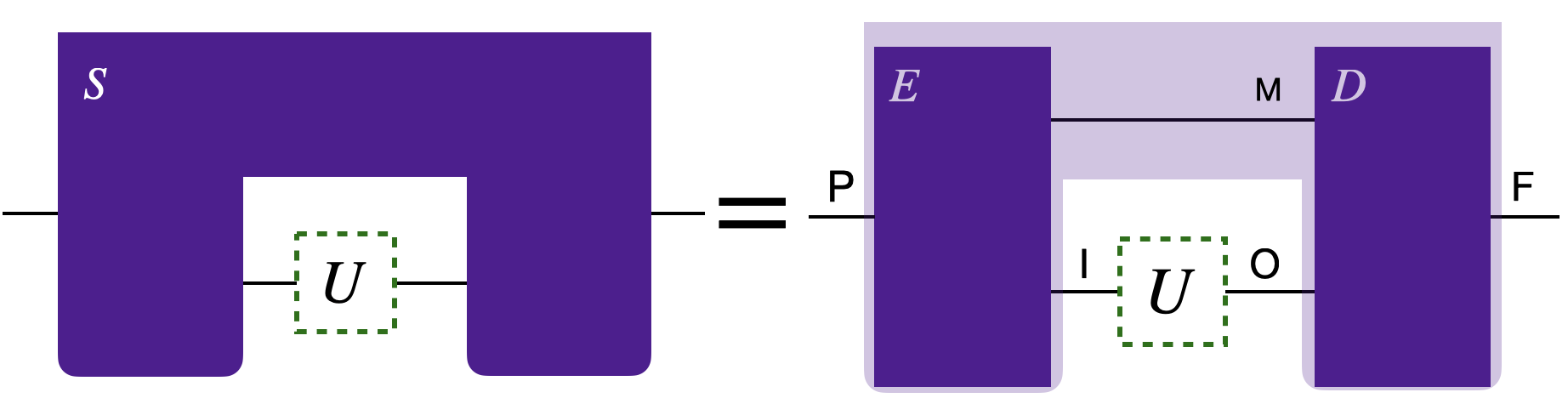}
    \caption{A superchannel acting on a universal input state and a universal single-call operation. Any superchannel can be represented by an encoder operation, an ancillary system that serves as a memory, and a decoder operation. The lines corresponding to the subsystems on which the supermap acts are labeled as follows: '$\textup{P}$' for the past, '$\textup{I}$' for the operation input, ''$\textup{O}$' for the operation output, '$\textup{M}$' for the ancillary memory system and '$\textup{F}$' for future. }
    \label{fig:unknownS} 
\end{figure}
Deterministic transformations between quantum states are described by \emph{quantum channels}—linear maps \(\map{C}\) that are \emph{completely positive} (CP) and \emph{trace preserving} (TP). Analogously, deterministic transformations between quantum channels are described by \emph{quantum superchannels}. A linear map \(\smap{S}\) is a valid transformation on CP maps if both \(\smap{S}\) and its linear extension \(\smap{S} \otimes \smap{I}\) map CP maps to CP maps, a property known as \emph{completely CP preserving} (CCPP)~\cite{Chiribella_2008,Quintino2019prob,taranto2025review}. A supermap \(\smap{S}\) is called a \emph{superchannel} if it is CCPP and also preserves the trace-preserving property—that is, it maps quantum channels to quantum channels, even when acting on part of a bipartite channel.
Every single-slot superchannel $\smap{S}$ can be realised as an ordered quantum circuit via \cite{Chiribella_2008}, 
\begin{align}\label{realis_unknown}
\smap{S}(\map{C})(\rho) = \left( \map{D}\circ [ \map{C} \otimes \map{\mathbb{1}} ] \circ \map{E}\right)(\rho),
\end{align}
where $\map{E} : \mathcal{L}(\mathcal{H}_{\P}) \rightarrow \mathcal{L}(\mathcal{H}_{\I} \otimes \mathcal{H}_{\M})$ is a quantum channel, known as \emph{encoder}, which transforms an arbitrary (unknown) input state $\rho \in \mathcal{L}(\mathcal{H}_{\P})$, and $\map{D} : \mathcal{L}(\mathcal{H}_{\O} \otimes \mathcal{H}_{\M}) \rightarrow \mathcal{L}(\mathcal{H}_{\F})$ is a quantum channel, known as \emph{decoder}, see Fig.~\ref{fig:unknownS}. The map $\map{C} : \mathcal{L}(\mathcal{H}_{\I}) \rightarrow \mathcal{L}(\mathcal{H}_{\O})$ represents an arbitrary quantum operation that is plugged into the supermap, while $\map{\mathbb{1}} : \mathcal{L}(\mathcal{H}_{\M}) \rightarrow \mathcal{L}(\mathcal{H}_{\M})$ denotes the identity map on an ancillary subspace $\mathcal{H}_{\M}$, sometimes referred to as the \emph{memory system}. As depicted in Fig.~\ref{fig:unknownS}, the encoder is a channel that may introduce this ancillary subspace, which the decoder subsequently acts upon. In the figure, we assume that the inserted operation is unitary, as this is the case we focus on in this work.

Using Choi representation, a supermap $\smap{S}:\left(\mathcal{L}\left(\mathcal{H}_{\I}\right) \rightarrow L\left(\mathcal{H}_{\O}\right)\right) \rightarrow\left(\mathcal{L}\left(\mathcal{H}_{\P}\right) \rightarrow \mathcal{L}\left(\mathcal{H}_{\F}\right)\right)$ can be represented as a map $\widetilde{S}: \mathcal{L}\left(\mathcal{H}_{\O} \otimes \mathcal{H}_{\I}\right) \rightarrow \mathcal{L}\left(\mathcal{H}_{\F}\otimes \mathcal{H}_{\P}\right)$ acting on Choi operators. By exploiting the Choi representation again, we can represent a supermap by the operator $S \in \mathcal{L}( \mathcal{H}_{\F} \otimes \mathcal{H}_{\P} \otimes \mathcal{H}_{\O} \otimes \mathcal{H}_{\I})$. In the case of a single use of the input operation, a superchannel $S$ is characterized  by the following set of constraints \cite{Chiribella2007architecture, Chiribella_2008}, 
\begin{align} \label{Eq:superch_unkn_constr}
    S_{\P \I \O \F} &\geq 0\\
    \Tr_{\F}(S_{\P \I \O \F}) &= \Tr_{\O\F}(S_{\P \I \O \F})\otimes \frac{\mathbb{1}_{\O}}{d_{\O}}\\
    \Tr_{\I\O\F}(S_{\P \I \O \F}) &= \Tr_{\P\I\O\F}(S_{\P \I \O \F}) \otimes \frac{\mathbb{1}_{\P}}{d_{\P}}\\
    \Tr(S_{\P \I \O \F}) &= d_{\P}d_{\O}
\end{align}
where $d_{i}$ is the dimension of the space $\mathcal{H}_{i}$. 

Nondeterministic transformations on quantum channels are given by \textit{superinstruments}~\cite{Quintino2019prob,taranto2025review,marco_maps}. A superinstrument is a set of supermaps $\{ \smap{S}_{i} \}$ that sum to a superchannel: $\smap{S}_{\textup{ch}} = \sum_i \smap{S}_{i}$, where $\smap{S}_i$ is the $i$-th superinstrument element. When a superinstrument $\{\smap{S_i}\}_i$ is applied to a channel $\map{C}$ and an input state $\rho$, the outcome $i$ is obtained with probability $p_i = \Tr(\smap{S}_{i}(\map{C})(\rho))$. The corresponding output state is then given by $\frac{\smap{S}_{i}(\map{C})(\rho)}{\Tr(\smap{S}_{i}(\map{C})(\rho))}$.

\begin{figure}[t!]
    \centering
\includegraphics[width=0.7\textwidth]{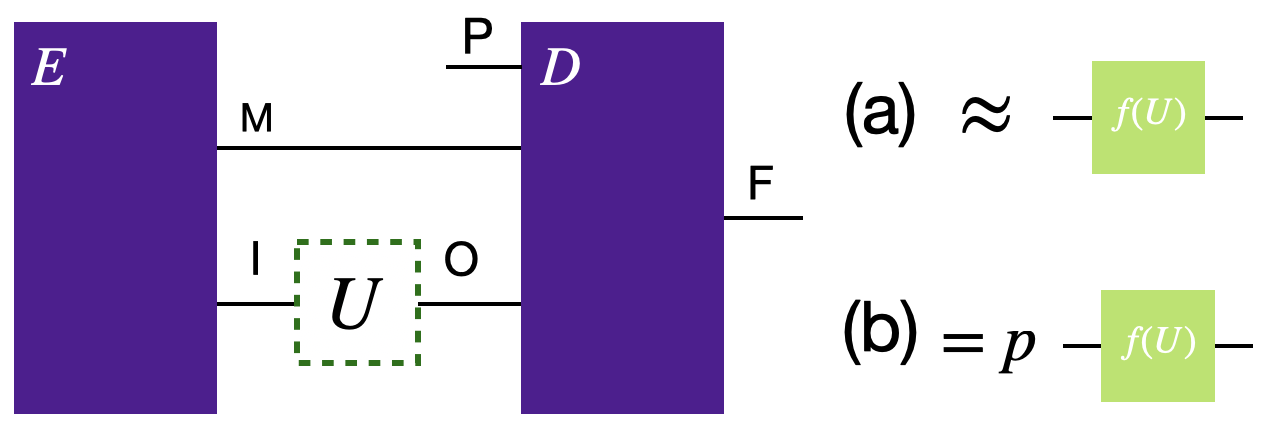}
    \caption{ A universal supermap with a delayed input state, realising transformation on an arbitrary unknown input state (a) approximately or (b) with a certain probability of success \( p \). }
    \label{fig:delay_supermap_unkn}
\end{figure} 

We also consider a class of tasks in which the unknown input state becomes available only after access to the input unitary has been lost, as illustrated in Fig.~\ref{fig:delay_supermap_unkn} and exemplified by well-known protocols such as unitary learning~\cite{Bisio_2010} and unitary storage-and-retrieval~\cite{Sedl_k_2019}. Such protocols, referred to as delayed input state protocols, can be represented by a supermap of the following form \cite{Quintino_2022det}:  
\begin{align}
\smap{S}_{\textup{delay}}(\map{C})(\rho) =  \map{D}_{\textup{delay}} \left(( \map{C} \otimes \map{\mathbb{1}} ) (E))\otimes \rho_{\textup{in}} \right)
\end{align}
where the encoder is now simply a quantum state $E \in \mathcal{L}(\mathcal{H}_{\I} \otimes \mathcal{H}_{\M})$ on which the unitary operation is applied, while the decoder map $\map{D}_{\textup{delay}} : \mathcal{L}(\mathcal{H}_{\P} \otimes \mathcal{H}_{\I} \otimes \mathcal{H}_{\M}) \rightarrow \mathcal{L}(\mathcal{H}_{\F})$ transforms the unknown, delayed input state $\rho \in \mathcal{L}(\mathcal{H}_{\P})$ into the final output. In the Choi picture, this can be represented as, 
\begin{equation}
    S_{\P\I\O\F} = E_{\M\I} * D_{\P\M\O\F}
\end{equation}
where $E_{\M\I} \in \mathcal{L}(\mathcal{H}_{\M} \otimes \mathcal{H}_{\I})$ and $D_{\\PM\O\F} \in \mathcal{L}(\mathcal{H}_{\P}\otimes\mathcal{H}_{\M} \otimes \mathcal{H}_{\O}\otimes \mathcal{H}_{\F})$.

The case depicted in Fig.~\ref{fig:unknownS}, where both the states and the operations are treated as universal inputs, even though most general from the perspective of the inputs, it is most restrictive from the perspective of the set of supermaps that can implement the task. This motivates us to explore scenarios where some knowledge of the inputs is available, as such knowledge generally enlarges the set of supermaps that can implement the function.

\section{Higher-order quantum computing with known input states}\label{sec:Knstate_intro}

As formalized in the previous section, earlier works treated a supermap as a universal map over both input states and input operations, meaning that its design and constituent gates were independent of the inputs. In contrast, we consider the case where the supermap is no longer universal with respect to the input state; rather, its structure depends on the specific state on which the desired transformation of the input operation is to be realised. On the other hand, we will keep the assumption of a supermap acting as universal machine on the input unitaries, which will be crucial in deriving symmetries of the problem. 

\subsection{General framework and definitions}

We consider higher-order quantum computation (HOQC) tasks that transform 
an arbitrary input operation an arbitrary input operation $\map{C} : \mathcal{L}(\mathcal{H}_{\I})\rightarrow \mathcal{L}(\mathcal{H}_{\O})$, into some output operation $f(\map{C}): \mathcal{L}(\mathcal{H}_{\P})\rightarrow \mathcal{L}(\mathcal{H}_\F)$ on a known state $\rho\in\L(\H_\P)$, as depicted in Fig.~\ref{fig:known_state}. As we detail later, in this known input state scenario,  it is sometimes convenient to analyse separately the \emph{pure-state case} $\map{C} \mapsto f(\map{C})(\ketbra{\psi}{\psi})\in \mathcal{L}(\mathcal{H}_\F)$ and \emph{mixed-state case} $\map{C} \mapsto f(\map{C})(\rho) \in \mathcal{L}(\mathcal{H}_\F)$ and \emph{bipartite-state case}, where we aim to apply the operation only to a part of an entangled state, $\map{C} \mapsto (f(\map{C})_{\P\to\F} \otimes \map{\mathbb{1}}_{A})(\ketbra{\psi}{\psi}_{\P\A}) \in \mathcal{L}(\mathcal{H}_\F)\otimes \mathcal{L}(\mathcal{H}_\A)$.

In this section, we focus on protocols that implement \( f(\map{C}) \) using a single call (that is, \(k=1\) calls of the input operation) of the unknown input operation. Later, we extend the analysis to protocols with multiple calls (where \(k > 1\) are available) of the input operation, implementing the transformation for any unknown unitary channel $\map{C}$. 

\begin{figure}[t!]
    \centering
\includegraphics[width=0.7\textwidth]{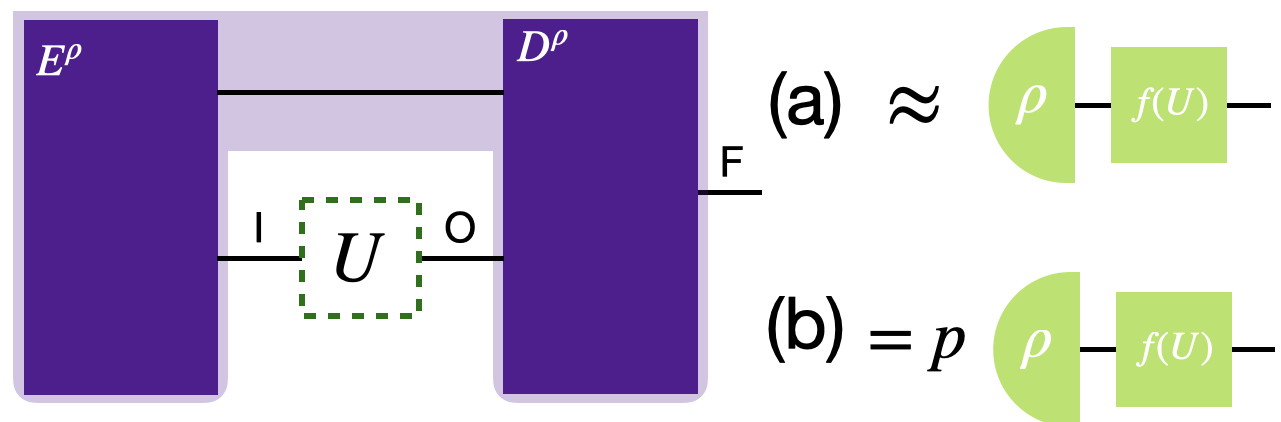}
    \caption{Supermap implementing a unitary transformation on a known state. The design of the encoder and decoder are tailored to implement a transformation of interest on the known program state. (a) Deterministic approximate realisation. (b) Probabilistic exact implementation. Elements shown in purple are optimized, dashed elements represent unknown and universal components, and green indicates that their actions are fixed as a consequence of the other elements. }
    \label{fig:known_state}
\end{figure}

As depicted in Fig.~\ref{fig:known_state}, instead of having a free input line coming from the past, the supermap now depends on classical information, denoted with a superscript\footnote{For convenience, we will often drop the parameter label of the supermap when it is clear from the context that it is a known state supermap.} \(\smap{S}^{\rho}\). Similarly to Eq.~\eqref{realis_unknown}, with a single call to the input operation we define a known state supermap via the following realisation:
\begin{align}\label{known_realis}
    \smap{S}^{\rho}(\map{C}) \coloneqq \left( \map{D}^{\rho} \circ [ \map{C} \otimes \map{\mathbb{1}} ] \right)(E^{\rho}).
\end{align}
where the encoder is a channel that takes a quantum state from system $\mathcal{H}_{\P}$, in this case, the encoder is a quantum state that depends on the known state $\rho$, i.e., $E^{\rho} \in \mathcal{L}\left(\mathcal{H}_{\I} \otimes \mathcal{H}_{\M}\right)$. The decoder is again a quantum channel, defined similarly as before, $\map{D}^{\rho}: \mathcal{L}\left(\mathcal{H}_{\M} \otimes \mathcal{H}_{\O}\right) \rightarrow \mathcal{L}(\mathcal{H}_{\F})$, although now carrying a $\rho$ parametrization. Here, $\map{C} : \mathcal{L}(\mathcal{H}_{\I}) \rightarrow \mathcal{L}(\mathcal{H}_{\O})$ is the input operation, and $\map{\mathbb{1}}$ represents the identity operation on the auxiliary space $\mathcal{L}(\mathcal{H}_{\M})$. In the Choi representation, Eq.~\eqref{known_realis} can be represented as
\begin{equation}\label{choi_kn}
    S^{\rho}_{\I\O\F} = E^{\rho}_{\I\M} * D^{\rho}_{\M\O\F},
\end{equation}
where $E^{\rho}_{\I\M} \in \mathcal{L}(\mathcal{H}_{\I} \otimes \mathcal{H}_{\M})$ and $D^{\rho}_{\M\O\F} \in \mathcal{L}(\mathcal{H}_{\M} \otimes \mathcal{H}_{\O}\otimes \mathcal{H}_{\F})$.  
Note that, $\smap{S}^\rho$ is then a supermap that transforms quantum channels into quantum states and may be viewed as a particular instance of superchannel where there is no global past system $\mathcal{H}_{\P}$, or that the global past system is one-dimensional. This set of quantum transformations also found applications in other contexts, e.g., to analyse causal correlations \cite{Costa_2017QmCaus1} \cite{Nery_2021}, quantum memory \cite{Giarmatzi_2021,Taranto_2024}, time-ordered quantum processes~\cite{taranto2025review}.

As in the previous section, we can also characterise deterministic quantum transformations from channels to states in terms of their Choi operators. A linear operator $S^{\rho} \in \mathcal{L}(\mathcal{H}_\I \otimes \mathcal{H}_{\O} \otimes \mathcal{H}_{\F})$, is the Choi operator of a superchannel $\smap{S}^\rho: (\mathcal{L}(\mathcal{H}_{\I})\to \mathcal{L}(\mathcal{H}_{\O})) \to \mathcal{L}(\mathcal{H}_{\F})$ that transforms channels into states if and only if it respects:
\begin{align}
     S^{\rho}_{\I\O\F} &\geq 0\label{Eq:superch_known_constr}\\
   \Tr_{\F}(S^{\rho}_{\I\O\F})  &= \Tr_{\O \F}(S^{\rho}_{\I\O\F})\otimes \frac{\mathds{1}_{\O}}{d_{\O}} \\
    \Tr(S^{\rho}_{\I\O\F}) &= d\label{Eq:superch_known_constr3}
\end{align}

Probabilistic transformations are again described by a superinstrument $\{ \smap{S}^{\mathcal{\rho}}_{i} \}$, whose elements sum to a superchannel: $\smap{S}_{\textup{ch}} = \sum_i \smap{S}^{\mathcal{\rho}}_{i}$, where $\smap{S}_i$ is the $i$-th superinstrument element. Contrary to the previous section, the superinstrument $\{\smap{S}^{\mathcal{\rho}}_i\}_i$ is now applied only to input channels $\map{C}$ to produce its action on a fixed input state $\rho$. The outcome $i$ is obtained with probability $p_i = \Tr(\smap{S}^{\mathcal{\rho}}_{i}(\map{C}))$ and the corresponding output state is given by $\frac{\smap{S}^{\mathcal{\rho}}_{i}\left(\map{C} \right)}{\Tr(\smap{S}^{\mathcal{\rho}}_{i}(\map{C}))}$. 

\begin{figure}[b!]
    \centering
\includegraphics[width=0.7\textwidth]{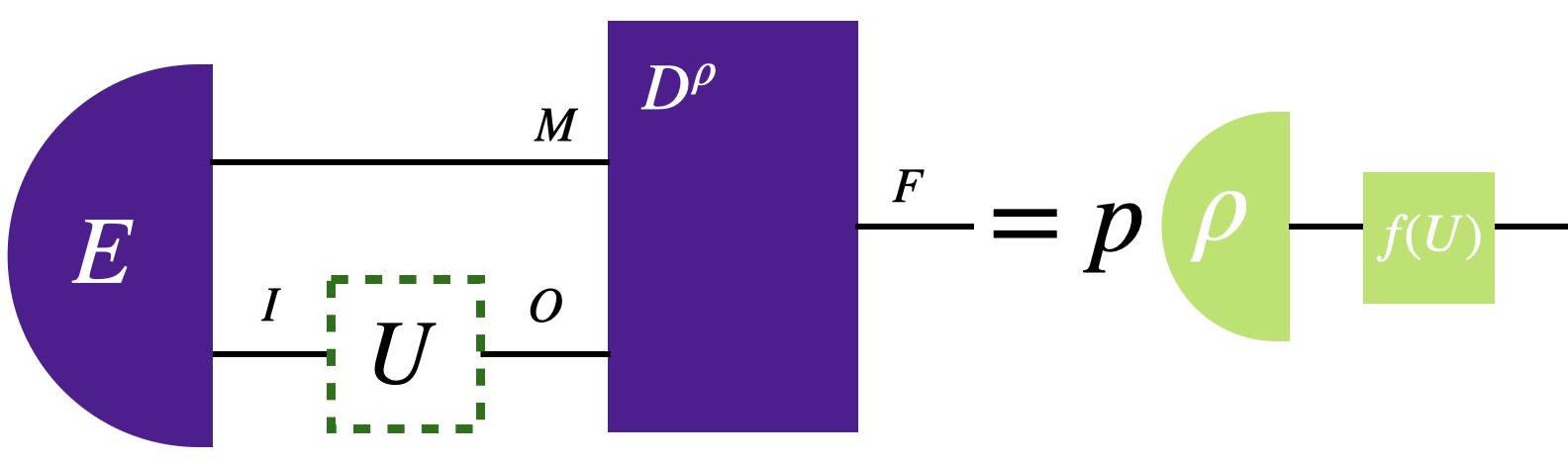}
    \caption{Known input state supermap acting on a unitary with delayed access to the input state.
 Crucially, the classical description of the state becomes available only after the storing phase, making the encoder independent of the input state.}
    \label{fig:delayed_FigKN}
\end{figure} 

\subsection{Delayed known input state case}\label{sec:delay_KN}

Analogous to the case of supermaps with unknown input state, we define delayed input state tasks as those in which the storing phase is independent of the input state. In the known input scenario, however, there is a subtlety: the structural form of a delayed input supermap is indistinguishable from that of an undelayed one, i.e. there is no additional wire explicitly representing the delay. The crucial distinction lies in the fact that the information about the state becomes available only after the operation has already been stored by the encoder. As a result, both the encoder and the storing phase must be entirely independent of the input state.

Accordingly, we define delayed input protocols as those where access to the classical description of the input state is delayed, as depicted in Fig.~\ref{fig:delayed_FigKN}, and formally given by
\begin{equation}\label{choi_kn_delay}
S^{\rho}_{\I\O\F} = E_{\I\M} * D^{\rho}_{\M\O\F},
\end{equation}
where $E_{\I\M} \in \mathcal{L}(\mathcal{H}_{\I} \otimes \mathcal{H}_{\M})$ is the state independent encoder and $D^{\rho}_{\M\O\F} \in \mathcal{L}(\mathcal{H}_{\M} \otimes \mathcal{H}_{\O}\otimes \mathcal{H}_{\F})$ is the decoder, parametrised with the classical knowledge.

\section{HOQC with known input states under group symmetries}

In this work, we consider the action of a supermap on an arbitrary unknown unitary input operation\footnote{Here, unitary operation refers to channels given by $\map{C}(\cdot) = U(\cdot)U^{\dagger}$, where $U\in\mathrm{SU}(d)$ is a unitary operator.}.  To specify the action of the supermap, we define the function it realises, in particular we analyse functions $f:\mathrm{SU}(d)\to\mathrm{SU}(d)$ transforming $d$-dimensional unitary operators into $d$-dimensional unitary operators. In this work, we focus on two classes of functions: \emph{homomorphic} ones, e.g., $f$ satisfies \( f(UV) = f(U)f(V) \), and \emph{antihomomorphic} ones, for which  $f$ satisfeis \( f(UV) = f(V)f(U) \). 

Up to unitary equivalence, there are only three homomorphisms~\cite{fulton1991representation} 
from $\mathrm{SU}(d)$ to itself: the trivial map $f_{\textup{triv}}(U) = \mathbb{1}$, 
the identity $f_{\textup{id}}(U) = U$, and complex conjugation $f_{\textup{conj}}(U) = \overline{U}$.
Similarly, the only antihomomorphisms from $\mathrm{SU}(d)$ to $\mathrm{SU}(d)$ are: the trivial map \( f_{\textup{triv}}(U) = \mathbb{1} \), the transposition \( f_{\textup{trans}}(U) = U^{T} \) and the inversion \( f_{\textup{inv}}(U) = U^{-1} \).

For our purposes, even though identity map \( f_{\textup{id}}(U) = U \) might seem trivial, it is non-trivial in scenarios where the input state is delayed, as shown in Fig.~\ref{fig:delayed_FigKN}. In the context of known state scenario, we define delayed-input protocols as protocols where the access to the classical description of the input state is delayed. As depicted in Fig.~\ref{fig:delayed_FigKN}, the storage phase can be regarded as fully universal and inaccessible to the user, while the retrieved operation is specifically designed for a state known to the user.
Specifically, the delayed-input case of \( f(U) = U \), for the unknown input state case, corresponds to the task known as storage-and-retrieval~\cite{Sedl_k_2019}, also referred to as unitary learning~\cite{Bisio_2010}, and is closely related to unitary programming~\cite{Nielsen_1997}.
In this task, one aims to recover the action of a unitary operation on a quantum state that becomes available only after the unitary itself is no longer accessible. That is, the unitary must be stored in quantum memory during the access phase, and later retrieved and applied once the target state arrives.

To summarise, the functions that we will closely analyse in Sec.~\ref{sec:tasks_applications}, and that correspond to all the nontrivial homomorphisms and antihomomorphisms, are: transposition 
$f(U)=U^T$, conjugation $f(U)=U^*$, inversion $f(U)=U^{-1}$  and $f(U) = U$ with a \emph{delayed input state} (storage-and-retrieval scenario). 
When the number of calls to the input unitary is restricted, we often encounter no-go results that prohibit deterministic and exact implementations \cite{ChiribellaEbler_2016,Bisio_2013Parallel}. 
To circumvent these limitations, one instead considers either \emph{probabilistic exact} realisations~\cite{Quintino2019prob}, or \emph{deterministic approximate} ones~\cite{Quintino_2022det,Ebler_2023}, where the transformation is implemented with a certain success probability or average fidelity, as depicted in Fig.~\ref{fig:known_state} for the unitary case. In this work we determine the optimal circuit structure that achieves the desired transformation with the highest success probability or fidelity. 

In this section, we restrict our attention to the case of a single use of the input unitary transformation. Later, in Sec.~\ref{sec:k>1} we analyse the case where the user has in access to $k$ copies of the input operation, where the deterministic and exact implementations of some of the functions above are possible~\cite{Miyazaki_2017, yoshida_exact,chen2024quantum}.

\subsection{Probabilistic exact unitary transformations}\label{sec:probabilistic}

Given a function $f: \mathrm{SU}(d) \rightarrow \mathrm{SU}(d)$, we say that there exists a probabilistic exact protocol that implements $f$ on a particular quantum state $\rho$ with probability $p$ if there exists a superinstrument $\{ \smap{S}_{\checkmark}, \smap{S}_{\times}\}$ where, for unitary channel $\map{U}(\cdot)=U(\cdot)U^\dagger $ we have
\begin{align}\label{probab1}
\smap{S}_{\checkmark}(\map{U}) = pf(U)\rho f(U)^{\dagger} \quad \forall U \in \mathrm{SU}(d).
\end{align}
The supermap $\smap{S}_{\checkmark}$ is part of the superinstrument $\{ \smap{S}_{\checkmark}, \smap{S}_{\times}\}$, where the success branch $\smap{S}_{\checkmark}$ implements the desired transformation exactly, while the failure branch $\smap{S}_{\times}$ produces an outcome that is usually discarded. This means that there exists a total deterministic supermap $\smap{S}_{\textup{ch}}$ such that\footnote{For notational simplicity, in some parts of the article such as in the results sections, we denote the success branch supermap $\smap{S_{\checkmark}}$ simply by \( S \).} $\smap{S}_{\checkmark} + \smap{S}_{\times} = \smap{S}_{\textup{ch}}$. 

In the Choi representation,  the problem can be formulated such that its solution corresponds to the supermap maximizing the probability in ~\eqref{probab1}. Given a function $f: \mathrm{SU}(d) \rightarrow \mathrm{SU}(d)$, we define probabilistic exact realisation of a task that applies the transformation $f(U)$ on an arbitrary but known state $\rho$ as:
\begin{align}\label{Eq:prob_SDP1}
    &\textup{max} \; \; p \\
    \textup{s.t.} \; \; &S^{\rho}_{\checkmark, \I\O\F}*\dketbra{U}{U}_{\I\O} = pf(U) \rho_{\F} f(U)^{\dagger} \;\;\; \forall U\in\mathrm{SU}(d) \label{Eq:def_prob_impl} \\
    &0 \leq S^{\rho}_{\checkmark,\I\O\F} \leq S^{\rho}_{\textup{ch}, \I\O\F} \\
    &\Tr_{\F}(S^{\rho}_{\textup{ch}}) = \sigma_{\I} \otimes \mathbb{1}_{\O} \quad \Tr(\sigma_{\I}) = 1.
\end{align}
Here we used equivalent but alternative way of writing ~\eqref{Eq:superch_known_constr}); namely, we introduced additional variable $\sigma_{\I} = \frac{\Tr_{\O\F}(S^{\rho}_{\O\F\I})}{d_{\O}}$, where $\sigma_{\I}$ is a quantum state, i.e., $\Tr(\sigma_{\I}) =1$. 

The problem stated in Eq.~\eqref{Eq:prob_SDP1}, maximizes a linear objective function subjected to positive semidefinite and affine constraints. Problems of such nature can be tackled with the formalism of semidefinite programming (SDP) Ref.~\cite{watrous2018theory, Skrzypczyk2023_SDP}. The SDP formulation enables us to employ various analytical and numerical methods to tackle problems consider in this work. Analytical methods were used to prove theorems presented in the single slot case in Sec.~\ref{sec:tasks_applications}, while numerical were used to obtain results and provide insights for $k>1$ calls presented in Sec.~\ref{sec:application_k>1}.

We note however that, in the problem above, there are infinitely many constraints, since we require the equation $S^{\rho}_{\checkmark, \I\O\F}*\dketbra{U}{U}_{\I\O} = pf(U) \rho_{\F} f(U)^{\dagger}$ to hold for all $ U\in\mathrm{SU}(d)$. In the next subsection, we show that the problem defined in Eq.~\eqref{Eq:def_prob_impl} exhibits several symmetries that significantly simplify its structure. In particular, these symmetries allow us to reduce the infinitely many constraints to a small, finite set\footnote{Also, due to linearity, one can convert the infinite set of constraints in Eq.~\eqref{Eq:def_prob_impl} into a finite set of constraints. For that, it is enough that the equation $S^{\rho}_{\checkmark, \I\O\F}*\dketbra{U}{U}_{\I\O} = pf(U) \rho_{\F} f(U)^{\dagger}$ holds for a finite set of unitaries $\{U_i\}_i$ such that the linear space spanned by $\{\dketbra{U_i}{U_i}\}_i$  is equals to the linear space spanned by $\{\dketbra{U}{U}\}$ where $U\in\mathrm{SU}(d)$~\cite{Quintino2019prob}. Note however that this approach does not make explicit use of the symmetries, hence not considerably simplifying the problem.}.

\subsubsection{Symmetries of the problem}\label{sec:sym_prob}

An important consequence of requiring a supermap to act as a universal machine over all input unitaries, i.e., Eq.~\eqref{Eq:def_prob_impl} holds $\forall U\in\mathrm{SU}(d) $, is that the defining equation of the protocol exhibits a specific symmetry properties. These symmetries are determined by the function being implemented and will further be used to simplify the SDP problem, allowing to solve some of this problems analytically. We now identify three types of symmetries that can be obtained from the problem mathematically formalised as an SDP at Eq.~\eqref{Eq:prob_SDP1}, where two are particular to HOQO with known input state task. 
\begin{figure}[t!]
    \centering
\includegraphics[width=0.9\textwidth]{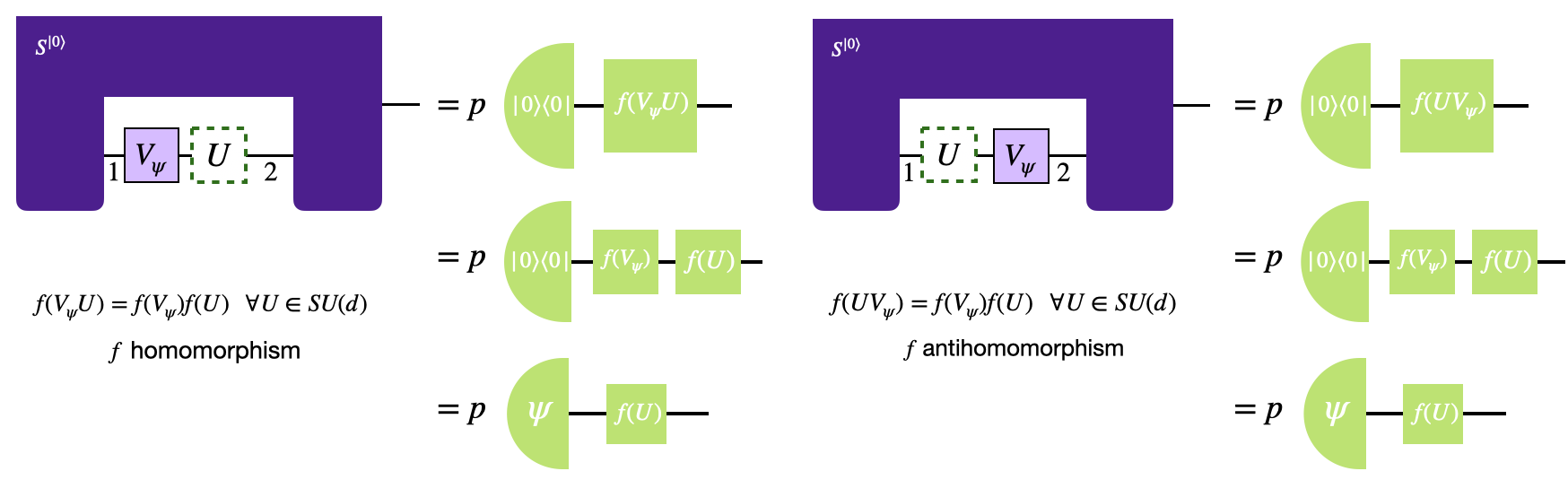}
    \caption{ Diagrammatic proof that the performance of the supermap implementing homomorphic and antihomomorphic transformations is independent of the choice of the known input state. The figure shows that given a supermap $S$ that transforms $U$ into $f(U)\ket{0}$ with probability $p$, there exits a supermap $S^{c}$  that transforms $U$ into $f(U)\ket{\psi}$ with probability $p$, where $\ket{\psi}$ is an arbitrary qudit state, and $V_\psi$ is a unitary operation that such that $f(V_\psi)\ket{0}=\ket{\psi}$ . Hence, without loss in performance, we can consider a supermap acting on a $\ketbra{0}{0}$.
    }
    \label{fig:independence_of_state}
\end{figure}
\begin{itemize}
    \item \textbf{Independence from the choice of known state (pure state case):} Without loss of performance, we can choose to solve the problem for any arbitrary pure state \( \ket{\psi} \). Specifically, for the functions considered in this work, any protocol that converts an arbitrary $U$ into $f(U)\ket{0}$ with probability $p$, can be converted into another protocol arbitrary $U$ into $f(U)\ket{\psi}$ with the same probability $p$, where $\ket{\psi}$ is a known arbitrary state. We now state this result formally, and its proof is illustrated in Fig.~\ref{fig:independence_of_state}.
    
    Even if this section is focused on transformations where the input unitary operation is called only once, the theorem below is stated in a more general version, where $k$ calls of the input unitary operations are available, as we will detail in Sec.~\ref{sec:k>1}.
    
    \begin{restatable}[Independence of the choice of known state]{theorem}{IndependenceThmProb} \label{thm:independ}
    Let $f:\mathrm{SU}(d) \rightarrow \mathrm{SU}(d)$ be a homomorphism or antihomomorphism, and let $S^{\ket{0}}_{\I\O\F}$ be a parallel/sequential/general\footnote{When a single call of the input operation is available, i.e., $k=1$, there is no distinction between these sets of strategies. In the case of $k>1$ calls, we distinguish three classes of protocols: parallel, sequential, and general strategies, which we will specify in detail in Sec.~\ref{sec:k>1}.} $k$-slot superinstrument element such that
    \begin{equation}
         S^{\ket{0}}_{\I\O\F} * \ChoiU^{\otimes k}_{\I\O} = p f(U)\ketbra{0}{0 } f(U)^{\dagger} \quad \forall U\in\mathrm{SU}(d).
    \end{equation}
That is, it transforms $k$ calls of $U$ into the state $f(U)\ket{0}$ with probability $p$.

For any pure quantum state $\ket{\psi}\in\mathbb{C}^d$, there exists a parallel/sequential/general superinstrument with an instrument element such that
 \begin{equation}
         S^{\ket{\psi}}_{\I\O\F} * \ChoiU^{\otimes k}_{\I\O} = p f(U)\ketbra{\psi}{\psi } f(U)^{\dagger} \quad \forall U\in\mathrm{SU}(d)
    \end{equation}
with the same probability $p$. 
\end{restatable}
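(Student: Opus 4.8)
The plan is to reduce the $\ket{\psi}$ case to the already-solved $\ket{0}$ case by absorbing a fixed, state-preparing unitary into the supermap, using the (anti)homomorphism property of $f$. First I would pick a unitary $V_\psi\in\mathrm{SU}(d)$ with $f(V_\psi)\ket{0}=\ket{\psi}$. Such a $V_\psi$ exists for every nontrivial homomorphism or antihomomorphism considered here, because each such $f$ is a bijection of $\mathrm{SU}(d)$ onto itself and the orbit $\{f(V)\ket{0}:V\in\mathrm{SU}(d)\}$ sweeps out all pure states up to a global phase; explicitly one takes $V_\psi$ to be $W$, $\overline{W}$, $W^{T}$, or $W^{\dagger}$ for $f=\mathrm{id},\overline{\phantom{U}},{}^{T},{}^{-1}$ respectively, where $W\ket{0}=\ket{\psi}$. (The trivial map $f(U)=\mathbb{1}$ is degenerate: its output is $U$-independent, so one simply conjugates $S^{\ket{0}}$ on the $\F$ wire by a fixed preparation unitary and is done.)

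Next I would feed a modified input unitary into the given protocol. For a homomorphism, plugging in $UV_\psi$ and using $f(UV_\psi)=f(U)f(V_\psi)$ together with $f(V_\psi)\ket{0}=\ket{\psi}$ gives
\[
S^{\ket{0}}_{\I\O\F} * \dketbra{UV_\psi}{UV_\psi}^{\otimes k}_{\I\O} = p\, f(U)\ketbra{\psi}{\psi}f(U)^{\dagger}\qquad\forall U\in\mathrm{SU}(d),
\]
since $U\mapsto UV_\psi$ still ranges over all of $\mathrm{SU}(d)$. The Choi identity $(A\otimes B)\dket{U}=\dket{BUA^{T}}$ yields $\dket{UV_\psi}=(V_\psi^{T}\otimes\id_\O)\dket{U}$, so the inserted unitary acts only on the $\I$ wires. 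For an antihomomorphism the same steps go through with $V_\psi U$ in place of $UV_\psi$ (now $f(V_\psi U)=f(U)f(V_\psi)$), and the Choi identity $\dket{V_\psi U}=(\id_\I\otimes V_\psi{}_\O)\dket{U}$ places the inserted unitary on the $\O$ wires instead.

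The core step is to push this fixed unitary through the link product and reinterpret it as a modification of the supermap. Using the definition of the link product and cyclicity of the partial trace over $\I\O$ (valid because the inserted unitary acts trivially on $\F$), I would establish
\[
S * \left[(W\otimes\id)\,N\,(W^{\dagger}\otimes\id)\right] = \left[(W^{T}\otimes\id)\,S\,(\overline{W}\otimes\id)\right] * N ,
\]
where the partial transpose built into the link product turns conjugation by $W$ on the $N$-side into conjugation by $W^{T}$ on the $S$-side. Applying this with $W=V_\psi^{T}$ on the $\I$ wires (homomorphism case) defines $S^{\ket{\psi}}_{\I\O\F}:=(V_\psi)^{\otimes k}_{\I}\,S^{\ket{0}}_{\I\O\F}\,(V_\psi^{\dagger})^{\otimes k}_{\I}$ and gives $S^{\ket{\psi}} * \ChoiU^{\otimes k} = p\, f(U)\ketbra{\psi}{\psi}f(U)^{\dagger}$, exactly as required; the antihomomorphism case produces the analogous conjugation by $V_\psi^{T}$ on the $\O$ wires. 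Thus $S^{\ket{\psi}}$ is obtained from $S^{\ket{0}}$ simply by conjugating with a fixed unitary on the input (or output) slots, and the success probability $p$ is unchanged.

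Finally I would verify that $S^{\ket{\psi}}$ is a genuine superinstrument element of the same class. Positivity is immediate, since unitary conjugation preserves $\ge 0$; conjugating the whole deterministic superchannel $S^{\ket{0}}_{\mathrm{ch}}=S^{\ket{0}}_\checkmark+S^{\ket{0}}_\times$ by the same unitary keeps each defining constraint in Eqs.~\eqref{Eq:superch_known_constr}--\eqref{Eq:superch_known_constr3} intact (the unitary commutes with the partial traces over the remaining systems and obeys $V\id V^{\dagger}=\id$), so the complementary failure element stays positive and the pair remains a valid superinstrument. The step requiring the most care is preserving the parallel/sequential/general circuit structure when $k>1$: here I would argue that conjugating every input (respectively output) slot by the \emph{fixed, known} gate $V_\psi$ leaves the causal topology untouched, because the black-box slots still call the same $\map{C}$, so the linear projector constraints characterizing each class are invariant. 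The main obstacle is therefore not conceptual but the careful bookkeeping of the transposes and complex conjugates induced by the link product, and making this structural-invariance argument fully precise in the multi-slot setting.
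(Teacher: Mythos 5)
Your proof is correct and follows essentially the same route as the paper's: substitute $U\mapsto UV_\psi$ (resp.\ $V_\psi U$), use surjectivity of $f$ to pick $V_\psi$ with $f(V_\psi)\ket{0}=\ket{\psi}$, and absorb the fixed unitary into the supermap. You are in fact more explicit than the paper, which leaves the construction of $S^{\ket{\psi}}$ and the verification that it remains a valid superinstrument element of the same class to a diagrammatic argument (Fig.~\ref{fig:independence_of_state}); your link-product identity and structural-invariance check fill in exactly those details.
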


We remark that the situation is more subtle when dealing with mixed states and bipartite states that may be entangled with an external reference. The mixed case is discussed in Sec.~\ref{sec:mixed}, and the bipartite case is discussed in Sec.~\ref{sec:bip}.

\begin{figure}[t!]
  \centering
  \begin{subfigure}[b!]{0.7\textwidth}
    \centering
    \includegraphics[width=\linewidth]{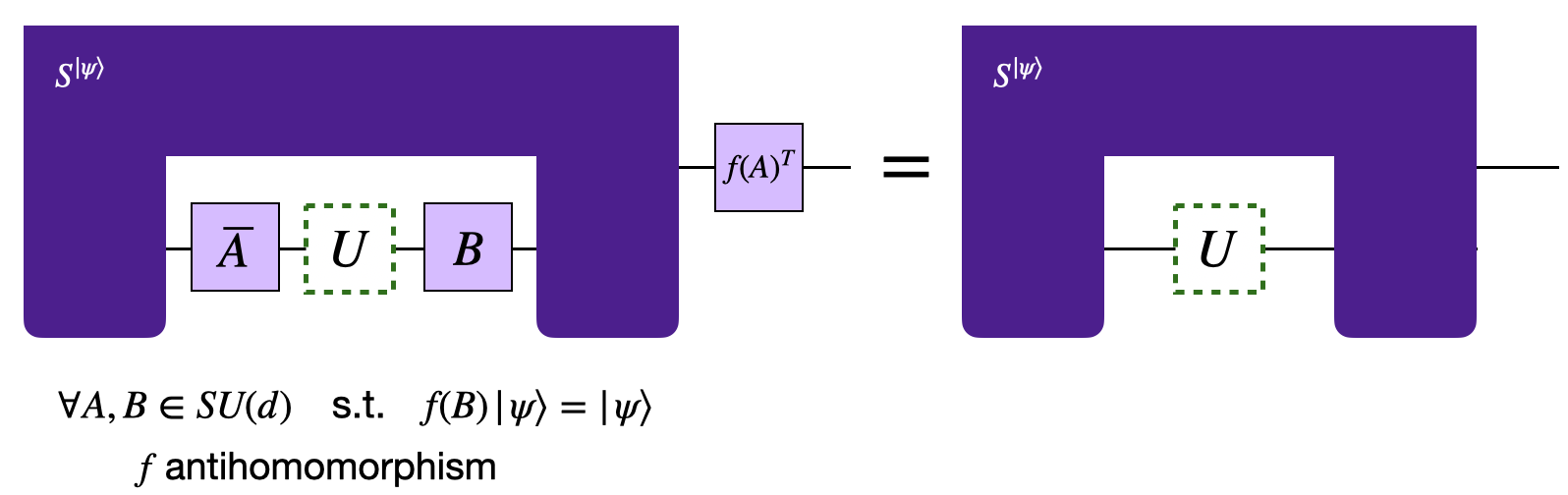}
    \captionsetup{width=1.0\textwidth}
    \caption{Full twirling symmetry of the known state supermap implementing an antihomomorphic function. The symmetry holds for all \( A, B \in \mathrm{SU}(d) \), where \( f(B)\ket{\psi} = \ket{\psi} \).}
    \label{fig:sym1}
  \end{subfigure}
  \quad \quad 
  \begin{subfigure}[b!]{0.7\textwidth}
    \centering
    \includegraphics[width=\linewidth]{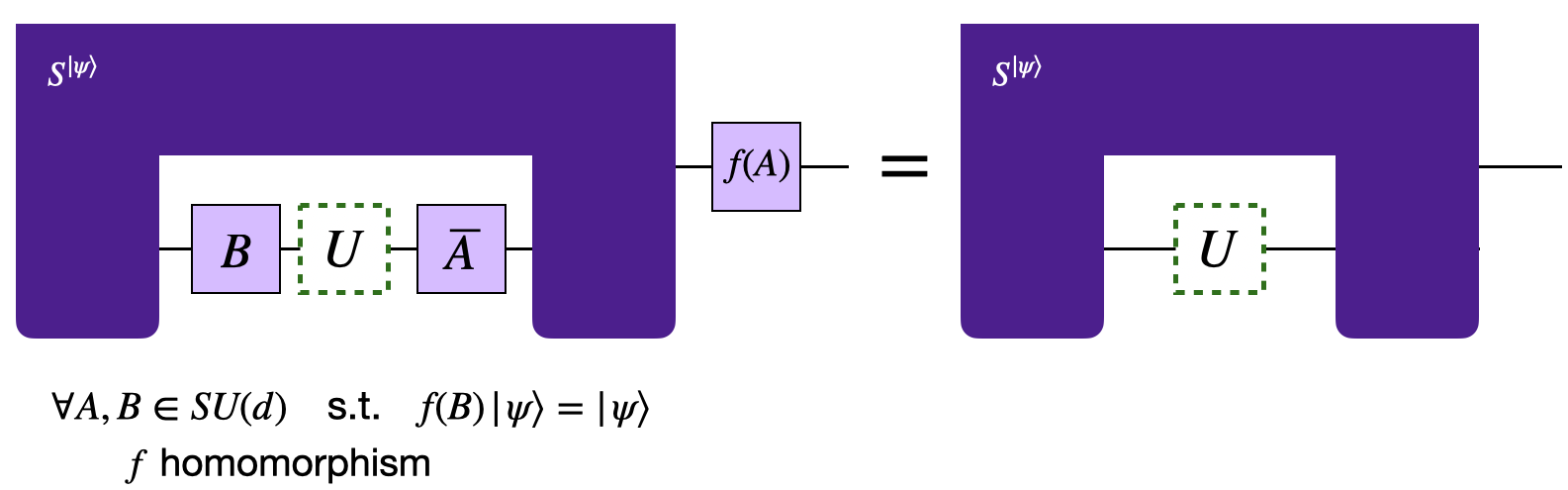}
    \captionsetup{width=1.0\textwidth}
    \caption{Full twirling symmetry of the known state supermap implementing a homomorphic function. The symmetry holds for all \( A, B \in \mathrm{SU}(d) \), where \( f(B)\ket{\psi} = \ket{\psi} \).
}
    \label{fig:sym2}
  \end{subfigure}
\captionsetup{width=0.9\textwidth}
  \caption{For homomorphic and antihomomorphic functions $f$, a supermap $S^{\ket{\psi}}$ is said to be covariant if it respects the symmetries described in this figures. As proved in Thm.~\ref{thm:twirl_prob_constr} and Thm.~\ref{thm:KN_sym}, when our goal is to transform an arbitrary unitary channel represented by $U\in\mathrm{SU}(d)$ into the state $f(U)\ket{\psi}$, we may restrict ourselves to covariant supermaps without loss in performance. Also, this covariance property holds even when multiple calls of the input unitary are available. 
  }
  \label{fig:symmetries}
\end{figure}

\item \textbf{Covariance symmetries:} Covariance symmetries arise directly from the universality over input unitaries and will allow us to  restrict our optimisation set to supermaps that are invariant under a particular group twirling transformation. Here, we state two such symmetries: one that closely relates the case of an unknown state supermap~\cite{Quintino2019prob,Quintino_2022det}, as depicted in Fig.~\ref{fig:symmetries}, and another that is intrinsic to the structure of the the known state problem, as depicted in Fig.~\ref{fig:sym_KN}. The former allows us to identify the basis of the supermap that maximizes problem in Eq.~\eqref{Eq:prob_SDP1}, and we present it in the following theorem.

\begin{restatable}[Input state independent covariance]{theorem}{TwirlProbConstr}\label{thm:twirl_prob_constr}
Let $f:\mathrm{SU}(d) \rightarrow \mathrm{SU}(d)$, a homomorphism or antihomomorphism, $\ket{\psi}\in\mathbb{C}^d$ be a fixed state, and $S_{\I\O\F}$ be a parallel/sequential/general $k$-slot superinstrument element such that
    \begin{equation}
         S_{\I\O\F} * \ChoiU^{\otimes k}_{\I\O} = p f(U)\ketbra{\psi} f(U)^{\dagger} \quad \forall U\in\mathrm{SU}(d).
    \end{equation}
    That is, it transforms $k$ calls of $U$ into the state $f(U)\ket{\psi}$ with probability $p$.

There exists a parallel/sequential/general superinstrument with an instrument element $S^{c}_{\I\O\F}$ such that
 \begin{equation}
         S^{c}_{\I\O\F} * \ChoiU^{\otimes k}_{\I\O} = p f(U)\ketbra{\psi}{\psi } f(U)^{\dagger} \quad \forall U\in\mathrm{SU}(d)
    \end{equation}
with the same probability $p$, where  $S^{c}_{\I\O\F}$ respects $[S^{c}, U_f]=0$, for every $U_f$ s.t.: $U_f \coloneqq U^{\otimes k}_{\I} \otimes \mathbb{1}^{\otimes k}_{\O} \otimes \overline{U}_{\F}$ for transposition, $U_f = \mathbb{1}^{\otimes k}_{\I} \otimes U^{\otimes k}_{\O} \otimes U_{\F} $ for conjugation and $U_f = U^{\otimes k}_{\I} \otimes \mathbb{1}^{\otimes k}_{\O} \otimes U_{\F}$ for inversion (as discussed in Sec.~\ref{symmetries}). That is, for every $S_{\I\O\F}$, there exists $S^{c}_{\I\O\F}$ covariant with respect to $U_f$ group s.t. it attains the same success probability.

Additionally, if $S^c_{\I\O\F}$ is covariant superinstrument element $[S^c, U_f]=0$ for its associated $f$, then  
\begin{equation}   S^c_{\I\O\F}*\dketbra{\mathbb{1}}{\mathbb{1}}^{\otimes k}_{\I\O} = p f(\id) \ketbra{\psi}{\psi}_{\F}  f(\id)^\dagger 
\end{equation}
implies
\begin{equation}
     S^c_{\I\O\F}*\dketbra{U}{U}^{\otimes k}_{\I\O} = p f(U)\ketbra{\psi}{\psi}_{\F}f(U)^{\dagger} \quad \forall U \in \mathrm{SU}(d).
\end{equation}
That is, when a covariant protocol $S^c$  transforms $k$ calls of the identity channel described by the unitary $U=\id$ into the state $f(\id)\ket{\psi}$ with probability $p$, it necessarily transforms $k$ calls of an arbitrary unitary channel $U$
into $f(U)\ket{\psi}$ with the same probability $p$.
\end{restatable}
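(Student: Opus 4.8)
The plan is to prove the statement by a \emph{group-twirling} (symmetrization) argument, exploiting the fact that the defining constraint Eq.~\eqref{Eq:def_prob_impl} is required to hold for \emph{all} $U\in\mathrm{SU}(d)$ simultaneously. Two elementary facts about the Choi representation will handle the bookkeeping. First, the transformation rule $(A_\I\otimes B_\O)\dket{U}=\dket{BUA^{T}}$, which in particular gives $\dket{UA}=(A^{T}_\I\otimes\id_\O)\dket{U}$, $\dket{AU}=(\id_\I\otimes A_\O)\dket{U}$, and $\dket{U}=(U^{T}_\I\otimes\id_\O)\dket{\id}=(\id_\I\otimes U_\O)\dket{\id}$. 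Second, a ``conjugation-transfer'' rule for the link product of Eq.~\eqref{eq:link_product}: if $W$ acts only on the contracted space $\I\O$, then $S*(WXW^{\dagger})=[(W^{T}\otimes\id_\F)\,S\,(\overline{W}\otimes\id_\F)]*X$, which follows from the partial transpose in the definition of $*$ together with cyclicity of the partial trace, while operators supported on the uncontracted space $\F$ simply pass through the link product.

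First I would establish invariance of the defining equation. Fixing $A\in\mathrm{SU}(d)$ and replacing $U$ by $UA$ (for the antihomomorphisms $f=(\cdot)^{T},(\cdot)^{-1}$) or by $AU$ (for the homomorphism $f=\overline{(\cdot)}$) keeps the equation valid, since it is assumed for every group element. By the Choi identities, this replacement turns $\dketbra{U}{U}^{\otimes k}$ into a conjugation by $A$ on the contracted legs, which the transfer rule moves onto $S$; simultaneously the (anti)homomorphism property $f(UA)=f(A)f(U)$, resp.\ $f(AU)=f(U)f(A)$, produces an $A$-dependent conjugation of the output on $\F$. Collecting both, a short calculation (in which unitarity relations such as $\overline{A}\,A^{T}=\id$ and $A A^{\dagger}=\id$ make the stray $A$-factors on $\F$ cancel against those coming from the slot) shows that $U_f(A)\,S\,U_f(A)^{\dagger}$ again satisfies Eq.~\eqref{Eq:def_prob_impl} with the \emph{same} $p$, where $U_f(A)=A^{\otimes k}_\I\otimes\id_\O\otimes\overline{A}_\F$ for transposition, $A^{\otimes k}_\I\otimes\id_\O\otimes A_\F$ for inversion, and $\id_\I\otimes A^{\otimes k}_\O\otimes A_\F$ for conjugation, which are exactly the stated $U_f$ (pinned down only up to the immaterial relabellings $A\mapsto\overline{A},A^{T}$, harmless after averaging). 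I then define the covariant operator by the Haar twirl $S^{c}:=\int_{\mathrm{SU}(d)}U_f(A)\,S\,U_f(A)^{\dagger}\,\mathrm{d}A$. Since each integrand satisfies the defining equation with the same $p$ and the right-hand side is $A$-independent, so does $S^{c}$, and by construction $[S^{c},U_f(A)]=0$ for all $A$, i.e.\ $[S^{c},U_f]=0$.

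Next I would verify that $S^{c}$ is a legitimate superinstrument element of the same protocol class. Positivity is immediate since each $U_f(A)\,S\,U_f(A)^{\dagger}\geq 0$ and averages of positive operators are positive. For the sandwich $0\leq S_\checkmark\leq S_{\textup{ch}}$ I would twirl the whole superchannel as well, so that $0\leq U_f(A)S_\checkmark U_f(A)^{\dagger}\leq U_f(A)S_{\textup{ch}}U_f(A)^{\dagger}$ holds for every $A$ (conjugation by a unitary preserves operator inequalities) and then integrate; it remains to invoke the key lemma that the superchannel constraints Eqs.~\eqref{Eq:superch_known_constr}--\eqref{Eq:superch_known_constr3} (and, for $k>1$, the parallel/sequential/general causal constraints) are invariant under conjugation by $U_f(A)$. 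This holds because $U_f(A)$ is a tensor product of unitaries acting separately on the $\I$, $\O$, and $\F$ legs, so it commutes with the trace-and-replace maps defining those constraints; the same tensor structure, acting as $A^{\otimes k}$ on the $k$ input copies, preserves the slot ordering and hence maps each protocol class to itself.

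Finally, for the converse I would show that covariance reduces the all-$U$ condition to the single equation at $U=\id$. Writing $\dketbra{U}{U}^{\otimes k}=(U^{T}_\I{}^{\otimes k}\otimes\id_\O)\,\dketbra{\id}{\id}^{\otimes k}\,(\overline{U}_\I{}^{\otimes k}\otimes\id_\O)$ and applying the transfer rule moves a factor $U_\I^{\otimes k}$ onto $S^{c}$ on the $\I$ legs only. Covariance $[S^{c},U_f]=0$ then converts this $\I$-conjugation into an $\F$-conjugation; for transposition one gets $U_\I^{\otimes k}S^{c}U_\I^{\dagger\,\otimes k}=U^{T}_\F\,S^{c}\,\overline{U}_\F$, and since $\F$-operators pass through the link product, $S^{c}*\dketbra{U}{U}^{\otimes k}=U^{T}_\F\big(S^{c}*\dketbra{\id}{\id}^{\otimes k}\big)\overline{U}_\F=p\,U^{T}\ketbra{\psi}{\psi}\overline{U}=p\,f(U)\ketbra{\psi}{\psi}f(U)^{\dagger}$, using $f(\id)=\id$; the inversion and conjugation cases are identical with their respective $U_f$. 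I expect the main obstacle to be a matter of care rather than depth: keeping the transposes, complex conjugates, and daggers consistent across the Choi identities and the link-product transfer rule, and rigorously establishing the invariance of the superchannel (and causal-order) constraints under the $U_f$-twirl, which is precisely what guarantees that $S^{c}$ remains an admissible and class-preserving superinstrument element.
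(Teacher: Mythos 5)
Your proposal is correct and follows essentially the same route as the paper's proof: a change of variables $U\mapsto UB$ (or $BU$) in the universal constraint, a transfer of the resulting conjugation onto $S$ via the link product, a Haar average to produce the covariant $S^{c}$, and then the converse direction by writing $\dketbra{U}{U}^{\otimes k}$ as a conjugation of $\dketbra{\id}{\id}^{\otimes k}$ and using covariance to push the unitary onto the $\F$ leg. The only addition beyond the paper's argument is your explicit check that the twirl preserves positivity, the sandwich $0\leq S_{\checkmark}\leq S_{\textup{ch}}$, and the protocol class, which the paper leaves implicit; this is a welcome but not structurally different step.
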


\begin{figure}[t!]
    \centering
\includegraphics[width=0.9\textwidth]{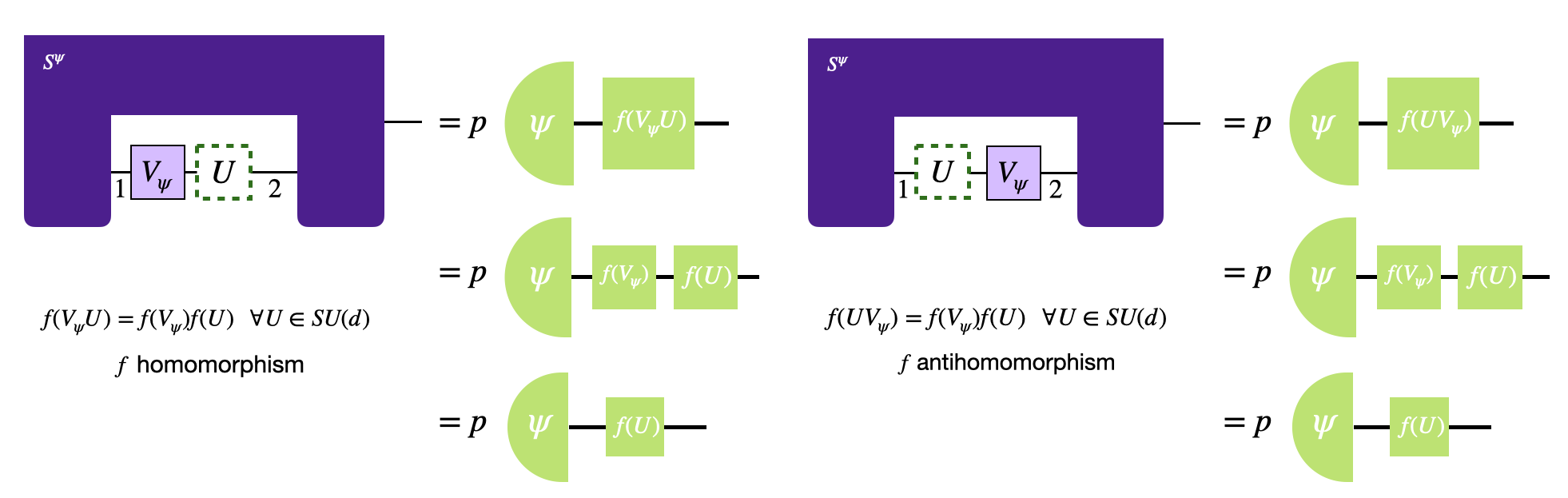}
    \caption{ Input state dependent covariance of known input state supermaps implementing homomorphic and antihomomorphic functions. This constitutes the second covariance property, specific to known state 
supermaps, with respect to which the supermap can be twirled without any 
loss of performance, thereby simplifying the problem.}
    \label{fig:sym_KN}
\end{figure}

Note that for all functions $f$ considered here, $U\mapsto U_f$ is homomorphism from $\mathrm{SU}(d)$  to $\mathrm{SU}(d)^{\otimes 2k+1}$, and thus a representation of $\mathrm{SU}(d)$. Hence, we can employ methods from representation theory. 

When $\mathcal{H}$ is a finite-dimensional Hilbert space carrying a unitary representation of a compact group, such as $\mathrm{SU}(d)$, the space $\mathcal{H}$ can be decomposed as $\mathcal{H} \cong \bigoplus_\lambda \mathcal{R}_\lambda \otimes \mathcal{M}_\lambda$, where $\lambda$ runs over all irreducible representations (irreps) of the group $\mathrm{SU}(d)$, $\mathcal{R}_\lambda$ is the space carrying the irrep $\lambda$, and $\mathcal{M}_\lambda$ is its corresponding multiplicity space~\cite{fulton1991representation}. In this decomposition, we can write
$U_f= \bigoplus_\lambda U_\lambda \otimes \id_{\mathcal{M}_\lambda}$, where $U_\lambda \in \mathcal{L}(\mathcal{R}_\lambda)$. 

In the following sections, up to Sec.~\ref{sec:k>1}, we will first focus on the case of $k=1$ query of the input unitary. For this $k=1$ scenario, we have the following explicit statements.
\begin{itemize}
\label{symmetries}
    \item \textbf{Transposition} exhibits the symmetry $[S_{\I\O\F}, U_{\I} \otimes \id_\O \otimes \overline{U}_{\F}] = 0 \; \; \forall U$. The linear space of operators that commute with $U_\I \otimes \overline{U}_{\F}$ for all $U \in \mathrm{SU}(d)$ is the two-dimensional space with basis
$\{ \ketbra{\phi^{+}}{\phi^{+}}_{\I\F}, \mathbb{1}_{\I\F} - \ketbra{\phi^{+}}{\phi^{+}}_{\I\F} \}$. 
    \item 
\textbf{ Storage and Retrieval (SAR)} As we will make precise in Sec.~\ref{sec:SAR} and Thm.~\ref{thm:SAR_transp_Eq}, the SAR problem is equivalent to the problem of transposition\footnote{This equivalence is also generalised to the case where multiple calls of the unknown input unitary are available, where unitary SAR is equivalent to parallel unitary transposition.}.  
\item 
\textbf{Conjugation} exhibits the symmetry 
$[S_{\I\O\F}, \id_\I \otimes U_{\O} \otimes U_{\F}] = 0 \;\; \forall U$, 
which implies that the linear space of operators commuting with 
$U_{\O} \otimes U_{\F}$ for all $U \in \mathrm{SU}(d)$ 
is the two-dimensional space with basis 
$\{ P^{S}_{\O\F}, P^{A}_{\O\F} \}$, 
where $P^{S}$ and $P^A$ are the projectors onto the symmetric and anti-symmetric subspaces, respectively.
\item 
\textbf{Inversion} exhibits the symmetry $[S_{\I\O\F}, U_\I \otimes \id_\O \otimes U_{\F}] = 0 \; \; \forall U$ which implies a basis for the linear space of operators that commute with all unitaries $U_\I\otimes {U}_\F$  is $\{ P^{S}_{\I\F}, P^{A}_{\I\F} \}$. 
\end{itemize}

As previously mentioned, the known input state problem possesses a symmetry not present in the unknown input state case. 

\begin{restatable}[Input state dependent covariance]{theorem}{KNsym}\label{thm:KN_sym}
Let $f:\mathrm{SU}(d) \rightarrow \mathrm{SU}(d)$ be a homomorphism or antihomomorphism, let $\ket{\psi}\in\mathbb{C}^d$ be a fixed state, and let $S_{\I\O\F}$ be a parallel/sequential/general $k$-slot superinstrument element such that
\begin{equation}
         S_{\I\O\F} * \ChoiU^{\otimes k} = p f(U)\ketbra{\psi} f(U)^{\dagger} \quad \forall U\in\mathrm{SU}(d).
\end{equation}
That is, it transforms $k$ calls of $U$ into the state $f(U)\ket{\psi}$ with probability $p$.

There exists a parallel/sequential/general superinstrument with an instrument element $S^{c}_{\I\O\F}$ such that
\begin{equation}
         S^{c}_{\I\O\F} * \ChoiU^{\otimes k} = p f(U)\ketbra{\psi}{\psi } f(U)^{\dagger} \quad \forall U\in\mathrm{SU}(d)
\end{equation}
with the same probability $p$, where $S^{c}_{\I\O\F}$ is covariant; that is, it respects $[S^{c}, V_f]=0$ for every $V$, where \( V_f \) is defined as follows:
\begin{itemize}
  \item For transposition: \( \mathbb{1}_{\I}^{\otimes k}\otimes \overline{V}_{\O}^{\otimes k}\otimes \mathbb{1}_{\F}  \), with $V\in \mathrm{SU}(d)$ and \( V_{\O}^T \ket{\psi} = \ket{\psi} \),
  \item For conjugation: \( V_f = V_{\I}^{\otimes k} \otimes \mathbb{1}^{\otimes k}_{\O} \otimes \mathbb{1}_{\F} \), with $V\in \mathrm{SU}(d)$ and  \( \overline{V}_{\I} \ket{\psi} = \ket{\psi} \),
  \item For inversion: \( \mathbb{1}_{\I}^{\otimes k}\otimes \overline{V}_{\O}^{\otimes k}\otimes \mathbb{1}_{\F} \), with $V\in \mathrm{SU}(d)$ and  \( V_{\O}^{-1} \ket{\psi} = \ket{\psi} \). 
\end{itemize}
\end{restatable}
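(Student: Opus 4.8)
The plan is to establish the covariance by an explicit Haar twirl over the stabilizer subgroup of $\ket{\psi}$, using crucially that the defining equation must hold for \emph{all} $U\in\mathrm{SU}(d)$. The first step is to note that for each function $f$ there is a direction of multiplication --- left multiplication $U\mapsto VU$ for transposition and inversion, right multiplication $U\mapsto UV$ for conjugation --- together with a subgroup $G\subseteq\mathrm{SU}(d)$ stabilizing $\ket{\psi}$, such that the right-hand side $f(U)\ketbra{\psi}{\psi}f(U)^{\dagger}$ is left \emph{invariant} under the replacement. For transposition, $f(VU)=U^{T}V^{T}$, so that $f(VU)\ketbra{\psi}{\psi}f(VU)^{\dagger}=U^{T}\,V^{T}\ketbra{\psi}{\psi}\overline{V}\,\overline{U}$, which collapses to $U^{T}\ketbra{\psi}{\psi}\overline{U}$ precisely when $V^{T}\ket{\psi}=\ket{\psi}$; the analogous computations single out $\overline{V}\ket{\psi}=\ket{\psi}$ for conjugation and $V\ket{\psi}=\ket{\psi}$ for inversion, exactly matching the three stabilizer conditions in the statement.

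Next I would translate the replacement into an action on the input Choi operator. Using $\dket{VU}=V_{\O}\dket{U}$ (left multiplication) and $\dket{UV}=V^{T}_{\I}\dket{U}$ (right multiplication), the operator $\dketbra{U}{U}^{\otimes k}$ gets conjugated by a unitary acting on the $k$ copies of a single wire, either $\O$ or $\I$. The key link-product identity, which follows from the partial transpose in the definition Eq.~\eqref{eq:link_product}, is that conjugating the contracted argument on $\O$ (resp.\ $\I$) by a unitary $W$ is equivalent to conjugating $S$ on the same wire by $W^{T}$, schematically $S*(W_{\O} X W_{\O}^{\dagger})=\bigl(W^{T}_{\O} S\,(W^{T}_{\O})^{\dagger}\bigr)*X$. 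Combining this with the invariance of the right-hand side shows that, for every $V\in G$, the conjugated operator $g_V S g_V^{\dagger}$ --- with $g_V=(V^{T})^{\otimes k}$ on $\O$ for transposition and inversion, and $g_V=V^{\otimes k}$ on $\I$ for conjugation --- satisfies the \emph{same} defining equation with the \emph{same} probability $p$.

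I would then define $S^{c}\coloneqq\int_{G}dV\,g_V\,S\,g_V^{\dagger}$, the Haar average over the compact group $G$. By linearity of the link product, $S^{c}$ still obeys $S^{c}*\dketbra{U}{U}^{\otimes k}=p\,f(U)\ketbra{\psi}{\psi}f(U)^{\dagger}$, and by construction it commutes with the whole representation $\{g_V\}$ (illustrated in Fig.~\ref{fig:sym_KN}). The final bookkeeping step is to identify this twirl group with the covariance group $\{V_f\}$ claimed in the statement: for unitaries $V^{T}=\overline{V^{\dagger}}$, and since the relevant stabilizer is closed under $V\mapsto V^{\dagger}$, the operator set $\{(V^{T})^{\otimes k}\}$ emerging from the twirl coincides with the set $\{\overline{V}^{\otimes k}\}$ appearing in $V_f$ for transposition and inversion, while for conjugation the match is already direct since there $g_V=V^{\otimes k}_{\I}$. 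Hence $[S^{c},V_f]=0$ is exactly the invariance produced by the average. It then remains to confirm that $S^{c}$ is a legitimate superinstrument element: taking $S^{c}_{\textup{ch}}=\int_G dV\,g_V S_{\textup{ch}} g_V^{\dagger}$, positivity and the ordering $0\le S^{c}\le S^{c}_{\textup{ch}}$ are preserved under unitary conjugation and convex averaging, and the superchannel constraints Eqs.~\eqref{Eq:superch_known_constr}--\eqref{Eq:superch_known_constr3} transform covariantly because each $g_V$ acts only on the $\I$ or $\O$ wires, a routine verification.

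I expect the main obstacle to be the transpose/conjugate bookkeeping of the last paragraph: getting the link-product pushing identity right (which wire carries the action, and $W$ versus $W^{T}$), and then recognizing that the group which naturally emerges from the twirl, expressed through transposes, is identical to the $\overline{V}$-based covariance group written in the statement. Everything else --- the invariance of the target, the linearity of the average, and the preservation of the superinstrument constraints --- is mechanical, and the argument runs in parallel to the input-state-independent covariance of Thm.~\ref{thm:twirl_prob_constr}.
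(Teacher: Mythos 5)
Your proposal is correct and follows essentially the same route as the paper's proof: a Haar twirl of $S$ over the stabilizer subgroup of $\ket{\psi}$, with the group action pushed through the link product onto the $\O$ (or $\I$) wire and absorbed by the invariance of $f(U)\ketbra{\psi}{\psi}f(U)^{\dagger}$ under the stabilizer. Your extra bookkeeping on identifying $\{(V^{T})^{\otimes k}\}$ with $\{\overline{V}^{\otimes k}\}$ via closure under $V\mapsto V^{\dagger}$, and on preservation of the superinstrument constraints, is only spelled out more explicitly than in the paper, which leaves these as direct verification.
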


\end{itemize}

By exploiting these symmetries, we derive analytical solutions to the SDP problems addressed in this work, as detailed in Sec.~\ref{sec:tasks_applications}.

\subsection{Deterministic non-exact unitary transformations on known pure states}\label{sec:deterministic}

Another way to circumvent the impossibility of deterministic and exact realisation is to require the transformation to be deterministic, but allow the outcome to be non-exact. In this case, our goal is to obtain the best possible approximation. More precisely, given a function $f: \mathrm{SU}(d) \rightarrow \mathrm{SU}(d)$, and a fixed quantum state $\rho$, we have a deterministic approximate realisation of a task if there exists a supermap implementation that approximately achieves the desired transformation, but implements it in every run, i.e. for unitary channels $\map{U}(\rho)=U \rho U^\dagger$, we have that
\begin{align}\label{approx_general_rel}
\smap{S}^{\rho}(\map{U}) \approx f(U)\rho f(U)^{\dagger} \quad \forall U \in \mathrm{SU}(d)
\end{align}
or in the Choi picture,
\begin{equation}
    S^{\rho}_{\I\O\F}*\ChoiU_{\I\O} \approx f(U)\rho f(U)^{\dagger} \quad \forall U \in \mathrm{SU}(d).
\end{equation}
One might notice that this trivially holds for any supermap construction, as one can always achieve poor approximations of the given function. Also, one may always use a using the measure-and-prepare strategy, which we estimate the input unitary operation and then simply prepares the desired state $f(U)\rho f(U)^\dagger$.
However, our focus is on quantifying how close the realised transformation is to the desired function and identifying the best possible approximation based on the chosen figure of merit.

There are several ways to define a figure of merit, which, in principle, give different meaning to Eq.~\eqref{approx_general_rel}. To define how close the outcome of a protocol is to the desired realisation of the function, we will use the notion of fidelity. Fidelity is generally used to measure the closeness between quantum states or channels, in the following way: 
\begin{itemize}
  \item The fidelity between two arbitrary quantum states  $\sigma, \rho \in \mathcal{L}(\mathcal{H})$ is given by
  \begin{align}
     F(\sigma, \rho)\coloneqq \left(\tr(\sqrt{ \sqrt{\sigma} \rho \sqrt{\sigma}}) \right)^2.
  \end{align}
  When one of the states is pure, for instance, if $\sigma=\ketbra{\psi}$, the formula simplifies to:
  \begin{align}
  F(\ketbra{\psi}, \rho) = \bra{\psi}  \rho \ket{\psi}.
  \end{align}
  
  \item The channel fidelity, between an arbitrary channel $\map{C}:\mathcal{L}(\mathcal{H}_\textup{P}) \to \mathcal{L}(\mathcal{H}_\textup{F})$ and a unitary channel $\map{U}$ is given by:
  \begin{equation}
      F(C, \dketbra{U}{U}) = \frac{1}{d_\P^2} \dbra{U} C\dket{U},
  \end{equation}
  where \(d_\P\) is the dimension of $\mathcal{H}_\P$ and \(C,\dketbra{U}{U} \in \mathcal{L}\left(\mathcal{H}_\P \otimes \mathcal{H}_\F\right)\) are the respective Choi operators of $\map{C}$ and $\map{U}$.
\end{itemize}

In contrast to the unknown state protocols considered in~\cite{Quintino_2022det}, we work with state fidelity rather than channel fidelity, since the output quantity \( S_{\I\O\F} * \ChoiU_{\I\O} \) is a quantum state. When analysing deterministic, non-exact transformations via state fidelity, we restrict our attention to the case where the known quantum state is pure, i.e., \( \rho = \ketbra{\psi} \), as the fidelity function takes a considerably simpler form in this setting. Additionally, in Sec.~\ref{sec:tasks_applications}, we restrict to the case of a single input call to the unitary (\( k = 1 \)), although we present the definitions in full generality to allow for considerations with \( k > 1 \). In this case, the fidelity associated with our tasks is given by:
\begin{align}\label{fidelity}
    F(S_{\I\O\F}*\ChoiU^{\otimes k}_{\I\O}, f(U) \ketbra{ \psi}{ \psi}_{\F}f(U)^{\dagger}]) =& \Tr\left( [S_{\I\O\F} * \ChoiU^{\otimes k}_{\I\O} ] \; [f(U) \ketbra{ \psi}{ \psi}_{\F}f(U)^{\dagger}]\right)   \\
    =& \Tr\left(S_{\I\O\F} \; (\ChoiU_{\I\O}^{T})^{\otimes k} \otimes  [f(U) \ketbra{ \psi}{ \psi}_{\F}f(U)^{\dagger}]\right)  \\
    =& \Tr(S_{\I\O\F} \; \dketbra{\overline{U}}{\overline{U}}^{\otimes k}_{\I\O} \otimes  [f(U) \ketbra{ \psi}{ \psi}_{\F}f(U)^{\dagger}]).
\end{align}
This equation would be an appropriate figure of merit if the task were defined for a single unitary $U$. Since our goal is to define the task for all unitaries, we have multiple ways to formulate our figure of merit based on fidelity:
\begin{itemize}
    \item Average fidelity: 
    \begin{equation}
    \expval{ F^{\psi}} \coloneqq\int_{\text {Haar }} F\left(S^{\psi}_{\I\O\F}*\ChoiU^{\otimes k}_{\I\O},f(U)\ketbra{ \psi}{ \psi}_{\F}f(U)^{\dagger}_{\F} \right) \mathrm{d} U.
\end{equation}
where the integral represent the average of fidelity over the Haar measure $\mathrm{d}U$ of $\mathrm{SU}(d)$.

\item Worst-case fidelity:
\begin{align}\label{eq:def_F_wc}
    F^{\psi}_{\mathrm{wc}} \coloneqq \underset{U \in \mathrm{SU}(d)}{\mathrm{inf}}F(S^{\psi}_{\I\O\F}*\ChoiU^{\otimes k}_{\I\O}, f(U)\ketbra{ \psi}{ \psi}_{\F}f(U)^{\dagger}_{\F} )
\end{align} 
that evaluates the minimum fidelity across all possible inputs. The purpose of this figure of merit is to determine the worst-case scenario that one can expect when choosing a random unitary from the set of all possible unitaries.

\item White noise visibility: A superchannel $S$ is said to have white noise visibility if the transformation of $k$ copies of $U$ into $f(U)$ we can write, \begin{equation}\label{Eq:visib_fig_mer}
    S^{\rho}_{\I \O \F} *\dketbra{U}{U}_{\I \O} ^{\otimes k}=\eta \rho_{\P} * \dketbra{f(U)}{f(U)}_{\P \F}+(1-\eta) \frac{\mathbb{1}_{\F}}{d} \quad \forall U\in\mathrm{SU}(d)
\end{equation}
where $\rho \in \mathcal{L}(\mathbb{C}^{d})$ is a quantum state on which $S$ is implementing transformation and $\eta$ is white noise visibility. The parameter of white noise visibility quantifies how much of the original state remains distinguishable from the completely mixed state.
We will discuss this further in Sec.~\ref{sec:mixed}.
\end{itemize}

Analogously to the unknown input state scenario~\cite{Quintino_2022det}, for supermaps satisfying the covariance properties detailed in Sec.~\ref{sec:sym_det}, the figures of merit introduced here are equivalent.
More precisely, for homomorphic or anti-homomorphic functions \( f \), if \( S \) is a supermap that transforms an arbitrary unitary channel described by \( U \in \mathrm{SU}(d) \) into \( f(U)\ketbra{\psi}f(U)^\dagger \) with average fidelity \( F \), then there exists a supermap \( S^{c} \) that performs the same transformation with worst-case fidelity \( F \), and vice versa.
Similarly, for homomorphic or anti-homomorphic functions \( f \), if \( S \) is a supermap that transforms an arbitrary unitary channel \( U \in \mathrm{SU}(d) \) into \( f(U)\ketbra{\psi}f(U)^\dagger \) with average fidelity \( F \), then there exists a supermap \( S^{c} \) that transforms \( U \) into the state
\begin{equation}
    \eta \rho_{\P} * \dketbra{f(U)}{f(U)}_{\P \F} + (1 - \eta)\frac{\mathbb{1}_{\F}}{d},
\end{equation}
with fidelity \( F = \eta + \frac{1 - \eta}{d} \), and vice versa. A proof of these results is provided in App.~\ref{sec:proofs_fig_mer}.

\subsubsection{SDP formulation of the deterministic exact problem}\label{sec:SDP}

Given a function $f:\mathrm{SU}(d)\to \mathrm{SU}(d)$ and fixed quantum state $\ket{\psi}\in\mathbb{C}^d$, the problem of finding the optimal average fidelity that implements $f(U)$ on $\ket{\psi}$ reads as\footnote{In this work, all the integrals are done as a group average. That is, all integrals are done respective to the Haar measurement of a group. Unless stated otherwise, the Haar measure is assumed to be that of $\mathrm{SU}(d)$. }
\begin{align}\label{Eq:SDP_det}
\textup{max}& \expval{F^{\psi}}\coloneqq \int_{\text {Haar }} F\left(S^{\psi}_{\I\O\F}*\ChoiU_{\I\O}^{\otimes k},f(U)\ketbra{ \psi}{ \psi}_{\F}f(U)^{\dagger} \right) \mathrm{d} U. \\
   \textup{ where}& \ S^{\psi}_{\I\O\F} \textup{ is}  \textup{ a valid $k$-slot superchannel.} 
\end{align}
To recognize this as an SDP problem, it is convenient to introduce the notion of the \emph{performance operator} \( \Omega^\psi_{\I\O\F} \in \mathcal{L}(\mathcal{H}_{\I} \otimes \mathcal{H}_{\O} \otimes \mathcal{H}_{\F}) \), which depends on the target function \( f \). The concept of performance operator was introduced in Ref.~\cite{ChiribellaEbler_2016} and further developed in Ref.~\cite{Quintino_2022det} for analysing deterministic higher-order transformations on unknown input states. Similarly to the case of unknown input states, as shown in App.~\ref{append:proofs_det}, one can show that if we define
\begin{equation}\label{perfk_def}
    \Omega^{\psi}_{\I\O\F} \coloneqq \int dU \, \dketbra{\overline{U}}{\overline{U}}^{\otimes k }_{\I\O} \otimes \left( f(U) \ketbra{\psi}{\psi} f(U)^{\dagger} \right)_{\F},
\end{equation}
then the average fidelity in Eq.~\eqref{Eq:SDP_det} can be expressed as
\begin{align}
    \expval{F^{\psi}} = \Tr(S^{\psi} \Omega^\psi).
\end{align}
Hence, once the performance operator \( \Omega^\psi \) is evaluated, the optimization problem, for the case of a single input unitary $k=1$, can be formulated as the following semidefinite program (SDP):
\begin{align}
    \max_{S} \quad & \Tr(S^{\psi} \Omega^\psi) \\
    \textup{subject to} \quad & S \geq 0, \\
                         \Tr_{\F}(S) = \sigma_{\I} \otimes \mathbb{1}_{\O} \quad & \Tr(\sigma_{\I}) = 1.
\end{align}

The use of a performance operator in the context of analogous problems involving unknown input states was introduced in~\cite{ChiribellaEbler_2016,Quintino_2022det}, where it is defined as
\begin{equation}
\Omega_{\P\I\O\F} \coloneqq \frac{1}{d_{\P}^2} \int_{\textup{Haar}} \dketbra{f(U)}{f(U)}_{P F} \otimes \dketbra{\overline{U}}{\overline{U}}_{\I\O}^{\otimes k} \, \mathrm{d}U.
\end{equation}
As shown in App.~\ref{append:proofs_det}, the performance operators corresponding to the unknown and known input state scenarios are related by
\begin{equation}\label{perform_kn}
    \Omega^{\psi}_{\I\O\F} = \ketbra{\psi}{\psi}_{\P} * \Omega_{\P\I\O\F}.
\end{equation}
with substituting $d_{\P} = 1$ in the formula for the unknown input performance operator.

Let us note that the performance operator $\Omega^{\psi}$ corresponding to a known input state does not respect the same set of symmetries as the operator from the unknown-state scenario, $\Omega$, due to its bias towards the specific state $\ket{\psi}$. In the following section, we analyse in detail the similarities and differences between the symmetry structures arising in these two settings.

\subsubsection{Symmetries of the problem}\label{sec:sym_det}

Analogous to the probabilistic case discussed in Section~\ref{sec:sym_prob}, the analysis of deterministic approximate transformations also exhibits symmetries in the choice of input state and two types of covariances. We summarize these in the following theorems:

\begin{restatable}[Independence of the choice of known state]{theorem}{IndependenceThmDet} 
Given a supermap $S^{\ket{0}}_{\I\O\F}$ such that $f:\mathrm{SU}(d) \to \mathrm{SU}(d)$ is a homomorphism or antihomomorphism, and
\begin{equation}
    \expval{F}^{\ket{0}} \;=\; \int_{\textup{Haar}} \Tr\!\Big( \big(S^{\ket{0}}_{\I\O\F} * \ChoiU_{\I\O}\big)\, f(U)\ket{0}\!\bra{0}_{\F} f(U)^{\dagger} \Big)\, \mathrm{d}U,
\end{equation}
then for every $\ket{\psi} \in \mathbb{C}^d$ there exists a supermap $S^{\ket{\psi}}_{\I\O\F}$ such that
\begin{equation}
   \expval{F}^{\ket{0}} \;=\; \int_{\textup{Haar}} \Tr\!\Big( \big(S^{\ket{\psi}}_{\I\O\F} * \ChoiU_{\I\O}\big)\, f(U)\ketbra{\psi}{\psi}_{\F} f(U)^{\dagger} \Big)\, \mathrm{d}U.
\end{equation}
\end{restatable}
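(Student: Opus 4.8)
The plan is to mirror the construction behind the probabilistic independence statement (Thm.~\ref{thm:independ}, illustrated in Fig.~\ref{fig:independence_of_state}), the only new ingredient being a change of variables in the Haar integral defining the average fidelity. The key structural fact is that each nontrivial homomorphism or antihomomorphism relevant here (the identity, conjugation $U\mapsto\overline{U}$, transposition $U\mapsto U^{T}$, and inversion $U\mapsto U^{-1}$) is a bijection of $\mathrm{SU}(d)$, hence surjective, and $\mathrm{SU}(d)$ acts transitively on pure states. Therefore, given any $\ket{\psi}\in\mathbb{C}^d$, one may pick $W\in\mathrm{SU}(d)$ with $W\ket{0}=\ket{\psi}$ and set $V_\psi\coloneqq f^{-1}(W)$, so that $f(V_\psi)\ket{0}=\ket{\psi}$.

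First I would define $S^{\ket{\psi}}$ by absorbing the fixed unitary $V_\psi$ into the input slot of $S^{\ket{0}}$. Exploiting the (anti)homomorphism property, for a homomorphism I feed $UV_\psi$, using $f(UV_\psi)\ket{0}=f(U)f(V_\psi)\ket{0}=f(U)\ket{\psi}$; for an antihomomorphism I instead feed $V_\psi U$, using $f(V_\psi U)\ket{0}=f(U)f(V_\psi)\ket{0}=f(U)\ket{\psi}$. In the Choi picture this amounts to a link product of $S^{\ket{0}}$ with the Choi operator of the fixed unitary channel $\map{V}_\psi$ inserted on the appropriate input side. Since $\map{V}_\psi$ is deterministic and trace preserving, this pre-processing maps superchannels to superchannels, so the resulting $S^{\ket{\psi}}$ still satisfies the constraints~\eqref{Eq:superch_known_constr}--\eqref{Eq:superch_known_constr3}.

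Next I would evaluate the average fidelity of $S^{\ket{\psi}}$. By construction $S^{\ket{\psi}}*\ChoiU_{\I\O}=S^{\ket{0}}*\dketbra{UV_\psi}{UV_\psi}_{\I\O}$ in the homomorphic case, while the target state obeys $f(U)\ketbra{\psi}{\psi}_{\F}f(U)^{\dagger}=f(UV_\psi)\ketbra{0}{0}_{\F}f(UV_\psi)^{\dagger}$. Hence the integrand at $U$ for the $\ket{\psi}$-protocol coincides with the integrand at $UV_\psi$ for the $\ket{0}$-protocol. By right-invariance of the Haar measure, $\int g(UV_\psi)\,\mathrm{d}U=\int g(U)\,\mathrm{d}U$, this turns the fidelity integral for $S^{\ket{\psi}}$ into exactly $\expval{F}^{\ket{0}}$; the antihomomorphic case is identical after inserting $V_\psi$ on the left and using left-invariance.

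I expect the only genuine subtlety to be bookkeeping: getting the side of the insertion correct in the antihomomorphic case, so that the $V_\psi$ factors cancel against the target, and verifying that composing with a fixed unitary preserves the full superchannel conditions rather than merely positivity. Both are routine once organised through the link product, and the Haar change of variables is immediate from translation invariance. One should also note that the trivial homomorphism $f(U)=\mathbb{1}$ is degenerate, since there $f(U)\ket{0}=\ket{0}$ for all $U$ and no $V_\psi$ can reach a general $\ket{\psi}$; the statement is thus understood for the nontrivial (anti)homomorphisms, which are precisely the bijective cases treated in this work.
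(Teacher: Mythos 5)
Your proposal is correct and follows essentially the same route as the paper's proof: construct $S^{\ket{\psi}}$ by linking $S^{\ket{0}}$ with the Choi operator of a fixed unitary $V$ satisfying $f(V)\ket{0}=\ket{\psi}$ (placed on the input wire for homomorphisms and on the output wire for antihomomorphisms), use the (anti)homomorphism property to rewrite the integrand as the $\ket{0}$-integrand at a translated group element, and conclude by the translation invariance of the Haar measure. Your additional remarks on preservation of the superchannel constraints and on the degenerate trivial homomorphism are sensible but not needed for the argument as given in the paper.
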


\begin{restatable}[Covariance properties of deterministic superchannel]{theorem}{Omegacovariancef}
\label{thm:S_det_covariance_f}
Let $f:\mathrm{SU}(d)\rightarrow \mathrm{SU}(d)$ be a homomorphism or antihomomorphism, let $\ket{\psi}\in\mathbb{C}^d$ be a fixed state, and let $S_{\I\O\F}$ be a parallel/sequential/general $k$-slot superchannel corresponding to the deterministic approximate protocol. 

Without loss in performance in average fidelity $\expval{F}$, we can restrict our attention to parallel/sequential/general superchannel $S^{c}_{\I\O\F}$ that respects the following covariance properties
\begin{itemize}
\item Input state independent covariance:  for every $U \in \mathrm{SU}(d)$,
$[S^{c}_{\I\O\F}, U_f]=0$,
with
\begin{equation}
    U_f =
\begin{cases}
U_{\I}^{\otimes k} \otimes \mathbb{1}_{\O}^{\otimes k} \otimes \overline{U}_{\F}, & \textup{if } f \textup{ is transposition},\\
\mathbb{1}_{\I}^{\otimes k} \otimes U_{\O}^{\otimes k} \otimes U_{\F}, & \textup{if } f \textup{ is conjugation},\\
U_{\I}^{\otimes k} \otimes \mathbb{1}_{\O}^{\otimes k} \otimes U_{\F}, & \textup{if } f \textup{ is inversion}.
\end{cases}
\end{equation}
    \item Input state dependent covariance:   for every $V \in \mathrm{SU}(d)$ that stabilizes the known state $\ket{\psi}$ in the way specified below,
$[S^{c}_{\I\O\F}, V_f]=0$,
where
\begin{equation}
    V_f =
\begin{cases}
\mathbb{1}_{\I}^{\otimes k}\otimes \overline{V}_{\O}^{\otimes k}\otimes \mathbb{1}_{\F} , \quad V_{\O}^T \ket{\psi}=\ket{\psi}, & \textup{if } f \textup{ is transposition},\\
V_{\I}^{\otimes k} \otimes \mathbb{1}_{\O}^{\otimes k} \otimes \mathbb{1}_{\F}, \quad \overline{V}_{\I}\ket{\psi}=\ket{\psi}, & \textup{if } f \textup{ is conjugation},\\
\mathbb{1}_{\I}^{\otimes k}\otimes \overline{V}_{\O}^{\otimes k}\otimes \mathbb{1}_{\F}, \quad V_{\O}^{-1}\ket{\psi}=\ket{\psi}, & \textup{if } f \textup{ is inversion}.
\end{cases}
\end{equation}
\end{itemize}
\end{restatable}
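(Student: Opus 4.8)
The plan is to run a twirling (group-averaging) argument, directly mirroring the probabilistic statements Thm.~\ref{thm:twirl_prob_constr} and Thm.~\ref{thm:KN_sym}, but now for the deterministic figure of merit $\expval{F^\psi}=\Tr(S^\psi\Omega^\psi)$ with $\Omega^\psi$ as in Eq.~\eqref{perfk_def}. The first and decisive step is to show that the performance operator itself is invariant under conjugation by both families of unitaries in the statement: $[\Omega^\psi,U_f]=0$ for all $U\in\mathrm{SU}(d)$, and $[\Omega^\psi,V_f]=0$ for all $V$ stabilising $\ket{\psi}$ in the specified sense. I would prove this directly from the explicit form $\Omega^\psi_{\I\O\F}=\int dU\,\dketbra{\overline{U}}{\overline{U}}^{\otimes k}_{\I\O}\otimes(f(U)\ketbra{\psi}{\psi}f(U)^\dagger)_\F$, using the Choi-vector identity $(A\otimes B)\dket{M}=\dket{BMA^T}$ together with the left/right invariance of the Haar measure. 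Concretely, for transposition, writing the input-state-independent element with its group parameter renamed to $W$ as $W_\I^{\otimes k}\otimes\id_\O^{\otimes k}\otimes\overline{W}_\F$, one finds it maps $\dket{\overline{U}}\mapsto\dket{\overline{UW^\dagger}}$ on $\I\O$ and $f(U)\ketbra{\psi}{\psi}f(U)^\dagger\mapsto f(UW^\dagger)\ketbra{\psi}{\psi}f(UW^\dagger)^\dagger$ on $\F$, so that the substitution $U\mapsto UW^\dagger$ in the Haar integral restores $\Omega^\psi$; the conjugation and inversion cases are analogous, with the appropriate compensating substitution.

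For the input-state-dependent family $V_f$, the same computation goes through, except that the compensating substitution leaves behind conjugating factors acting on $\F$; it is precisely the stabiliser condition on $V$ (namely $V^T\ket{\psi}=\ket{\psi}$ for transposition, $\overline{V}\ket{\psi}=\ket{\psi}$ for conjugation, and $V\ket{\psi}=\ket{\psi}$ for inversion) that forces these factors back to $\ketbra{\psi}{\psi}$, leaving $\Omega^\psi$ invariant. This is the step where the bias towards $\ket{\psi}$ enters and where the symmetry group shrinks from all of $\mathrm{SU}(d)$ to the relevant stabiliser subgroup; it is exactly why $\Omega^\psi$ carries a different symmetry set than its unknown-state counterpart.

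With these invariances in hand, I would define, for an arbitrary valid $k$-slot superchannel $S$, the twirled operator $S^c$ as the average of $g S g^\dagger$ over the compact group $G$ generated by $\{U_f\}_U$ and $\{V_f\}_V$, and then verify three properties. First, the fidelity is unchanged: by cyclicity of the trace and $g^\dagger\Omega^\psi g=\Omega^\psi$ for each generator, $\Tr(gSg^\dagger\Omega^\psi)=\Tr(S\Omega^\psi)$, hence $\Tr(S^c\Omega^\psi)=\Tr(S\Omega^\psi)$ exactly, so no performance is lost. Second, covariance holds by construction, since $S^c$ commutes with every element of $G$, in particular with all $U_f$ and $V_f$. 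Third, $S^c$ is again a valid superchannel of the same (parallel/sequential/general) class: each conjugating element factorises as a wire-wise tensor product respecting the causal/tensor structure that defines the comb constraints, so conjugation preserves positivity and the linear trace/recursion conditions — for $k=1$ this is transparent from Eqs.~\eqref{Eq:superch_known_constr}--\eqref{Eq:superch_known_constr3}, e.g. conjugating $S$ by $W_\I\otimes\id_\O\otimes\overline{W}_\F$ and tracing out $\F$ yields $(W_\I\otimes\id_\O)\Tr_\F(S)(W_\I\otimes\id_\O)^\dagger$, preserving the required form $\sigma_\I\otimes\id_\O$ — and since the set of valid superchannels is convex and closed, the average $S^c$ remains inside it.

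The final point to settle is that both covariances can be imposed simultaneously. In each of the three cases the two families act on disjoint sets of tensor factors: $U_f$ is nontrivial only on $\I,\F$ (transposition, inversion) or on $\O,\F$ (conjugation), whereas $V_f$ is nontrivial only on $\O$ (transposition, inversion) or on $\I$ (conjugation), so $[U_f,V_f]=0$, the two twirls commute, and $G$ is simply the commuting product of the two compact subgroups. Averaging over $G$ therefore yields an $S^c$ covariant under both families at once while exactly preserving $\expval{F^\psi}$, which is the claim. I expect the main technical obstacle to be the third property above for the full sequential and general comb hierarchies (as opposed to the easy $k=1$ case): one must check that every projection characterising those comb classes commutes with the wire-wise product unitaries $U_f$ and $V_f$, which is where the bookkeeping is heaviest, though it follows the same principle as the single-slot computation.
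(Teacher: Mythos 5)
Your proposal is correct and follows essentially the same route as the paper's proof: establish $[\Omega^\psi,U_f]=0$ and $[\Omega^\psi,V_f]=0$ by a Haar change of variables (with the stabiliser condition absorbing the leftover factor on $\F$), then twirl $S$ and use cyclicity plus self-adjointness of the twirl to preserve $\Tr(S\Omega^\psi)$, invoking the fact that twirling by wire-wise product unitaries preserves the superchannel constraints (which the paper delegates to a citation rather than the direct check you sketch). Your additional observation that the two families act on disjoint tensor factors and hence the two twirls commute is a correct and slightly more explicit justification of the simultaneous covariance than the paper gives.
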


\subsection{HOQO acting on a part of a known pure bipartite system}\label{sec:bip}
\begin{figure}[b!]
    \centering
\includegraphics[width=0.7\textwidth]{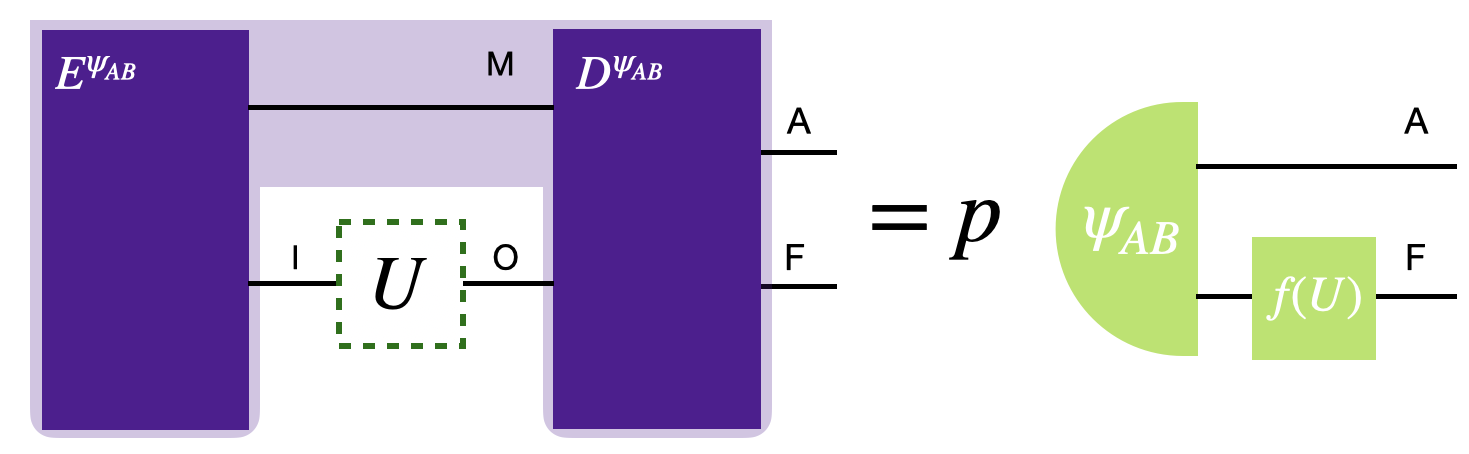}
    \caption{ General supermap protocol implementing a single call operation on a part of a known, arbitrary bipartite state.}
    \label{fig:bipart}
\end{figure}
As mentioned in the introduction, known state protocols allow us to distinguish between scenarios in which a supermap implements the desired transformation on a pure state, a bipartite state, or a mixed state. In this section, we focus on the case where the supermap, using single query $k=1$ of the input unitary $U$, applies \( f(U) \) to part of a bipartite state, $U \mapsto \id_A \otimes f(U)_{B} \ket{\psi}_{AB}\;\; \forall U$. In other words, we are considering a supermap that is processing a subsystem while preserving the entanglement with the external reference, as illustrated in Fig.~\ref{fig:bipart}. As we will see, the success probability increases with the degree of entanglement between the system on which the transformation is implemented and the external reference.

As shown in Thm.~\ref{thm:twirl_prob_constr}, when considering pure single-party states, we may, without loss of generality, restrict our attention to a fixed state such as $\ket{0}$. This is because any protocol designed for $\ket{0}$ can be converted into a protocol for an arbitrary pure state $\ket{\psi}$ without loss in performance. An analogous result holds for bipartite pure states. Namely, given a protocol that transforms an arbitrary unitary $U$ into $\mathbb{1}_{\A} \otimes f(U)_B \ket{\psi}_{AB}$ with some success probability $p$ or fidelity $F$, for a fixed bipartite pure state $\ket{\psi}_{AB}$, there exists a protocol with the same performance, i.e. the same $p$ or $F$, on any other state $\ket{\psi'}_{AB} = (V_A \otimes V_B) \ket{\psi}_{AB}$, where $V_A$ and $V_B$ are local unitaries. Hence, without loss of generality, we may restrict our analysis to bipartite pure states written in Schmidt form, i.e., $\ket{\psi}_{AB} = \sum_i \sqrt{p_i} \ket{i}_A \ket{i}_B$, where $\{p_i\}_i$ is a probability distribution. The proof is depicted in Fig.~\ref{fig:bipart_KNstateSYM}.
\begin{figure}[t!]
  \centering
  \begin{subfigure}[b!]{0.47\textwidth}
    \centering
    \includegraphics[width=\linewidth]{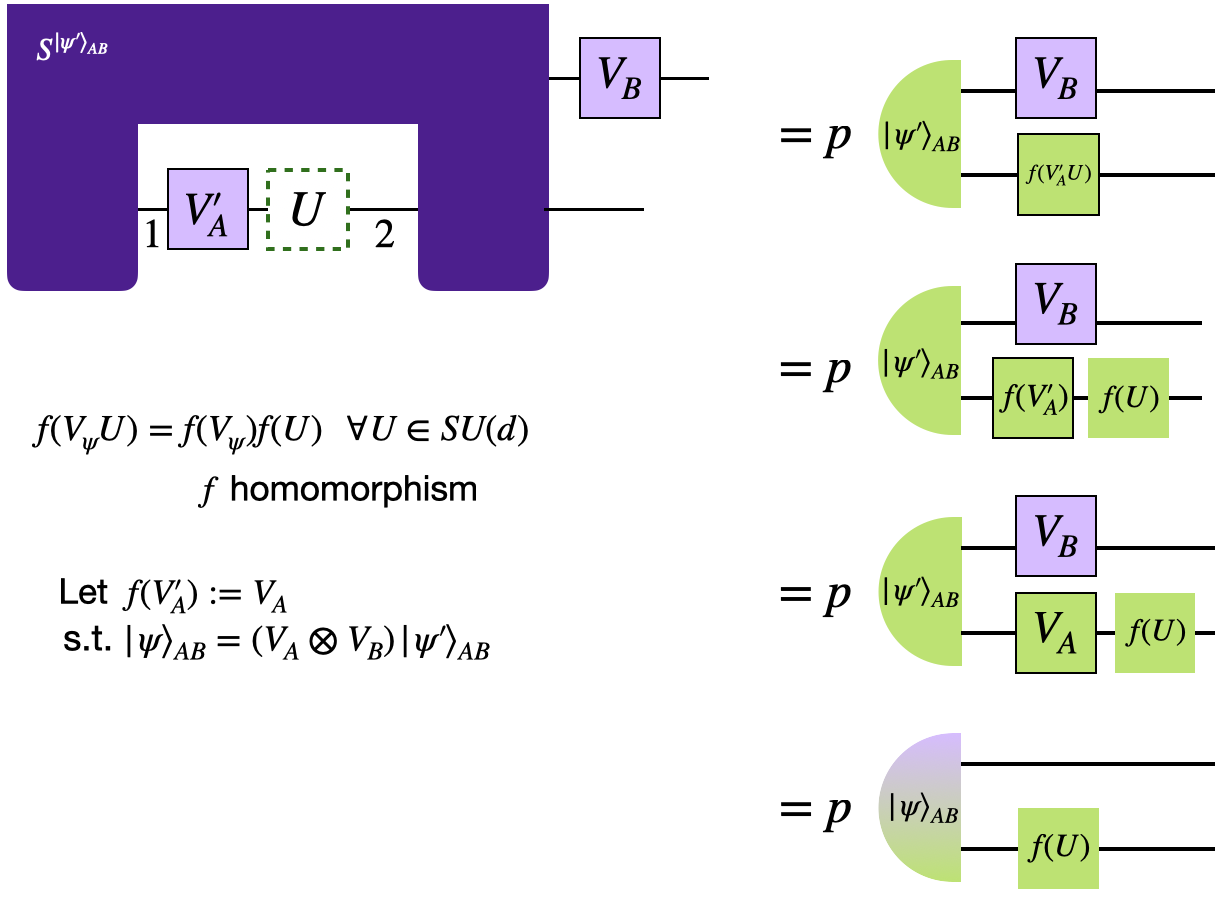}
    \captionsetup{width=1.0\textwidth}
    \caption{The figure illustrates the symmetry of the known pure bipartite protocols for implementing homomorphic functions, which allows the input state to be modified up to local unitaries $V_A$ and $V_B$.}
  \end{subfigure}
  \quad \quad 
  \begin{subfigure}[b!]{0.47\textwidth}
    \centering
    \includegraphics[width=\linewidth]{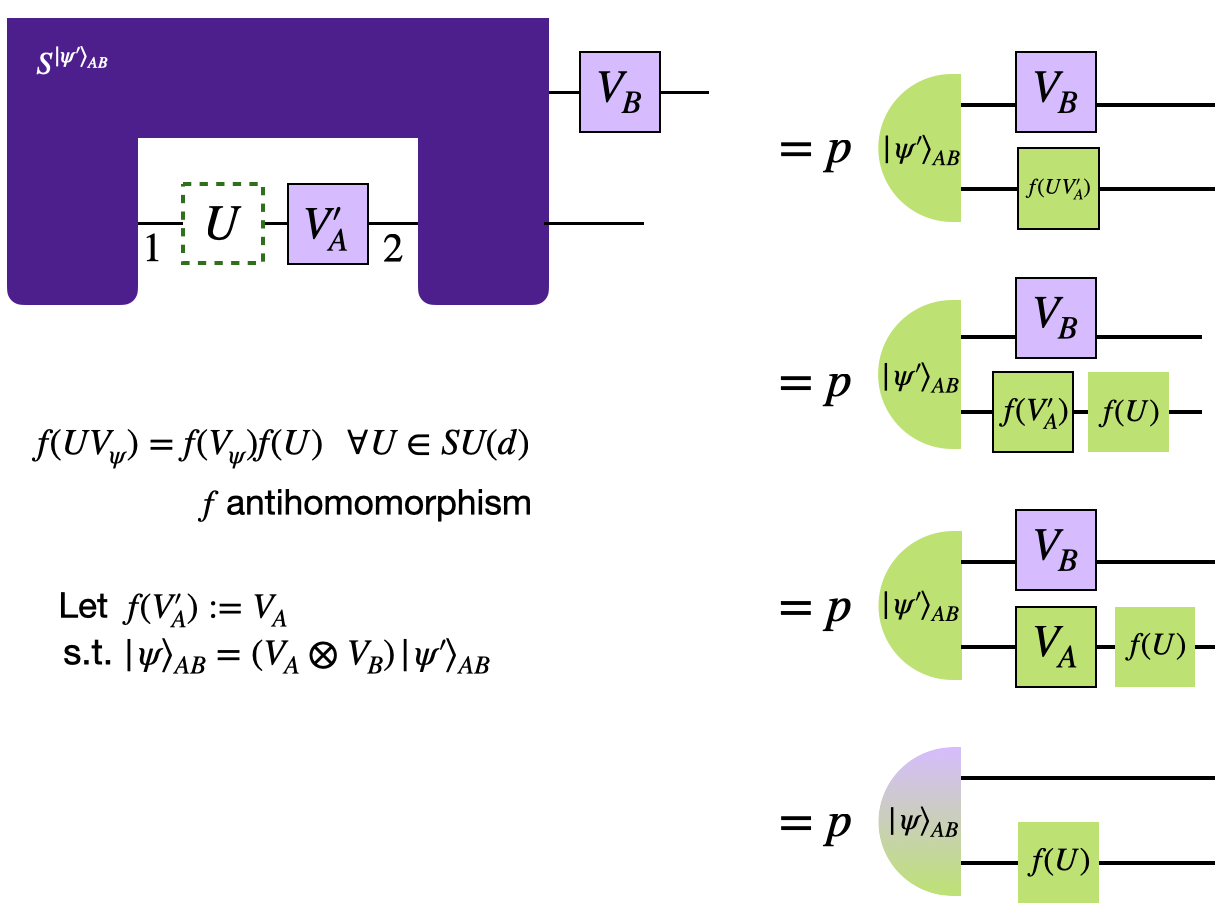}
    \captionsetup{width=0.9\textwidth}
    \caption{The figure illustrates the symmetry of the known pure bipartite protocols for implementing antihomomorphic functions, which allows the input state to be modified up to local unitaries $V_A$ and $V_B$.}
    \label{fig2_bip}
  \end{subfigure}
\captionsetup{width=0.9\textwidth}
  \caption{Determining the performance for an arbitrary known pure bipartite task implementing a homomorphic or antihomomorphic transformation of an unknown unitary $U \in \mathrm{SU}(d)$ immediately yields the equivalent solution for any other bipartite state related by local unitaries.  Elements shown in dark purple are optimized, elements in light purple are symmetries of the optimal supermap; dashed elements represent unknown and universal components; green indicates fixed behavior determined by the other elements. The gradient colour on $\ket{\psi}_{AB}$ reflects that, due to this symmetry, we can reach only those states related by local unitaries, rather than all possible states.
}
  \label{fig:bipart_KNstateSYM}
\end{figure}
\paragraph{Probabilistic exact realisation}

Probabilistic exact protocols achieving the optimal success probability $p$ can be found by solving the following optimisation problem: 
\begin{align}\label{Eq:SDP_prob_bip}
    &\textup{max} \; \; p\\
    \textup{s.t.} \; \; S_{\I\O\F\A}*\dketbra{U}{U}_{\I\O} &= p \Big(\id_A\otimes f(U)_{\F}\Big) \ketbra{\psi}{\psi}_{\A\F}  \Big(\id_A\otimes f(U)_{\F}\Big)^{\dagger} \;\;\; \forall U \in\mathrm{SU}(d)  \label{eq:inf_constraints}\\
    \Tr_{\A\F}(S_{\textup{ch}}) &= \sigma_\I \otimes \mathbb{1}_{\O},  \; \; \Tr(\sigma_\I) = 1\\
      0 \leq & S \leq S_{\textup{ch}}.  \label{Eq:SDP_prob_bip4}
\end{align}
As in the case of pure single subsystem input state, by enforcing the symmetries of the problem, the infinitely many constraints contained in Eq.~\eqref{eq:inf_constraints} may be reduce to a finite set of constraints.
More precisely, analogously to Thm.~\ref{thm:twirl_prob_constr}, in the bipartite case one can also restrict to covariant supermaps. Namely, iff $S_{\I\O\F\A}*\dketbra{U}{U}_{\I\O} = p \Big(\id_A\otimes f(U)_{\F}\Big) \ketbra{\psi}{\psi}_{\A\F}  \Big(\id_A\otimes f(U)_{\F}\Big)^{\dagger} \;\;\; \forall U \in\mathrm{SU}(d) $ then
$\map{\tau}(S)_{\I\O\F\A}*\dketbra{\id}{\id}_{\I\O} = p \Big(\id_A\otimes f(\id)_{\F}\Big) \ketbra{\psi}{\psi}_{\A\F}  \Big(\id_A\otimes f(\id)_{\F}\Big)^{\dagger}$ where \([\map{\tau}(S), U_f] = 0\), with $\map{\tau}(S) = \int U S U^{\dagger} \mathrm{d}U $ defining twirling operation.

Analogously to the covariance properties discussed in Thm.~\ref{thm:twirl_prob_constr}, we now present the corresponding symmetries for the bipartite case.

\begin{restatable}[Covariance properties of probabilistic bipartite supermap]{theorem}{SymBipThm}
\label{thm:symbp}
Let $f:\mathrm{SU}(d)\rightarrow \mathrm{SU}(d)$ be a homomorphism or antihomomorphism, let $\ket{\psi}_{\A \F}\in \mathcal{H}_{\A}\otimes \mathcal{H}_{\F}$ be a fixed state, and let $S_{\I\O\F\A}$ be a parallel/sequential/general $k$-slot superinstrument element corresponding to the probabilistic exact protocol such that
\begin{equation}
         S_{\I\O\F\A} * \ChoiU^{\otimes k} = p (\mathbb{1}_{\A} \otimes f(U)_{\F})\ketbra{\psi}{\psi }_{\A \F}(\mathbb{1}_{\A} \otimes f(U)_{\F}^{\dagger}) \quad \forall U\in\mathrm{SU}(d).
\end{equation}
That is, it transforms $k$ calls of $U$ into the state $\mathbb{1}_{\A} \otimes f(U)_{\F}\ket{\psi}_{\A\F}$ with probability $p$.

There exists a parallel/sequential/general superinstrument with an instrument element $S^{c}_{\I\O\F\A}$ such that
\begin{equation}
         S^{c}_{\I\O\F\A} * \ChoiU^{\otimes k} = p (\mathbb{1}_{\A} \otimes f(U)_{\F})\ketbra{\psi}{\psi }_{\A \F} (\mathbb{1}_{\A} \otimes f(U)_{\F}^{\dagger}) \quad \forall U\in\mathrm{SU}(d)
\end{equation}
with the same probability $p$, where $S^{c}_{\I\O\F\A}$ is covariant

Without loss in performance in $p$, we can restrict our attention to parallel/sequential/general superchannel $S^{c}_{\I\O\F}$ that respects the following covariance properties
\begin{itemize}
\item Input state independent covariance:  For every $U \in \mathrm{SU}(d)$,
$[S^{c}_{\I \O \F \A}, U_f]=0$,
with
\begin{equation}
    U_f = 
\begin{cases}
U_{\I}^{\otimes k} \otimes \mathbb{1}_{\O}^{\otimes k} \otimes \overline{U}_{\F}\otimes \mathbb{1}_{\A}, & \textup{if } f \textup{ is transposition},\\
\mathbb{1}_{\I}^{\otimes k} \otimes U_{\O}^{\otimes k} \otimes U_{\F}\otimes \mathbb{1}_{\A}, & \textup{if } f \textup{ is conjugation},\\
U_{\I}^{\otimes k} \otimes \mathbb{1}_{\O}^{\otimes k} \otimes U_{\F}\otimes \mathbb{1}_{\A}, & \textup{if } f \textup{ is inversion}.
\end{cases}
\end{equation}
    \item Input state dependent covariance:   For every $V \in \mathrm{SU}(d)$ that stabilizes the known state $\ket{\psi}$ in the way specified below,
it respects $[S^{c}_{\I \O \F \A}, B_f]=0$ for every $B$, where \( B_f \) is defined as follows:
\begin{itemize}
  \item For transposition: \( \mathbb{1}_{\I}^{\otimes k}\otimes \overline{B}_{\O}^{\otimes k}\otimes \mathbb{1}_{\F} \otimes \mathbb{1}_{\A} \), with $B\in \mathrm{SU}(d)$ and \( (\mathbb{1}_{1} \otimes B_{2}^T) \ket{\psi}_{12} = \ket{\psi}_{12} \),
  \item For conjugation: \( B_f = B_{\I}^{\otimes k} \otimes \mathbb{1}^{\otimes k}_{\O} \otimes \mathbb{1}_{\F}\otimes \mathbb{1}_{\A} \), with $B\in \mathrm{SU}(d)$ and  \((\mathbb{1}_{1} \otimes  \overline{B}_{2}) \ket{\psi}_{12} = \ket{\psi}_{12} \),
  \item For inversion: \( \mathbb{1}_{\I}^{\otimes k}\otimes \overline{B}_{\O}^{\otimes k}\otimes \mathbb{1}_{\F}\otimes \mathbb{1}_{\A} \), with $B\in \mathrm{SU}(d)$ and  \( (\mathbb{1}_{1} \otimes B_{2}^{-1}) \ket{\psi}_{12} = \ket{\psi}_{12} \). 
\end{itemize}
\end{itemize}
\end{restatable}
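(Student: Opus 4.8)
The plan is to prove both covariance statements by \emph{twirling}, exactly mirroring the strategy used for the single-party results Thm.~\ref{thm:twirl_prob_constr} and Thm.~\ref{thm:KN_sym}, the only new ingredient being the passive reference leg $\A$, on which every symmetry acts as $\id_{\A}$. Concretely, given a superinstrument element $S_{\I\O\F\A}$ solving the task, I would construct a covariant element by averaging over the relevant group, $S^{c} \coloneqq \int G\, S\, G^{\dagger}\,\mathrm{d}G$, and then verify three things: (i) $S^{c}$ still solves the defining task equation with the same probability $p$; (ii) $S^{c}$ commutes with every element of the representation (covariance, which is immediate from invariance of the Haar measure under $G$); and (iii) $S^{c}$ is still a valid superinstrument element, i.e.\ $0\le S^{c}\le S^{c}_{\textup{ch}}$ as in \eqref{Eq:SDP_prob_bip4}. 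Point (iii) is routine: positivity and the ordering $0\le S\le S_{\textup{ch}}$ survive averaging of unitary conjugations, and the superchannel constraint $\Tr_{\A\F}(S_{\textup{ch}})=\sigma_{\I}\otimes\mathbb{1}_{\O}$ is linear and covariant under the representations at hand, so the twirl of a superchannel is again a superchannel.

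The heart of the argument is point (i), which I would split according to the two representations. For the \textbf{input-state-independent covariance}, take $G = U_f(W)$ with $W\in\mathrm{SU}(d)$. Using the Choi identities $(\id_{\I}\otimes A_{\O})\dket{U}=\dket{AU}$ and $(A_{\I}\otimes\id_{\O})\dket{U}=\dket{UA^{T}}$, conjugating $S$ by the operation-leg part of $U_f(W)$ amounts, inside the link product $S*\ChoiU^{\otimes k}$, to a relabelling of the inserted unitary by $W$ (a right multiplication for transposition and inversion, a left multiplication for conjugation). The remaining $\F$-leg of $U_f(W)$ then acts on the output, and because $f$ is a homomorphism or antihomomorphism the relabelling produces an $f(W)$-type factor that is cancelled exactly by this $\F$-leg (e.g.\ $\overline{W}_{\F}W^{T}_{\F}=\id_{\F}$ for transposition, with the analogous identities for the other two functions). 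Hence for \emph{every} $W$ the conjugated operator still satisfies the task equation; since $U_f(W)$ carries $\id_{\A}$ the spectator $\A$ is untouched, and averaging over the Haar measure of $\mathrm{SU}(d)$ yields $S^{c}$ with unchanged $p$ and the stated covariance.

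For the \textbf{input-state-dependent covariance}, restrict $G=B_f$ to those $B$ stabilising $\ket{\psi}_{\A\F}$ in the prescribed way, e.g.\ $(\id_{\A}\otimes B_{\F}^{T})\ket{\psi}_{\A\F}=\ket{\psi}_{\A\F}$ for transposition. Here $B_f$ acts only on the operation legs ($\overline{B}_{\O}^{\otimes k}$ for transposition and inversion, $B_{\I}^{\otimes k}$ for conjugation), so the same Choi identities again turn the conjugation into a relabelling of the inserted unitary by $B$. This relabelling dresses the target $(\id_{\A}\otimes f(U)_{\F})\ketbra{\psi}{\psi}_{\A\F}(\id_{\A}\otimes f(U)^{\dagger}_{\F})$ by a residual unitary acting on the $\F$-factor of $\ket{\psi}_{\A\F}$; the stabiliser condition is precisely what absorbs this residual factor and restores the original target (and since one averages over the whole stabiliser subgroup, which is closed under inverses, it is harmless that the relabelling may involve $B^{-1}$ rather than $B$). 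Twirling over the stabiliser subgroup then delivers the second covariance. Finally, in each of the three cases $U_f(W)$ and $B_f(B)$ are supported on disjoint subsystems (state-independent on $\{\I,\F\}$ or $\{\O,\F\}$, state-dependent on the complementary operation leg, all with $\id_{\A}$), so the two representations commute; the two twirls therefore commute and both covariances can be imposed simultaneously on a single $S^{c}$.

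I expect the main obstacle to be the bookkeeping of complex conjugates, transposes and daggers across the $\I$, $\O$ and $\F$ legs when verifying that each conjugation preserves the task equation — in particular, checking for each of transposition, conjugation and inversion that the residual unitary produced on $\ket{\psi}_{\A\F}$ is annihilated by exactly the stabiliser condition stated in the theorem, and not by a conjugate or transpose variant of it. A secondary point demanding care is confirming the disjoint-support claim in the conjugation case, where $U_f$ couples $\O$ and $\F$: once one verifies that $B_f$ there is supported on $\I$ alone, commutativity of the two twirls and hence simultaneous covariance follow immediately.
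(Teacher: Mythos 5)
Your proposal is correct and follows essentially the same route as the paper: the input-state-independent part is obtained by the same twirling argument as in Thm.~\ref{thm:twirl_prob_constr}, and the input-state-dependent part is obtained by averaging over the stabiliser subgroup of $\ket{\psi}_{\A\F}$, using the Choi identities to turn the conjugation on the operation legs into a relabelling $U\mapsto BU$ whose residual action on the $\F$ factor is absorbed by the stabiliser condition. Your additional remarks on preservation of the superinstrument constraints under twirling and on the commutation of the two twirls are points the paper leaves implicit, but they are routine and do not change the argument.
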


\paragraph{Deterministic approximate realisation}

The optimal deterministic approximate protocol can be found by solving the SDP given in Eq.~\eqref{Eq:SDP_det}, where the fidelity is defined in terms of the following performance operator.
\begin{restatable}{theorem}{BipDet}\label{thm:performance}
    Given a single call to the unitary operation $U \in \mathrm{SU}(d)$, 
consider a supermap acting on part of a known pure bipartite input state $\ket{\psi}_{A\F} \in \mathcal{H}_{A} \otimes \mathcal{H}_{\F}$, realising the transformation $U \mapsto (\mathbb{1}_{\A} \otimes f(U)_{\F})\ketbra{\psi}{\psi}_{A\F}(f(U)_{\F} \otimes \mathbb{1}_{\A} )^{\dagger}$. 
The corresponding performance operator in the known input state scenario is
\begin{equation}
\begin{aligned}
\Omega^{\psi}_{A\I\O\F} \coloneqq& \int dU \dketbra{\overline{U}}{\overline{U}}^{\otimes k }_{\I\O} \otimes ( f(U)_{\F} \ketbra{\psi}{\psi}_{\F\A}f(U)^{\dagger}_{\F}) \\
    =&  \ketbra{\psi}{\psi}_{\P\A} * \Omega_{\P\I\O\F},
\end{aligned}
\end{equation}
where $\Omega_{\P\I\O\F}$ is the performance operator for the function $f$ on the unknown input state case~\cite{Quintino_2022det}. 

With the performance operator defined above, the average fidelity is given by,
\begin{equation}
  \expval{F^{\psi}} = \Tr(S^{\psi}_{\I\O\F\A}\Omega^{\psi}_{\I\O\F\A}).
\end{equation}

In particular, when there is no auxiliary system $\mathcal{H}_A$,  or equivalently, when $\mathcal{H}_A$ is one dimensional, we have that $\Omega^{\psi}_{\I\O\F} = \ketbra{\psi}{\psi}_{\P} * \Omega_{\P\I\O\F}$.
\end{restatable}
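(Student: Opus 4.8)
The plan is to reduce the statement to two essentially independent computations: first, rewriting the average fidelity as a single trace $\Tr(S^{\psi}\Omega^{\psi})$ so as to read off the integral expression for $\Omega^{\psi}_{\A\I\O\F}$; and second, verifying the link-product identity $\Omega^{\psi}_{\A\I\O\F}=\ketbra{\psi}{\psi}_{\P\A}*\Omega_{\P\I\O\F}$ relating it to the unknown-state performance operator.

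For the first part, I would start from the average fidelity in Eq.~\eqref{Eq:SDP_det}, now with the bipartite target $(\mathbb{1}_{\A}\otimes f(U)_{\F})\ketbra{\psi}{\psi}_{\A\F}(\mathbb{1}_{\A}\otimes f(U)_{\F})^{\dagger}$. Because $S^{\psi}$ is a superchannel, $S^{\psi}_{\I\O\F\A}*\ChoiU^{\otimes k}_{\I\O}$ is a normalized state, and since the target is pure the fidelity collapses to the overlap $\Tr[(S^{\psi}*\ChoiU^{\otimes k})\,(\mathbb{1}_{\A}\otimes f(U)_{\F})\ketbra{\psi}{\psi}_{\A\F}(\mathbb{1}_{\A}\otimes f(U)_{\F})^{\dagger}]$, exactly as in the computation leading to Eq.~\eqref{fidelity}. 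I would then transfer the link product onto $S^{\psi}$ using the definition Eq.~\eqref{eq:link_product}, whose partial transpose on the contracted wires $\I\O$ turns $\ChoiU$ into its transpose; invoking the identity $(\ChoiU)^{T}=\dketbra{\overline{U}}{\overline{U}}$ yields $\Tr[S^{\psi}\,(\dketbra{\overline{U}}{\overline{U}}^{\otimes k}_{\I\O}\otimes X^{U}_{\A\F})]$, where $X^{U}_{\A\F}$ denotes the bipartite target. Pushing the Haar integral inside the trace then identifies the integral form of $\Omega^{\psi}_{\A\I\O\F}$ and gives $\expval{F^{\psi}}=\Tr(S^{\psi}\Omega^{\psi})$.

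For the second part, I would substitute the definition of $\Omega_{\P\I\O\F}$ and compute the link product over $\P$. The only nontrivial step is the pure-state identity $\ketbra{\psi}{\psi}_{\P\A}*\dketbra{f(U)}{f(U)}_{\P\F}=(\mathbb{1}_{\A}\otimes f(U)_{\F})\ketbra{\psi}{\psi}_{\A\F}(\mathbb{1}_{\A}\otimes f(U)_{\F})^{\dagger}$, which I would establish by writing $\ket{\psi}_{\P\A}$ in Schmidt form and contracting the $\P$ index against $\dket{f(U)}_{\P\F}=\sum_{i}\ket{i}_{\P}\otimes f(U)\ket{i}_{\F}$; equivalently it follows from the Choi identity $(\bra{\overline{\phi}}_{\P}\otimes\mathbb{1}_{\F})\dket{f(U)}_{\P\F}=f(U)\ket{\phi}_{\F}$. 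Since the Schmidt coefficients are real and positive, no stray complex conjugation survives and the $f(U)$ action is cleanly relocated from the $\P\F$ Choi onto the $\F$ factor of the transported state. Tensoring with the factorized $\I\O$ part of $\Omega_{\P\I\O\F}$ reproduces precisely the integral form obtained in the first part, which closes the argument.

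The main obstacle I anticipate is purely bookkeeping of two kinds. First, I must track carefully the partial transposes and complex conjugations generated by the link product, so that $\ChoiU\to\dketbra{\overline{U}}{\overline{U}}$ and $\ketbra{\psi}{\psi}^{T_{\P}}\to\ketbra{\overline{\psi}}{\overline{\psi}}$ land with the correct conjugates and the two forms of $\Omega^{\psi}$ match term by term. Second, I must reconcile the normalization conventions: the unknown-state operator $\Omega_{\P\I\O\F}$ carries a $\tfrac{1}{d_{\P}^{2}}$ prefactor inherited from the channel-fidelity definition, whereas the known-state figure of merit uses state fidelity with no such factor, so I would match them through Eq.~\eqref{perform_kn} with the prefactor set to one (the ``$d_{\P}=1$'' prescription). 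The specialization to a trivial reference $\mathcal{H}_{\A}$ then follows immediately, recovering $\Omega^{\psi}_{\I\O\F}=\ketbra{\psi}{\psi}_{\P}*\Omega_{\P\I\O\F}$ together with the single-system fidelity formula.
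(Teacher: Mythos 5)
Your proposal is correct and follows essentially the same route as the paper's proof: exploit the purity of the target to write the fidelity as an overlap, use the link-product/Choi-transpose identities to obtain $\expval{F^{\psi}}=\Tr\big(S^{\psi}\,\Omega^{\psi}\big)$ with $\Omega^{\psi}$ the stated Haar integral, and then recognise $(\mathbb{1}_{\A}\otimes f(U)_{\F})\ketbra{\psi}{\psi}_{\A\F}(\mathbb{1}_{\A}\otimes f(U)_{\F})^{\dagger}=\ketbra{\psi}{\psi}_{\P\A}*\dketbra{f(U)}{f(U)}_{\P\F}$ and pull the state outside the integral to get $\ketbra{\psi}{\psi}_{\P\A}*\Omega_{\P\I\O\F}$. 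One minor remark: the link identity $\rho_{\P\A}*\dketbra{V}{V}_{\P\F}=(\mathbb{1}_{\A}\otimes V_{\F})\rho_{\A\F}(\mathbb{1}_{\A}\otimes V_{\F}^{\dagger})$ holds for arbitrary $\rho$ since only a partial transpose on the contracted wire appears, so your caution about real Schmidt coefficients is unnecessary (though harmless), and your resolution of the $1/d_{\P}^{2}$ normalisation via the $d_{\P}=1$ prescription matches the paper's convention.
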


Analogously to the covariance properties discussed in Thm.~\ref{thm:S_det_covariance_f}, we now present the corresponding symmetries for the bipartite case.

\begin{restatable}[Covariance properties of deterministic bipartite superchannel]{theorem}{BipSymDet}
\label{thm:S_det_covariance_bip}
Let $f:\mathrm{SU}(d)\rightarrow \mathrm{SU}(d)$ be a homomorphism or antihomomorphism, let $\ket{\psi}\in\mathcal{H}_{\A}\otimes \mathcal{H}_{\F}$ be a fixed state, and let $S_{\I\O\F\A}$ be a parallel/sequential/general $k$-slot superchannel corresponding to the deterministic approximate protocol. 

Without loss in performance in $\expval{F}$, we can restrict our attention to parallel/sequential/general superchannel $S^{c}_{\I\O\F\A}$ that respects the following covariance properties
\begin{itemize}
\item Input state independent covariance:  For every $U \in \mathrm{SU}(d)$,
$[S^{c}_{\I\O\F\A}, U_f]=0$,
with
\begin{equation}
    U_f =
\begin{cases}
U_{\I}^{\otimes k} \otimes \mathbb{1}_{\O}^{\otimes k} \otimes \overline{U}_{\F} \otimes \mathbb{1}_{\A}, & \textup{if } f \textup{ is transposition},\\
\mathbb{1}_{\I}^{\otimes k} \otimes U_{\O}^{\otimes k} \otimes U_{\F} \otimes \mathbb{1}_{\A}, & \textup{if } f \textup{ is conjugation},\\
U_{\I}^{\otimes k} \otimes \mathbb{1}_{\O}^{\otimes k} \otimes U_{\F} \otimes \mathbb{1}_{\A}, & \textup{if } f \textup{ is inversion}.
\end{cases}
\end{equation}
    \item Input state dependent covariance:   For every $V \in \mathrm{SU}(d)$ that stabilizes the known state $\ket{\psi}$ in the way specified below,
it respects $[S^{c}_{\I \O \F \A}, B_f]=0$ for every $B$, where \( B_f \) is defined as follows:
\begin{itemize}
  \item For transposition: \( \mathbb{1}_{\I}^{\otimes k}\otimes \overline{B}_{\O}^{\otimes k}\otimes \mathbb{1}_{\F} \otimes \mathbb{1}_{\A} \), with $B\in \mathrm{SU}(d)$ and \( (\mathbb{1}_{1} \otimes B_{2}^T) \ket{\psi}_{12} = \ket{\psi}_{12} \),
  \item For conjugation: \( B_f = B_{\I}^{\otimes k} \otimes \mathbb{1}^{\otimes k}_{\O} \otimes \mathbb{1}_{\F} \otimes \mathbb{1}_{\A}\), with $B\in \mathrm{SU}(d)$ and  \((\mathbb{1}_{1} \otimes  \overline{B}_{2}) \ket{\psi}_{12} = \ket{\psi}_{12} \),
  \item For inversion: \( \mathbb{1}_{\I}^{\otimes k}\otimes \overline{B}_{\O}^{\otimes k}\otimes \mathbb{1}_{\F}\otimes \mathbb{1}_{\A} \), with $B\in \mathrm{SU}(d)$ and  \( (\mathbb{1}_{1} \otimes B_{2}^{-1}) \ket{\psi}_{12} = \ket{\psi}_{12} \). 
\end{itemize}
\end{itemize}
\end{restatable}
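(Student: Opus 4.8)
The plan is to reuse the twirling strategy that already settled the single-party deterministic case in Thm~\ref{thm:S_det_covariance_f} and the bipartite probabilistic case in Thm~\ref{thm:symbp}. By Thm~\ref{thm:performance} the average fidelity is the linear functional $\expval{F^{\psi}} = \Tr(S^{\psi}_{\I\O\F\A}\,\Omega^{\psi}_{\I\O\F\A})$ with $\Omega^{\psi}_{A\I\O\F} = \int dU\, \dketbra{\overline{U}}{\overline{U}}^{\otimes k}_{\I\O}\otimes (f(U)_{\F}\ketbra{\psi}{\psi}_{\F\A}f(U)^{\dagger}_{\F})$, so it suffices to show that this performance operator commutes with every $U_f$ and every $B_f$ appearing in the statement. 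Given this, I would define the twirled operator $S^{c} = \int_{G} dg\, g_f\, S\, g_f^{\dagger}$, where $G$ is generated by the two families and $g\mapsto g_f$ is the associated representation, and argue that $S^{c}$ is again a valid superchannel of the same causal class, that $\Tr(S^{c}\Omega^{\psi}) = \Tr(S\,\Omega^{\psi})$ by cyclicity together with $[\Omega^{\psi},g_f]=0$ (so no fidelity is lost), and that $S^{c}$ is covariant by construction. A useful structural observation that makes the combined twirl well-defined is that in each of the three cases the input-state-independent family and the input-state-dependent family act nontrivially on \emph{disjoint} registers ($U_f$ on $\I,\F$ versus $B_f$ on $\O$ for transposition and inversion; $U_f$ on $\O,\F$ versus $B_f$ on $\I$ for conjugation), hence the two group actions commute and twirling over one does not spoil covariance under the other.

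For the input-state-independent covariance I would use the Choi-vector identity $(A_{\I}\otimes B_{\O})\dket{M} = \dket{B M A^{T}}$ together with invariance of the Haar measure. Conjugating $\Omega^{\psi}$ by $U_f$ turns each $\dket{\overline{U}}$ into $\dket{\overline{W}}$ for a relabelled group element $W$, while the homomorphism (resp.\ antihomomorphism) property of $f$ guarantees that the factor produced on $\F$ is exactly $f(W)$; the register $\A$ is untouched since $U_f$ acts there as $\id_{\A}$. Reparametrising the Haar integral by $W$ then returns $\Omega^{\psi}$, giving $[\Omega^{\psi},U_f]=0$. This is verbatim the single-party computation of Thm~\ref{thm:S_det_covariance_f}, now carrying the spectator register $\A$ along trivially.

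For the input-state-dependent covariance, which is the genuinely new ingredient, $B_f$ acts nontrivially only on $\O$ (or on $\I$ for conjugation) and as the identity on $\F$ and $\A$. The same Choi-vector relabelling $U\mapsto W$ again converts the $\I\O$ factor into $\dketbra{\overline{W}}{\overline{W}}^{\otimes k}_{\I\O}$, but now a residual unitary — $B^{T}$, $\overline{B}$, or $B$ according to the case — is left acting on $\ket{\psi}_{\A\F}$ through the $\F$ leg. The stabiliser hypothesis, namely $(B^{T})_{\F}\ket{\psi}=\ket{\psi}$ for transposition, $\overline{B}_{\F}\ket{\psi}=\ket{\psi}$ for conjugation, and $(B^{-1})_{\F}\ket{\psi}=\ket{\psi}$ for inversion, is precisely what cancels this residual factor so that the $\F\A$ part collapses back to $f(W)_{\F}\ketbra{\psi}{\psi}_{\F\A}f(W)^{\dagger}_{\F}$. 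Reparametrising by $W$ then yields $[\Omega^{\psi},B_f]=0$. I would write out the transposition case in full and remark that conjugation and inversion follow by the identical manipulation with the appropriate conjugate or inverse.

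The step I expect to be the main obstacle is not the commutation identities but verifying validity and class preservation of the twirl: that conjugation by each $g_f$ maps the superchannel constraints of Eqs.~\eqref{Eq:superch_known_constr}--\eqref{Eq:superch_known_constr3} to themselves and keeps the protocol within its declared parallel/sequential/general class. Positivity is immediate; for the trace-and-replace constraints I would check that each $g_f$ is a tensor product of (conjugates of) unitaries respecting the register and $k$-slot structure, so that the defining linear conditions are invariant, and that the causal-order constraints characterising each class are likewise preserved under such local conjugations on the $\I,\O,\F$ wires. Since $S^{c}$ is then an average of valid superchannels over a compact group, it inherits validity by convexity and closedness of the superchannel set, which completes the argument.
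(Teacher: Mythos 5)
Your proposal is correct and follows essentially the same route as the paper: the paper's own proof is a one-line reduction to Thm.~\ref{thm:S_det_covariance_f} (commutation of the performance operator $\Omega^{\psi}$ with $U_f$ and $B_f$ via Haar reparametrisation, followed by twirling), noting that the auxiliary register $\A$ is carried along as a spectator exactly as you describe. Your write-up is simply a fleshed-out version of that argument, including the correct identification of the stabiliser condition as the mechanism cancelling the residual unitary on the $\F$ leg.
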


The proofs of these theorems are presented in App.~\ref{append:bipart_proofs}.

\subsection{HOQO acting on a mixed input state}\label{sec:mixed}

In this section, we discuss a higher-order quantum operations acting on a known mixed input state, i.e. $U \mapsto f(U)\rho f(U)^{\dagger}$, for $\forall U \in \mathrm{SU}(d)$. 

\paragraph{ Comparison with the bipartite input state case}
There exists a straightforward protocol for implementing a transformation on a known mixed state derived from the known bipartite-state protocol. As illustrated in Fig.~\ref{fig:bipmix}, one can simply trace out one of the output subsystems from the bipartite task, and this will be a feasible protocol with the same sucess probability $p$.
While this approach indeed provides a valid mixed-state supermap construction, it is generally suboptimal. In particular, given $\ketbra{\psi}{\psi}_{\A\F}$ and $S^{\ketbra{\psi}{\psi}_{\A\F}} $, which implement a function $f$ with success probability $p$ for a single-query $U \in \mathrm{SU}(d)$, we can construct $S^{\rho}$ for the mixed state $\rho_{\F} = \Tr_{\A}( \ketbra{\psi}{\psi}_{\A\F})$, achieving the same function $f$ for any $U \in \mathrm{SU}(d)$, with the same probability $p$.  However, for functions $f: \mathrm{SU}(d) \rightarrow \mathrm{SU}(d)$  such as transposition, conjugation, or inversion, this approach often does not yield an optimal protocol, since for various states  $\ketbra{\psi}{\psi}_{AB}$ with $\rho_A=\tr_B(\ketbra{\psi}_{AB})$, we have
\begin{equation}\label{bipmixinqv}
    p^{\textup{max}}(\rho_{A}) >  p^{\textup{max}}(\ketbra{\psi}{\psi}_{AB}). 
\end{equation}

\begin{figure}[t!]
    \centering
\includegraphics[width=0.7\textwidth]{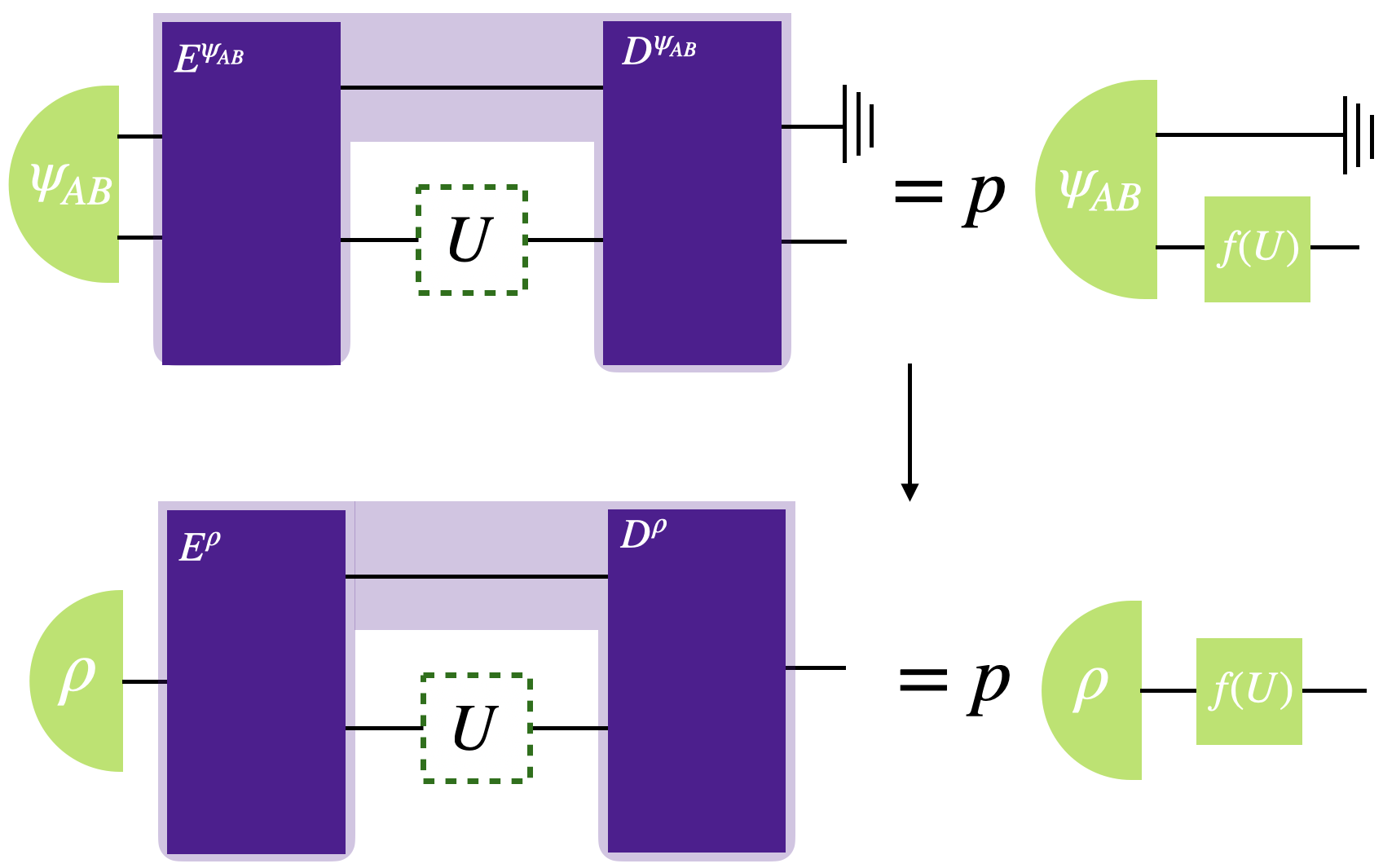}
    \caption{ A supermap protocol acting on a known-mixed-state can be constructed from a corresponding bipartite-state protocol; however, this construction does not yield an optimal implementation for the mixed-state case. In conclusion, the connection between the two problems is nontrivial.}
    \label{fig:bipmix}
\end{figure}

For example, as shown in Thm.~\ref{thm:Mix_visib}, when \( d > 2 \), deterministic and exact unitary conjugation is possible for mixed states of the form \( \rho = \eta \ketbra{\psi} + (1 - \eta)\frac{\id}{d} \) whenever \( \eta \leq \frac{2}{d+1} \). However, for the same dimensions \( d > 2 \), the maximal success probability for probabilistic exact unitary conjugation on pure bipartite states is zero. Consequently, we may encounter situations where \( p^{\textup{max}}(\rho_A) = 1 \), yet for a purification \( \ketbra{\psi}{\psi}_{AB} \), we have \( p^{\textup{max}}(\ketbra{\psi}{\psi}_{AB}) = 0 \). Analogous conclusions hold for unitary transposition.
In particular, for any dimension \( d\geq2 \), and for the state \( \rho = \eta \ketbra{\psi} + (1 - \eta)\frac{\id}{d} \), exact unitary transposition is achievable with \( p^{\textup{max}}(\rho_A) = 1 \) when \( \eta \leq \frac{d}{d^2 + 1} \), as stated in Eq.~\eqref{Eq:trans_det_exact} and proven in Thm.~\ref{thm:Mix_visib}. However, except for the maximally entangled state, all pure bipartite states satisfy the strict inequality \( p^{\textup{max}}(\ketbra{\psi}{\psi}_{AB}) < 1 \) (see Thm.~\ref{bip_trans_thm}). In conclusion, construction via bipartite protocol represent a substantially smaller feasible subset compared to the set attaining probabilities using a general mixed state supermap.

Before concluding this section, we note that, for the same reasons illustrated in Fig.~\ref{fig:bipart_KNstateSYM} and Fig.~\ref{fig:symmetries}, without loss of generality, one may choose any representative $\rho$ from the equivalence class $[\rho] \coloneqq \{ U \rho U^\dagger \mid U \textup{ is unitary} \}$. In other words, we may always assume $\rho$ to be diagonal in the computational basis, and the performance then depends only on the eigenvalues of $\rho$.

\paragraph{Deterministic and exact realisation}
While the maximally mixed state represents a trivial case, the noise present in a mixed state can serve as a resource for achieving a deterministic and exact protocol—even for certain nontrivial states. We now identify the minimal amount of noise needed to still be able to achieve deterministic and exact implementation. 

In order to do so, we will consider the set of mixed states,
\begin{equation}
    \rho = \eta \ketbra{\psi}{\psi} + \frac{1 - \eta}{d} \mathbb{1},
\end{equation}
that is, convex combinations of a pure state $\ket{\psi}$ with the maximally mixed state. Notice that for qubits, this is not a restriction\footnote{Every qubit density matrix can be written in the form \( \rho = \eta \ketbra{\psi} + (1 - \eta)\frac{\id}{2} \), for some pure state \( \ket{\psi} \) and some \( \eta \in [0,1] \). This fact can also be visualized geometrically on the Bloch sphere, as every point inside the sphere can be expressed as a convex combination of an extremal point (on the surface) and the center.}.
Using Thm.~\ref{thm:Mix_visib}, it is possible to construct a protocol that achieves a \emph{deterministic} and \emph{exact} implementation of a supermap. This is done by identifying the values of $\eta \leq \eta^* $, where $ \eta^* $ denotes the critical visibility bellow which deterministic and exact implementation becomes feasible.
To see this, let us begin by introducing the following lemma, similar to the one discussed in Sec.~$3.1.3$ of Ref.~\cite{Quintino_2022det}, but adapted here for the case of known input state scenario.

\begin{restatable}{lemma}{LemmaMixVis}\label{lemma:mix_vis}

Let $\rho=\ketbra{\psi} \in \mathcal{L}(\mathbb{C}^d)$ be a pure quantum state.
If $S$ is a superchannel transforming $k$ uses of $U$ into $f(U)\rho f(U)^{\dagger}$ with average fidelity $\expval{F^{\psi}}$, there exists a superchannel $S^{\prime}$ such that

\begin{equation}
    S_{\I \O \F}^{\prime} *\dketbra{U}{U}_{\I\O} ^{\otimes k}=\eta \rho_{\P} * \dketbra{f(U)}{f(U)}_{P F}+(1-\eta) \frac{\mathbb{1}_{\F}}{d},
\end{equation}
where $\expval{F^{\psi}}=\eta +\frac{1-\eta}{d}$.    
\end{restatable}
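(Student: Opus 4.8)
The plan is to reduce the given superchannel to a covariant one via the symmetries established in Thm~\ref{thm:S_det_covariance_f}, and then invoke Schur's lemma to pin its output down to the white-noise family.

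First I would use the input-state-independent covariance of Thm~\ref{thm:S_det_covariance_f}: without any loss in average fidelity, replace $S$ by a superchannel obeying $[S,U_f]=0$ for the appropriate $U_f$ (for transposition $U_f=U_{\I}^{\otimes k}\otimes\id_{\O}^{\otimes k}\otimes\overline{U}_{\F}$, and analogously for conjugation and inversion). Writing $\Lambda(U)\coloneqq S*\ChoiU^{\otimes k}_{\I\O}$ for the output state on $\mathcal{H}_{\F}$ and $\Lambda_0\coloneqq\Lambda(\id)$, the same covariance argument as in the final part of Thm~\ref{thm:twirl_prob_constr} --- which only uses covariance and not the rank-one form of the output --- yields $\Lambda(U)=f(U)\,\Lambda_0\,f(U)^{\dagger}$ for all $U$. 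A direct computation then shows the fidelity is constant in $U$:
\begin{align}
\expval{F^{\psi}} = \int_{\textup{Haar}}\mathrm{d}U\,\Tr\left(\Lambda(U)\,f(U)\ketbra{\psi}{\psi}f(U)^{\dagger}\right) = \Tr\left(\Lambda_0\ketbra{\psi}{\psi}\right) = \bra{\psi}\Lambda_0\ket{\psi}.
\end{align}

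Second I would impose the input-state-dependent covariance of the same theorem, symmetrizing $S$ so that it additionally commutes with $V_f$ for every $V$ that stabilizes $\ket{\psi}$ in the sense specified there. Propagating this commutation through the link product with $\ChoiU^{\otimes k}$ and combining it with $\Lambda(U)=f(U)\Lambda_0 f(U)^{\dagger}$ transfers the symmetry to the output, forcing $\Lambda_0$ to commute with the entire stabilizer subgroup of $\ket{\psi}$. Since that subgroup acts as a scalar on $\spn\{\ket{\psi}\}$ and irreducibly on $\ket{\psi}^{\perp}$, Schur's lemma confines $\Lambda_0$ to the two-parameter family $\Lambda_0=a\ketbra{\psi}{\psi}+b(\id-\ketbra{\psi}{\psi})$. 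Imposing $\Tr(\Lambda_0)=1$ and identifying this with $\eta\ketbra{\psi}{\psi}+\tfrac{1-\eta}{d}\id$ gives $a=\eta+\tfrac{1-\eta}{d}$ and $b=\tfrac{1-\eta}{d}$, and with the identity from the first step, $\expval{F^{\psi}}=\bra{\psi}\Lambda_0\ket{\psi}=\eta+\tfrac{1-\eta}{d}$.

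Finally I would set $S'$ to be this doubly twirled superchannel; since each twirl is an average of unitary conjugations it preserves positivity and the superchannel constraints of Eqs.~\eqref{Eq:superch_known_constr}--\eqref{Eq:superch_known_constr3}, so $S'$ is again valid. Its output is then exactly
\begin{align}
S'*\ChoiU^{\otimes k}_{\I\O} = f(U)\Lambda_0 f(U)^{\dagger} = \eta\,f(U)\ketbra{\psi}{\psi}f(U)^{\dagger}+(1-\eta)\frac{\id_{\F}}{d},
\end{align}
which, via $\rho_{\P}*\dketbra{f(U)}{f(U)}_{\P\F}=f(U)\rho f(U)^{\dagger}$, is precisely the stated white-noise form. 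The step I expect to be the main obstacle is the second one: carefully verifying that the input-state-dependent covariance --- which for transposition and inversion acts on $\mathcal{H}_{\O}$ rather than $\mathcal{H}_{\F}$ --- is indeed converted by the link product with $\ChoiU^{\otimes k}$ into a genuine conjugation constraint on $\Lambda_0$ over $\mathcal{H}_{\F}$, together with checking that the stabilizer acts irreducibly on $\ket{\psi}^{\perp}$ so that Schur's lemma applies (the case $d=2$, where this stabilizer degenerates to the trivial group, requires a small separate argument).
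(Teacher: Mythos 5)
Your proof is correct in outline, but it takes a genuinely different route from the paper's. The paper does not twirl the known-state superchannel directly: it lifts $S_{\I\O\F}$ to an unknown-input-state superchannel via $S_{\I\O\F}=\rho_\P * S_{\P\I\O\F}$, invokes Theorem~2 of Ref.~\cite{Quintino_2022det} (which already establishes that any universal deterministic protocol for a homomorphism/antihomomorphism can be brought to the white-noise form $\eta\,\dketbra{f(U)}{f(U)}_{\P\F}+(1-\eta)\,\mathbb{1}_\P\otimes\mathbb{1}_\F/d$ at the channel level), and then links the result with $\rho_\P$ to land on the stated form, reading off $\expval{F^\psi}=\eta+\frac{1-\eta}{d}$ from a short fidelity computation. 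Your argument instead stays entirely inside the known-state formalism: the first twirl gives $\Lambda(U)=f(U)\Lambda_0 f(U)^\dagger$ and $\expval{F^\psi}=\bra{\psi}\Lambda_0\ket{\psi}$, and the second (stabilizer) twirl replaces $\Lambda_0$ by $\int f(V)\Lambda_0 f(V)^\dagger\,\mathrm{d}V$, which commutes with the stabilizer and is pinned to $a\ketbra{\psi}{\psi}+b(\mathbb{1}-\ketbra{\psi}{\psi})$ by Schur's lemma, with $\bra{\psi}\cdot\ket{\psi}$ preserved because $f(V)^\dagger\ket{\psi}=\ket{\psi}$. What your version buys is self-containedness and transparency — it does not import the external theorem, and it avoids the (somewhat glossed) step in the paper where the state fidelity of the known-state protocol is matched to the visibility of a lifted universal protocol. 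What the paper's version buys is brevity and reuse of an already-proved result.

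Two points you flag do need care, and both are resolvable. First, the propagation of the $\mathcal{H}_\O$-covariance through the link product does work: following the computation in the proof of Thm.~\ref{thm:KN_sym}, the $V_f$-twirled superchannel satisfies $S^c*\ChoiU^{\otimes k}=\int\Lambda(VU)\,\mathrm{d}V$, which after the first twirl equals $f(U)\bigl(\int f(V)\Lambda_0 f(V)^\dagger\,\mathrm{d}V\bigr)f(U)^\dagger$; the two twirling groups act on disjoint tensor factors, so the twirls commute and the double-twirled object remains a valid superchannel. Second, for $d=2$ the exact stabilizer $\{V: f(V)\ket{\psi}=\ket{\psi},\ \det V=1\}$ is trivial, so as stated your second twirl does nothing; the fix is to use the projective stabilizer $f(V)\ketbra{\psi}{\psi}f(V)^\dagger=\ketbra{\psi}{\psi}$ (whose commutant is $a\ketbra{\psi}{\psi}+b(\mathbb{1}-\ketbra{\psi}{\psi})$ for every $d\geq 2$), which is legitimate because the proofs of the covariance theorems only use invariance of the projector $\ketbra{\psi}{\psi}$, not of the vector.
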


The above Lemma then provides a recipe to transform an arbitrary unitary operation into the state $\eta \rho_{\P} * \dketbra{f(U)}{f(U)}_{P F}+(1-\eta) \frac{\mathbb{1}_{\F}}{d}$. Now, what if we consider the task of an arbitrary unitary operation into $f(U)\rho f(U)^\dagger$ with $\rho$ being of the form of $\rho=\eta \rho (1-\eta) \frac{\mathbb{1}}{d}$. Since we know the state $\rho$, we can deterministically purify it and then apply the covariant supermap $S^{\prime}$ from Lemma~\ref{lemma:mix_vis} in $\ketbra{\psi}$ instead of applying it on the mixed state \( \rho = \eta \ketbra{\psi}{\psi} + \frac{1-\eta}{d} \mathbb{1} \). That, in turn perfectly implements the action of  $f(U)$ on a class of states of the form $\rho = \eta \ketbra{\psi}{\psi} + \frac{1-\eta}{d}\mathbb{1}$, and corresponds to implementation of the mixed state supermap $S^{\prime}$. This can be summarized in the following theorem:

\begin{restatable}{theorem}{ThmMixVisib}\label{thm:Mix_visib}
    Let $\rho$ be arbitrary, known state of the form $\rho = \eta \ketbra{\psi}{\psi} + \frac{1-\eta}{d}\mathbb{1}$. There exists a superchannel \( S \) that achieves a \emph{deterministic} and \emph{exact} implementation of \( f(U)\rho f(U)^{\dagger} \) for visibility values \( \eta \leq \eta^* \), where 
    \begin{equation}
        \eta^* = \frac{d\expval{F^{\psi}} - 1}{d-1}
    \end{equation}
    and $\expval{F^{\psi}}$ denotes the optimal average fidelity of the corresponding known pure-state protocol. 
\end{restatable}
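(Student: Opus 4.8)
The plan is to build the required superchannel directly from the optimal known pure-state protocol by invoking Lemma~\ref{lemma:mix_vis}. First I would take the optimal known pure-state protocol for $\ket{\psi}$, whose optimal average fidelity is $\expval{F^{\psi}}$, and feed it into Lemma~\ref{lemma:mix_vis}. This produces a superchannel $S'$ whose action on $k$ calls of $U$ is
\begin{equation}
S'_{\I\O\F} * \dketbra{U}{U}_{\I\O}^{\otimes k} = \eta\, \ketbra{\psi}{\psi}_\P * \dketbra{f(U)}{f(U)}_{\P\F} + (1-\eta)\frac{\mathbb{1}_\F}{d},
\end{equation}
where the visibility is tied to the fidelity through $\expval{F^{\psi}} = \eta + \frac{1-\eta}{d}$. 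Solving this relation for $\eta$ gives precisely the critical value $\eta^* = \frac{d\expval{F^{\psi}}-1}{d-1}$.

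The central observation is then to recognise the output of $S'$ as the target transformation applied to the specific state $\rho = \eta^* \ketbra{\psi}{\psi} + \frac{1-\eta^*}{d}\mathbb{1}$. Using $\ketbra{\psi}{\psi}_\P * \dketbra{f(U)}{f(U)}_{\P\F} = f(U)\ketbra{\psi}{\psi}f(U)^\dagger$ together with the fact that $f(U)$ is unitary, so that $f(U)\,\tfrac{\mathbb{1}}{d}\,f(U)^\dagger = \tfrac{\mathbb{1}}{d}$, the output rewrites as
\begin{equation}
\eta^* f(U)\ketbra{\psi}{\psi}f(U)^\dagger + \frac{1-\eta^*}{d}\mathbb{1} = f(U)\rho f(U)^\dagger.
\end{equation}
Thus $S'$ already gives a deterministic and exact implementation of the task at the boundary value $\eta = \eta^*$.

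To reach every $0 \le \eta \le \eta^*$, I would invoke convexity of the set of superchannels combined with the invariance of the maximally mixed state under unitary conjugation. Setting $\lambda = \eta/\eta^* \in [0,1]$, one has the decomposition $\rho_\eta = \lambda\,\rho_{\eta^*} + (1-\lambda)\tfrac{\mathbb{1}}{d}$, and applying $f(U)(\cdot)f(U)^\dagger$ gives $f(U)\rho_\eta f(U)^\dagger = \lambda\, f(U)\rho_{\eta^*}f(U)^\dagger + (1-\lambda)\tfrac{\mathbb{1}}{d}$. I would then take $S = \lambda\, S' + (1-\lambda)\, S_{\textup{noise}}$, where $S_{\textup{noise}}$ is the replacement superchannel that traces out the outputs of $U$ and prepares $\mathbb{1}/d$; this is again a valid deterministic superchannel, and it implements $f(U)\rho_\eta f(U)^\dagger$ exactly.

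The only step demanding genuine care is the identification in the second paragraph: that the white-noise mixture delivered by Lemma~\ref{lemma:mix_vis} coincides \emph{exactly} with $f(U)\rho f(U)^\dagger$ for the chosen $\rho$. This works precisely because the noise term is the maximally mixed state, which is a fixed point of conjugation by $f(U)$, so no residual $U$-dependence survives in the noise floor and the implementation is genuinely exact rather than merely approximate. The surrounding algebra—solving for $\eta^*$ and checking the convex decomposition—is routine. Note that this establishes only the achievability direction; the matching impossibility for $\eta > \eta^*$, which upgrades this to the ``if and only if'' of Thm.~\ref{thm:mix_eta}, would separately require the optimality of $\expval{F^{\psi}}$.
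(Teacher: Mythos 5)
Your proof is correct and follows essentially the same route as the paper: both invoke Lemma~\ref{lemma:mix_vis} to obtain the white-noise form of the output state and then identify it with $f(U)\rho f(U)^{\dagger}$ using the invariance of $\mathbb{1}/d$ under conjugation by the unitary $f(U)$, with $\eta^*$ obtained by solving $\expval{F^{\psi}}=\eta+\frac{1-\eta}{d}$. Your explicit convex mixing with a replacement superchannel to reach all $\eta<\eta^*$ is a detail the paper leaves implicit, but it is a straightforward and valid completion of the same argument.
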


\section{Advantages of known state HOQC in specific unitary transformation tasks}\label{sec:tasks_applications}

Now we will explore some of the applications of the formalism introduced in the sections so far. In particular, we will consider protocols that implement transposition, storage-and-retrieval, conjugation and inversion, on a pure state, part of the bipartite state and mixed state, using a single-query to the input unitary.

\subsection{Unitary transposition}
Let us explore the task that implements universal unitary transposition on a known input state, that is, the case where $f(U) = U^{T}$. 

\paragraph{Optimal probabilistic exact transposition protocol on a pure known unentangled state}\label{parag:prob_trans_single}

In Thm.~\ref{thm:trans_single} stated below, we will prove that the optimal circuit structure for implementing transposition operation $U \mapsto U^T\ketbra{\psi}{\psi}\overline{U}$ is given in terms of a superinstrument in which the superinstrument element corresponding to success has the Choi operator given by 
\begin{align}
    S_{\I\O\F, \textup{trans}}^{\textup{opt}} = \ketbra{\phi^{+}}{\phi^{+}}_{\I\M} * \left(\ketbra{\psi}{\psi}_{\O} \otimes \dketbra{\mathbb{1}}{\mathbb{1}}_{\M\F}\right).
\end{align}
The quantum circuit corresponding to this superinstrument element is presented in Fig.~\ref{fig:transpositionKNpure}. This circuit structure resembles remote state preparation (RSP) \cite{Bennett_2001}, except with the difference that in this case we will have the action of the unitary as an additional step. Namely, we can imagine Alice and Bob sharing a maximally entangled state. Alice has complete knowledge of the state $\ketbra{\psi}{\psi}$ and the access to the unknown unitary $U$. This enables her to prepare $U^{T} \ketbra{\psi}{\psi} \overline{U}$ on Bob's side by querying the unitary $U$ with her part of maximally entangled state, after which she performs the $M_{\psi}= \ketbra{\psi}{\psi}^{T}$ measurement. 
\begin{figure}[t!]
    \centering
\includegraphics[width=0.7\textwidth]{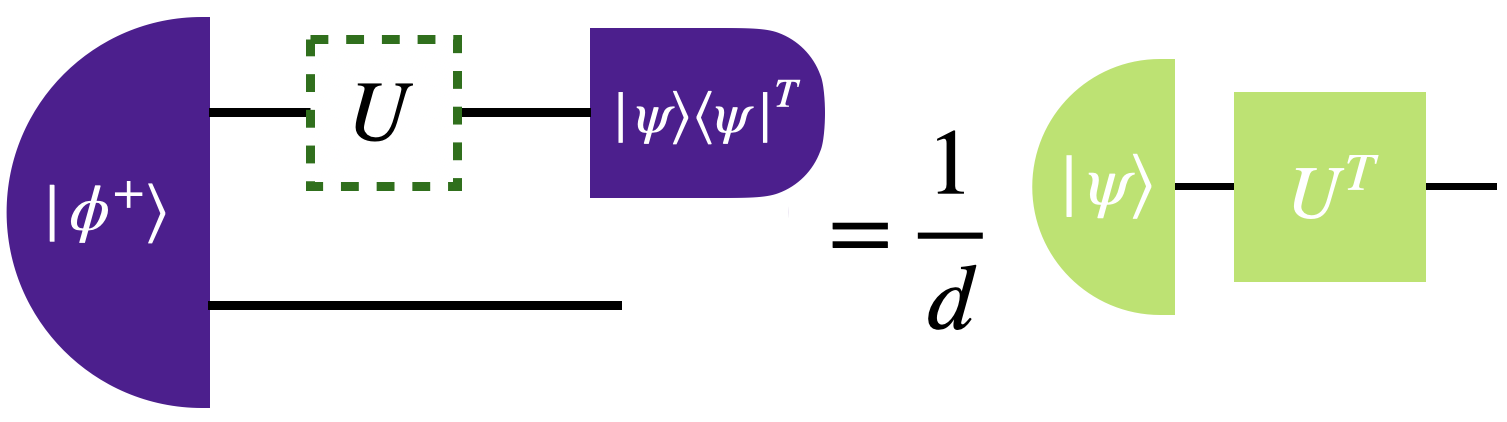}
    \caption{Optimal protocol for universal unitary transposition applied to a known arbitrary input state. The probability of successful implementation is $p = \frac{1}{d}$.}
\label{fig:transpositionKNpure}
\end{figure}
One can also have clear understanding of why this protocol works from the circuit diagram perspective. Following the Fig.~\ref{fig:transpositionKNpure}, we can imagine circuit elements getting transposed when going over the maximally entangled state, as the equation $(A \otimes \mathbb{1})\ket{\phi^{+}} = ( \mathbb{1}\otimes A^T)\ket{\phi^{+}}$ holds, where $A$ corresponds to arbitrary matrix. Alternatively, we could imagine a protocol where Bob has access to a single query of the unitary instead of Alice. In this case, we would need to implement a SWAP gate after the unitary.

Let us remark that the encoder part is the same as for the optimal unknown case of transposition, $E = \ketbra{\phi^{+}}{\phi^{+}}$ (cf. \cite{Quintino2019prob}) and it can be understood as the resource for the optimal unitary storing. Surprisingly, although both the encoder and the decoder could, in principle, exploit the information about the known state, the optimal encoder for transposition does not depend on the known state.
However, the decoder $D=\ketbra{\psi}{\psi}_{\O} \otimes \dketbra{\mathbb{1}}{\mathbb{1}}_{\M\F}$ does, compared to the unknown case, where the Choi operator of the optimal supermap was $S_{\P\I\O\F}= \ketbra{\phi^{+}}{\phi^{+}}_{\I\M} * \left(\ketbra{\phi^{+}}{\phi^{+}}_{\P\O} \otimes \dketbra{\mathbb{1}}{\mathbb{1}}_{\M\F}\right)$, and had a probability of $1/d^2$~\cite{Quintino2019prob}. 

\begin{restatable}{theorem}{ThmProbTransSingle}\label{thm:trans_single}
    Given a single call of the $d$-dimensional unitary operation $U \in \mathrm{SU}(d)$, in the case of a supermap realised on a known pure input state $\ketbra{\psi}{\psi}$, the transposition $f(U) = U^{T}$ can be implemented with a maximal success probability\footnote{For comparison, we denote the values of the corresponding optimal universal state protocol in light gray.}
\begin{equation}\label{p_T}
    p^{\psi}_{\textup{max,trans}} = \frac{1}{d} \quad \quad \quad \textcolor{gray}{ \textup{cf. } \quad p_{\textup{max,trans}} = \frac{1}{d^2}}
\end{equation}
\end{restatable}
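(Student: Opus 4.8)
The plan is to prove matching upper and lower bounds on $p$: the lower bound by exhibiting the explicit protocol quoted before the theorem, and the upper bound by reducing the semidefinite program in Eq.~\eqref{Eq:prob_SDP1} to a few scalar parameters via the symmetries of the transposition task and then bounding those parameters through the superchannel normalisation.

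First I would fix the input state. By Thm.~\ref{thm:independ} it suffices to solve the problem for $\ket{\psi}=\ket{0}$, since any success-branch element for $\ket 0$ converts into one for an arbitrary $\ket\psi$ with the same $p$. Next, by the input-state-independent covariance of Thm.~\ref{thm:twirl_prob_constr} and the input-state-dependent covariance of Thm.~\ref{thm:KN_sym}, I restrict without loss of performance to a covariant success operator obeying $[S,\,U_{\I}\otimes\id_{\O}\otimes\overline U_{\F}]=0$ for all $U\in\mathrm{SU}(d)$ and $[S,\,\id_{\I}\otimes\overline V_{\O}\otimes\id_{\F}]=0$ for all $V$ with $V^{T}\ket 0=\ket 0$ (equivalently $V\ket 0=\ket 0$). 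The first condition confines the $\I\F$-part of $S$ to the two-dimensional commutant spanned by $\ketbra{\phi^+}{\phi^+}_{\I\F}$ and $\id_{\I\F}-\ketbra{\phi^+}{\phi^+}_{\I\F}$, so that
\begin{equation}
S_{\I\O\F}=\ketbra{\phi^+}{\phi^+}_{\I\F}\otimes X_{\O}+\big(\id_{\I\F}-\ketbra{\phi^+}{\phi^+}_{\I\F}\big)\otimes Y_{\O},
\end{equation}
while the second condition, via Schur's lemma applied to the stabiliser $\{V:V\ket 0=\ket 0\}\cong\mathrm{SU}(d-1)$ acting on $\H_{\O}$, forces $X_{\O}=a_X\ketbra{0}{0}_{\O}+b_X(\id-\ketbra{0}{0})_{\O}$ and likewise $Y_{\O}=a_Y\ketbra{0}{0}_{\O}+b_Y(\id-\ketbra{0}{0})_{\O}$, with $a_X,b_X,a_Y,b_Y\ge 0$ forced by $S\ge 0$.

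Then I would impose the success constraint. By the final implication of Thm.~\ref{thm:twirl_prob_constr}, for a covariant $S$ it suffices to check the defining equation at $U=\id$, namely $S*\dketbra{\id}{\id}_{\I\O}=p\,\ketbra{0}{0}_{\F}$. Computing the link product, which reduces to $\bra{\phi^+}_{\I\O}S\ket{\phi^+}_{\I\O}$ up to the maximally-entangled normalisation, yields an $\F$-operator of the form $\tfrac1d X_{\F}+\Tr(Y)\,\id_{\F}-\tfrac1d Y_{\F}$. Matching this to $p\,\ketbra{0}{0}_{\F}$ annihilates the component proportional to $\id_{\F}-\ketbra{0}{0}_{\F}$; since all parameters are nonnegative and the relevant coefficients are strictly positive for $d\ge 2$, this forces $b_X=a_Y=b_Y=0$, leaving $Y_{\O}=0$, $X_{\O}=a_X\ketbra{0}{0}_{\O}$, and $p=a_X/d$.

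Finally, the upper bound comes from normalisation. The element $S$ must satisfy $0\le S\le S_{\textup{ch}}$ for a superchannel with $\Tr_{\F}(S_{\textup{ch}})=\sigma_{\I}\otimes\id_{\O}$ and $\Tr(\sigma_\I)=1$. Positivity of the partial trace gives $\Tr_{\F}(S_{\textup{ch}})\ge\Tr_{\F}(S)=\tfrac{a_X}{d}\,\id_{\I}\otimes\ketbra{0}{0}_{\O}$, and restricting the $\O$-system to $\ket 0$ yields $\sigma_{\I}\ge\tfrac{a_X}{d}\id_{\I}$, whose trace gives $a_X\le 1$ and hence $p\le 1/d$. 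For the matching lower bound I would verify that the explicit choice $a_X=1$, realised by $S^{\textup{opt}}=\ketbra{\phi^+}{\phi^+}_{\I\M}*(\ketbra{0}{0}_{\O}\otimes\dketbra{\id}{\id}_{\M\F})$, is a valid success branch obeying $S^{\textup{opt}}*\ChoiU_{\I\O}=\tfrac1d\,U^{T}\ketbra{0}{0}\overline U_{\F}$, which is exactly the remote-state-preparation computation sketched after the theorem. I expect the main obstacle to be the upper-bound step: making the reduction to covariant operators watertight (so that no performance is lost) and faithfully translating the superinstrument constraint $S\le S_{\textup{ch}}$ into the scalar bound $a_X\le 1$, rather than the link-product computations, which are routine once the covariant parametrisation is in place.
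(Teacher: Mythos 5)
Your proposal is correct and follows essentially the same route as the paper's proof: restrict to the covariant commutant form $S=A_{\O}\otimes\ketbra{\phi^{+}}{\phi^{+}}_{\I\F}+B_{\O}\otimes(\id_{\I\F}-\ketbra{\phi^{+}}{\phi^{+}}_{\I\F})$, impose the defining equation at $U=\id$, use positivity of $A$ and $B$ to force $B=0$ and $A=pd\,\ketbra{0}{0}$, and extract $p\le 1/d$ from the superchannel normalisation, with the remote-state-preparation circuit saturating the bound. The only (harmless) deviation is that you additionally invoke the input-state-dependent covariance to pre-diagonalise $X_{\O},Y_{\O}$, whereas the paper derives the same conclusion directly from positivity by projecting the defining equation onto $\ket{0}$ and its orthogonal complement.
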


Compared to the unknown state protocol, where the success probability is $p_{\textup{max,trans}} = \frac{1}{d^2}$, we see that knowing the state leads to a quadratic improvement. This can be easily seen from the circuit representation: each maximally entangled state contributes a factor of $\frac{1}{d}$ to the probability. In the unknown state case, the decoder required a Bell measurement, introducing an additional $\frac{1}{d}$ factor and yielding an overall probability of $\frac{1}{d^2}$. In contrast, when the state is known, we can directly use the knowledge of the state to design the measurement, achieving the desired result with probability $\frac{1}{d}$.

We will encounter this protocol again in the following paragraph, where the superchannel for the deterministic task will correspond to the sum of the probabilistic task and the corresponding correction. In general, an optimal deterministic protocol cannot be derived from a probabilistic strategy. However, in the case of known state transposition, this becomes possible due to the unique symmetry of the problem.

\paragraph{Optimal deterministic approximate transposition protocol on a pure known unentangled state}\label{sec:transp_det_single}
In Thm.~\ref{thm:trans_det}, we will prove that the optimal strategy for implementing the transposition operation $U \mapsto U^T\ketbra{\psi}{\psi}\overline{U}$ on a known state in a deterministic approximate way is given by a superchannel with Choi operator
\begin{align}\label{sup_trans_pur_det}
S_{\I\O\F, \textup{trans}}^{\textup{opt}} = \ketbra{\phi^{+}}{\phi^{+}}_{\I\M} * \left(\ketbra{\psi}{\psi}_{\O} \otimes \dketbra{\mathbb{1}}{\mathbb{1}}_{\M\F} 
+  \frac{(\mathbb{1}_{\O} - \ketbra{\psi}{\psi}_{\O})\otimes (\mathbb{1}_{\I\F} -   \ketbra{\phi^{+}}{\phi^{+}}_{\I\F})}{d^2 - 1} \right).
\end{align}
The quantum circuit corresponding to this superchannel is presented in Fig.~\ref{Fig:TransDetpure}.  
\begin{figure}[b!]
    \centering
\includegraphics[width=0.9\textwidth]{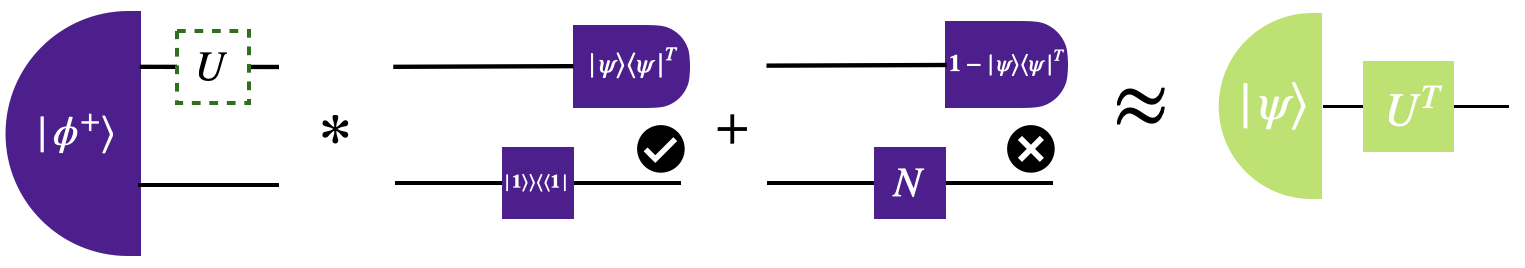}
    \caption{Optimal protocol for implementing the transposition operation in a deterministic and approximate manner on an arbitrary known pure state. Here, “$*$” denotes the linking operation. The protocol consists of a success branch ($\checkmark$) and a failure branch ($\times$). On the failure branch, we perform the optimal CPTP approximation of the reduction map (which corresponds to the optimal universal NOT for qubits, as correction channel.
}
    \label{Fig:TransDetpure}
\end{figure}

Let us now observe that the superchannel of Eq.~\eqref{sup_trans_pur_det} can be constructed via the optimal probabilistic protocol introduced in the previous section. As explained in Sec.~\ref{sec:probabilistic}, probabilistic protocols are described by superinstruments that split into a success branch \( \smap{S}_{\checkmark} \) and a failure branch \( \smap{S}_{\times} \). When analysing the probabilistic exact case, since exactness is required, the failure branch is discarded entirely, resulting in a protocol that succeeds only with a certain probability.

A natural question then arises: is it possible to retain the failure branch and attempt to correct the outcome, similar to the correction step in standard quantum teleportation?
Such a construction leads to an approximate realisation of the protocol, since the lack of knowledge of the unitary prevents perfect correction. The fidelity of such protocol simply corresponds to:
\begin{equation}\label{Eq:fid_decomp}
    F_{\textup{tot}} = p_{s}\cdot 1 + (1- p_{s})F_{\textup{fail}},
\end{equation}
where $p_a$ is the probability of performing the exact transformation, hence, $F_\textup{success}=1$.

While the intuitive construction involving a success branch with unit fidelity and a failure branch provides a valid perspective on deterministic approximate realisations, it does not capture the most general class of such protocols. Not all deterministic strategies decompose in this way. Nonetheless, the superchannel in Eq.~\eqref{sup_trans_pur_det} does exhibit this structure and corresponds to the fidelity expression given in Eq.~\eqref{Eq:fid_decomp}. That this strategy turns out to be optimal in this case is a peculiar feature of the transposition function within the known input state setting. In contrast, for unknown input states, the optimal deterministic implementation of unitary transposition does \emph{not} generally fall within this class of strategies~\cite{Quintino_2022det}.

Let us now analyse more closely why the strategy based on a probabilistic approach combined with a correction in the failure branch performs so well in the known input state transposition task. When the protocol succeeds, it does so with a probability of \(p_s = \frac{1}{d}\).
Upon failure, the resulting (unnormalised) state is \(U^T (\id - \ketbra{\psi}{\psi}) \overline{U}\), whereas our goal is \(U^T \ketbra{\psi}{\psi} \overline{U}\). To achieve this, we may introduce a \emph{correction channel}. A desirable property of such a correction is that, first, it commutes with the unknown unitary, and second, it effectively flips the state \(\id - \ketbra{\psi}{\psi}\mapsto \ketbra{\psi}{\psi}\). That is, firstly, the desired channel \(\mathcal{D}\) commutes with the arbitrary unitary operator $U^T$. That is, it is desirable that the correction channel satisfies the covariance property%
\footnote{In principle, full unitary covariance may not be necessary; that is, we may impose that $ \map{D}(U \rho U^{\dagger}) = U \map{D}(\rho) U^{\dagger} \quad \forall U\in \mathrm{SU}(d)$ only on the known fixed state $\rho=\ketbra{\psi}$. But, as we will prove, the optimal performance is attainable with a correction channel that is covariant on all input states.}
\begin{equation}
    \map{D}(U \rho U^{\dagger}) = U \map{D}(\rho) U^{\dagger} \quad \forall U\in \mathrm{SU}(d), \; \forall \rho\in\L(\mathbb{C}^d).
\end{equation}
The unique channel, up to mixing parameter $\eta$, satisfying this condition for $\forall U$ is the depolarising channel\footnote{We remark that Refs.~\cite{Keyl_2002,Gschwendtner_2021Covariant} also imposes that $\eta\geq0$, however, the linear map $\mathcal{D}_\eta$ is CPTP in the range $\eta\in \left[-\frac{1}{d^2 - 1}, 1\right].$ }~\cite{Keyl_2002,Gschwendtner_2021Covariant},
\begin{equation}
    \mathcal{D_\eta}(\rho) = \eta \rho + (1 - \eta) \frac{\mathbb{1}}{d},
\end{equation}
which is unital, trace-preserving, and covariant under the unitary group. Here, $\eta$ is visibility factor that describes the white noise added to the state $\rho$. 

As previously mentioned, a second desired property is the implementation of a flip operation, mapping \(\id - \ketbra{\psi}{\psi} \mapsto \ketbra{\psi}{\psi}\)
. While the depolarising channel does not achieve this transformation exactly, it does approximate it in the regime of negative visibility, \(\eta < 0\). For \(d = 2\), negative visibility corresponds to a ``flip'' of the Bloch vector, accompanied by a reduction in its length\footnote{As is well known, a perfect universal flip, i.e., a universal NOT operation on arbitrary mixed states, is not physically achievable~\cite{Buzek_NOT_1999} in the exact sense. Instead, the universal NOT channel~\eqref{eq:NOTgate} offers the best possible approximation, flipping the Bloch vector direction while simultaneously contracting it.}. 

Therefore, in the optimal case corresponding to the superchannel described in Eq.~\eqref{sup_trans_pur_det}, the correction channel is described by a depolarising channel with the smallest visibility $\eta$ such that the map $\map{D_\eta}$ is completely positive. Hence, we set the negative visibility
\begin{equation}\label{Eq:negvis}
    \eta = -\frac{1}{d^2 - 1},
\end{equation}
which corresponds to the optimal CPTP approximation of the reduction map~\cite{Horodecki1998Reduction} 
\begin{align}\label{eq:NOTgate}
    \map{N}(\rho) = \frac{d\Tr(\rho)\, \mathbb{1} - \rho}{d^2 - 1},
\end{align}
which, for the case of qubits, corresponds to the optimal completely positive approximation of the universal NOT \cite{Buzek_NOT_1999}. 

In summary, the optimal superchannel presented in Eq.~\eqref{sup_trans_pur_det} corresponds to a strategy that accepts the success outcome of the probabilistic protocol while correcting the failure branch by commuting the universal NOT gate with the unknown unitary, as illustrated in Fig.~\ref{Fig:TransDetpure}.

\begin{restatable}{theorem}{FidTransPure}\label{thm:trans_det}
Given a single call of the $d-$dimensional unitary operation $U \in \mathrm{SU}(d)$, in the case of a supermap realised on a known pure input state $\ket{\psi}\in\mathbb{C}^d$, the transposition operation $U^{T}$ can be implemented, achieving optimal average fidelity 
    \begin{equation}
    \expval{F^{\psi}}_{\textup{trans}, k=1} = \frac{2d + 1}{d^2 + d} \quad \quad \quad \quad \textcolor{gray}{ \textup{cf. } \quad \expval{F}_{\textup{trans}, k=1} = \frac{2}{d^2}}
\end{equation}
\end{restatable}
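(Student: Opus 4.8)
The plan is to combine the symmetry reductions already established with an explicit primal construction and a matching dual certificate. First I would invoke the independence-of-known-state result (the deterministic version, analogous to Thm.~\ref{thm:independ}) to fix $\ket{\psi}=\ket{0}$ without loss of generality, and then apply the covariance reduction of Thm.~\ref{thm:S_det_covariance_f} to restrict the optimisation to superchannels $S$ commuting with $U_\I\otimes\id_\O\otimes\overline{U}_\F$ for all $U\in\mathrm{SU}(d)$ (state-independent covariance) and with $\id_\I\otimes\overline{V}_\O\otimes\id_\F$ for all $V$ stabilising $\ket{0}$, i.e.\ $V_\O^{T}\ket{0}=\ket{0}$ (state-dependent covariance). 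Since average and worst-case fidelity coincide on the covariant set, it suffices to maximise $\Tr(S\,\Omega^{0})$ against the performance operator, which I would obtain without redoing the Haar integral via Eq.~\eqref{perform_kn}, computing $\Omega^{0}_{\I\O\F}=\ketbra{0}{0}_\P*\Omega_{\P\I\O\F}$ from the known unknown-state transposition performance operator of Ref.~\cite{Quintino_2022det}.

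The covariance group decomposes the relevant spaces into few invariant blocks: the state-independent symmetry splits $\H_\I\otimes\H_\F$ into $\ketbra{\phi^{+}}{\phi^{+}}_{\I\F}$ and its orthogonal complement, while the stabiliser of $\ket{0}$ splits $\H_\O$ into $\ketbra{0}{0}_\O$ and $\id_\O-\ketbra{0}{0}_\O$. A covariant $S$ is therefore a short linear combination of the corresponding product operators, the objective $\Tr(S\Omega^{0})$ becomes a linear function of a handful of coefficients, and the normalisation $\Tr_\F S=\sigma_\I\otimes\id_\O$ reduces to a few linear equalities. For achievability I would exhibit the explicit superchannel of Eq.~\eqref{sup_trans_pur_det}, verify positivity (the correction term carries positive weight $1/(d^2-1)$) and the normalisation constraint directly, and evaluate its fidelity through the decomposition of Eq.~\eqref{Eq:fid_decomp}: the success branch contributes $p_s=1/d$ with unit fidelity (Thm.~\ref{thm:trans_single}), while the reduction-map correction of Eq.~\eqref{eq:NOTgate} on the failure branch gives $F_{\textup{fail}}=\frac{d}{d^2-1}$, so that $F_{\textup{tot}}=\frac1d+\frac{d-1}{d}\cdot\frac{d}{d^2-1}=\frac1d+\frac1{d+1}=\frac{2d+1}{d^2+d}$.

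To establish optimality I would prove the matching upper bound by semidefinite duality. Writing the Lagrangian dual of the covariance-reduced fidelity SDP, the dual variable associated with $\Tr_\F S=\sigma_\I\otimes\id_\O$ is an operator on $\H_\I$ together with a scalar, and weak duality bounds $\expval{F^{\psi}}$ from above by the dual objective. Because the covariant feasible set is low-dimensional, I would guess-and-verify a dual-feasible point whose value equals $\frac{2d+1}{d^2+d}$, certifying optimality of the primal construction; equivalently, one can solve the resulting finite linear program over the block coefficients directly and read off the maximum. I expect the main obstacle to be this optimality half: obtaining the performance operator $\Omega^{0}$ and its block decomposition exactly, and then producing a dual certificate (or checking complementary slackness of the finite program) that pins the value to $\frac{2d+1}{d^2+d}$ rather than merely lower-bounding it.
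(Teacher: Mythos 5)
Your proposal is correct and follows essentially the same route as the paper: fix $\ket{\psi}=\ket{0}$, obtain $\Omega^{0}=\ketbra{0}{0}_{\P}*\Omega_{\P\I\O\F}$ from the known unknown-state performance operator, restrict to covariant $S=A_{\O}\otimes\ketbra{\phi^{+}}{\phi^{+}}_{\I\F}+B_{\O}\otimes(\id_{\I\F}-\ketbra{\phi^{+}}{\phi^{+}}_{\I\F})$ with $A$ diagonal in the $\{\ketbra{0}{0},\id-\ketbra{0}{0}\}$ blocks, and optimise the resulting finite program. The paper takes the "equivalent" route you mention at the end—solving the reduced linear program in the two coefficients $\alpha,\beta$ directly (the normalisation forces $B=(\id-A)/(d^2-1)\geq 0$, and the objective is increasing in $\alpha$ and decreasing in $\beta$)—rather than exhibiting a dual certificate, and your achievability computation via the success/failure decomposition with the reduction-map correction matches the paper's discussion of the optimal circuit.
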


As in probabilistic case, we see quadratic improvement compared to the unknown input state case.

\paragraph{Transposition on a part of a known bipartite state: probabilistic exact}

\begin{figure}[b!]
    \centering
\includegraphics[width=0.7\textwidth]{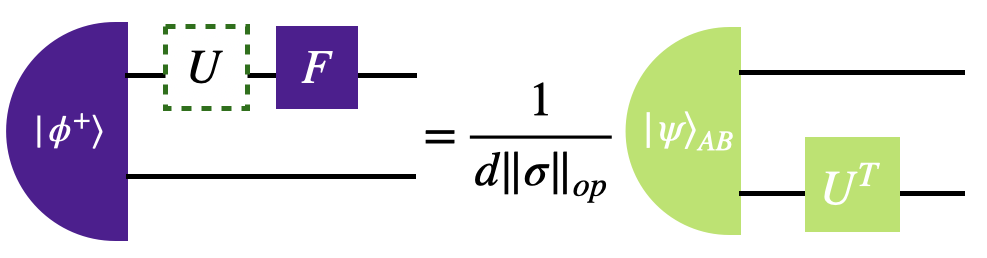}
    \caption{Protocol implementing optimal transposition of an unknown unitary acting on part of a known, arbitrary bipartite state. The success probability is given in terms of the maximal eigenvalue of the reduced density matrix $\sigma = \Tr_A(\ketbra{\psi}{\psi}_{AB})$, and is given by $p= \frac{1}{d\norm{\sigma}_{\textup{op}}}$. }
    \label{fig:bipartTrans}
\end{figure}

We now consider the task of implementing the transposition $U \mapsto (U_{\F}^T\otimes \mathbb{1}_{\M})\ketbra{\psi}{\psi}_{\F\M}(\overline{U}_{\F}\otimes \mathbb{1}_{\M})$ on a part of a bipartite state. The optimal supermap is given by
\begin{align}\label{SDP_bip}
     S^{\textup{opt}}_{\I\O\F\A, trans} = \frac{1}{\norm{\Tr_{\A}(\ketbra{\psi}{\psi}_{AB} )}_{\textup{op}}}\ketbra{\psi}{\psi}_{\O\A} \otimes \ketbra{\phi^{+}}{\phi^{+}} _{\I\F}
\end{align}
as depicted in Fig.~\ref{fig:bipartTrans}.

To understand this protocol operationally, we introduce the concept of a quantum filter, which is an instrument element with a single Kraus operator. A quantum filter $F\in\mathcal{L}(\mathcal{H})$ is the Kraus operator, corresponding to the instrument element $\map{F}$ where $\map{F}(\cdot) =  F(\cdot) F^\dagger$. Since $\map{F}$ is an instrument element, it must be trace non-increasing, hence its Kraus operator $F$ must respect $ F F^\dagger \leq \mathbb{1}$. Quantum filter can be used to convert a maximally entangled state into an arbitrary bipartite state with a certain probability:
\begin{align}
    (\mathbb{1} \otimes F) \ket{\phi^{+}} = \gamma \ket{\psi}.
\end{align}
Ideally, we would like to have $\gamma = 1$, which would correspond to a 
deterministic and exact realisation of the state $\ket{\psi}$ by applying 
the filter to a maximally entangled state, however, this is not possible 
due to the condition $F^{\dagger}F \leq \mathbb{1}$. Hence, the desired 
realisation can be implemented only probabilistically. More explicitly, 
consider an arbitrary density matrix $\sigma = \sum_i \alpha_i \ketbra{i}{i}$ 
with $\sqrt{\sigma} = \sum_i \sqrt{\alpha_i} \ketbra{i}{i}$. The purification of $\sigma$ can be written as $\ket{\psi} = \sum_i \sqrt{\alpha_i }\ket{ii} 
= \sum_i \sqrt{\alpha_i } \sqrt{d}\frac{\ket{ii}}{\sqrt{d}}$. This leads to the equation $(\mathbb{1} \otimes \sqrt{\sigma d})\ket{\phi^{+}} = 
\ket{\psi}$. The term $\sqrt{\sigma d}$ here would be a perfect filter in 
the sense of achieving $\ket{\psi}$ exactly and deterministically, however, 
it fails the condition of a quantum filter.

Hence, we need to find the optimal normalization that satisfies the condition, while still achieving the best possible probability of implementing the state $\ket{\psi}$. In particular, we found that it is given by the following filter,
\begin{equation}
    F\coloneqq \frac{\sqrt{\sigma}}{\sqrt{\norm{\sigma}}},
\end{equation}
where $\sigma = \Tr_A(\ketbra{\psi}{\psi}_{AB})$, which obtain the desired protocol for unitary transposition. Interestedly, as proven in the App.~\ref{append:bipart_proofs} this filtering-based protocol is optimal for unitary transposition. 

\begin{restatable}{theorem}{ThmBipTrans}\label{bip_trans_thm}
Given a single call of the $d$-dimensional unitary operation $U \in \mathrm{SU}(d)$, in the case of a supermap realised on a subsystem of a known \emph{bipartite} pure input state $\ket{\psi}_{AB}\in \mathcal{H}_{A}\otimes\mathcal{H}_{B}$, the transposition operation $U^{T}$ can be implemented with the maximal success probability,
    \begin{align}
    p^{\psi_{AB}}_\textup{max,trans}=\frac{1}{d\norm{\Tr_{A}(\ketbra{\psi}{\psi}_{AB} )}_{\textup{op}}}
    \end{align} 
\end{restatable}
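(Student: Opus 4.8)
The plan is to prove the claimed value by two matching bounds: \emph{achievability} (a concrete protocol reaching $p=\frac{1}{d\norm{\sigma}_{\textup{op}}}$, with $\sigma\coloneqq\Tr_{A}\ketbra{\psi}{\psi}_{AB}$) and \emph{optimality} (no protocol can exceed it). By the bipartite symmetry established in Thm.~\ref{thm:symbp} I may assume $\ket{\psi}$ is in Schmidt form, and the answer depends only on the spectrum of $\sigma$.

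For achievability I would take the candidate success-branch operator $S^{\textup{opt}}=\frac{1}{\norm{\sigma}_{\textup{op}}}\ketbra{\psi}{\psi}_{\O\A}\otimes\ketbra{\phi^{+}}{\phi^{+}}_{\I\F}$ of Eq.~\eqref{SDP_bip} and check two properties. First, \emph{exactness}: I compute the link product $S^{\textup{opt}}*\dketbra{U}{U}_{\I\O}$, using $\dketbra{U}{U}^{T_{\I\O}}=\dketbra{\overline{U}}{\overline{U}}$ as in Eq.~\eqref{fidelity} and contracting $\I,\O$. The maximally entangled factor $\ketbra{\phi^{+}}{\phi^{+}}_{\I\F}$ transports $\overline{U}$ from $\I$ into a $U^{T}$ acting on $\F$ while $\ketbra{\psi}{\psi}_{\O\A}$ supplies the reference correlations; writing $\ket{\psi}$ in its Schmidt basis, the sum collapses to $\frac{1}{d\norm{\sigma}_{\textup{op}}}(\id_{\A}\otimes U^{T}_{\F})\ketbra{\psi}{\psi}_{\A\F}(\id_{\A}\otimes\overline{U}_{\F})$, identifying $p=\frac{1}{d\norm{\sigma}_{\textup{op}}}$. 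Second, \emph{feasibility}: $S^{\textup{opt}}\geq 0$ is immediate, and $\Tr_{\F\A}S^{\textup{opt}}=\frac{1}{d\norm{\sigma}_{\textup{op}}}\,\id_{\I}\otimes\sigma_{\O}\leq \frac{\id_{\I}}{d}\otimes\id_{\O}$ since $\sigma_{\O}\leq\norm{\sigma}_{\textup{op}}\,\id_{\O}$. Hence $S^{\textup{opt}}$ is dominated by the valid superchannel $S_{\textup{ch}}=S^{\textup{opt}}+\bigl(\tfrac{\id_{\I}}{d}\otimes\id_{\O}-\Tr_{\F\A}S^{\textup{opt}}\bigr)\otimes\omega_{\F\A}$ for any state $\omega_{\F\A}$, so it is a legitimate superinstrument element.

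For optimality I would run a reduction to the single-party result. Let $S$ be any success branch achieving probability $p$, with $0\leq S\leq S_{\textup{ch}}$ and $\Tr_{\F\A}S_{\textup{ch}}=\sigma_{\I}\otimes\id_{\O}$, $\Tr(\sigma_{\I})=1$. Writing the Schmidt decomposition $\ket{\psi}=\sum_{i}\sqrt{\lambda_{i}}\ket{u_{i}}_{\A}\ket{v_{i}}_{\F}$ with $\lambda_{\max}=\norm{\sigma}_{\textup{op}}$, I define the post-selected operator $\tilde S\coloneqq\bra{u_{\max}}_{\A}S\ket{u_{\max}}_{\A}\geq 0$. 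Projecting the reference onto $\ket{u_{\max}}$ in the exactness relation gives $\tilde S*\dketbra{U}{U}_{\I\O}=p\,\norm{\sigma}_{\textup{op}}\,U^{T}\ketbra{v_{\max}}{v_{\max}}\overline{U}$, i.e.\ an exact single-system transposition of the known pure state $\ket{v_{\max}}$ with success probability $p\,\norm{\sigma}_{\textup{op}}$. The point is that $\tilde S$ is itself a valid single-slot superinstrument element, since $\Tr_{\F}\tilde S=\bra{u_{\max}}_{\A}(\Tr_{\F}S)\ket{u_{\max}}_{\A}\leq\Tr_{\A}(\Tr_{\F}S)=\Tr_{\F\A}S\leq\sigma_{\I}\otimes\id_{\O}$, where the middle step uses that each diagonal block of a positive operator is dominated by its full partial trace. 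Applying the single-system bound $p^{\psi}_{\textup{max,trans}}=\tfrac{1}{d}$ of Thm.~\ref{thm:trans_single} then yields $p\,\norm{\sigma}_{\textup{op}}\leq\tfrac{1}{d}$, so $p\leq\frac{1}{d\norm{\sigma}_{\textup{op}}}$, matching the achievable value.

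I expect the main obstacle to be the converse, and specifically the verification that post-selecting the reference does not break the superchannel causality/normalisation constraint; this is exactly what the block-domination inequality $\bra{u_{\max}}_{\A}(\Tr_{\F}S)\ket{u_{\max}}_{\A}\leq\Tr_{\F\A}S$ secures, and it is what makes the reduction to Thm.~\ref{thm:trans_single} legitimate rather than merely heuristic. By comparison, the link-product evaluation in the achievability step is routine but bookkeeping-heavy, and is best organised in the Schmidt basis of $\ket{\psi}$.
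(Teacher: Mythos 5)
Your proposal is correct, and its achievability half coincides with the paper's: you verify the same candidate operator $S^{\textup{opt}}=\frac{1}{\norm{\sigma}_{\textup{op}}}\ketbra{\psi}{\psi}_{\O\A}\otimes\ketbra{\phi^{+}}{\phi^{+}}_{\I\F}$ that the paper identifies as optimal. Where you genuinely diverge is the converse. The paper's proof stays inside the bipartite SDP: it invokes Thm.~\ref{thm:symbp} to write $S=A_{\O\A}\otimes\ketbra{\phi^{+}}{\phi^{+}}_{\I\F}+B_{\O\A}\otimes(\mathbb{1}_{\I\F}-\ketbra{\phi^{+}}{\phi^{+}}_{\I\F})$, projects the defining equation onto vectors orthogonal to $\ket{\psi}$, and uses positivity together with the operator inequality $\rho_{AB}\leq d\,\rho_{A}\otimes\mathbb{1}_{B}$ to force $B=0$ and $A=dp\,\ketbra{\psi}{\psi}$, after which the normalisation constraint yields the bound. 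You instead compress the reference system onto the dominant Schmidt vector, observe that $\tilde S\coloneqq\bra{u_{\max}}_{\A}S\ket{u_{\max}}_{\A}$ is a legitimate single-slot superinstrument element implementing exact transposition of $\ket{v_{\max}}$ with probability $p\norm{\sigma}_{\textup{op}}$, and import the bound $1/d$ from Thm.~\ref{thm:trans_single}. Your key feasibility step — $\bra{u_{\max}}_{\A}(\Tr_{\F}S)\ket{u_{\max}}_{\A}\leq\Tr_{\F\A}S\leq\sigma_{\I}\otimes\mathbb{1}_{\O}$, so $\tilde S$ can be completed to a superchannel by a $(\sigma_\I\otimes\mathbb{1}_\O-\Tr_\F\tilde S)\otimes\omega_\F$ failure branch — is sound, and the projection onto $\A$ commutes with the link product over $\I\O$, so the probability $p\norm{\sigma}_{\textup{op}}$ is $U$-independent as required. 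What each approach buys: the paper's is self-contained and simultaneously pins down the unique optimal $S$, while yours is more modular (it needs neither the explicit commutant basis nor the $\rho_{AB}\leq d\,\rho_A\otimes\mathbb{1}_B$ lemma for the converse) and makes transparent why the answer degrades exactly by the factor $\norm{\sigma}_{\textup{op}}$ relative to the unentangled case. One presentational caveat: Thm.~\ref{thm:trans_single} is stated and proved for the maximum over all valid single-slot protocols, so its $1/d$ bound does apply to your non-covariant $\tilde S$ via Thm.~\ref{thm:twirl_prob_constr}; it would be worth saying this explicitly rather than leaving it implicit.
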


Let us notice now some special cases:
\begin{itemize}
    \item For a maximally entangled state, we have $p(\ketbra{\phi^{+}}{\phi^{+}}_{AB}) = 1$. This is not surprising, as it follows directly from the property of the maximally entangled state: $(\mathbb{1} \otimes A)\ket{\phi^{+}} = (A^{T} \otimes \mathbb{1})\ket{\phi^{+}}$, after which we simply do nothing in the decoder part. 
    \item For the single-subsystem pure state case, we recover the result proved in the previous section, $p(\ketbra{\psi}{\psi}_{AB} = \ketbra{\phi}{\phi}_{A} \otimes \frac{\mathbb{1}_B}{d}) = \frac{1}{d}$
\end{itemize}

\paragraph{Deterministic and exact transposition operation on a known-mixed state}

It follows from Thm.~\ref{thm:Mix_visib} and Thm.~\ref{thm:trans_det} that deterministic and exact implementation of $U \mapsto U^{T}\rho \overline{U}$, known state of the form $\rho = \eta \ketbra{\psi}{\psi} + \frac{1-\eta}{d}\mathbb{1}$ can be achieved for the values of $\eta\leq \eta^*_{\textup{trans}}$, where $\eta^*_{\textup{trans}}$ equals,
\begin{equation}\label{Eq:trans_det_exact}
    \eta^*_{\textup{trans}} = \frac{d}{d^2 - 1}.
\end{equation}
We note that this visibility coincides with that of the optimal universal NOT channel defined in Eq.~\eqref{Eq:negvis}. This is consistent with the optimal protocol for pure unentangled states shown in Fig.~\ref{Fig:TransDetpure}, where we implement the probabilistic scheme using a measurement with POVM elements $\ketbra{\psi}^T$ and $\id - \ketbra{\psi}^T$. When the outcome corresponding to $\ketbra{\psi}^T$ is obtained, the resulting state is $U^T\ketbra{\psi}\overline{U}$. When the outcome corresponding to $\id - \ketbra{\psi}^T$ is obtained, we apply the optimal universal NOT channel $\map{N}$ to produce a state that approximates $U^T\ketbra{\psi}\overline{U}$.

\begin{figure}[b!]
    \centering
\includegraphics[width=0.7\textwidth]{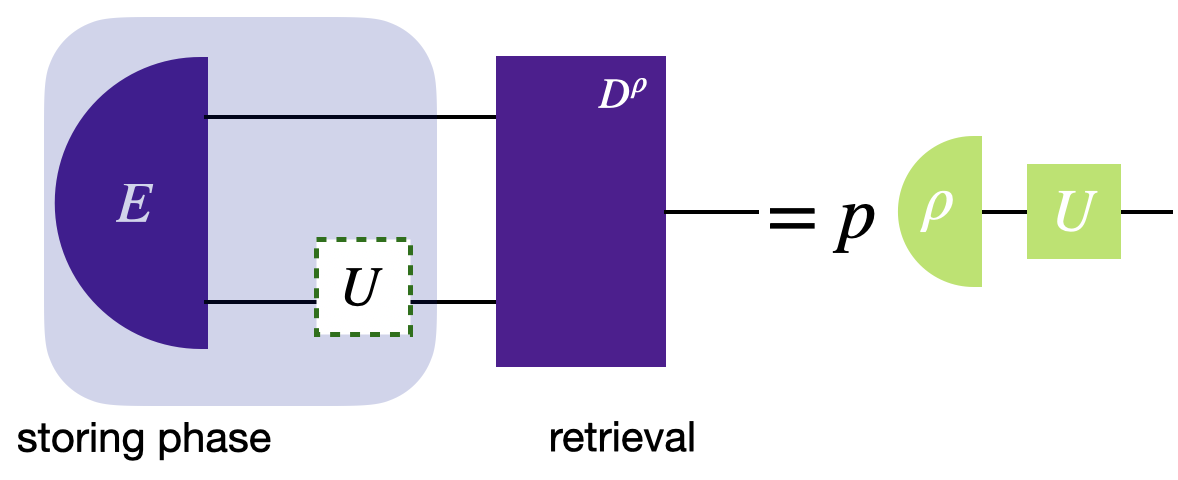}
    \caption{ SAR protocol with a general encoder and decoder. While the decoder depends on the classical information specifying \(\rho\), the encoder, and thus the storage phase, is independent of \(\rho\).}
    \label{fig:SARdelay}
\end{figure}

\subsection{Unitary storage-and-retrieval}\label{sec:SAR}

Imagine being given access to an unknown quantum operation, but only later, after losing access to it, you obtain the input state on which you wish to implement the operation. This is the setting of the storage-and-retrieval (SAR) protocol, which consists of storing the action of an unknown operation on a chosen storing state and later retrieving it on the desired input state, as depicted in Fig.~\ref{fig:SARdelay}.

SAR corresponds to the delayed input state protocol for the function \( f(U) = U \). As discussed in Sec.~\ref{sec:delay_KN}, we define delayed input protocols as those in which access to the classical description of the input state is postponed; that is, rather than delaying access to the input state itself, we delay access to the information about it. Consequently, the decoder channel corresponding to the retrieval step can be adapted to the knowledge of the input state $\rho$, but, importantly, the encoder state $E\in\mathcal{L}(\mathcal{H}_\I\otimes\mathcal{H}_\A)$ cannot. Note that if the encoder state $E$ could depend on the input state $\rho$, the task would be trivial, as one could simply set $E=\rho$. To ensure that the encoder has no such dependences, we impose the covariance relation $[U,\Tr_\A(E_{\I\A})]=0$ for every\footnote{When $k$ calls of the input operation $U$ are available, we impose that $[U^{\otimes k},\Tr_\A(E_{\I\A})]=0$.} $U\in\mathrm{SU}(d)$. This condition also guarantees that the overall protocol is input-state covariant, meaning that the success probability of the SAR protocol does not depend on the choice of the known input state.

Similarly to the unknown input state case~\cite{Quintino2019prob,Quintino_2022det}, there exists an equivalence relation between the task of SAR and parallel unitary transposition. More precisely, we have the following theorem.

\begin{restatable}{theorem}{ThmSARsingle}\label{thm:SAR_transp_Eq}
Let $\ket{\psi}_{AB} \in \mathcal{H}_{A} \otimes \mathcal{H}_{B}$ be an arbitrary pure bipartite state. Every probabilistic exact or deterministic approximate storage-and-retrieval (SAR) task that transforms $k$ calls to an unknown input unitary $U \in \mathrm{SU}(d)$ into $(\id_A \otimes U_B)\ket{\psi}_{AB}$ can be converted, without loss in performance, into a task that implements parallel unitary transposition, transforming $k$ calls of the same unitary $U$ into $(\id_A \otimes U^{T}_B)\ket{\psi}_{AB}$, and vice versa. Hence, these two tasks are in one-to-one correspondence.
\end{restatable}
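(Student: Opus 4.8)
The plan is to exhibit an explicit, performance-preserving bijection between the two families of protocols at the level of Choi operators, implemented by the slotwise \emph{swap} of the input and output wires of the queried operation. Let $\mathrm{SWAP}_{\I\O}$ denote the operator that, for each of the $k$ parallel slots, exchanges $\mathcal{H}_{\I_j}\leftrightarrow\mathcal{H}_{\O_j}$ (and acts trivially on $\F$ and $\A$), and define $\Lambda(S):=\mathrm{SWAP}_{\I\O}\,S\,\mathrm{SWAP}_{\I\O}$. The single fact driving everything is that $\mathrm{SWAP}_{\I\O}\dket{V}_{\I\O}=\dket{V^{T}}_{\I\O}$, so conjugation by $\mathrm{SWAP}_{\I\O}$ replaces the queried unitary by its transpose. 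Using the link-product identity $\Lambda(S)*X=S*(\mathrm{SWAP}_{\I\O}\,X\,\mathrm{SWAP}_{\I\O})$, which holds because $\mathrm{SWAP}_{\I\O}$ is a real permutation and hence commutes with the partial transpose inside the link product, with $X=\ChoiU^{\otimes k}$ one obtains $\Lambda(S)*\ChoiU^{\otimes k}=S*\dketbra{U^{T}}{U^{T}}^{\otimes k}$. Thus if $S$ realises SAR, i.e. $S*\dketbra{W}{W}^{\otimes k}=p\,(\id_\A\otimes W_\F)\ketbra{\psi}{\psi}_{\A\F}(\id_\A\otimes W_\F)^\dagger$ for all $W\in\mathrm{SU}(d)$, then setting $W=U^{T}$ gives $\Lambda(S)*\ChoiU^{\otimes k}=p\,(\id_\A\otimes U^{T}_\F)\ketbra{\psi}{\psi}_{\A\F}(\id_\A\otimes\overline U_\F)$, which is exactly the parallel transposition relation for the \emph{same} state $\ket{\psi}_{\A\F}$ and the \emph{same} $p$. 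Since $\Lambda$ is an involution, the identical computation run backwards converts any transposition protocol into a SAR protocol, yielding the claimed one-to-one correspondence.

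Second, I would verify that $\Lambda$ sends valid protocols to valid protocols. Positivity is immediate because $\Lambda$ is a congruence by the unitary $\mathrm{SWAP}_{\I\O}$, and it preserves both $0\le S_\checkmark\le S_{\mathrm{ch}}$ and the total trace. The only delicate constraint is the superchannel normalisation $\Tr_{\F\A}(S_{\mathrm{ch}})=\sigma_\I\otimes\id_\O$ with $\Tr(\sigma_\I)=1$: under $\mathrm{SWAP}_{\I\O}$ this becomes $\id_\I\otimes\sigma_\O$, which is of the required form only when $\sigma_\I\propto\id_\I$. This is where the covariance reductions enter. By Thm.~\ref{thm:twirl_prob_constr} (and its deterministic and bipartite analogues, Thm.~\ref{thm:S_det_covariance_f} and Thm.~\ref{thm:symbp}) we may assume both optimisers are covariant under the input-state-independent group, and for SAR the delayed covariant encoder forces the memory-reduced encoder $\Tr_\M(E)\propto\id_\I$; in either case $\sigma_\I=\tfrac1d\id_\I$, so $\sigma_\I\otimes\id_\O=\tfrac1d\id_{\I\O}$ is manifestly $\mathrm{SWAP}_{\I\O}$-invariant and the normalisation is preserved in both directions. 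The parallel structure is respected because $\mathrm{SWAP}_{\I\O}$ acts slotwise, and the reference $\A$ remains a passive spectator throughout.

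Third, for the deterministic-approximate case I would phrase the same correspondence through the performance operators. Using $\mathrm{SWAP}_{\I\O}\dket{\overline U}=\dket{U^{-1}}$ together with invariance of the Haar measure under $U\mapsto\overline U$ and $U\mapsto U^{-1}$, a short substitution shows $\mathrm{SWAP}_{\I\O}\,\Omega^{\psi}_{\mathrm{SAR}}\,\mathrm{SWAP}_{\I\O}=\Omega^{\psi}_{\mathrm{trans}}$, whence $\Tr(S\,\Omega^{\psi}_{\mathrm{SAR}})=\Tr(\Lambda(S)\,\Omega^{\psi}_{\mathrm{trans}})$ and the average fidelity transfers unchanged; by the covariance equivalence of the figures of merit this also matches the worst-case fidelity. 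I expect the main obstacle to be precisely this normalisation and validity check: one must confirm that restricting to the covariant class (so that $\sigma_\I\propto\id_\I$) costs nothing—guaranteed by the cited covariance theorems—and that the protocol produced on the transposition side is genuinely parallel while the one produced on the SAR side is genuinely delayed, with a state-independent encoder. Once this is secured, the involution $\Lambda$ establishes the equivalence for probabilistic-exact and deterministic-approximate protocols at once, uniformly for pure, single-subsystem and bipartite known states.
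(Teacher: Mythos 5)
Your central computation is correct: conjugating the Choi operator by the slotwise $\mathrm{SWAP}_{\I\O}$ replaces $\dketbra{U}{U}^{\otimes k}$ by $\dketbra{U^{T}}{U^{T}}^{\otimes k}$ inside the link product, so it maps the defining equation of SAR onto that of parallel transposition with the same $p$ (and intertwines the two performance operators). The genuine gap is exactly at the point you flag as ``delicate'' and then dismiss too quickly: $\Lambda(S)$ is in general \emph{not} a valid parallel superchannel when $k>1$. The normalisation $\Tr_{\F}(S_{\mathrm{ch}})=\sigma_{\I}\otimes\mathbb{1}_{\O}$ is asymmetric under $\I\leftrightarrow\O$, and becomes $\mathbb{1}_{\I}\otimes\sigma_{\O}$ after conjugation, which is admissible only if $\sigma\propto\mathbb{1}$. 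Your claimed fix --- that covariance forces $\Tr_{\M}(E_{\I\M})\propto\mathbb{1}_{\I}$ --- is false for $k>1$: the condition $[\Tr_{\M}(E_{\I\M}),U^{\otimes k}]=0$ only yields, by Schur--Weyl, an operator of the block form $\bigoplus_{\mu}\tfrac{\mathbb{1}_{\mu}}{\dim\mu}\otimes X_{\mu}$ with arbitrary positive $X_{\mu}$, since the commutant of $U^{\otimes k}$ is the (nontrivial) span of permutations. The same applies to $\Tr_{\O\F}(S)$ for the covariant transposition protocol. Only for $k=1$, where the defining representation is irreducible, does Schur's lemma give $\sigma_{\I}=\mathbb{1}/d$ and your argument closes.

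The paper avoids this obstruction by arguing at the level of the encoder rather than by a global wire swap: in both tasks the covariance constraint forces the encoder to be (a purification of) the block-diagonal state above, i.e.\ a PBT-type resource state $\bigoplus_{\mu}\sqrt{p_{\mu}}\ket{\phi^{+}(\mu)}\otimes\ket{\psi_{m_{\mu}}}$, and the ricochet identity
\begin{equation}
(\mathbb{1}\otimes U^{\otimes k})\ket{\phi_{\mathrm{PBT}}}=\big((U^{T})^{\otimes k}\otimes\mathbb{1}\big)\ket{\phi_{\mathrm{PBT}}}
\end{equation}
holds blockwise for \emph{any} such state, not only for one with maximally mixed marginal. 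Keeping the decoder fixed and moving the unitary across the encoder via this identity converts either task into the other without ever leaving the feasible set. To repair your proof you would either need to show separately that optimal protocols can be taken with $\sigma_{\I}=\mathbb{1}_{\I}/d^{k}$ (not established by the cited covariance theorems, and not true of the general PBT encoder), or replace the global $\mathrm{SWAP}_{\I\O}$ by the blockwise ricochet argument, at which point you have essentially reproduced the paper's proof.
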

The proof for arbitrary $k$ is provided in App.~\ref{sec:proof_SAR}, and we present here the proof for a single call of the input unitary ($k=1$). 
\begin{proof} Let us begin by considering one direction of the equivalence, namely that the construction for transposition protocol implies the construction for SAR protocol. As established in Thm.~\ref{thm:trans_single}, the encoder state may be taken to be the maximally entangled state without any loss of performance. Using the well-known identity $(\mathbb{1} \otimes U)\ket{\phi^+} = (U^{T} \otimes \mathbb{1})\ket{\phi^+},$ this immediately yields a construction of an SAR protocol with the same performance. For the converse direction, we require that an SAR protocol achieve the same performance for all pure states $\ket{\psi}$, arbitrary but known state. To guarantee this, we must impose the covariance condition $[\Tr_{\M}(E_{\I \M}), V_{\I}] = 0,$ for all $V \in \mathrm{SU}(d)$. Under this symmetry requirement, it follows that the encoder is the purification of the maximally mixed state, hence a maximally entangled state, just as in the unknown state SAR~\cite{Sedl_k_2019}. Finally, by the same identity $(\mathbb{1} \otimes U)\ket{\phi^+} = (U^{T} \otimes \mathbb{1})\ket{\phi^+},$ any covariant SAR protocol can be converted into a unitary transposition protocol, again without loss of performance. \end{proof}
\begin{figure}[t!]
    \centering
\includegraphics[width=0.7\textwidth]{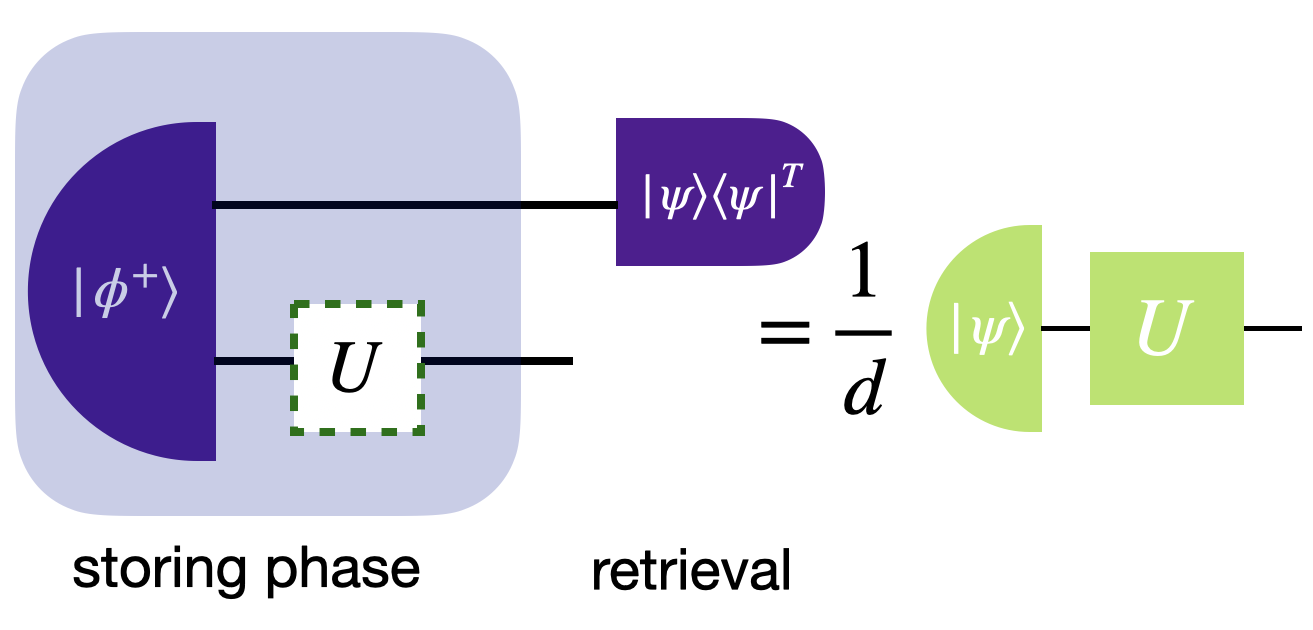}
    \caption{ Optimal realisation of a storage-and-retrieval protocol designed to implement an operation on a known program state. The probability of successful implementation is $p = \frac{1}{d}$.
}
    \label{fig:SARknpure}
\end{figure}

From Thm.~\ref{bip_trans_thm} and the equivalence established above, we conclude that, given a single call to a $d$-dimensional unitary operation $U \in \mathrm{SU}(d)$, a SAR protocol implemented on a known pure bipartite input state $\ket{\psi}_{AB}\in \mathcal{H}_{A}\otimes\mathcal{H}_{B}$ can achieve a maximal success probability
\begin{equation}
    p^{\psi_{AB}}_{\mathrm{max,\,trans}} = \frac{1}{d \, \norm{\Tr_{A}(\ketbra{\psi}{\psi}_{AB})}_{\mathrm{op}}}
\end{equation}
for probabilistic exact realisation. 

We now notice that the probabilistic protocols presented here, as well as its proofs of optimality, admit a straightforward generalisation from unitary operations to general quantum channels. To see this, note that the SAR optimal protocol described in Fig.~\ref{fig:SARknpure} succeeds with probability $p = \frac{1}{d}$ even when a non-unitary quantum channel is applied. Therefore, for arbitrary qudit-to-qudit quantum channels, a success probability of $p = \frac{1}{d}$ is attainable. Since it is impossible to exceed this probability for $d$-dimensional unitary operations, it follows that one also cannot exceed it for general qudit channels.

Hence, the success probability $p=\frac{1}{d}$ is optimal even when considering channels that are not unitary\footnote{This argument also applies to the transposition of unital channels, 
i.e., completely positive trace-preserving maps $\map{C}$ that satisfy 
$\map{C}(I) = I$. To see this, consider a Kraus representation 
$\map{C}(\rho) = \sum_i K_i \rho K_i^\dagger$. The transposed channel is 
defined by transposing each Kraus operator, $\map{C}^T(\rho) = \sum_i K_i^T \rho (K_i^T)^\dagger$. Using the identity 
$(\mathbb{1} \otimes K_i)\ket{\phi^+} = (K_i^T \otimes \mathbb{1})\ket{\phi^+}$, we see that the transpose of a channel is not, in general, trace-preserving, 
unless the channel is unital, which in Kraus form requires $\sum_i K_i K_i^\dagger = \mathbb{1}$. Therefore, when considering channel transposition, our conclusions for the case of unitaries also hold for unital channels.}.

For the deterministic case, we present results only for a pure state unentangled with the reference. From Thm.~\ref{thm:trans_det}, we have that, given a single call to a $d$-dimensional unitary operation $U \in \mathrm{SU}(d)$, a SAR protocol on a known pure input state $\ketbra{\psi}{\psi}$ achieves an optimal average fidelity
\begin{equation}
    \expval{F^{\psi}}_{\mathrm{trans}} = \frac{2d + 1}{d^2 + d}
\end{equation}
for deterministic approximate realisation.
As we can see, these results represent quadratic improvement compared to the results of the unknown input state case \cite{Bisio_2010}.

Finally, in Sec.~\ref{sec:Garazi}, we analyse the relationship of the task of SAR with known input states and protocols for port-based state preparation introduced in~\cite{Muguruza_2024}.

\subsection{Unitary complex conjugation}

We now explore the task of universal unitary complex conjugation on known input states, that is, the scenario where \( f(U) = \overline{U} \).

\paragraph{Optimal probabilistic exact conjugation protocol on a pure known unentangled state}

Probabilistic exact complex conjugation on known input states can be performed deterministically and exactly when \( d = 2 \), whereas for \( d > 2 \), the success probability is necessarily zero. In the qubit case, deterministic and exact unitary conjugation can be implemented simply by applying the Pauli \( Y \) operation before and after the arbitrary unitary~\cite{Miyazaki_2017}, as illustrated in Fig.~\ref{fig:conjgKNpure}. For \( d > 2 \), Ref.~\cite{Quintino2019prob} shows that in the case of unknown input states, any probabilistic exact protocol must have vanishing success probability, i.e., \( p = 0 \). Our result thus extends this no-go theorem to the case of known pure input states, showing that the success probability remains zero even with full knowledge of the state.

\begin{figure}[t!]
    \centering
\includegraphics[width=0.7\textwidth]{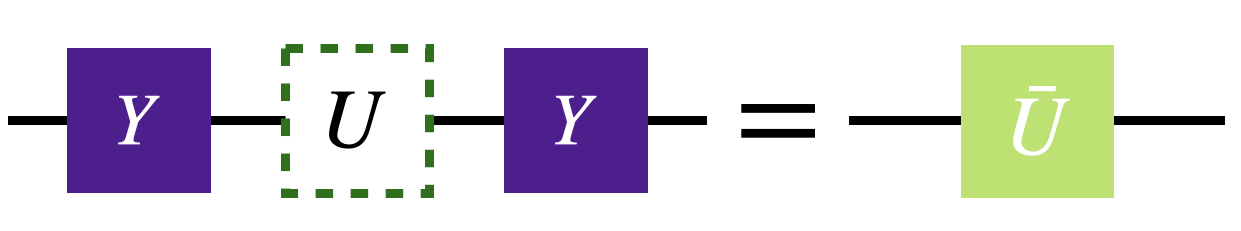}
    \caption{Protocol implementing the conjugation transformation for a qubit system, as proposed in Ref.~\cite{Miyazaki_2017}. The implementation is deterministic, exact, and universal with respect to both the input state and the input unitary. In Thm.~\ref{single_conjg} we show that for $d>2$ system, universal unitary conjugation cannot be implemented, i.e., $p=0$, in even in the known input scenario. Furthermore, in Thm.~\ref{thm:conjgBip}, we extend this result to the case of known pure bipartite states, showing that universal unitary conjugation is likewise impossible when \( d > 2 \).
}
    \label{fig:conjgKNpure}
\end{figure}

\begin{restatable}{theorem}{ProbConjgSingl}\label{single_conjg}
Given a single call of the $d$-dimensional unitary operation $U \in \mathrm{SU}(d)$, in the case of a supermap realised on a known pure input state $\ket{\psi}\in\mathbb{C}^d$, the conjugation operation $\overline{U}$ can be implemented with maximal success probability
\begin{equation}
 p^{\psi}_{\textup{max,conj}}  =
\begin{cases}
       1,& \textup{if } d = 2\\
    0,             &  \textup{if } d > 2
\end{cases} \quad \quad \textcolor{gray}{ \textup{cf. } \quad p_{\textup{max,conj}} = \begin{cases}
       1,& \textup{if } d = 2\\
    0,             &  \textup{if } d > 2.
\end{cases}}
\end{equation}

\end{restatable}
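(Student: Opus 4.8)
The statement has two regimes, and I would treat them separately. For $d=2$ the claim $p=1$ is proved by an explicit deterministic construction rather than by an optimization: for $U\in\mathrm{SU}(2)$ one has the identity $\sigma_y U\sigma_y=\overline U$, so the fixed circuit that sandwiches the single call of $U$ between two applications of $\sigma_y$ and acts on the known state outputs $\overline U\ket{\psi}$ exactly, for every $U$ and independently of $\ket{\psi}$ (the protocol of Ref.~\cite{Miyazaki_2017} shown in Fig.~\ref{fig:conjgKNpure}). Since this realizes the map in every run, the success branch can be taken to be the entire superchannel and $p=1$.

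For $d>2$ the plan is a no-go argument driven entirely by positivity of the success operator together with the two covariance symmetries. By Thm.~\ref{thm:independ} I fix the known state to $\ket{0}$ without loss of performance, and seek $S_{\I\O\F}\ge 0$ obeying $S_{\I\O\F}*\ChoiU_{\I\O}=p\,\overline U\ketbra{0}{0}U^{T}$ for all $U$. The two symmetry families $\{\id_\I\otimes U_\O\otimes U_\F\}$ and $\{V_\I\otimes\id_\O\otimes\id_\F:\,V\ket{0}=\ket{0}\}$ commute, so by Thm.~\ref{thm:twirl_prob_constr} and Thm.~\ref{thm:KN_sym} I may twirl and assume $S$ invariant under both (twirling preserves positivity and the value of $p$). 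Invariance under the first gives the block form $S=S^S_\I\otimes P^S_{\O\F}+S^A_\I\otimes P^A_{\O\F}$; invariance under the second, whose stabilizer of $\ket{0}$ is isomorphic to $\mathrm{SU}(d-1)$, forces each $S^{x}$ into $\mathrm{span}\{\ketbra{0}{0},\Pi_\perp\}$ with $\Pi_\perp=\id-\ketbra{0}{0}$. Hence $S$ is pinned down by four parameters, $S^{x}=\alpha_x\ketbra{0}{0}+\beta_x\Pi_\perp$, with $\alpha_S,\beta_S,\alpha_A,\beta_A\ge 0$ enforced by $S\ge 0$.

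Next I would substitute this ansatz into the exact constraint. Using $\ChoiU^{T_{\I\O}}=\dketbra{\overline U}{\overline U}$ and the partial-trace identity $\Tr_\O[P^{S/A}_{\O\F}(A_\O\otimes\id_\F)]=\tfrac12(\Tr A\,\id_\F\pm A_\F)$, a direct computation gives
\begin{equation}
S*\ChoiU_{\I\O}=\tfrac12\big(\Tr S^S+\Tr S^A\big)\id_\F+\tfrac12\,\overline U\big[(S^S)^{T}-(S^A)^{T}\big]U^{T}.
\end{equation}
Equating this to $p\,\overline U\ketbra{0}{0}U^{T}$ for all $U$, the $U$-independent piece forces the operator $M:=p\ketbra{0}{0}-\tfrac12[(S^S)^{T}-(S^A)^{T}]$ to satisfy $\overline U M\overline U^{\dagger}=\mathrm{const}$, so Schur's lemma yields $M=\mu\id$ with $\mu=\tfrac12(\Tr S^S+\Tr S^A)\ge 0$. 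Transposing this gives the single operator relation $S^S-S^A=2p\ketbra{0}{0}-2\mu\id$.

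Finally I extract the contradiction from positivity. Reading the relation on $\ket{0}$ and on $\Pi_\perp$ gives $\alpha_S-\alpha_A=2p-2\mu$ and $\beta_A-\beta_S=2\mu$, with $\mu=\tfrac12(\alpha_S+\alpha_A)+\tfrac{d-1}{2}(\beta_S+\beta_A)$. Then $\beta_A=\beta_S+2\mu\ge 2\mu$ while $2\mu\ge(d-1)\beta_A$ (discarding the remaining nonnegative terms), so $(d-2)\beta_A\le 0$; for $d>2$ this forces $\beta_A=0$, hence $\mu=0$, hence $\alpha_S=\alpha_A=\beta_S=\beta_A=0$, and then $\alpha_S-\alpha_A=2p$ gives $p=0$. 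I expect the main obstacle to be organizational rather than conceptual: verifying that the two covariance families genuinely commute, so that imposing both and arriving at the four-parameter ansatz is legitimate, and carrying out the partial-trace computation without sign or transpose errors. Once the relation $S^S-S^A=2p\ketbra{0}{0}-2\mu\id$ is in place, the positivity/dimension-counting step is short and decisive, and it is precisely here that the hypothesis $d>2$ enters — note that only $S\ge 0$ is used, so the full superchannel normalization is never needed.
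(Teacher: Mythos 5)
Your proof is correct, and for $d>2$ it takes a genuinely different route from the paper's. The $d=2$ part is identical: the explicit $\sigma_y U \sigma_y = \overline{U}$ construction of Ref.~\cite{Miyazaki_2017}. For $d>2$, both arguments share the first step — covariance under $\mathbb{1}_{\I}\otimes U_{\O}\otimes U_{\F}$ gives $S=A_{\I}\otimes P^{S}_{\O\F}+B_{\I}\otimes P^{A}_{\O\F}$ with $A,B\geq 0$ — but then diverge. The paper stops there: it writes the defining equation as a sum of two positive semidefinite operators, $\tfrac{1}{2}\left(\Tr(A)\,\mathbb{1}+A^{T}\right)+\tfrac{1}{2}\left(\Tr(B)\,\mathbb{1}-B^{T}\right)=p\ketbra{0}{0}$, and observes that the right-hand side has rank one while the first summand is full rank unless $A=0$ and the second has rank at least $d-1$ unless $B=0$; for $d>2$ this forces $A=B=0$ and $p=0$, with no need for the state-dependent covariance. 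You instead additionally twirl over the stabilizer of $\ket{0}$, reduce $A$ and $B$ to the four parameters $\alpha_x,\beta_x$, and close with explicit inequalities. This is legitimate: the two symmetry families act on disjoint tensor factors, so they commute and the joint commutant is the one you describe; your link-product computation, the resulting relation $S^{S}-S^{A}=2p\ketbra{0}{0}-2\mu\,\mathbb{1}$, and the chain $\beta_A\geq 2\mu\geq (d-1)\beta_A$ all check out, and the last step correctly isolates $d>2$. The paper's rank argument is shorter and symmetry-lighter; your version makes the role of positivity completely explicit at the cost of the extra twirl, and both correctly use only $S\geq 0$, never the full superchannel normalization.
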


\paragraph{Optimal deterministic approximate conjugation protocol on a pure known unentangled state}

\begin{figure}[b!]
    \centering
\includegraphics[width=0.6\textwidth]{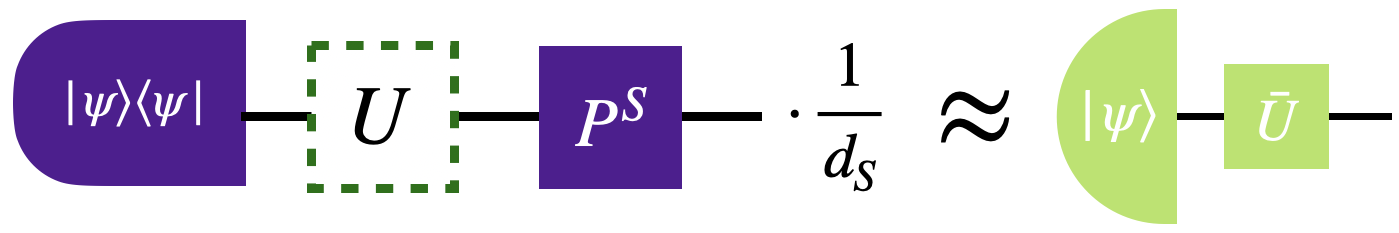}
    \caption{Optimal protocol for implementing the conjugation operation in a deterministic and approximate manner on an arbitrary known pure state. $P^{S}$ denotes the projector onto the symmetric subspace and $\frac{1}{d_S}$ denotes its normalisation.}
    \label{fig:conjgDetpure}
\end{figure}

We now consider deterministic approximate conjugation protocol on a pure known unentangled state. As discussed in the previous section, when $d=2$, deterministic an exact unitary complex conjugation is possible. Now, in Thm.~\ref{thm:det_conj} stated below, we prove that the optimal superchannel has a Choi operator given by,
\begin{align}
     S^{\textup{opt}}_{\I\O\F, \textup{conj}} =\ketbra{\psi}{\psi}_{\I} \otimes   \frac{2}{d+1} P^{S}_{\O\F} 
\end{align}
where $P^{S}_{\O\F}\coloneqq\frac{\id_{\O\F}+\textup{SWAP}_{\O\F}}{2}$ is the projector onto the symmetric subspace of $\mathbb{C}^d\otimes \mathbb{C}^d$ and $\textup{SWAP}\in\mathcal{L}(\mathbb{C}^d\otimes\mathbb{C}^d)$ is the swap operator, defined via $\textup{SWAP}\ket{\psi}\ket{\phi}=\ket{\phi}\ket{\psi}$ for every $\ket{\psi},\ket{\phi}\in\mathbb{C}^d$. We remark that, $\frac{2}{d+1} P^{S}_{\O\F}$ is the Choi operator of the channel $\map{C}(\rho)=\frac{\id\Tr(\rho)+\rho^T}{d+1}$,
which known as the  is the optimal completely positive trace-preserving approximation for state transposition~\cite{Buscemi_2003TranspMap, Kalev_2013Qdesigns, Dong_2019PosMaps}. The protocol is depicted in Fig.~\ref{fig:conjgDetpure}.

Finally, we remark that, unlike unitary transposition, a deterministic strategy based on a probabilistic protocol followed by correction is not suitable in this context. This is because for \( d = 2 \), no correction is needed, since the task can be performed perfectly; whereas for \( d > 2 \), there is no successful branch at all, as the success probability is zero. In other words, a correction strategy is only meaningful when the probabilistic protocol has a non-zero probability of success.

\begin{restatable}{theorem}{DetConjgSingl}\label{thm:det_conj}
 Given a single call of the $d-$dimensional unitary operation $U \in \mathrm{SU}(d)$, in the case of a supermap realised on a known pure input state $\ket{\psi}\in\mathbb{C}^d$,   
    the conjugation operation $\overline{U}$  can be implemented, attaining optimal average fidelity 
    \begin{equation}
 \expval{F^{\psi}}_{\textup{conj}}  =
\begin{cases}
       1,& \textup{if } d = 2\\
    \frac{2}{d + 1},              &  \textup{if } d >  2 
\end{cases} \quad \quad \textcolor{gray}{\textup{cf. } \quad \expval{F}_{\textup{conj,} k = 1}  = \frac{2}{d(d-1)}}
\end{equation}
\end{restatable}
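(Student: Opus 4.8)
The plan is to phrase the optimal-fidelity problem as the semidefinite program built from the performance operator, collapse it with the covariance symmetries, and finish with an elementary linear program. First I would fix the known state: by the independence of the choice of known state (the deterministic analogue of Thm.~\ref{thm:independ}) I may set $\ket{\psi}=\ket{0}$ without loss of generality, so that $\ket{\overline{\psi}}=\ket{0}$ as well. Since $\expval{F^{\psi}}=\Tr(S^{\psi}\,\Omega^{\psi})$ with $\Omega^{\psi}$ given by Eq.~\eqref{perfk_def}, the first genuine computation is to evaluate the performance operator for $f(U)=\overline{U}$. Expanding $\dketbra{\overline{U}}{\overline{U}}_{\I\O}=\sum_{ij}\ketbra{i}{j}_{\I}\otimes\overline{U}\ketbra{i}{j}\overline{U}^{\dagger}_{\O}$ and substituting $V=\overline{U}$ (Haar invariance), the remaining $\O\F$ integral is the two-copy twirl
\begin{equation}
\int dV\,(V\otimes V)\,X\,(V\otimes V)^{\dagger}=\frac{\Tr(P^{S}X)}{d_{S}}P^{S}+\frac{\Tr(P^{A}X)}{d_{A}}P^{A},\qquad d_{S}=\tfrac{d(d+1)}{2},\ d_{A}=\tfrac{d(d-1)}{2}.
\end{equation}
Contracting the basis labels then yields
\begin{equation}
\Omega^{\psi}_{\I\O\F}=\frac{\mathbb{1}_{\I}+\ketbra{0}{0}_{\I}}{d(d+1)}\otimes P^{S}_{\O\F}+\frac{\mathbb{1}_{\I}-\ketbra{0}{0}_{\I}}{d(d-1)}\otimes P^{A}_{\O\F}.
\end{equation}

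Next I would invoke Thm.~\ref{thm:S_det_covariance_f} to restrict, without loss in fidelity, to a covariant superchannel $S^{c}$ obeying $[S^{c},\mathbb{1}_{\I}\otimes U_{\O}\otimes U_{\F}]=0$ for all $U$ and $[S^{c},V_{\I}\otimes\mathbb{1}_{\O}\otimes\mathbb{1}_{\F}]=0$ for every $V$ stabilizing $\ket{0}$. The first constraint confines the $\O\F$-part to the two-dimensional commutant spanned by the symmetric and antisymmetric projectors $P^{S}_{\O\F}$ and $P^{A}_{\O\F}$; the second confines the $\I$-part to $\spn\{\ketbra{0}{0}_{\I},\,\mathbb{1}_{\I}-\ketbra{0}{0}_{\I}\}$, because the stabilizer of $\ket{0}$ acts irreducibly on $\ket{0}^{\perp}$. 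Hence $S^{c}=A_{\I}\otimes P^{S}_{\O\F}+B_{\I}\otimes P^{A}_{\O\F}$ with $A_{\I}=a_{1}\ketbra{0}{0}+a_{2}(\mathbb{1}-\ketbra{0}{0})$ and $B_{\I}=b_{1}\ketbra{0}{0}+b_{2}(\mathbb{1}-\ketbra{0}{0})$, and positivity of $S^{c}$ is just $a_{i},b_{i}\geq 0$. Using $\Tr_{\F}P^{S}=\tfrac{d+1}{2}\mathbb{1}_{\O}$ and $\Tr_{\F}P^{A}=\tfrac{d-1}{2}\mathbb{1}_{\O}$, the superchannel normalization $\Tr_{\F}(S^{c})=\sigma_{\I}\otimes\mathbb{1}_{\O}$ with $\Tr(\sigma_{\I})=1$ becomes the scalar equation $\tfrac{d+1}{2}(a_{1}+(d-1)a_{2})+\tfrac{d-1}{2}(b_{1}+(d-1)b_{2})=1$, while the objective collapses to $\Tr(S^{c}\Omega^{\psi})=a_{1}+\tfrac{d-1}{2}(a_{2}+b_{2})$.

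Finally I would solve this linear program. Since $b_{1}$ only consumes budget without contributing to the objective, set $b_{1}=0$; the optimum then sits at the variable with the largest objective-to-budget ratio, and these ratios are $\tfrac{2}{d+1}$ for $a_{1}$, $\tfrac{1}{d+1}$ for $a_{2}$, and $\tfrac{1}{d-1}$ for $b_{2}$. For $d>2$ the ratio $\tfrac{2}{d+1}$ dominates (tying with $\tfrac{1}{d-1}$ only at $d=3$), so the optimum is $a_{1}=\tfrac{2}{d+1}$ with value $\tfrac{2}{d+1}$, realized by $S^{\textup{opt}}=\ketbra{0}{0}_{\I}\otimes\tfrac{2}{d+1}P^{S}_{\O\F}$, reproducing the stated Choi operator and fidelity. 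For $d=2$ instead $\tfrac{1}{d-1}=1$ wins, giving value $1$ on the antisymmetric branch $2(\mathbb{1}-\ketbra{0}{0})_{\I}\otimes P^{A}_{\O\F}$, i.e.\ the exact singlet-based qubit conjugation.

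I expect the main obstacle to be twofold. The first is correctly executing the two-copy twirl bookkeeping, tracking which $\I$-operator multiplies $P^{S}$ versus $P^{A}$; the second is the regime-dependent vertex selection, since the optimizer discontinuously switches from the $P^{A}$-branch at $d=2$ to the $P^{S}$-branch for $d>2$, so both branches must be checked explicitly. The one representation-theoretic input that must be argued rather than merely computed is that the stabilizer of $\ket{0}$ acts irreducibly on $\ket{0}^{\perp}$, as this is precisely what makes the covariant $\I$-algebra exactly two-dimensional and the reduced program finite; the covariance reduction of Thm.~\ref{thm:S_det_covariance_f} then doubles as the optimality certificate, so no separate SDP dual is needed.
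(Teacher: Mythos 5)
Your proposal is correct and follows essentially the same route as the paper: reduce to the covariant ansatz $A_{\I}\otimes P^{S}_{\O\F}+B_{\I}\otimes P^{A}_{\O\F}$, evaluate $\Tr(S\,\Omega^{\psi})$ with the performance operator $\Omega^{\psi}_{\I\O\F}=\tfrac{\mathbb{1}_{\I}+\ketbra{0}{0}_{\I}}{d(d+1)}\otimes P^{S}_{\O\F}+\tfrac{\mathbb{1}_{\I}-\ketbra{0}{0}_{\I}}{d(d-1)}\otimes P^{A}_{\O\F}$ (which you derive by a direct two-copy twirl rather than by linking $\ketbra{0}{0}_{\P}$ into the unknown-state operator, but the result agrees), and finish with a small linear program. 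The only cosmetic differences are that you also diagonalize the $\I$-part via the input-state-dependent covariance and pick the optimal vertex by objective-to-budget ratios, which lets you treat $d=2$ inside the same LP, whereas the paper disposes of $d=2$ separately via the exact probabilistic protocol; note only that for $d=2$ the stabilizer of $\ket{0}$ in $\mathrm{SU}(2)$ is trivial, so your irreducibility argument does not restrict the $\I$-part there, but this is harmless since your explicit $P^{A}$-branch protocol attains fidelity $1$, which is trivially optimal.
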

Comparing the optimal average fidelities for the known and unknown cases, we observe a quadratic improvement in the known case.

\paragraph{Conjugation on a part of a known bipartite state: probabilistic exact}

We now consider the task of probabilistic exact complex conjugation on pure bipartite quantum states. We show that the situation for known pure bipartite states is analogous to that of pure product states: when \( d = 2 \), unitary complex conjugation can be performed deterministically and exactly, whereas for \( d > 2 \), the success probability must necessarily be zero.

\begin{restatable}{theorem}{ConjgBipThm}\label{thm:conjgBip}
    
Given a single call of the $d$-dimensional unitary operation $U \in \mathrm{SU}(d)$, in the case of a supermap realised on a subsystem of a known \emph{bipartite} pure input state $\ket{\psi}_{AB}\in \mathcal{H}_{A}\otimes\mathcal{H}_{B}$, the conjugation operation $\overline{U}$ can be implemented with maximal success probability
\begin{equation}
 p^{\psi_{AB}}_{\textup{max,conj}}  =
\begin{cases}
       1,& \textup{if } d = 2\\
    0,             &  \textup{if } d > 2.
\end{cases}
\end{equation}
\end{restatable}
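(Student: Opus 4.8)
The plan is to treat the two regimes separately: first an explicit deterministic construction giving $p=1$ for $d=2$, and then a no-go argument for $d>2$ that reduces the entangled bipartite problem to the already-established single-subsystem result, Thm.~\ref{single_conjg}. Throughout I write the reference as $\A$ and the processed subsystem as $\F$ (so that $B=\F$ in the statement), so that the target is $(\id_\A\otimes\overline{U}_\F)\ketbra{\psi}{\psi}_{\A\F}(\id_\A\otimes\overline{U}_\F^{\dagger})$.

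For $d=2$, I would give an explicit deterministic and exact protocol. The key input is the $\mathrm{SU}(2)$ identity $\sigma_y U \sigma_y = \overline{U}$, valid for all $U\in\mathrm{SU}(2)$. Since $f(U)$ acts only on $\F$ while $\A$ is untouched, it suffices to apply $\sigma_y$ on $\F$ immediately before and after the single query to $U$; this realises $(\id_\A\otimes\overline{U}_\F)$ on the known state $\ket{\psi}_{\A\F}$ for any (possibly entangled) $\ket{\psi}_{\A\F}$, recovering the construction of Ref.~\cite{Miyazaki_2017} and giving $p=1$.

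For $d>2$ I would argue by contradiction, assuming a success probability $p>0$. By Thm.~\ref{thm:symbp} I may assume the success superinstrument element $S^{c}_{\I\O\F\A}$ is covariant, $[S^{c},\id_\I\otimes U_\O\otimes U_\F\otimes\id_\A]=0$ for all $U$; since the commutant of $U_\O\otimes U_\F$ is spanned by the symmetric and antisymmetric projectors, this forces
\begin{equation}
S^{c}_{\I\O\F\A} = A_{\I\A}\otimes P^{S}_{\O\F} + B_{\I\A}\otimes P^{A}_{\O\F},\qquad A,B\geq 0.
\end{equation}
I then neutralise the reference by contracting $\A$ with a fixed bra $\bra{\alpha}_\A$ chosen in the support of $\rho_\A=\Tr_\F\ketbra{\psi}{\psi}_{\A\F}$, so that $\ket{\chi}:=(\bra{\alpha}_\A\otimes\id_\F)\ket{\psi}_{\A\F}\neq 0$. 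The operator $\hat S:=\bra{\alpha}_\A S^{c}\ket{\alpha}_\A=\hat A_\I\otimes P^S_{\O\F}+\hat B_\I\otimes P^A_{\O\F}$ (with $\hat A=\bra{\alpha}A\ket{\alpha}\geq 0$, $\hat B=\bra{\alpha}B\ket{\alpha}\geq 0$) is positive semidefinite, retains the input-state-independent covariance, and, because $\overline{U}_\F$ commutes with $\bra{\alpha}_\A$, satisfies $\hat S*\dketbra{U}{U}_{\I\O}=p\,\overline{U}\ketbra{\chi}{\chi}\overline{U}^{\dagger}$ for all $U$; it is thus a single-subsystem conjugation success element with positive weight $p\norm{\chi}^2$.

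It then remains to run the single-subsystem mechanism of Thm.~\ref{single_conjg}. Using $\dketbra{U}{U}^{T_{\I\O}}=\dketbra{\overline{U}}{\overline{U}}$ and $\Tr_\O[X_\O P^{S/A}_{\O\F}]=\tfrac12(\Tr(X)\id_\F\pm X_\F)$, a direct computation gives $\hat S*\dketbra{U}{U}_{\I\O}=\tfrac12\Tr(\hat A+\hat B)\id_\F+\tfrac12\,\overline{U}(\hat A^{T}-\hat B^{T})\overline{U}^{\dagger}$. Since the $\id_\F$ term is $U$-independent while the target is purely $\overline{U}$-covariant, the operator $\tfrac12(\hat A^{T}-\hat B^{T})-p\norm{\chi}^2\ketbra{\hat\chi}{\hat\chi}$ must commute with every $\overline{U}$ and hence be a multiple of $\id_\F$, pinning down
\begin{equation}
\hat A^{T}-\hat B^{T} = 2p\norm{\chi}^2\ketbra{\hat\chi}{\hat\chi} - \Tr(\hat A+\hat B)\,\id_\F,
\end{equation}
where $\ket{\hat\chi}=\ket{\chi}/\norm{\chi}$. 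Evaluating in any unit $\ket{\chi^\perp}\perp\ket{\hat\chi}$ and using $\hat A^{T},\hat B^{T}\geq 0$ forces $\Tr\hat A=0$, hence $\hat A=0$, and then $\Tr\hat B=\bra{\chi^\perp}\hat B^{T}\ket{\chi^\perp}$ for all such $\chi^\perp$; for $d>2$ the subspace $\ket{\hat\chi}^\perp$ has dimension $d-1\geq 2$, so applying this to two orthogonal vectors forces $\hat B=0$, whence $p\norm{\chi}^2=0$ and $p=0$.

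The main obstacle is precisely the entanglement between the processed subsystem $\F$ and the reference $\A$: since sharing entanglement with $\A$ strictly boosts the success probability for transposition (Thm.~\ref{bip_trans_thm}), one must rule out any analogous boost here. The projection onto $\bra{\alpha}_\A$ is what resolves this, decoupling the reference and reducing the entangled problem to the single-subsystem positivity argument; the only point requiring care is that this projection preserves both positivity and the covariance underlying that argument, which holds because $\A$ carries the trivial representation of the relevant covariance group.
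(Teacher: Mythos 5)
Your proof is correct, and for the nontrivial $d>2$ part it takes a genuinely different route from the paper's. The $d=2$ construction is the same (the Pauli-$Y$ trick of Ref.~\cite{Miyazaki_2017}, universal in the input state). For $d>2$, both you and the paper start from the covariant decomposition $S=A_{\I\A}\otimes P^{S}_{\O\F}+B_{\I\A}\otimes P^{A}_{\O\F}$ with $A,B\geq 0$ guaranteed by Thm.~\ref{thm:symbp}, but then diverge: the paper keeps the reference system and argues via ranks, observing that $A_{\O\A}*P^{S}_{\O\F}$ and $B_{\O\A}*P^{A}_{\O\F}$ are, up to normalisation, Choi operators of channels extended by an identity, and invoking the Amosov--Holevo--Werner lemma~\cite{Amosov2000additivity} that such extensions cannot produce a rank-one output when the base channel's output rank always exceeds one; since the target $p\ketbra{\psi}{\psi}_{\A\F}$ is rank one, $A=B=0$. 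You instead decouple the reference by contracting $\A$ with a fixed $\ket{\alpha}$ in the support of $\rho_\A$ --- a step that manifestly preserves positivity of the coefficients and the defining link-product equation, and yields a single-subsystem conjugation element with weight $p\norm{\chi}^2>0$ --- and then kill it with the diagonal-matrix-element argument underlying Thm.~\ref{single_conjg}. Your reduction is more elementary and self-contained (no external additivity lemma), and since the link-product equation for bipartite inversion is identical, the same projection trick would also reprove Thm.~\ref{bip_inv_thm}; the paper's rank argument buys a statement that is agnostic to the precise form of the two PSD coefficients, but for this theorem both routes deliver the same conclusion with comparable effort.
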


\paragraph{Deterministic and exact conjugation operation on a known-mixed state}

By combining Thm.~\ref{thm:Mix_visib} with Thm.~\ref{thm:det_conj}, it follows that deterministic and exact implementation of $U \mapsto \overline{U}\rho U^{T}$ on an arbitrary, known state of the form $\rho = \eta \ketbra{\psi}{\psi} + \frac{1-\eta}{d}\mathbb{1}$ is possible if and only if $\eta\leq \eta^*_{\textup{conj}}$, where $\eta^*_{\textup{conj}}$ equals,
\begin{equation}
    \eta^*_{\textup{conj}} =  \begin{cases}
       1,& \textup{if } d = 2\\
    \frac{1}{d + 1}              &  \textup{if } d >  2 .
\end{cases} 
\end{equation}

\subsection{Unitary inversion}\label{sec:inversion}

We now explore the task of universal unitary inversion on known input states, that is, the case where $f(U)=U^{-1}$.
The inversion, as a map, is the composition of transposition and conjugation $f_{\textup{inv}} = f_{\textup{trans}} \circ f_{\textup{conj}} $. However, this does not imply a straightforward relationship between the supermaps implementing the corresponding protocols in the known input state scenario. While in the unknown input state case it is meaningful to consider compositionality of the superchannels, in the known state scenario, we cannot simply plug one known state supermap into another, since our supermaps transform operations into quantum states, and cannot be composed. Hence, in general, we do not have compositionality of the known state supermaps. 

\paragraph{Optimal probabilistic exact inversion protocol on a pure known unentangled state}

When $d=2$, we can use the relation $YU_2 Y=\overline{U_2}$ that holds for all $U_2\in\mathcal{SU}(2)$ to recognise that the unitary inversion problem is unitarily equivalent to unitary transposition. Also, below we prove that, similarly to unitary conjugation, when $d>2$ any attempt to obtain an exact unitary inversion necessarily has a zero success probability. 
\begin{restatable}{theorem}{InvProbSingl}\label{thm:InvProbSingl}
 Given a single call of the $d$-dimensional unitary operation $U \in \mathrm{SU}(d)$, in the case of a supermap realised on a known pure input state $\ket{\psi}\in\mathbb{C}^d$, 
 the inversion operation $U^{-1}$ can be implemented with maximal success probability
\begin{equation}
 p^{\psi}_{\textup{max}, inv}  =
\begin{cases}
       \frac{1}{2},& \textup{if } d = 2\\
    0,             &  \textup{if } d > 2
\end{cases} \quad \quad  \textcolor{gray}{ \textup{cf. } \quad p_{\textup{max,inv}} = \begin{cases}
       \frac{1}{4},& \textup{if } d = 2\\
    0,             &  \textup{if } d > 2.
\end{cases}}
\end{equation}
\end{restatable}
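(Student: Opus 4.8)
The plan is to treat the two regimes separately, since $d=2$ admits an exact reduction to transposition while $d>2$ requires a genuine no-go argument. For $d=2$ I would exploit the identity $Y U Y=\overline{U}$, valid for all $U\in\mathrm{SU}(2)$ (with $Y$ the Pauli operator), which gives $U^{-1}=U^{\dagger}=\overline{U}^{T}=Y U^{T} Y$. Hence $U^{-1}\ketbra{\psi}{\psi}U=Y\,[\,U^{T}\ketbra{\psi'}{\psi'}\overline{U}\,]\,Y$ with $\ket{\psi'}=Y\ket{\psi}$. Since $Y$ is a fixed known unitary, I can run the optimal known-state transposition protocol of Thm.~\ref{thm:trans_single} for the state $\ket{\psi'}$ and post-compose the decoder with the unitary channel $\rho\mapsto Y\rho Y$; this produces the inversion exactly on the success branch without changing the probability, giving $p=1/d=1/2$. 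The same correction run in the opposite direction is a bijection between inversion and transposition protocols, so optimality of transposition transfers to inversion, establishing both achievability and the matching upper bound $p^{\psi}_{\max,inv}=1/2$.

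For $d>2$ I would show that any probabilistic exact protocol forces $p=0$. First invoke Thm.~\ref{thm:twirl_prob_constr} to restrict, without loss, to a covariant success operator obeying $[S_{\I\O\F},\,U_{\I}\otimes\id_{\O}\otimes U_{\F}]=0$. The commutant of $\{U\otimes U\}$ on $\H_{\I}\otimes\H_{\F}$ is spanned by the symmetric and antisymmetric projectors, so the covariant form is $S_{\I\O\F}=P^{S}_{\I\F}\otimes A_{\O}+P^{A}_{\I\F}\otimes B_{\O}$ with $A,B\in\mathcal{L}(\H_{\O})$, and $S\geq0$ is equivalent to $A\geq0$, $B\geq0$ since the two summands are supported on orthogonal subspaces. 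I would then compute the link product $S*\dketbra{U}{U}_{\I\O}=\dbra{\overline{U}}_{\I\O}S\dket{\overline{U}}_{\I\O}$, using $\langle i|P^{S/A}_{\I\F}|j\rangle=\tfrac12(\delta_{ij}\id_{\F}\pm\ket{j}\bra{i}_{\F})$ together with the unitarity identity $\overline{U}\,U^{T}=\id$, to obtain the closed form $\tfrac12[(\Tr A+\Tr B)\id_{\F}+U^{-1}(A^{T}-B^{T})U]$. Imposing equality with $p\,U^{-1}\ketbra{\psi}{\psi}U$ for all $U$ and isolating the $U$-conjugated part forces $A^{T}-B^{T}=2p\ketbra{\psi}{\psi}-(\Tr A+\Tr B)\id$.

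The main obstacle is converting this relation into the vanishing of $p$. Here I would restrict the identity to the $(d-1)$-dimensional subspace $\psi^{\perp}$: for unit $\ket{\chi}\perp\ket{\psi}$, positivity of $A^{T}$ yields $\langle\chi|B^{T}|\chi\rangle\geq\Tr A+\Tr B$, and summing over an orthonormal basis of $\psi^{\perp}$ gives $\Tr B\geq\Tr(P_{\psi^{\perp}}B^{T})\geq(d-1)(\Tr A+\Tr B)$, i.e.\ $(d-1)\Tr A+(d-2)\Tr B\leq0$. For $d>2$ both coefficients are strictly positive while $\Tr A,\Tr B\geq0$, forcing $\Tr A=\Tr B=0$ and hence $A=B=0$; the relation then collapses to $2p\ketbra{\psi}{\psi}=0$, so $p=0$. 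The delicate points to verify are the link-product bookkeeping (the conjugation and transposition identities that confine all $U$-dependence to the single factor $U^{-1}(A^{T}-B^{T})U$) and the equivalence $S\geq0\Leftrightarrow A,B\geq0$; once these are secured, the positivity/trace inequality closes the argument cleanly and, notably, uses only $S\geq0$ rather than the full superchannel normalisation, so the no-go is in fact somewhat stronger than needed.
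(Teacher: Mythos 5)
Your proposal is correct, and for $d>2$ it follows essentially the paper's route: reduce to a covariant success operator, decompose it as $S=P^{S}_{\I\F}\otimes A_{\O}+P^{A}_{\I\F}\otimes B_{\O}$ with $A,B\geq 0$, and show the defining equation forces $A=B=0$. The only difference there is the final linear-algebra step: the paper evaluates the constraint at $U=\mathbb{1}$ and argues via ranks (the term $\tfrac{1}{2}(\Tr(A)\mathbb{1}+A^{T})$ is full rank unless $A=0$, and $\tfrac{1}{2}(\Tr(B)\mathbb{1}-B^{T})$ cannot be rank one for $d>2$ unless $B=0$), whereas you extract the same conclusion from positivity and trace inequalities on $\psi^{\perp}$; the two are interchangeable, and your observation that only $S\geq 0$ is used (not the superchannel normalisation) holds equally for the paper's version. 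For $d=2$ you diverge more visibly: you reduce inversion to known-state transposition via $U^{-1}=YU^{T}Y$ and transfer both achievability and optimality through that bijection, while the paper's appendix instead reruns the SDP, concludes $B=2p\ketbra{1}{1}$, and then (rather tersely) invokes the superchannel constraint $B\leq\mathbb{1}_{\O}$ to get $p\leq 1/2$. The paper itself endorses your $Y$-conjugation equivalence in the main text of Sec.~\ref{sec:inversion}, and your version has the advantage of making the origin of the factor $1/2$ (as $1/d$ inherited from Thm.~\ref{thm:trans_single}) completely explicit, whereas the appendix proof leaves that last normalisation step implicit. No gaps.
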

The proof of this theorem is presented in App.~\ref{subsec:proof_inv}.

\paragraph{Optimal deterministic approximate inversion protocol on a pure known unentangled state}

\begin{figure}[b!]
    \centering
\includegraphics[width=0.9\textwidth]{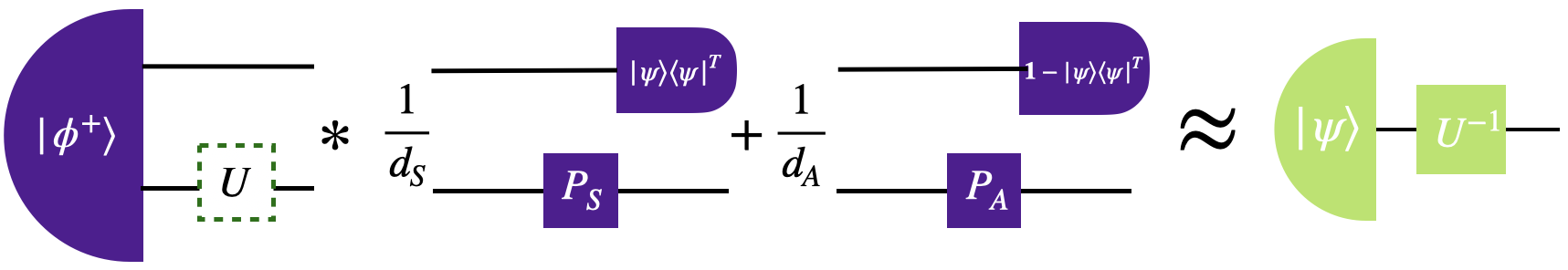}
    \caption{Optimal protocol for implementing the inversion operation in a deterministic and approximate manner on an arbitrary known pure state. $P^{S}$ and $P^{A}$ denote the projectors onto the symmetric and antisymmetric subspaces, respectively.}
    \label{fig:inversion_det}
\end{figure}

We now analyse deterministic approximation protocols for unitary inversion protocol on a pure known unentangled states. In Thm.~\ref{thm:det_inv} stated below, we prove that for any dimension $d$, the optimal average fidelity for unitary inversion on known input states is $\expval{F^{\psi}}_{\textup{inv}}  =
\frac{d+3}{d(d+1)} $. This optimal performance is attained by superchannel with a Choi operator
\begin{align}
     S^{\textup{opt}}_{\I\O\F, \textup{inv}} = \frac{\ketbra{\psi}{\psi}_{\O} }{d_{S}}\otimes P^{S}_{\I\F} + \frac{(\mathbb{1}_{\O} - \ketbra{\psi}{\psi}_{\O})}{d_{A}}\otimes P^{A}_{\I\F}
\end{align}
where $P^{S}_{\O\F}\coloneqq\frac{\id_{\O\F}+\textup{SWAP}_{\O\F}}{2}$ and $P^{A}_{\O\F}\coloneqq\frac{\id_{\O\F}-\textup{SWAP}_{\O\F}}{2}$ are the projectors onto the symmetric and anti-symettric space and $d_{S}=\tr(P^S)$ and $d_{A}=\tr(P^A)$ their respective dimensions. The optimal protocol is illustrated in Fig.~\ref{fig:inversion_det}, where, similarly to the case of unitary transposition illustrated in Fig.~\ref{Fig:TransDetpure}, the supermap splits into two branches conditioned on the measurement in the \( \mathcal{L}(\mathcal{H}_{\O}) \) space. When the measurement element is \( \ketbra{\psi}{\psi}_{\O} \), the supermap applies the channel with Choi operator proportional to the projector onto the symmetric space \( P_{S} \). Conversely, when the measurement element is \( \mathbb{1}_{\O} - \ketbra{\psi}{\psi}_{\O} \),  channel with Choi operator proportional to the projector onto the anti-symmetric space \( P_{A} \) is applied.

\begin{restatable}{theorem}{InvDetSing}\label{thm:det_inv}
 Given a single call of the $d-$dimensional unitary operation $U \in \mathrm{SU}(d)$, in the case of a supermap realised on a known pure input state $\ket{\psi}\in\mathbb{C}^d$, the inversion operation $U^{-1}$ can be implemented, attaining optimal average fidelity 
\begin{equation}
 \expval{F^{\psi}}_{\textup{inv}}  =
\frac{d+3}{d(d+1)} \quad \quad \quad \textcolor{gray}{ \textup{cf. } \quad \expval{F}_{\textup{inv}, k=1}  = \frac{2}{d^2}}.
\end{equation}
\end{restatable}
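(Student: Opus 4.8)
The plan is to solve the SDP of Eq.~\eqref{Eq:SDP_det} by first using the two covariance properties of Thm.~\ref{thm:S_det_covariance_f} to collapse the feasible set to a handful of parameters. For inversion, the input-state independent covariance $[S^{c},U_{\I}\otimes\mathbb{1}_{\O}\otimes U_{\F}]=0$ forces the $\I\F$-part of $S^{c}$ into the commutant of $\{U\otimes U\}$, spanned by $P^{S}_{\I\F}$ and $P^{A}_{\I\F}$; taking $\ket{\psi}=\ket{0}$ without loss of generality (by the independence of the choice of known state), the input-state dependent covariance with $V_{\O}^{-1}\ket{\psi}=\ket{\psi}$ forces the $\O$-part into the commutant of the stabiliser of $\ket{\psi}$, spanned by $\ketbra{\psi}{\psi}_{\O}$ and $\mathbb{1}_{\O}-\ketbra{\psi}{\psi}_{\O}$. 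Hence every covariant candidate has the four-parameter form
\begin{equation}
S = \big(a\ketbra{\psi}{\psi}_{\O}+a'(\mathbb{1}_{\O}-\ketbra{\psi}{\psi}_{\O})\big)\otimes P^{S}_{\I\F} + \big(b\ketbra{\psi}{\psi}_{\O}+b'(\mathbb{1}_{\O}-\ketbra{\psi}{\psi}_{\O})\big)\otimes P^{A}_{\I\F},
\end{equation}
and since the four tensor factors are mutually orthogonal projectors, the positivity constraint $S\geq 0$ is simply equivalent to $a,a',b,b'\geq 0$.

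Next I would impose the superchannel normalisation. Using $\Tr_{\F}(P^{S}_{\I\F})=\tfrac{d+1}{2}\mathbb{1}_{\I}$ and $\Tr_{\F}(P^{A}_{\I\F})=\tfrac{d-1}{2}\mathbb{1}_{\I}$, the constraint $\Tr_{\F}(S)=\sigma_{\I}\otimes\mathbb{1}_{\O}$ with $\Tr(\sigma_{\I})=1$ forces $\sigma_{\I}=\tfrac{1}{d}\mathbb{1}_{\I}$ together with the two linear relations $a\tfrac{d+1}{2}+b\tfrac{d-1}{2}=\tfrac{1}{d}$ and $a'\tfrac{d+1}{2}+b'\tfrac{d-1}{2}=\tfrac{1}{d}$. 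It then remains to evaluate the objective $\expval{F^{\psi}}=\Tr(S\,\Omega^{\psi})$ with $\Omega^{\psi}$ from Thm.~\ref{thm:performance}. I would compute the four numbers $\Tr[(\ketbra{\psi}{\psi}_{\O}\otimes P^{S}_{\I\F})\Omega^{\psi}]$, $\Tr[((\mathbb{1}_{\O}-\ketbra{\psi}{\psi}_{\O})\otimes P^{S}_{\I\F})\Omega^{\psi}]$, and the two antisymmetric analogues, by inserting $f(U)=U^{\dagger}$ and $\dketbra{\overline{U}}{\overline{U}}_{\I\O}$ into $\Omega^{\psi}$, writing $P^{S/A}=\tfrac{1}{2}(\mathbb{1}_{\I\F}\pm\textup{SWAP}_{\I\F})$, and reducing each trace to the elementary identity $\sum_{a}|U_{0a}|^{2}=1$. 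This index computation is the computational heart (and the main obstacle) of the proof; note that covariance makes each integrand $U$-independent, so the Haar integral is trivial once the constant is identified. The outcome is that the four blocks contribute with coefficients $1$, $\tfrac{d-1}{2}$, $0$, $\tfrac{d-1}{2}$ respectively, giving $\expval{F^{\psi}}=a+\tfrac{d-1}{2}(a'+b')$.

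The final step is to maximise the linear objective $a+\tfrac{d-1}{2}(a'+b')$ over $a,a',b,b'\geq 0$ subject to the two equality constraints — an elementary linear program. Since $b$ carries a nonnegative weight in its constraint but is absent from the objective, it is optimal to set $b=0$, yielding $a=\tfrac{2}{d(d+1)}=\tfrac{1}{d_{S}}$; and since in the second constraint $b'$ carries the smaller weight $\tfrac{d-1}{2}$, maximising $a'+b'$ drives $a'=0$ and $b'=\tfrac{2}{d(d-1)}=\tfrac{1}{d_{A}}$. This reproduces exactly the claimed $S^{\textup{opt}}_{\I\O\F,\textup{inv}}$ and gives $\expval{F^{\psi}}_{\textup{inv}}=\tfrac{2}{d(d+1)}+\tfrac{1}{d}=\tfrac{d+3}{d(d+1)}$, with optimality over all superchannels certified by the value of the reduced linear program together with the covariance reduction. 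Because this is a genuine maximum of the linear program rather than a feasible guess, no separate dual certificate is needed; the covariance argument guarantees that no non-covariant superchannel can do better.
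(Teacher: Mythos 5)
Your proposal is correct and follows essentially the same route as the paper's proof: both reduce the SDP via the two covariance properties to the commutant basis $\{P^{S}_{\I\F},P^{A}_{\I\F}\}$ and $\{\ketbra{\psi}{\psi}_{\O},\mathbb{1}_{\O}-\ketbra{\psi}{\psi}_{\O}\}$, evaluate $\Tr(S\,\Omega^{\psi})$ against the linked performance operator, and solve the resulting small linear program, arriving at the same optimal superchannel and the value $\tfrac{d+3}{d(d+1)}$. The only cosmetic difference is that you parametrise with four scalars from the outset, whereas the paper keeps $A,B$ as operators, eliminates $B$ through the normalisation, and only then imposes the state-dependent covariance on $A$.
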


We again see the quadratic improvement compared to the unknown case.

\paragraph{Inversion on a part of a known bipartite state: probabilistic exact}

Similarly to the pure state case, unitary inversion protocol will exhibit the properties of both conjugation and transposition, even though we cannot directly compose the corresponding supermaps. Specifically, we again have zero probability for $d>2$, while for the qubit case, the protocol is equivalent to transposition, shown in Fig.~\ref{fig:bipartTrans}. 

\begin{restatable}{theorem}{InvBipProb}\label{bip_inv_thm}
 
 Given a single call of the $d$-dimensional unitary operation $U \in \mathrm{SU}(d)$, in the case of a supermap realised on a subsystem of a known \emph{bipartite} pure input state $\ket{\psi}_{AB}\in \mathcal{H}_{A}\otimes\mathcal{H}_{B}$, the inversion operation $U^{-1}$ can be implemented with maximal success probability
\begin{equation}
 p^{\psi_{AB}}_{\textup{max,inv}}  =
\begin{cases}
        \frac{1}{2\norm{\Tr_{A}(\ketbra{\psi}_{AB})}_{\textup{op}}},& \textup{if } d = 2\\
    0,             &  \textup{if } d > 2.
\end{cases} 
\end{equation}
\end{restatable}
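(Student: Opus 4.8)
The plan is to split the statement into the cases $d=2$ and $d>2$ and, in each, reduce to a task already solved in the excerpt rather than running a fresh SDP. For $d=2$ the relation $U^{-1}=YU^{T}Y$, valid for all $U\in\mathrm{SU}(2)$ (it follows from the self-conjugacy identity $\overline{U}=YUY$ together with $U^{-1}=\overline{U}^{T}$ and $Y^{T}=-Y$, and $Y^{2}=\id$), makes bipartite inversion unitarily equivalent to bipartite transposition, so the value is inherited from Thm~\ref{bip_trans_thm}. For $d>2$ I expect $p=0$, and I would establish this as a no-go by reducing any putative bipartite inversion protocol to a single-subsystem one, where Thm~\ref{thm:InvProbSingl} already forbids success.

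For the $d=2$ achievability, I would set $\ket{\tilde\psi}_{AB}=(\id_{A}\otimes Y_{B})\ket{\psi}_{AB}$, run the optimal bipartite transposition protocol of Thm~\ref{bip_trans_thm} on $\ket{\tilde\psi}_{AB}$ to obtain $(\id_{A}\otimes U^{T}_{B})\ket{\tilde\psi}_{AB}$, and then apply the fixed local unitary $Y_{B}$ to the output; by $U^{-1}=YU^{T}Y$ this yields $(\id_{A}\otimes U^{-1}_{B})\ket{\psi}_{AB}$. Since $\Tr_{A}\ketbra{\tilde\psi}{\tilde\psi}=Y_{B}\big(\Tr_{A}\ketbra{\psi}{\psi}\big)Y_{B}^{\dagger}$ and the operator norm is invariant under unitary conjugation by $Y$, Thm~\ref{bip_trans_thm} at $d=2$ gives exactly $p=\tfrac{1}{2\norm{\Tr_{A}(\ketbra{\psi}{\psi}_{AB})}_{\textup{op}}}$. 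The matching upper bound follows by running the reduction in reverse: conjugating any bipartite inversion protocol for $\ket{\psi}_{AB}$ by local $Y_{B}$ on both ends produces a bipartite transposition protocol for $(\id_{A}\otimes Y_{B})\ket{\psi}_{AB}$ with the same $p$ (using $U^{T}=YU^{-1}Y$), so optimality of transposition caps $p$ at the stated value. One may equivalently phrase this through the covariance reduction of Thm~\ref{thm:symbp}, but the explicit $Y$-twirl is the shortest route.

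For $d>2$ I would prove the no-go by post-selection on the reference system. Suppose a bipartite protocol succeeds exactly with probability $p$, i.e. its success branch outputs $p\,(\id_{A}\otimes U^{-1}_{B})\ketbra{\psi}{\psi}_{AB}(\id_{A}\otimes U_{B})^{\dagger}$ for every $U$. Append a fixed projective measurement on the reference $A$ and keep an outcome $\ket{a}$ with $\ket{\chi_{a}}\coloneqq(\bra{a}_{A}\otimes\id_{B})\ket{\psi}_{AB}\neq 0$; tracing out $A$ then leaves the branch $p\,U^{-1}\ketbra{\chi_{a}}{\chi_{a}}U$, i.e. an honest probabilistic exact single-subsystem inversion of $\ket{\chi_{a}}/\norm{\chi_{a}}$ with $U$-independent probability $p\norm{\chi_{a}}^{2}$. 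By Thm~\ref{thm:InvProbSingl} this probability must vanish for $d>2$, forcing $p=0$. I expect the conceptual pitfall here to be the main obstacle: one is tempted to simply trace out $A$, but that yields inversion on the \emph{mixed} reduced state $\Tr_{A}\ketbra{\psi}{\psi}$, which by Thm~\ref{thm:Mix_visib} can be implemented exactly and hence carries no obstruction. It is precisely the post-selection onto a pure reference outcome, rather than the discarding of $A$, that preserves the representation-theoretic content responsible for the $d>2$ no-go and connects back to the single-subsystem result.
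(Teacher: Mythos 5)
Your proof is correct, and the $d=2$ half coincides with the paper's argument: both use $U^{-1}=YU^{T}Y$ to make bipartite inversion unitarily equivalent to bipartite transposition and then import Thm.~\ref{bip_trans_thm}, with the operator norm unchanged under conjugation by $Y$. For $d>2$, however, you take a genuinely different route. The paper reruns the covariant SDP machinery: it decomposes $S=A_{\O\A}\otimes P^{S}_{\I\F}+B_{\O\A}\otimes P^{A}_{\I\F}$, arrives at the same link-product equation as in the bipartite conjugation proof (Thm.~\ref{thm:conjgBip}), and kills $A$ and $B$ via the rank argument that relies on the ``tensoring with an ideal channel'' result of Ref.~\cite{Amosov2000additivity}. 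You instead post-select the reference system $A$ onto a pure outcome $\ket{a}$ with $\ket{\chi_a}=(\bra{a}\otimes\id)\ket{\psi}_{AB}\neq0$, which turns the success branch into a valid single-subsystem superinstrument element (its complement inside $\Tr_{A}(S_{\textup{ch}})$ serves as the failure branch) implementing exact inversion on $\ket{\chi_a}/\norm{\chi_a}$ with the $U$-independent probability $p\norm{\chi_a}^{2}$; Thm.~\ref{thm:InvProbSingl} then forces $p=0$. This reduction is more elementary and modular --- it avoids Schur's lemma and the rank lemma entirely, and the identical argument would also yield the bipartite conjugation no-go of Thm.~\ref{thm:conjgBip} from Thm.~\ref{single_conjg}. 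Your closing remark correctly identifies why post-selection, rather than tracing out $A$, is essential; the only small imprecision there is that Thm.~\ref{thm:Mix_visib} permits exact inversion on the reduced mixed state only when its visibility is below the threshold $\eta^{*}_{\textup{inv}}$, not unconditionally, but this does not affect the argument.
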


For $d=2$, we know that inversion protocol is the same as transposition, while for $d>2$ conjugation, we provide an independent proof in App.~\ref{append:bipart_proofs}. 

\paragraph{Deterministic and exact inversion operation on a known-mixed state}

As in the unitary transposition and unitary conjugation, we can combine Thm.~\ref{thm:Mix_visib} and Thm~\ref{thm:det_inv} to show that deterministic and exact implementation of $U \mapsto U^{-1}\rho U$ on known state of the form $\rho = \eta \ketbra{\psi}{\psi} + \frac{1-\eta}{d}\mathbb{1}$ is possible if and only if $\eta\leq \eta^*_{\textup{inv}}$, where 
\begin{equation}
    \eta^*_{\textup{inv}} =  \frac{2}{d^2 - 1}.
\end{equation}

\section{Higher-order quantum computing with multiple calls of the input operation}\label{sec:k>1}

Previously, we have focused our analysis on tasks where the input operation we aim to transform is called only once, that is, one has access to a single use of the unitary operation. In this section, we consider a variation of such tasks where multiple calls to the input operation are available. Specifically, for a given function \( f : \mathrm{SU}(d) \to \mathrm{SU}(d) \) and a fixed state \( \rho \in \mathcal{L}(\mathbb{C}^d) \), we aim to transform an arbitrary unitary operation \( \map{U}(\rho) = U \rho U^\dagger \) into the state \( f(U)\rho f(U)^\dagger \) using \( k \) calls to \( U \).

Mathematically, having \( k \) calls to an operation \( \map{U} \) is equivalent to having access to \( k \) copies of the linear map \( \map{U} \). Hence, we look for supermaps that implement
\begin{equation}
\smap{S}^{\rho} : \map{U}^{\otimes k} \mapsto f(U)\rho f(U)^{\dagger}.
\end{equation}
Here, the total input space is described by $\mathcal{H}_{\I} = \bigotimes_{i=1}^{k} \mathcal{H}_{I_i},$
and the total output space by $\mathcal{H}_{\O} = \bigotimes_{i=1}^{k} \mathcal{H}_{O_i}.$
For each \( i \in \{1, \ldots, k\} \), the spaces \( \mathcal{H}_{I_i} \) and \( \mathcal{H}_{O_i} \) are isomorphic to \( \mathbb{C}^d \), so we have $\mathcal{H}_{\I} \cong \mathcal{H}_{\O} \cong (\mathbb{C}^d)^{\otimes k} \cong \mathbb{C}^{d^k}.$ From this perspective, the single call tasks analysed in the previous sections correspond to the particular case where \( k = 1 \).

In order to address problems where multiple calls are available, we will make use of superchannels and and superinstruments with multiple slots. These are transformations where one can use to transform $k$ independent channels into another quantum channel~\cite{taranto2025review} . The structure of superchannels and superinstruments with multiple slots is considerably richer then its one-slot version, in particular, the input operations may be called in different causal orders. In this work, we focus our analysis on three different classes that arise naturally when analysing multiple slots superchannels; parallel superchannels - where all operations are called in parallel as illustrated in Fig.~\ref{fig:paralel}, sequential superchannels - also known as quantum combs~\cite{Chiribella2007architecture}, where the input operations are called sequentially in a quantum circuit which allows adaptive strategies, as illustrated in Fig.~\ref{fig:sEq}, and general superchannels - where the input operations can be called without respecting a definite causal order~\cite{Araujo_2015Witns,Chiribella_2013Switch, Ognyan_2012OCB}. We note that, when considering multiple calls of the same unitary, such as the tasks analysed in this work, quantum circuits with classical control (QC-CC) and quantum circuits with quantum control (QC-QC) cannot outperform sequential circuits~\cite{Bavaresco2022,Abbott2025}.

In the following subsections, we present a precise definition and mathematical characterisation of these three classes of superchannels and superinstruments.

\begin{figure}[b!]
    \centering
\includegraphics[width=0.7\textwidth]{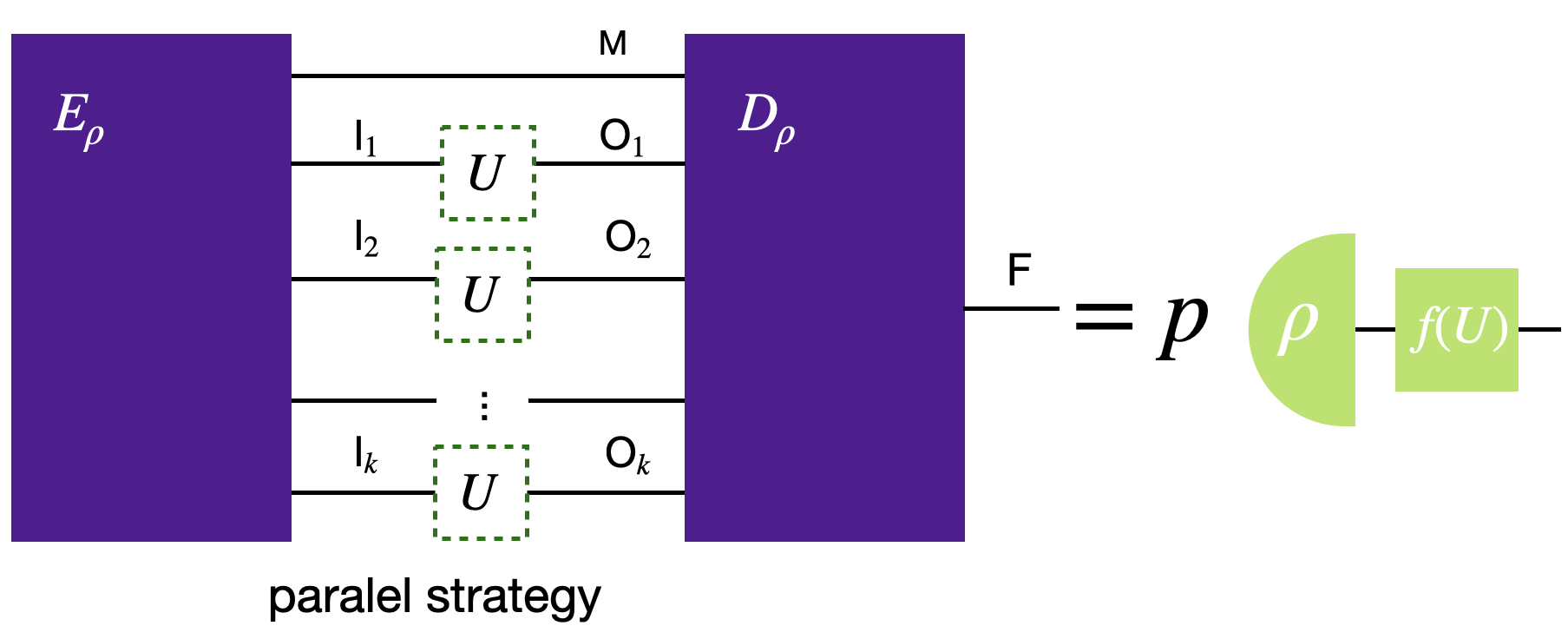}
    \caption{ Known state supermap protocol realising $f(U)\rho f(U)^{\dagger}$ using $k$ calls to the unknown unitary in a parallel strategy. }
    \label{fig:paralel}
\end{figure}

\subsection{Parallel strategies}

We now introduce the concept of parallel strategies, which are protocols designed to transform quantum operations through parallel superchannels or parallel superinstruments. 

In its most general form, a superchannel with \( k \) slots describes a transformation from \( k \) independent quantum channels to a single quantum channel. Let \( \{\map{C}_i\}_{i=1}^k \) be a set of arbitrary, independent quantum channels. To act on these \( k \) channels in parallel, we may treat them collectively as a single channel, defined by $
\map{C} \coloneqq \bigotimes_{i=1}^k \map{C}_i.$ From this perspective, parallel superchannels can be understood as single-call superchannels (as discussed in Sec.~\ref{sec:Knstate_intro}) that act on the joint channel \( \map{C} \). More directly, in the known input state scenario, parallel superchannels correspond to supermaps that can be described as
\begin{equation}
    \smap{S}^{\rho} \left( \bigotimes_{i=1}^k \map{C}_i \right) =  \left( \map{D}^{\rho} \circ \left[ \bigotimes_{i=1}^k \map{C}_i \otimes \map{\mathbb{1}} \right] \right) (E^{\rho}),
\end{equation}
where \( E^{\rho} \in \mathcal{L}\left( \mathcal{H}_{\I} \otimes \mathcal{H}_{\M} \right) \) is a quantum state, and \( \map{D}^{\rho} : \mathcal{L}\left( \mathcal{H}_{\O} \otimes \mathcal{H}_{\M} \right) \rightarrow \mathcal{L}\left( \mathcal{H}_{\F} \right) \) is a quantum channel.

When analysing \( k \) calls to the same unitary operation, all channels \( \map{C}_i \) are equal to the same map, that is, $\map{C}_i \coloneqq \map{U} : \mathcal{L}(\mathcal{H}_{\I_{i}}) \rightarrow \mathcal{L}(\mathcal{H}_{\O_{i}})$ for every \( i \). In this case, illustrated in Fig.~\ref{fig:paralel}, the action of the parallel superchannel becomes
\begin{equation}
    \smap{S}^{\rho} \left( \map{U}^{\otimes k} \right) = \left( \map{D}^{\rho} \circ \left[ \map{U}^{\otimes k} \otimes \map{\mathbb{1}} \right] \right) (E^{\rho}).
\end{equation}
A linear operator \( S_{\I \O \F} \) is the Choi operator of a parallel superchannel if and only if it satisfies the following constraints:
\begin{align}
S_{\I \O \F} & \geq 0 \label{paral_SDP} \\
\Tr_{\F}(S_{\I \O \F}) & = \Tr_{\O \F} (S_{\I \O \F}) \otimes \frac{\mathbb{1}_\O}{d_{\O}} \\
\Tr\left(S_{\I \O \F}\right) & = d_{\O} \label{paral_SDP3}.
\end{align}
These constraints are equivalent to those characterising a single-slot superchannel, as given in Eqs.~\eqref{Eq:superch_known_constr}--\eqref{Eq:superch_known_constr3}. The only difference lies in the tensor product structure of the input and output spaces, which are now given by 
\(
\mathcal{H}_{\I} = \bigotimes_{i=1}^{k} \mathcal{H}_{I_{i}}
\)
and 
\(
\mathcal{H}_{\O} = \bigotimes_{i=1}^{k} \mathcal{H}_{\O_{i}}.
\)

Finally, as in the single-slot case, probabilistic transformations are described by superinstruments. Analogously, a parallel superinstrument is a collection of completely positive supermaps that sum to a parallel superchannel.

\subsection{Sequential strategies}

\begin{figure}[b!]
    \centering
    \includegraphics[width=0.9\textwidth]{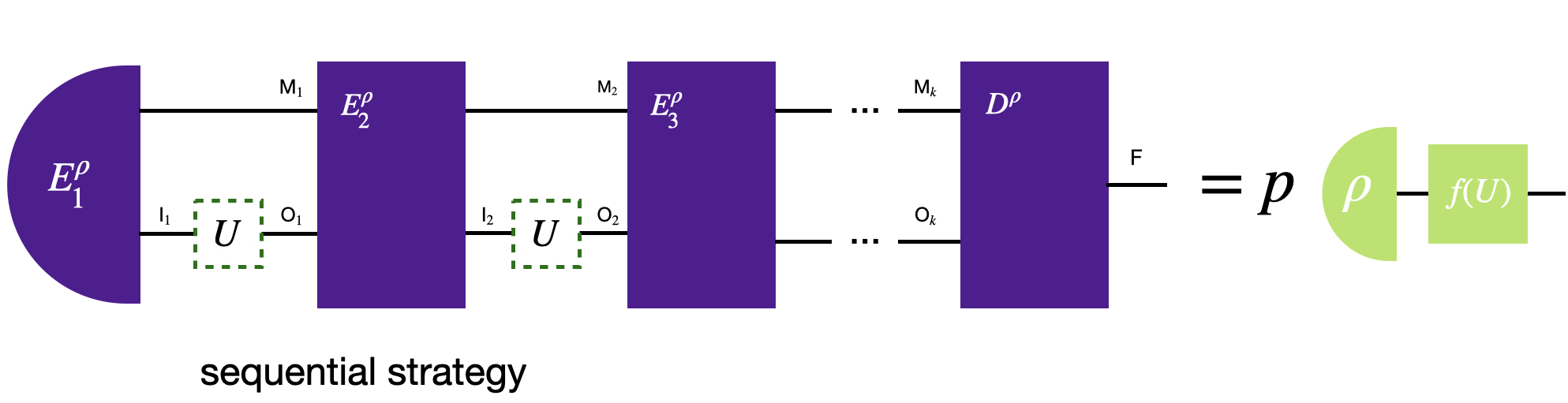}
    \caption{ Known input state supermap protocol realising $f(U)\rho f(U)^{\dagger}$ using $k$ calls to the unknown unitary in a sequential strategy.}
    \label{fig:sEq}
\end{figure}

As the name suggests, in sequential strategies the input operations are called one after another, and intermediate channels can be applied between these calls. This structure gives rise to what is often referred to as adaptive protocols, depicted in Fig.~\ref{fig:sEq}. 

Sequential strategies are mathematically formalised by sequential superchannels, which are also known as quantum combs~\cite{Chiribella_2008}.

A sequential superchannel with \( k \) slots can be understood as a quantum circuit with \( k \) open slots. As in the parallel case, we define the input and output spaces as  
\(
\mathcal{H}_{\I} = \bigotimes_{i=1}^{k} \mathcal{H}_{\I_{i}}
\) 
and 
\(
\mathcal{H}_{\O} = \bigotimes_{i=1}^{k} \mathcal{H}_{O_{i}},
\)
respectively. The action of a sequential superchannel  
\(
\smap{S} : \left (\mathcal{L}(\mathcal{H}_{\I}) \to \mathcal{L}(\mathcal{H}_{\O}) \right) \to \mathcal{L}(\mathcal{H}_{\F})
\)  
on \( k \) calls to the same unitary operation can then be written as
\begin{equation}
    \smap{S}\left(\map{U}^{\otimes k}\right) = \left( \map{D}^{\rho} \circ \left[ \map{U} \otimes \map{I}_{\M} \right] \circ \map{E}_k^{\rho} \circ \cdots \circ \left[ \map{U} \otimes \map{I}_{\M} \right] \right)(E_1^{\rho}),
\end{equation}
where \( E_1^{\rho} \in \mathcal{L}(\mathcal{H}_{I_1} \otimes \mathcal{H}_{\M}) \) is a quantum state,  
and for each \( i \in \{2, \ldots, k\} \),  
\(
\map{E}_i^{\rho} : \mathcal{L}(\mathcal{H}_{O_{i-1}} \otimes \mathcal{H}_{\M}) \to \mathcal{L}(\mathcal{H}_{I_i} \otimes \mathcal{H}_{\M})
\)  
are quantum channels inserted between calls to the unitary.  
Finally,  
\(
\map{D}^{\rho} : \mathcal{L}(\mathcal{H}_{O_k} \otimes \mathcal{H}_{\M}) \to \mathcal{L}(\mathcal{H}_{\F})
\)  
is a quantum channel representing the final decoding.

As in the parallel case, sequential superchannels admit a convenient characterisation in terms of their corresponding Choi operator. Before presenting this characterisation, we introduce the trace-and-replace notation~\cite{Araujo_2015Witns}. For an operator  
\( A \in \mathcal{L}(\mathcal{H}_{\I} \otimes \mathcal{H}_{\O}) \), we define
\begin{equation}
    {}_{\O} A \coloneqq \Tr_{\O}(A) \otimes \frac{\mathbb{1}_{\O}}{d_{\O}}.
\end{equation}

A linear operator \( S_{\I \O \F} \) is the Choi operator of a sequential superchannel if and only if it satisfies the following constraints~\cite{Chiribella_2009,Chiribella2007architecture,taranto2025review}:
\begin{equation}
    S_{\I \O \F} \geq 0,
\end{equation}
\begin{equation}
    {}_{\F} S_{\I \O \F} = {}_{\O_k \F} S_{\I \O \F},
\end{equation}
\begin{equation}
    {}_{\I_k \O_k \F} S_{\I \O \F} = {}_{\O_{k-1} \I_k \O_k \F} S_{\I \O \F},
\end{equation}
\begin{equation}
    \vdots
\end{equation}
\begin{equation}
    {}_{\I_2 \O_2 \ldots \I_k \O_k \F} S_{\I \O \F} = {}_{\O_1 \I_2 \O_2 \ldots \I_k \O_k \F} S_{\I \O \F},
\end{equation}
\begin{equation}
    \Tr(S_{\I \O \F}) = d_{\O}.
\end{equation}

The set of sequential superchannels is strictly larger than the set of parallel ones. As we will see later, for certain tasks—such as unitary transposition and unitary inversion—sequential strategies strictly outperform parallel ones. In contrast, for unitary conjugation, the optimal strategy can also be implemented using a parallel scheme.

\subsection{ General strategies}

\begin{figure}[b!]
    \centering
\includegraphics[width=0.6\textwidth]{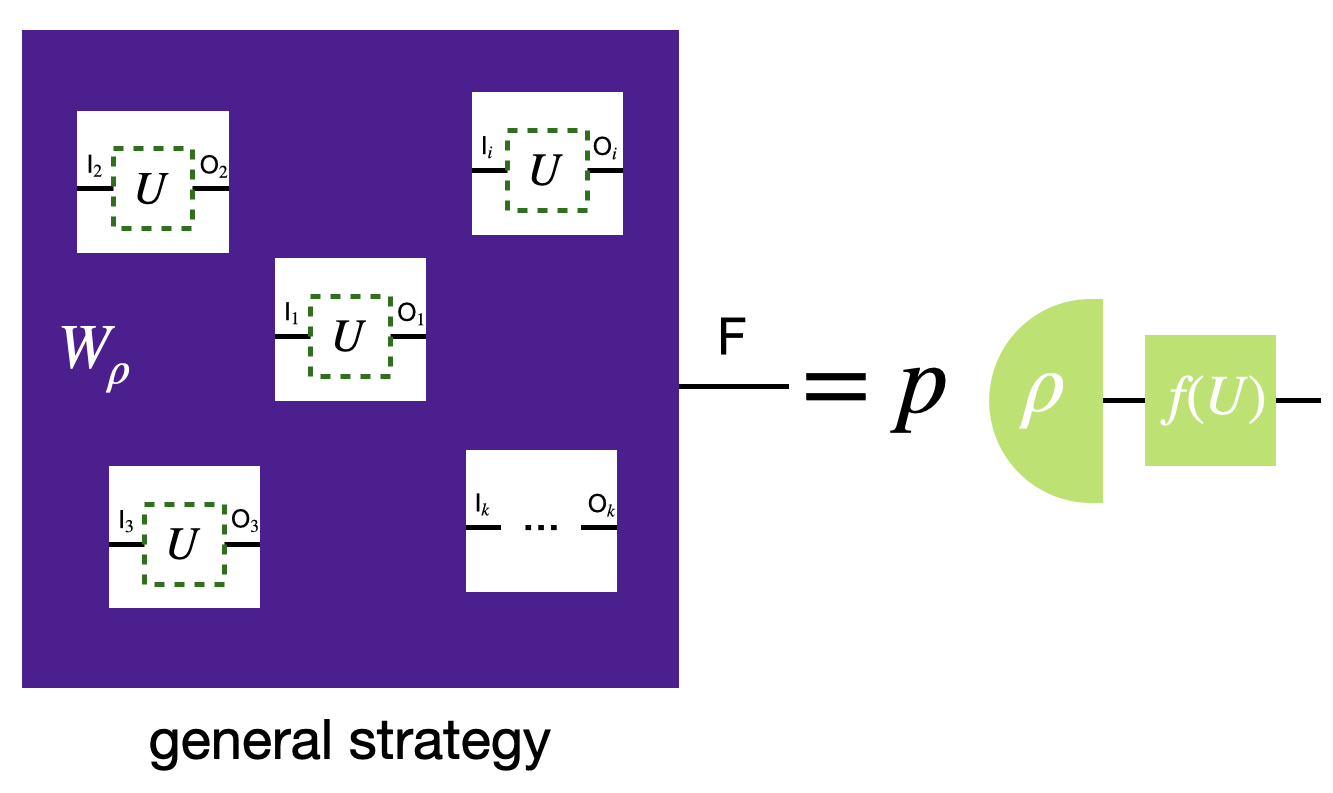}
    \caption{ Known state supermap protocol realising $f(U)\rho f(U)^{\dagger}$ using $k$ calls to the unknown unitary in a general strategy. The supermap corresponding to a general strategy is often denoted by $W$, as indicated in the figure. }
    \label{fig:general}
\end{figure}

General strategies are protocols that go beyond circuit structures and the constraints of causal order~\cite{Quintino2019prob,Quintino_2022det,Ognyan_2012OCB,Araujo_2015Witns,Chiribella_2013Switch}. They are the most general way of processing $U^{\otimes k}$ unitaries into some $f(U)\rho f(U)^{\dagger}$, such that it still satisfies constraints of quantum theory. These structures can also be characterized with the set of semidefinite constraints and formulated into an SDP problem, as detailed for instance in Ref.~\cite{marco_maps, Hoffreumon_2022}. For simplicity we will present only example of $k=2$ case, where we will use the trace and replace notation. A Choi operator $S$ that represents a \emph{general superchannel} transforming $k=2$ independent channels into a quantum state, is characterized with the following set of constraints, 
\begin{align}\label{general_SDP}
S_{\I \O \F } & \geq 0 ; \\
{}_{\I_1 \O_1 \F} S_{\I \O \F } & ={}_{\I_1 \O_1O_{2} \F} S_{\I \O \F } \\
{}_{\I_2 \O_2 \F} S_{\I \O \F } & ={}_{\O_1 \I_{2} \O_{2} \F} S_{\I \O \F } \\
{}_{\F} S_{\I \O \F }+{}_{\O_1 \O_2 \F} S_{\I \O \F } & ={}_{\O_1 \F} S_{\I \O \F }+{}_{\O_2 \F} S  \\
\Tr(S_{\I \O \F }) & = d_{O_1} d_{O_\O} 
\end{align}

where now $\I = \I_1\otimes \I_2 $ and $ \O = \O_1 \otimes \O_2 $. As one might expect, the general strategy is, in principle, the most powerful and can provide an advantage over both sequential and parallel strategies. Any task achievable with a sequential strategy can also be performed using a general strategy, but the converse does not hold. In other words, the following hierarchy holds:  $\textup{PAR} \leq \textup{SEQ} \leq \textup{GEN}$ \cite{Bavaresco_2021Hierarchy,Quintino_2022det,Quintino2019prob}.

\subsection{Probabilistic exact realisations}

Similarly to discussion in Sec.~\ref{sec:probabilistic}, the problem of finding the optimal supermap that transforms $\smap{S}^{\rho} : \map{U}^{\otimes k} \rightarrow f(\map{U})\rho f(\map{U})^{\dagger}$ in probabilistic and exact manner,  achieving maximal possible probability is given by the following set of constraints, 
\begin{align} 
\max &\; p \\ \text { s.t. }  \smap{S}^{\rho}\left(\map{U}^{\otimes k}\right)&=p f(\map{U})(\rho), \quad \forall \map{U} :\mathrm{SU}(d) \rightarrow \mathrm{SU}(d) \\  \{\map{\map{S}}^{\rho}, \map{\map{F}}^{\rho}\}& \text { is a valid superinstrument}\\
&\text { corresponding to parallel, sequential or general strategy}
\end{align}
Notice again that this problem has infinitely many constraints, due to the $\forall$ quantifier. As stated in Thm.~\ref{thm:twirl_prob_constr}, we can use the symmetries of the problem ~\eqref{symmetriesK} to reduce the number of constraints and thereby show that the problem is an SDP, with a finite number of constraints,
\begin{align} 
 \max &\; p \\
\text { s.t. }  S_{\I\O\F} *\dketbra{\mathbb{1}}{\mathbb{1}}_{\I\O} ^{\otimes k} &= p \ketbra{0} \quad \textup{where $S$ is covariant under corresponding symmetry}\\
0  \leq S_{\I\O \F} &\leq S^{\textup{ch}}_{\I\O \F} \\
S \geq 0 \; \textup{and}& \; S^{\textup{ch}} \text { is a valid superchannel}\\
&\text { corresponding to parallel, sequential or general strategy}
\end{align}

\subsubsection{Symmetries of the problem}\label{sym_k}

Similarly to the symmetries discussed in Sec.~\ref{symmetries}, requiring the supermap to act as a universal machine over all input unitaries implies that the defining equation of the protocol exhibits certain symmetry properties,
\begin{itemize}
    \item \textbf{Independence from the choice of known state (pure state case):} as stated in Thm.~\ref{thm:independ}.
    \item \textbf{Input state independent covariance :} as stated in Thm.~\ref{thm:twirl_prob_constr}, without loss of generality, we can restrict our attention to the covariant supermaps $S^{c}_{\I \O \F}$ that exhibit symmetry properties, $[S^{c}_{\I \O \F}, U_{f}] = 0$, where $U_f$ is a total unitary corresponding to the symmetry in case of $f$ transposition, conjugation and inversion. Concretely, we have:
\begin{itemize}
\label{symmetriesK}
    \item \textbf{Transposition and SAR} exhibit the symmetry
    $[S_{\I\O\F}, U^{\otimes k}_{\I} \otimes \mathbb{1}^{\otimes k}_{\O} \otimes \overline{U}_{\F}] = 0 \quad \forall U.$
    Given this symmetry, the supermap admits a decomposition of the form
    \begin{equation}
    S = \sum_{ij\lambda} A^{ij\lambda}_{\O} \otimes B^{ij\lambda}_{\I \F},
    \end{equation}
    where \(B^{ij\lambda}_{\I \F}\) are elements of the orthogonal basis given in terms of Brauer algebra, spanning the subspace \(\mathcal{L}(\mathcal{H}_{\I} \otimes \mathcal{H}_{\F})\).

    \item \textbf{Conjugation} exhibits the symmetry $[S_{\I\O\F}, \mathbb{1}^{\otimes k}_{\I} \otimes U^{\otimes k}_{\O} \otimes U_{\F}] = 0 \quad \forall U,$
    which implies the decomposition
    \begin{equation}
    S = \sum_{ij\lambda} A^{ij\lambda}_{\I} \otimes E^{ij\lambda}_{\O \F},
    \end{equation}
    where \(E^{ij\lambda}_{\O \F}\) are elements of the Yamanouchi basis within the subspace \(\mathcal{L}(\mathcal{H}_{\O} \otimes \mathcal{H}_{\F})\).

    \item \textbf{Inversion} exhibits the symmetry
    $[S_{\I\O\F}, \mathbb{1}^{\otimes k}_{\O} \otimes  U^{\otimes k}_{\I} \otimes U_{\F}] = 0 \quad \forall U,
  $ leading to the decomposition
    \begin{equation}
    S = \sum_{ij\lambda} A^{ij\lambda}_{\O} \otimes E^{ij\lambda}_{\I \F},
    \end{equation}
    where \(E^{ij\lambda}_{\I \F}\) are elements of the Yamanouchi basis within the subspace \(\mathcal{L}(\mathcal{H}_{\I} \otimes \mathcal{H}_{\F})\).

\end{itemize}
 
    \item \textbf{Input state dependent covariance :} as stated in Thm.~\ref{thm:KN_sym}.
\end{itemize}

By exploiting the underlying symmetries, we employ representation-theoretic methods to derive analytical solutions to the SDP. These methods, in particular, provide a natural basis for decomposing the Choi operator of the supermap $S^{c}_{\I \O \F}$, enabling us to determine the optimal solution explicitly by solving for its coefficients. In other words, we can identify the subspaces in which the supermap decomposes and determine how much each contributes to the optimal strategy. Starting with the simpler case of $[S, U^{\otimes k}] = 0$, as it appears for conjugation and inversion. From the Schur-Weyl duality it follows that $S \in \operatorname{Span}\{P_\pi\}_{\pi \in S_k},$ where \(P_\pi\) are the permutation operators acting on the 
\(k\)-fold tensor product space, and \(S_k\) is the symmetric group of permutations of \(\{1,\dots,k\}\). From this span, one can construct the Young-Yamanouchi basis \(\{E^{\mu}_{ij}\}\), which serves as a basis for the supermap operator \(S\) 
in the case of supermaps implementing conjugation or inversion functions.

In the case of transposition, where $[S, U^{\otimes k-1} \otimes \overline{U}] = 0$, 
Schur–Weyl duality is replaced by its extension with the walled Brauer algebra, that is, $S \in \operatorname{Span}\{P_\pi^{T_k}\}_{\pi \in S_k}$, where $P_\pi^{T_k}$ is the partial transposition of the permutation operator $P_\pi$ on its last subsystem.

\subsection{Deterministic approximate realisations} 

Similarly to the discussion in Sec.~\ref{sec:deterministic}, the problem of finding the optimal supermap that transforms  $
\smap{S}^{\rho} : \map{U}^{\otimes k} \mapsto f(\map{U}) \rho f(\map{U})^{\dagger}$ 
in a deterministic approximate manner, while achieving optimal average fidelity, can be formulated through the following constraints:  
\begin{align} \label{eq:det_k}
\max \quad & \expval{F^{\psi}} = \Tr(\Omega^\psi S^\psi)\mathrm{d} U. \\
\textup{s.t.} \; & \map{S}^{\psi} \;\textup{is a valid superchannel }\\
&\textup{corresponding to a parallel, sequential, or general strategy.}
\end{align}
where $\Omega^\psi := \int_{\text {Haar }} \dketbra{\overline{U}}{\overline{U}}^{\otimes k}_{\I\O} \otimes f(U) \ketbra{\psi} f(U)^\dagger$  (see Sec.~\ref{sec:SDP} for more discussion on how do evaluate $\Omega^\psi$.
Deterministic approximate $k > 1$, realisations again fall into three classes of protocols: sequential, parallel, and general, with a performance hierarchy $\textup{PAR} \;\leq\; \textup{SEQ} \;\leq\; \textup{GEN},$ 
quantified in terms of optimal average fidelity.\footnote{That is, any task achievable with a sequential strategy can also be performed using a general strategy, but the converse does not hold.}

Additionally, as discussed in Sec.~\ref{sec:deterministic}, the definition of the performance 
operator allows us to identify the problem in Eq.~\eqref{eq:det_k} as an SDP. We also refer to Sec.~\ref{sec:deterministic} for a discussion on state fidelity and its relationship with white noise robustness as figure of merit. 
\subsection{The repeat until success parallel approach}\label{sec:RUS}

Let us briefly comment on a class of strategies known as Repeat-Until-Success (RUS) protocols \cite{Dong_2021}. 
A RUS protocol is a probabilistic procedure using $k>1$ calls of the input unitary in which, upon failure, the initial state can be reconstructed and the protocol repeated. 

For unknown input state protocols the RUS strategy cannot in general be applied. Namely, in an unknown input supermap protocol, when the protocol fails, the state is consumed by the failure operation. Since this operation is unknown, the state cannot be reconstructed and the protocol cannot be repeated\footnote{A supermap acting on an unknown input state can allow repetition only in specially designed cases, such as the success-or-draw protocol \cite{Dong_2021}, where the input state is preserved in case of failure.}. On the other hand, for known input protocols, on the other hand, RUS is always possible since we have a perfect classical description of the state \(\rho\) on which the transformed unitary \(f(U)\) is applied. This classical information allows us to re-prepare \(\rho\) and thus repeat the strategy upon failure of the supermap, up to the number of available copies of the input unitary.

The overall success probability of the RUS strategy can be expressed in terms of the success probability of a single run, denoted by 
\(p_S\):
\begin{align}
    p^{\textup{RUS}}_{S} = 1 - (1 - p_S)^{k},
\end{align}
where \(k\) denotes the number of repetitions of the task. As we show in Sec.~\ref{sec:SAR}, the RUS construction for transposition provides an exponential advantage compared to the unknown input case, and we conjecture it to be the optimal strategy. Importantly, RUS strategies have strong operational appeal: they rely on single-call protocols and do not require global quantum resource states for the encoder and decoder of a \(k\)-call protocol. Thus, even if the strategy is not strictly optimal, simplicity of implementation makes it particularly appealing.

\subsection{Exponential advantage in unitary storage-and-retrieval}\label{sec:SAR_k}
\begin{figure}[t!]
    \centering
\includegraphics[width=0.8\textwidth]{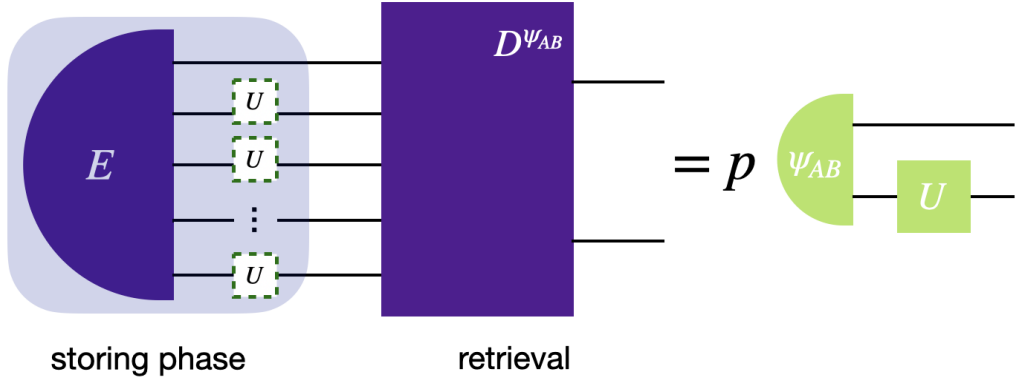}
    \caption{ Storage-and-retrieval protocol with known input state using $k$ parallel queries to the unknown unitary. }
    \label{fig:SAR_k'}
\end{figure}

As in the single call scenario, the multiple call SAR problem is the task where one stores $k$ calls of an unknown unitary operation on an storing quantum state, then, in a later moment, we aim to retrieve the action of $U$ on some known quantum state $\rho$ via a joint quantum operation, as illustrated in Fig.~\ref{fig:SAR_k'}. Previous literature~\cite{Bisio_2010,Sedl_k_2019} considered the case where the retrieval operation does not depend on the input state we aim to retrieve the desired operation and proved the the optimal success probability for any dimension $d$ and number of calls $k$ is given by
\begin{equation}
    p_{\textup{SAR}} = 1 - \left(1 - \frac{k}{k-1+d^2}\right).
\end{equation}

We now show that when the input state is known, the retrieval operation can be tailored to this information. In other words, the storing part of the protocol is completely universal, while the retrieved unitary implementation is tailored for an input known to the user. 
\SARk*
%
Note that while the probability of success presented in Eq.~\eqref{p_SAR_k*} approaches one exponentially in the number of calls $k$, the optimal performance in the unknown input state case approaches one linearly in $k$.

Furthermore, in the unknown input state case, the storing state acts globally across all calls of the input unitaries and is followed by an entangling measurement that consumes all available calls~\cite{Sedl_k_2019, Bisio_2010}. By contrast, the RUS protocol simply repeats the single-shot strategy, which means that we can have $k$ independent storing procedures. This is operationally easier to implement and also avoids consuming all uses of the unitaries if the early success is achieved. Although our protocol can formally be viewed as a parallel realisation, it does not require all unitary calls to be used simultaneously.

Finally, the RUS protocol presented here has a close relation with the port-based remote preparation protocol analysed in~\cite{Muguruza_2024}, although with operational and conceptual differences. The relationship of SAR with known input states and port-based remote preparation~\cite{Muguruza_2024} is discussed in Sec.~\ref{sec:Garazi}. 

\subsection{Probabilistic unitary transposition, conjugation, and inversion with multiple calls}\label{sec:application_k>1}

In the following, we will focus solely on the case of probabilistic exact implementations. Moreover, we will restrict to parallel strategies, as they offer the simplest, yet operationally appealing, case. As previously mentioned, parallel strategies also provide a lower bound for all other strategies. Therefore, the following results can be considered as lower bounds for sequential and general tasks as well.

\paragraph{Parallel transposition} As stated in Thm.~\ref{thm:SAR_transp_Eq}, transposition is equivalent to SAR and so we have a construction for transposition also in terms of RUS strategy.

\paragraph{Parallel conjugation} For the unknown state case, Ref.~\cite{Quintino2019prob} proved that if \(k < d-1\), any probabilistic protocol necessarily has zero success probability. Here we establish a strictly stronger result: even when the input state is known, if \(k < d-1\), any probabilistic protocol still has zero success probability. Moreover, since deterministic and exact unitary conjugation is possible for \(k = d-1\), even in the unknown state case~\cite{Miyazaki_2017}, we see a fundamental contrast: when \(k < d-1\) the success probability is \(p=0\), whereas for \(k = d-1\) it is \(p=1\).

\begin{restatable}{theorem}{ThmConjgk}\label{thm:conj_k}
Any probabilistic protocol (parallel, sequential or general) transforming $k< d-1$ calls of the $d-$dimensional unitary operation $U \in \mathrm{SU}(d)$ into conjugation operation $\overline{U}$, with some probability $p$ independent of $U$, on a known (but arbitrary) input state $\ket{\psi}\in\mathbb{C}^d$ necessarily has probability $p=0$. 
\end{restatable}

\paragraph{Parallel inversion} Similarly to the unitary conjugation case,  Ref.~\cite{Quintino2019prob} proved that when \(k < d-1\), any probabilistic inversion protocol necessarily has zero success probability. Here we prove a strictly stronger result: even when the input state is known, if \(k < d-1\), any probabilistic protocol still has zero success probability. We remark that the proof for the unknown state case in Ref.~\cite{Quintino2019prob} relies on the fact that unitary inversion can be achieved by composing a unitary conjugation protocol with a unitary transposition protocol. In the known input case considered here, such a composition is not possible, since the protocols transform unitary operations into quantum states (as opposed to unitary operations into other unitary operations). This necessitates a different proof method from the one used in Ref.~\cite{Quintino2019prob}.

\begin{restatable}{theorem}{ThmInvgk}
Any probabilistic protocol (parallel, sequential or general) transforming $k< d-1$ calls of the $d-$dimensional unitary operation $U \in \mathrm{SU}(d)$ into inversion operation $U^{-1}$, with some probability $p$ independent of $U$, on a known (but arbitrary) input state $\ket{\psi}\in\mathbb{C}^d$ necessarily has probability $p=0$. 
\end{restatable}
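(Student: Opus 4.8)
The plan is to prove the statement by the same representation-theoretic route that underlies the companion conjugation no-go (Thm~\ref{thm:k_prob_inv}), exploiting that the obstruction for inversion is the \emph{antisymmetric} structure of the output rather than any compositional relation between transposition and conjugation (which, as noted in the text, is unavailable here since our supermaps output states, not channels). Let $S_{\checkmark,\I\O\F}\ge 0$ be the success branch of a superinstrument---parallel, sequential, or general---satisfying $S_{\checkmark}*\ChoiU^{\otimes k}_{\I\O}=p\,U^{-1}\ketbra{\psi}{\psi}U$ for all $U\in\mathrm{SU}(d)$. First I would reduce to a covariant protocol: by Thm~\ref{thm:twirl_prob_constr} the input-state-independent twirl produces $S^{c}_{\checkmark}$ achieving the same $p$ with $[S^{c}_{\checkmark},\,U^{\otimes k}_{\I}\otimes\id^{\otimes k}_{\O}\otimes U_{\F}]=0$ for every $U$. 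Crucially, this reduction uses only linearity, positivity, and the superinstrument constraints, all of which are preserved by the twirl within each of the three causal classes; hence it suffices to rule out covariant protocols, and the conclusion applies uniformly to parallel, sequential, and general strategies.

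The second step is to read off the intertwiner hidden in the performance condition. Writing $S^{c}_{\checkmark}=\sum_\alpha\ketbra{S^\alpha}{S^\alpha}$ and contracting the $\I\O$ legs against $\ChoiU^{\otimes k}$, one obtains $\F$-vectors $\ket{g_\alpha(U)}\coloneqq(\dbra{\overline U}^{\otimes k}_{\I\O}\otimes\id_\F)\ket{S^\alpha}$ whose components are \emph{holomorphic} in the entries of $U$, homogeneous of degree exactly $k$ (each factor $\dbra{\overline U}=\dbra{\id}(\id_\I\otimes U^{T}_\O)$ contributes a single power of $U$). The condition then reads $\sum_\alpha\ketbra{g_\alpha(U)}{g_\alpha(U)}=p\,U^{-1}\ketbra{\psi}{\psi}U$. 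On $\mathrm{SU}(d)$ one has $U^{-1}=U^\dagger=\operatorname{adj}(U)$, so the right-hand side is the rank-one projector onto $\operatorname{adj}(U)\ket\psi$, a vector whose components, as functions on the group, are matrix coefficients of the anti-fundamental irrep $\overline{\square}\cong\Lambda^{d-1}\mathbb{C}^d$. Equivalently, decomposing $\mathcal{H}_\I\cong\bigoplus_{\lambda\vdash k}\mathcal{R}_\lambda\otimes\mathcal{M}_\lambda$ and using covariance, the condition forces a nonzero $\mathrm{SU}(d)$-intertwiner from the input content $\{\mathcal{R}_\lambda:\lambda\vdash k\}$ into the irrep $\overline{\square}$ carried by the output index in $\mathcal{H}_\F$.

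The third step invokes Schur's lemma together with Schur--Weyl duality: such an intertwiner is nonzero only if $\overline{\square}$ is isomorphic to some $\mathcal{R}_\lambda$ with $\lambda\vdash k$. But $\overline{\square}=\Lambda^{d-1}\mathbb{C}^d$ is the single-column Young diagram with $d-1$ boxes, which first appears in $(\mathbb{C}^d)^{\otimes k}$ only for $k\ge d-1$; since $0\le k<d-1<d$ leaves the $k$-box diagrams irreducible (there is no column of $d$ boxes to strip), $\overline{\square}$ is absent from $\bigoplus_{\lambda\vdash k}\mathcal{R}_\lambda$. As the left-hand side can therefore populate only irreps with at most $k$ boxes, while for $p>0$ the right-hand side is a nonzero operator whose range transforms in $\overline{\square}$, the identity is impossible unless $p=0$. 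The single-use claim of Thm~\ref{thm:InvProbSingl} for $d>2$ is the special case $k=1$, and the identical argument---with $\operatorname{adj}(U)\ket\psi$ replaced by $\overline U\ket\psi$, which also transforms in $\overline{\square}$---delivers the companion conjugation theorem.

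I expect the main obstacle to be the rigorous passage from the matrix-valued functional identity to the single-intertwiner statement. Naive degree or box counting on $\mathrm{SU}(d)$ is deceptive, because the relations $U^\dagger U=\id$ and $\det U=1$ trade antiholomorphic factors for holomorphic cofactors and can lower apparent degrees; one must instead argue through the Peter--Weyl decomposition, projecting the identity onto the $\overline{\square}$ isotypic sector of the left-regular representation and checking that the proportionality scalars $c_\alpha(U)=\langle\operatorname{adj}(U)\psi\,|\,g_\alpha(U)\rangle$ cannot manufacture the missing antisymmetric irrep out of purely degree-$k$ holomorphic data. Care is also required to confirm that for a general (causally indefinite) $S^{c}_{\checkmark}$ the twirl of step one remains within the admissible set, so that the no-go genuinely covers processes without definite causal order.
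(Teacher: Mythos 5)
Your overall strategy is viable and genuinely different from the paper's, but as written it has a gap at its central step, one you yourself flag without closing. The paper does not argue on the full group at all: it reduces a general protocol to a parallel one (using that any general supermap can be simulated by a parallel one with non-zero probability), fixes $\ket{\psi}=\ket{+_d}$, and restricts the input to \emph{diagonal} unitaries $U=\sum_s e^{i\phi_s}\ketbra{s}{s}$. Expanding encoder and decoder in the computational basis turns the defining equation into a scalar identity between trigonometric polynomials $\sum_{\vec{\gamma}}A_{\vec{\gamma},l}\,e^{i\vec{\gamma}\cdot\vec{\phi}}$ whose exponent vectors satisfy $\vec{\gamma}\geq 0$ and $|\vec{\gamma}|=k+1<d$; the pigeonhole principle then supplies an index $l^{*}$ with $\gamma_{l^{*}}=0$, and linear independence of the characters forces every coefficient, hence $p$, to vanish. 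Note that for diagonal $U$ one has $\overline{U}=U^{-1}$, so this single argument covers conjugation and inversion simultaneously --- which is exactly how the paper sidesteps the non-compositionality issue you correctly identify. Your Schur--Weyl version is, in effect, the globalisation of this torus-weight count (your statement that $\Lambda^{d-1}\mathbb{C}^{d}$ does not occur in $(\mathbb{C}^{d})^{\otimes k}$ for $k<d-1$ is the correct global analogue of the pigeonhole step), and your reduction to covariant protocols via Thm.~\ref{thm:twirl_prob_constr} is legitimate for all three causal classes.

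The gap is the passage from the quadratic identity $\sum_{\alpha}\ketbra{g_{\alpha}(U)}{g_{\alpha}(U)}=p\,U^{-1}\ketbra{\psi}{\psi}U$ to the ``nonzero intertwiner into $\overline{\square}$'' statement: this is asserted, not proved, and it is precisely where the work lies, since the left-hand side is a sum of holomorphic-times-antiholomorphic matrix coefficients and on $\mathrm{SU}(d)$ these sectors mix. The step can be closed as follows. Because the right-hand side is positive and rank one, each term must satisfy $\ket{g_{\alpha}(U)}=c_{\alpha}(U)\,U^{-1}\ket{\psi}$ pointwise, equivalently $U\ket{g_{\alpha}(U)}=c_{\alpha}(U)\ket{\psi}$ with $c_{\alpha}(U)=\bra{\psi}U\ket{g_{\alpha}(U)}$ a \emph{holomorphic} homogeneous polynomial of degree $k+1$ (not the mixed expression $\langle \operatorname{adj}(U)\psi\,|\,g_{\alpha}(U)\rangle$ you propose, which reintroduces antiholomorphic factors). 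Both sides of $U\ket{g_{\alpha}(U)}=c_{\alpha}(U)\ket{\psi}$ are homogeneous of degree $k+1$, so agreement on $\mathrm{SU}(d)$ extends by Zariski density and homogeneity to all of $M_{d}(\mathbb{C})$. Applying $\operatorname{adj}(U)$ gives $\det(U)\,g_{\alpha,i}(U)=c_{\alpha}(U)\,(\operatorname{adj}(U)\psi)_{i}$; since $\det$ is an irreducible polynomial of degree $d$ that cannot divide the degree-$(d-1)$ cofactor components, it must divide $c_{\alpha}$, which is impossible for $\deg c_{\alpha}=k+1\leq d-1$ unless $c_{\alpha}\equiv 0$, whence $g_{\alpha}\equiv 0$ and $p=0$. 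With this (or an equivalent Peter--Weyl projection) supplied, your proof is complete and constitutes a genuine alternative to the paper's; without it, the key implication is only a plausibility claim.
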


Note that when \(d=2\) (qubits), an single-call RUS strategy can be directly applied, yielding the same probability as in the transposition and SAR cases. As mentioned in Sec.~\ref{sec:inversion}, this follows from the fact that the inversion protocol for qubits is equivalent to transposition: for \(d=2\), conjugation can always be achieved perfectly by applying \(Y\) gates around the unitary, as illustrated in Fig.~\ref{fig:conjgKNpure}, thereby converting the transposition strategy into inversion. On the other hand, when \(d>2\) and \(k \geq d-1\), the a direct implementation of the single-call RUS strategy for inversion yields \(p_{\textup{RUS,inv}} = 0\), since the single-call strategy for inversion with \(d>2\) has zero success probability. 
However, for \(k \geq d-1\) with \(d>2\), a slight modification of the RUS strategy can be employed. Specifically, since deterministic and exact conjugation is possible with \(k' = d-1\) calls (where \(k'\) denotes the number of calls required for conjugation), the available calls for unitary inversion can be grouped into sets of \(d-1\). Hence, instead of repeating single call strategy, we proceed by repeating $d-1$ strategy, obtaining the probability:
\begin{align}
    p_{\textup{inv,k}} = 1 - \left(1- \frac{1}{2\norm{\Tr_{A}(\ketbra{\psi}_{AB} )}}\right)^{\left\lfloor \tfrac{k}{d-1} \right\rfloor}.
\end{align}
This probability can be regarded as a lower bound for the inversion protocol 
with multiple calls, with $k\geq d-1$.

\subsection{Numerical results}\label{sec:numerics_main}

In this section, we present numerical results obtained by implement the SDP formulation of the problems in Matlab using the interpreter cvx~\cite{cvx} and the solver MOSEK solver~\cite{mosek}, all code used in this work is available on at the online repository of Ref.~\cite{MTQ_github_HOQO_KS}. 
Table~\ref{tab:transposition} and Table~\ref{tab:inversion} show the optimal success probabilities and optimal average fidelities, respectively, for the transposition and inversion protocol, considering $k \in \{2,3\}$ uses of the input operation and dimensions $d \in \{2,3\}$. 
\begin{table}[!ht]
\centering
\renewcommand{\arraystretch}{1.4}
\setlength{\arrayrulewidth}{0.3mm}
\begin{subtable}[t]{0.46\textwidth}
\centering
\begin{tabular}{c|c|c|c}
\multicolumn{4}{c}{\( \boldsymbol{p^{\textup{max}}_{\textup{trans}}(k = 2, d)}\)} \\
\hline
\(d\) & \textbf{Parallel} & \textbf{Adaptive} & \textbf{General} \\
\hline
2 & 0.7500 & 0.8334  & 0.8334 \\
3 & 0.5556 & 0.6112 & 0.6112\\
\specialrule{0.8pt}{0.8pt}{0.8pt}
\multicolumn{4}{c}{\(\boldsymbol{p^{\textup{max}}_{\textup{trans}}(k = 3, d)}\)} \\
\hline
\(d\) & \textbf{Parallel} & \textbf{Adaptive} & \textbf{General} \\
\hline
2 &0.8750 & 1& 1 \\
\hline
\end{tabular}
\subcaption{Optimal success probabilities for the transposition protocol}
\label{tab:numerics_prob}
\end{subtable}
\hspace{2em}
\begin{subtable}[t]{0.46\textwidth}
\centering
\begin{tabular}{c|c|c|c}
\multicolumn{4}{c}{\(\boldsymbol{\expval{F^{\textup{max}}_{\textup{trans}}}(k = 2, d)}\)} \\
\hline
\(d\) & \textbf{Parallel} & \textbf{Adaptive} & \textbf{General} \\
\hline
2 & 0.9396 & 0.9671 & 0.9671 \\
3 & 0.7530 & 0.8190 & 0.8190 \\
\specialrule{0.8pt}{0.8pt}{0.8pt}
\multicolumn{4}{c}{\(\boldsymbol{\expval{ F^{\textup{max}}_{\textup{trans}}}(k = 3, d)}\)} \\
\hline
\(d\) & \textbf{Parallel} & \textbf{Adaptive} & \textbf{General} \\
\hline
2 & 0.9769 & 1& 1 \\
\hline
\end{tabular}
\subcaption{Optimal average fidelities for the transposition protocol}
\label{tab:numerics_det}
\end{subtable}
\caption{Numerical results for the transposition protocol with $k \in \{2,3\}$ calls and $d \in \{2,3\}$}
\label{tab:transposition}
\end{table}


\begin{table}[!ht]
\centering
\renewcommand{\arraystretch}{1.4}
\setlength{\arrayrulewidth}{0.3mm}

\begin{subtable}[t]{0.46\textwidth}
\centering
\begin{tabular}{c|c|c|c}
\multicolumn{4}{c}{\( \boldsymbol{p^{\textup{max}}_{\textup{inv}}(k = 2, d)}\)} \\
\hline
\(d\) & \textbf{Parallel} & \textbf{Adaptive} & \textbf{General} \\
\hline
2 & 0.7500 & 0.8334  & 0.8334 \\
3 & 0.3333 & 0.3333 & 0.3333 \\
\specialrule{0.8pt}{0.8pt}{0.8pt}
\multicolumn{4}{c}{\(\boldsymbol{p^{\textup{max}}_{\textup{inv}}(k = 3, d)}\)} \\
\hline
\(d\) & \textbf{Parallel} & \textbf{Adaptive} & \textbf{General} \\
\hline
2 & 0.8750 & 1 & 1\\
\hline
\end{tabular}
\subcaption{Optimal success probabilities for the inversion protocol}
\label{tab:numerics_prob2}
\end{subtable}
\hspace{2em}
\begin{subtable}[t]{0.46\textwidth}
\centering
\begin{tabular}{c|c|c|c}
\multicolumn{4}{c}{\(\boldsymbol{\expval{ F^{\textup{max}}_{\textup{inv}}}(k = 2, d)}\)} \\
\hline
\(d\) & \textbf{Parallel} & \textbf{Adaptive} & \textbf{General} \\
\hline
2 & 0.9396 & 0.9671 & 0.9671\\
3 & 0.6710 & 0.7443 & 0.7443 \\
\specialrule{0.8pt}{0.8pt}{0.8pt}
\multicolumn{4}{c}{\(\boldsymbol{\expval{F^{\textup{max}}_{\textup{inv}}}(k = 3, d)}\)} \\
\hline
\(d\) & \textbf{Parallel} & \textbf{Adaptive} & \textbf{General} \\
\hline
2 & 0.9769 & 1 & 1 \\
\hline
\end{tabular}
\subcaption{Optimal average fidelities for the inversion protocol}
\label{tab:numerics_det2}
\end{subtable}

\caption{Numerical results for the inversion protocol with $k \in \{2,3\}$ calls and $d \in \{2,3\}$}
\label{tab:inversion}
\end{table}

As discussed in Sec.~\ref{sec:probabilistic}, in order to cast the maximisation problem for probabilistic exact transformations presented in Eq.~\eqref{Eq:prob_SDP1} as an SDP, we followed the steps of Ref.~\cite{Quintino2019prob} and replaced the infinite set of unitaries by a finite spanning set. Concretely, one can always find a finite collection of unitaries $\{U_{d,i}\}$ such that  
\begin{align}
\operatorname{Span}\!\left(\dketbra{U_d^{\otimes k}}{U_d^{\otimes k}} \,\middle|\, U_d \ \mathrm{unitary}\right)
= \operatorname{Span}\!\left(\dketbra{U_{d,i}^{\otimes k}}{U_{d,i}^{\otimes k}} \,\middle|\, U_{d,i} \in \{U_{d,i}\}\right).
\end{align}
Explicitly constructing such a basis is in general not straightforward, but for numerical purposes one can sample unitaries~$U_d$ uniformly at random with respect to the Haar measure. Because the subspace is finite dimensional, drawing enough unitaries at random from the Haar measure will almost surely yield a linearly independent spanning set. Checking independence is efficient, so we can iteratively sample and collect unitaries until we have a spanning set. This random sampling method lets us approximate the universal constraints well enough for numerical exploration and suffices for the purposes of this paper. Also, from the same argument present in Refs.~\cite{Quintino2019prob,Quintino_2022det}, we may assume without loss of performance that the Choi operator of the superchannels and superinstruments in the optimisation have real numbers. Finally, in principle on may obtain a more efficient code formulation of the problem by explicitly making use of the symmetries and by writing the constraints in a block diagonal form, similarly to what is done in Refs.~\cite{yoshida_exact,grinko2023gelfandtsetlinbasispartiallytransposed}. Since our main purpose here is to understand the behaviour of the problem for small values of $k$ and $d$ to develop intuition for proving analytical results, we leave this numerical optimisation possibility for future research.

As listed in Sec.~\ref{sec:numerics_intro}, the numerical tables allow us to draw several conclusions.  
First, let us notice that for $k=3$ and $d=2$, both unitary transposition and inversion on an arbitrary known state can be implemented deterministically and exactly, even with a pure input. This is in parallel to the universal input scenario where $k=4$ queries were required at $d=2$ to achieve deterministic exact realisation on a pure input state~\cite{yoshida_exact}. Secondly, for $k\in\{2,3\}$ and $d\in\{2,3\}$, we observe that the repeat-until-success strategy, discussed in Sec.~\ref{sec:SAR_k}, with success probability $p_{\mathrm{RUS}} = 1 - \left(1 - \tfrac{1}{d}\right)^{k}$ is optimal for both transposition and SAR. Thirdly, within the tested regime, general strategies seem to provide no advantage over sequential ones, unlike in the unknown input case where indefinite causal order can outperform sequential protocols~\cite{Quintino_2022det,Quintino2019prob}. Finally, while for unknown inputs deterministic parallel transposition and inversion are equivalent and attainable via estimation~\cite{Quintino_2022det,Bisio_2010}, this equivalence breaks in the known input setting. Moreover, unlike the unknown input case where sequential protocols cannot beat parallel ones when $k<d$~\cite{Yoshida2024}, here deterministic sequential protocols can indeed surpass parallel ones even in that regime.

\section{Comparison between storage-and-retrieval with known states and port-based state preparation}\label{sec:Garazi}
\begin{figure}[t!]
  \centering
  \begin{subfigure}[b]{0.45\textwidth}
    \includegraphics[width=\linewidth]{figures/SAR_k.png}
    \captionsetup{width=0.9\textwidth}
    \caption{ Circuit representation of probabilistic unitary SAR with $k$ calls of the input unitary and known bipartite input state, as considered and described in this work.}
    \label{fig:SAR_k_compar}
  \end{subfigure}
    \quad \quad 
  \begin{subfigure}[b]{0.45\textwidth}
    \includegraphics[width=\linewidth]{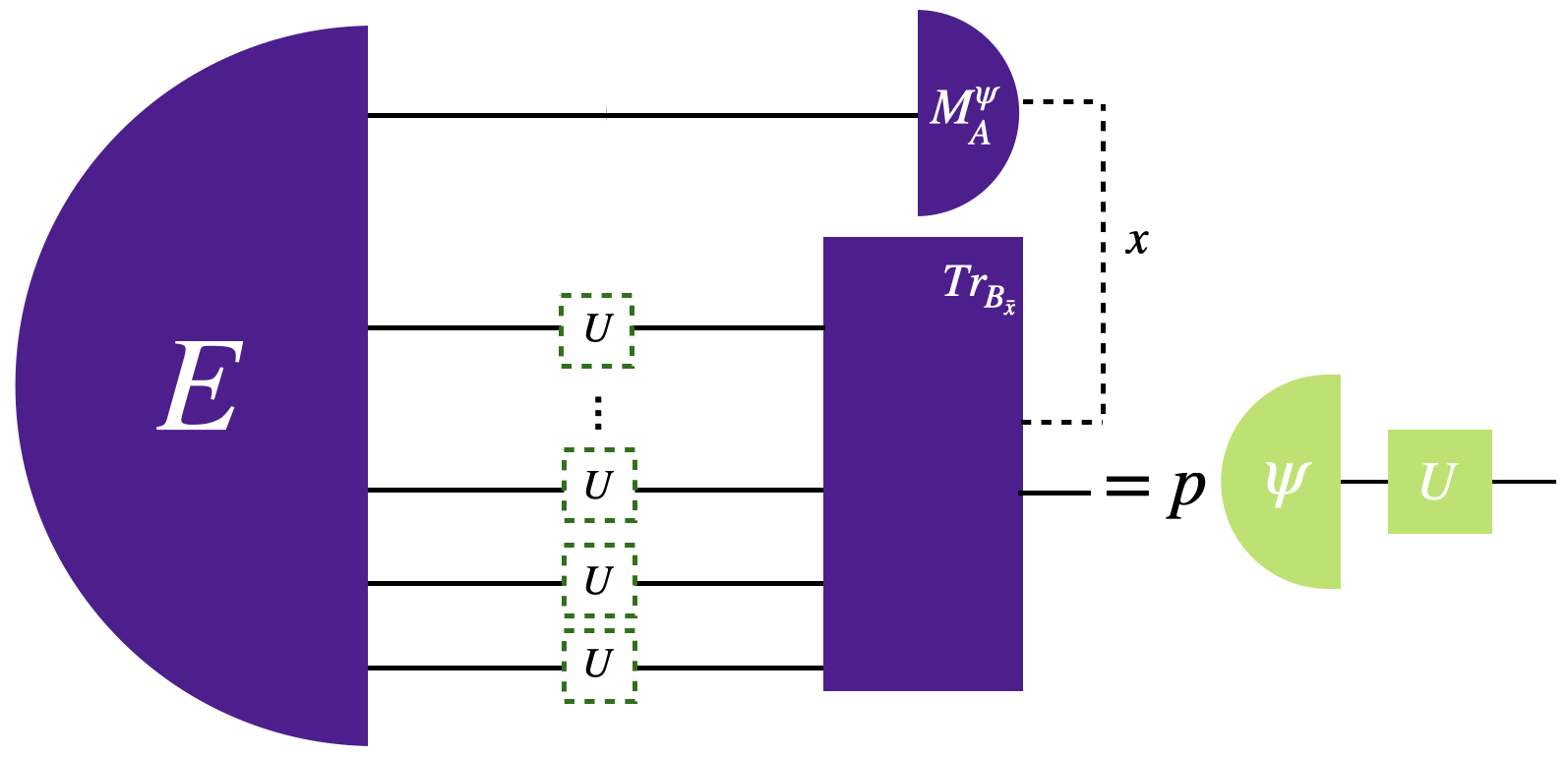}
    \captionsetup{width=0.9\textwidth}
    \caption{ Circuit representation of PBSP~\cite{Muguruza_2024} where Bob performs a unitary operation before the partial trace. The top wire where a measurement is performed represents Alice’s system and the others wires represent Bob's.}
    \label{fig:PBSPgeneral}
  \end{subfigure}
  \captionsetup{width=0.9\textwidth}
  \caption{Comparison between general SAR protocol with known bipartite input state and port-based state preparation as discussed in~\cite{Muguruza_2024}. We see from Fig.~\ref{fig:PBSPgeneral} that PBSP can be viewed as a particular instance of SAR, Fig.~\ref{fig:SAR_k_compar}. Namely, the decoder is restricted to a specific structure and the known bipartite input state used by our SAR protocol is separable. }
  \label{fig:comparisonSAR_PBSP}
\end{figure}

We now discuss the relationship between our results on SAR and Ref.~\cite{Muguruza_2024}, where the authors introduce the task of port-based state preparation (PBSP). Inspired by remote state preparation~\cite{Bennett_2001}, PBSP is a variant of port-based teleportation (PBT) where the state to be teleported is known to the party performing the teleportation. In the PBSP scenario, two parties, Alice and Bob, share an entangled qudit state $E \in \mathcal{L}\big((\mathbb{C}^d \otimes \mathbb{C}^d)^{\otimes k}\big)$, where each party holds $k$ subsystems, referred to as ports. To teleport the known target state $\ket{\psi}$ to Bob, Alice performs a joint measurement on her share of the system, where the measurement can now depend on $\ket{\psi}$. After obtaining the outcome, she communicates the classical result to Bob. As in port-based teleportation (PBT)~\cite{Mozrzymas2018PBT, Ishizaka2008TelepProgr}, Bob uses this classical information to select one of his $k$ qudits: he discards the rest and keeps the chosen port. Also as in PBT, Bob does not apply any additional correction to the teleported state; the protocol is designed so that the state appears directly in the selected port. As illustrated in Fig.~\ref{fig:comparisonSAR_PBSP} and noted in Ref.~\cite{Muguruza_2024}, any PBSP protocol can be seen as a special case of a SAR protocol in which the decoder depends on the known input state. The key difference is that a general SAR protocol allows the decoding operation to act jointly on all subsystems, as shown in Fig.~\ref{fig:SAR_k_compar}; therefore, not every SAR protocol can be reduced to a PBSP one.

Under the assumption that the parties share $k$ copies of maximally entangled qudits\footnote{Reference~\cite{Muguruza_2024} leaves as an open question whether a non-maximally entangled state can outperform maximally entangled ones for this task. Note that for PBT it is known that non-maximally entangled states do outperform maximally entangled ones~\cite{Studzinski_2017PBT, Mozrzymas2018PBT}.}, $\phi = \ketbra{\phi^{+}}{\phi^{+}}^{\otimes k}$, Ref.~\cite{Muguruza_2024} shows that the optimal probability of success is $p = 1 - \left(1 - \frac{1}{d}\right)^{k}$. Recall that the RUS strategy we present for retrieving the action of an arbitrary $d$-dimensional unitary operation on a bipartite state $\ket{\psi}_{AB}$ is
\begin{equation}
    p_{\textup{SAR}}^{\textup{RUS}}(k,d) = 1 - \left(1 - \frac{1}{d\norm{\Tr_{\A}(\ketbra{\psi}{\psi}_{AB} )}_{\textup{op}}}\right)^{k} \geq 1 - \left(1 - \frac{1}{d}\right)^{k},
\end{equation}
a value that coincides with the optimal PBSP for maximally entangled states used in the encoding phase and when $\ket{\psi_{AB}}$ is separable.  
\begin{figure}[t!]
    \centering
\includegraphics[width=0.58\textwidth]{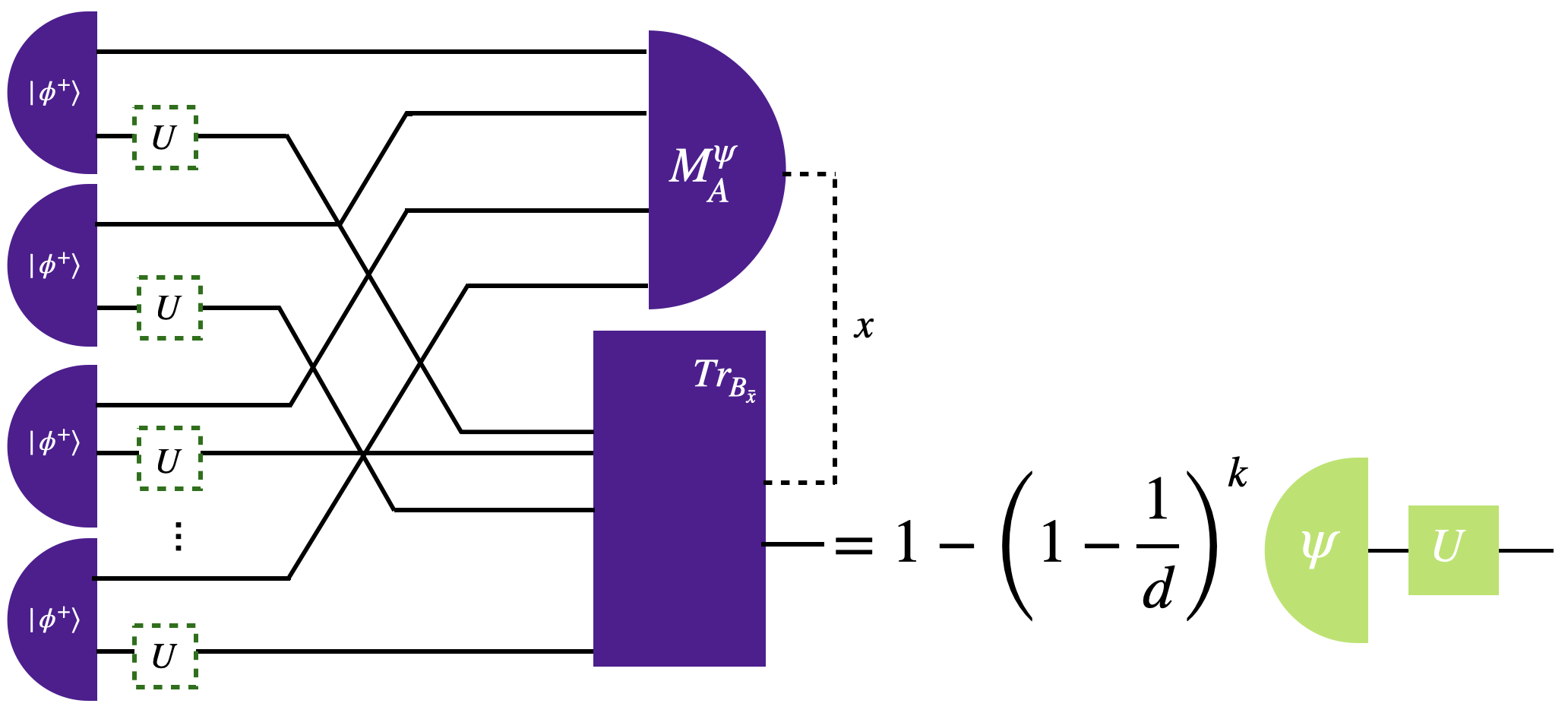}
    \caption{ Port-based state preparation protocol with maximally entangled encoding state presented in Ref.~\cite{Muguruza_2024}. Alice has the classical description of a pure state $\psi$ that she wishes to prepare on Bob's side with output probability $p$. Alice performs a measurement parametrised by the knowledge of $\psi$, followed by a single round of classical communication to Bob. Bob then traces out all subsystems except the $x$ register. }
    \label{fig:PBSP}
\end{figure}

We observe that, while the success probability for the pure input state RUS presented here matches that of PBSP in Ref.~\cite{Muguruza_2024}, there is a major operational difference between these two tasks. The PBSP protocol of~\cite{Muguruza_2024} consists of a single joint measurement that “consumes’’ all $k$ resource states $(\id\otimes U)\ket{\phi^+}$, whereas in the RUS strategy proposed here we perform $k$ independent qudit measurements. If success is achieved, part of the storing states remains available for a possible subsequent protocol. This makes our protocol simpler to implement and less resource-consuming. First, it avoids joint measurements, meaning that less entanglement is required in the decoding step. Second, because the success probability grows exponentially with the number of trials, for large $k$ only a small portion of the available states is typically consumed.

Since probabilistic PBSP can be viewed as a restriction of probabilistic unitary SAR, any upper bound on probabilistic SAR also applies to PBSP. 
In Sec.~\ref{sec:SAR}, we showed that in the single-call case the optimal probability of success for unitary SAR is exactly 
$p_{\textup{SAR}}^{\textup{RUS}}(k=1,d)=\frac{1}{d}$. 
This immediately implies that, when $k=1$, the optimal performance of PBSP is also $p_{\textup{PBSP}}(k=1,d)=\frac{1}{d}$, regardless of the initial resource state shared by the parties. 
Furthermore, in Sec.~\ref{sec:numerics_main}, numerical evidence suggests that 
$p_{\textup{SAR}}(k,d) \leq 1 - \left(1 - \frac{1}{d}\right)^{k}$. 
Since the success probability of unitary SAR upper bounds that of PBSP, our numerical results also suggest that 
$p_{\textup{PBSP}}(k,d) \leq 1 - \left(1 - \frac{1}{d}\right)^{k}$, again independently of the initial shared resource state.

Finally, we note that Ref.~\cite{Muguruza_2024} also considers a deterministic approximate version of PBSP, using the average and worst-case fidelity as figures of merit. However, due to the different nature of the problems and how they are formulated, the optimal average fidelities for deterministic PBSP are strictly smaller than those for unitary SAR. For instance, Ref.~\cite{Muguruza_2024} proves that when Alice and Bob share a maximally entangled state, the relation $p=\expval{F}$ holds: the probability of success coincides exactly with the average fidelity of the protocol. In our unitary SAR setting, it is straightforward to see that this relation does not hold. Namely, by analogy with the deterministic single-call case discussed in Sec.~\ref{sec:transp_det_single}, we can always convert a probabilistic protocol into a deterministic one by attempting a correction whenever failure occurs. This construction leads to an average fidelity
\begin{equation}
    \expval{F_{\textup{tot}}} = p_s \cdot 1 + (1 - p_s) \expval{F_{\textup{fail}}},
\end{equation}
which is strictly greater than $p_s$ whenever $p_s<1$.

\section{Discussion}

In this work, we explored higher-order quantum operations in a setting where the input operation is fully universal and unknown, but the state on which the transformation is to be implemented is assumed to be completely known. This allows the structure of the supermap, specifically, the encoder and decoder in the protocol, to be tailored to the given state. We show that exploiting this classical information enables enhanced performance in certain tasks, while in others, somewhat surprisingly, it provides no advantage.

At the level of formalism, this setting gives rise to a distinct class of symmetries compared to the universal input state case. We distinguished between input state independent and input state dependent symmetries (Secs.~\ref{sec:sym_prob} and \ref{sec:sym_det}), showing that classical knowledge of the input state fundamentally reshaped the structure of the problem. 

Our main findings can be summarised as follows:
\begin{itemize}
    \item For the probabilistic exact protocols that involve only a single call to the input unitary, we established that the implementations of transposition, storage-and-retrieval (SAR), and inversion exhibit a quadratic improvement compared to the case where the input state is universal and unknown. Similarly, in the deterministic approximate setting, all of the $\mathrm{SU}(d)$ homomorphisms and antihomomorphisms exhibit quadratic improvement in optimal average fidelity, compared to the universal case.
    \item Interestingly, we found that not all tasks in probabilistic exact realisation exhibit performance improvement. In particular, for protocols implementing probabilistic exact unitary conjugation, knowledge of the input state offers no advantage. This negative result shows that state knowledge is not a universal resource: structural obstructions rooted in the representation-theoretic properties of the unitary group remain, and no amount of state tailoring can overcome them in this case.
    \item In the case of protocols that implement a desired transformation on part of a known bipartite state, the amount of entanglement between the two subsystems directly influences the attainable performance. We proved that the success probability of the probabilistic exact protocol increases monotonically with the degree of entanglement, interpolating smoothly between the single-state result (with success probability $1/d$) and the maximally entangled case, which allows unit probability of success.
    \item When considering mixed states of the form $\rho = \eta \ketbra{\psi} + \frac{1-\eta}{d}\mathbb{1}$, we showed that it is possible to construct deterministic and exact protocols, provided that the visibility $\eta$ falls below a critical threshold. Interestingly, this threshold is determined by the optimal average fidelity achieved in the corresponding pure-state protocol. In other words, the quality of approximate pure-state implementations sets the boundary for exact realisations on noisy, depolarized states. This connection between approximate and exact regimes deepens the structural understanding of HOQO and provides a practical criterion for exact implementability.  
    \item In the storage-and-retrieval problem with multiple calls to the input unitary, the known state assumption allows for a repeat-until-success (RUS) strategy, which provides an exponential advantage compared to the corresponding unknown input state SAR protocol. Specifically, the success probability grows as $1 - (1-1/d)^{k}$, approaching unity exponentially fast in the number of calls $k$. This result is important from an operational perspective, since the RUS strategy relies only on the single shot protocol and is therefore simpler to implement than more general strategies that jointly process $k$ uses of the input unitary. Moreover, it allows us to preserve unused calls to the unitary whenever the protocol succeeds before all $k$ calls are consumed, providing a clear advantage from the resource point of view.
    \item While the previous conclusions follow from analytical results, we additionally present two conclusions based solely on numerical results: 
    \begin{itemize}
        \item Numerically, we saw that, for the case $k=3$ and $d=2$, both unitary transposition and inversion on an arbitrary known state can be achieved deterministically and exactly, even in the pure input state setting. This is in sharp parallel with the universal input state scenario, where $k=4$ queries and $d=2$ was required to reach deterministic and exact implementation. 
        \item Our numerical results indicate that for \(k \in \{2,3\}\) and \(d \in \{2,3\}\), the repeat-until-success strategy is optimal in both the transposition and SAR tasks.
        \item In the tested regime of \(k \in \{2,3\}\) calls to the input unitary and state dimensions \(d \in \{2,3\}\), general strategies offer no advantage over sequential ones. This contrasts with the unknown input state case, where indefinite causal order can outperform sequential protocols~\cite{Quintino_2022det,Quintino2019prob}.
    \end{itemize}   
\end{itemize}

From these results we see that, on the one hand, classical knowledge of the input state fundamentally changes the problem and allows for, in principle, higher performances of higher-order quantum operations. Moreover, the role of entanglement and noise demonstrates that the advantage depends crucially on the structure of the known input. 

These findings open several directions for future research. They represent a limiting case of higher-order quantum operations (HOQO) under varying non-universal assumptions. Other non-universal assumptions within HOQO include partial knowledge of the input state, gray-box protocols with partial information about the input unitary, and fine-tuned combinations of relative knowledge about the input state and the input unitary. For instance, Ref.~\cite{Grosshans2024Multicopy} considers a port-based teleportation scenario in which Alice has \(n\) copies of the state she aims to teleport. The case \(n=1\) corresponds to standard PBT, while the limit \(n \to \infty\) corresponds to port-based state preparation~\cite{Muguruza_2024}. In our terminology, this multi-copy regime, can be interpreted as a scenario where the \(n\) available copies provide partial knowledge of the input state, with the known input state case recovered in the limit where the number of unknown copies tends to infinity. 

Let us also remark that the assumption of a known input state is the particular case of the setup where the input wire is restricted to act on a certain subset. Namely, a wire in a quantum circuit usually represents the full set of states $\rho \in \{ \sigma \in \mathcal{L}(\mathcal{H}) \mid \sigma \ge 0, \mathrm{Tr}(\sigma)=1 \}$. Our framework can instead be understood as the limiting case where the input wires in the circuit are restricted to a subset of possible states $\sigma \in \{\rho_i\}_i$. In the present work, this set was restricted to only a single element $\sigma \in \{\rho\}$; however, generalisations to larger sets constitute an interesting direction for future investigation.

Equally crucial is the problem of compositionality: once universality is abandoned, the output of one supermap may no longer serve as a valid input for another unless the associated side information is carefully tracked. Developing a systematic theory of composition for non-universal HOQO therefore remains an open challenge. 

In conclusion, our work demonstrates that combining the universality of the input operation with classical knowledge of the input state unlocks a rich new regime of higher-order quantum computation. It provides both concrete performance gains and conceptual insights into the structure of HOQO, while also raising foundational questions about varying universality assumptions and the interplay between quantum and classical resources in higher-order frameworks.

\section{Acknowledgments}
We are grateful to Akihito Soeda for useful discussions regarding the proof of Thm.~1 of Ref.~\cite{Quintino2019prob}.
VB and MTQ acknowledge support by QuantEdu France, a State aid managed by the French National Research Agency for France 2030 with the reference ANR-22-CMAS-0001. MTQ acknologes the funding Tremplins nouveaux entrants \& nouvelles entrantes - Edition 2024, project HOQO-KS.
SY acknowledges support by Japan Society for the Promotion of Science (JSPS) KAKENHI Grant Number 23KJ0734, FoPM, WINGS Program, the University of Tokyo, and DAIKIN Fellowship Program, the University of Tokyo. MM acknowledges the MEXT Quantum Leap Flagship Program (MEXT QLEAP) JPMXS0118069605, JPMXS0120351339, the JSPS KAKENHI Grant Numbers 21H03394 and 23K21643, and IBM Quantum.
We acknowledge the Japanese-French Laboratory for Informatics (JFLI) for
the support on organising the Japanese-French Quantum
Information 2023 workshop.

\clearpage
\appendix
\addcontentsline{toc}{section}{Appendix}
\section*{Appendix}

\section{Proofs for the formalism of probabilistic exact realisation}

In this section we will prove the main points discussed in Sec.~\ref{sec:sym_prob}.

\IndependenceThmProb*

\begin{proof}
Let us first consider the case of $f$ being a homomorphism. Given $U \mapsto f(U)\ket{\psi}$ for all $U \in \mathrm{SU}(d)$ and some $\ket{\psi}\in \mathbb{C}^{d}$, it follows that $VU \mapsto f(VU)\ket{\psi} = f(V)f(U)\ket{\psi}$. Since $f$ is surjective, for all $\ket{\psi} \in \mathbb{C}^{d}$ there exists a unitary $V \in \mathrm{SU}(d)$ such that $\ket{\psi} = f(V)\ket{0}$. Therefore, $S^{\ket{0}}$ satisfies,
\begin{equation}
\begin{aligned}
 S^{\ket{0}}_{\I\O\F} * \dketbra{VU}{VU}^{\otimes k} &= p f(UV)\ketbra{0}{0 } f(UV)^{\dagger}\\
    &= p f(U)f(V)\ketbra{0}f(V)^{\dagger} f(U)^{\dagger} \\
    &= p f(U)\ketbra{\psi}{\psi} f(U)^{\dagger} =  S^{\ket{\psi}}_{\I\O\F} * \ChoiU^{\otimes k}.
\end{aligned}
\end{equation}

Analogously, we consider the case of $f$ being an antihomomorphism. Given $U \mapsto f(U)\ket{\psi}$ for all $U$ and some $\ket{\psi}$, it follows that $UV \mapsto f(UV)\ket{\psi} = f(U)f(V)\ket{\psi}$. Therefore, we obtain
\begin{equation}
\begin{aligned}
 S^{\ket{0}}_{\I\O\F} * \dketbra{UV}{UV}^{\otimes k} &= p f(VU)\ketbra{0}{0 } f(VU)^{\dagger}\\
    &= p f(U)f(V)\ketbra{0}{0} f(V)^{\dagger} f(U)^{\dagger} \\
    &= p f(U)\ketbra{\psi}{\psi} f(U)^{\dagger} =  S^{\ket{\psi}}_{\I\O\F} * \ChoiU^{\otimes k}.
\end{aligned}
\end{equation}

\end{proof}

\TwirlProbConstr*

\begin{proof}
 We begin by showing that, the constraint
\begin{equation}\label{eqv:proof_prob_twirl1}
     S^{c}_{\I\O\F}*\dketbra{U}{U}^{\otimes k}_{\I\O} = p\, f(U)\ketbra{\psi}{\psi}_{\F}f(U)^{\dagger} \quad \forall\, U\in\mathrm{SU}(d),
\end{equation}
implies the constraint
\begin{equation}   
   \map{\tau}(S)_{\I\O\F}*\dketbra{U}{U}^{\otimes k}_{\I\O} = p\, f(U)\ketbra{\psi}{\psi}_{\F}f(U)^{\dagger} \quad \forall\, U\in\mathrm{SU}(d),
\end{equation}
where the map \(\map{\tau}\) denotes a \emph{twirled supermap}.
The twirling is defined as follows:
\begin{itemize}
    \item For homomorphic \(f\),
    \begin{align}
\map{\tau}(S)\coloneqq\int {\left[ \mathbb{1}_{\textup{I}} \otimes \overline{B}^{\otimes k}_{\textup{O}} \otimes f(B)_{\textup{F}}\right] S }  {\left[ \mathbb{1}_{\textup{I}} \otimes \overline{B}^{\otimes k}_{\textup{O}} \otimes f(B)_{\textup{F}}\right]^{\dagger}  \,\mathrm{d} B },
\end{align}

\item For antihomomorphic \(f\),
\begin{align}
\map{\tau}(S)\coloneqq\int {\left[ \overline{B}^{\otimes k}_{\textup{I}} \otimes \mathbb{1}_{\textup{O}} \otimes f(B)_{\textup{F}}\right] S } {\left[ \overline{B}^{\otimes k}_{\textup{I}} \otimes \mathbb{1}_{\textup{O}} \otimes f(B)_{\textup{F}}\right]^{\dagger} \,\mathrm{d} B },
\end{align}
\end{itemize}
where the integral is taken with respect to the Haar measure, here and throughout the paper.

Let us focus first on the case of transposition function; the reasoning for other functions is analogous. Since Eq.~\eqref{eqv:proof_prob_twirl1} holds for all \(U\), we can make the change of variables \(U \mapsto UB\):
\begin{align}\label{eqv:proof9_1}
     S_{\I\O\F}*\dketbra{UB}{UB}^{\otimes k}_{\I\O} &= p\, B^T U^T\ketbra{\psi}{\psi}_{\F}\overline{U}\,\overline{B} \quad \forall\, U, B\in\mathrm{SU}(d).
\end{align}
Let us rewrite the right-hand side explicitly:
\begin{align}
     S_{\I\O\F}*\dketbra{UB}{UB}^{\otimes k}_{\I\O} &= \Tr_{\I \O}\!\left(S_{\I\O\F}\left( \mathbb{1}_{\I} \otimes  (\overline{U^{ \otimes k}B^{ \otimes k}})_{\O} \right)\dketbra{\mathbb{1}}{\mathbb{1}}^{\otimes k}_{\I\O}\left( \mathbb{1}_{\I} \otimes  \left(U^{ \otimes k}B^{ \otimes k}\right)^T_{\O} \right) \otimes \mathbb{1}_{\F}\right)\\
     &= \Tr_{\I \O}\!\left( \big( B^{ \otimes k}_{\I} \otimes  \mathbb{1}_{\O} \big)S_{\I\O\F}\big( (B^{ \otimes k}_{\I})^{\dagger} \otimes  \mathbb{1}_{\O} \big)\big(  \dketbra{\overline{U}}{\overline{U}}_{\I\O}^{ \otimes k} \otimes \mathbb{1}_{\F} \big)\right).
\end{align}
Substituting this back into Eq.~\eqref{eqv:proof9_1}, we obtain
\begin{align}
    \Tr_{\I \O}\!\left( \big( B^{ \otimes k}_{\I} \otimes  \mathbb{1}_{\O} \big)S_{\I\O\F}\big( (B^{ \otimes k}_{\I})^{\dagger} \otimes  \mathbb{1}_{\O} \big)\big(  \dketbra{\overline{U}}{\overline{U}}_{\I\O}^{ \otimes k} \otimes \mathbb{1}_{\F} \big)\right) &= p\, B^TU^T\ketbra{\psi}{\psi}_{\F}\overline{U}\,\overline{B}.
\end{align}
By multiplying from the left the equation above by $\overline{B}$ and right with the $B^{T}$, we obtain,
\begin{align}
    \Tr_{\I \O}\!\left(\big( B^{ \otimes k}_{\I} \otimes  \mathbb{1}_{\O} \otimes \overline{B}_{\F} \big)S_{\I\O\F} \big( (B^{ \otimes k}_{\I})^{\dagger} \otimes  \mathbb{1}_{\O} \otimes B^T_{\F} \big)\dketbra{\overline{U}}{\overline{U}}_{\I\O}^{ \otimes k} \right) &= p\, U^T\ketbra{\psi}{\psi}_{\F}\overline{U}.
\end{align}
By taking the Haar integral of both sides over $B$, we obtain
\begin{align}
\map{\tau}(S)_{\I\O\F}*\dketbra{U}{U}^{\otimes k}_{\I\O} = p\, f(U)\ketbra{\psi}{\psi}_{\F}f(U)^{\dagger} \quad \forall\, U\in\mathrm{SU}(d),
\end{align}
where the twirled supermap is defined by
\begin{align}\label{eqv:twirl_T}
\map{\tau}(S)\coloneqq\int & {\left[ B^{\otimes k}_{\textup{I}} \otimes \mathbb{1}_{\textup{O}} \otimes \overline{B}_{\textup{F}}\right] S } {\left[ (B_{\textup{I}}^{ \otimes k})^{\dagger} \otimes \mathbb{1}_{\textup{O}}^{T } \otimes B_{\textup{F}}^{T}\right]  \mathrm{d} B }.
\end{align}
Due to the left- and right-invariance of the Haar measure, the twirled supermap satisfies the covariance condition:
\begin{align}
    [\map{\tau}(S), U^{\otimes k}_{\I} \otimes \mathbb{1}_{\O} \otimes \overline{U}_{\F}] = 0, \quad \forall U\in \mathrm{SU}(d),
\end{align}
which completes the proof of the first part of the theorem.

For the second part of the proof, we have the constraint
\begin{equation}   
   S^{c}_{\I\O\F}*\dketbra{\mathbb{1}}{\mathbb{1}}^{\otimes k}_{\I\O} = p\, \ketbra{\psi}{\psi}_{\F},
\end{equation}
where \(S^{c} \coloneqq \map{\tau}(S)\) is again the twirled supermap defined in Eq.~\eqref{eqv:twirl_T}, and hence $[S^{c}, U^{\otimes k}_{\I} \otimes \mathbb{1}_{\O} \otimes \overline{U}_{\F}] = 0, \quad \forall U\in \mathrm{SU}(d)$. We want to show that this implies the constraint,
\begin{equation}
     S^{c}_{\I\O\F}*\dketbra{U}{U}^{\otimes k}_{\I\O} = p\, f(U)\ketbra{\psi}{\psi}_{\F}f(U)^{\dagger}.
\end{equation}
We will again consider the case of transposition map \(f(U) = U^{T}\). 
\begin{align}
    &S^c_{\I\O\F} * \dketbra{U}{U}_{\I\O} 
=   \Tr_{\I\O}\!\left( S^c_{\I\O\F} \dketbra{U}{U}^{T}_{\I\O} \otimes \mathbb{1}_{\F})\right)  \\
&=   \Tr_{\I\O}\!\left( (A_{\I} \otimes \mathbb{1}_{\O} \otimes \overline{A}_{\F})S^c_{\I\O\F}(A_{\I} \otimes \mathbb{1}_{\O} \otimes \overline{A}_{\F})^{\dagger}(\dketbra{\overline{U}}{\overline{U}}_{\I\O} \otimes \mathbb{1}_{\F})\right)  \\
&=\Tr_{\I\O}\!\left( (A_{\I} \otimes \mathbb{1}_{\O} \otimes \overline{A}_{\F})S^c_{\I\O\F}(A^{\dagger}_{\I} \otimes \mathbb{1}_{\O} \otimes A^T_{\F})((\mathbb{1}_{\I} \otimes \overline{U}_{\O})\dketbra{\mathbb{1}}{\mathbb{1}}_{\I\O}(\mathbb{1}_{\I} \otimes U^{T}_{\O}) \otimes \mathbb{1}_{\F})\right) \\
&= \Tr_{\I\O}\!\left((A_{\I} \otimes \mathbb{1}_{\O} \otimes \overline{A}_{\F})S^c_{\I\O\F}(A^{\dagger}_{\I} \otimes \mathbb{1}_{\O} \otimes A^T_{\F})((U^{\dagger}_{\I} \otimes \mathbb{1}_{\O})\dketbra{\mathbb{1}}{\mathbb{1}}_{\I\O}(U_{\I} \otimes \mathbb{1}_{\O}) \otimes \mathbb{1}_{\F})\right)\\
&= \Tr_{\I\O}\!\left( ((UA)_{\I} \otimes \mathbb{1}_{\O} \otimes \overline{A}_{\F})S^c_{\I\O\F}((UA)^{\dagger}_{\I} \otimes \mathbb{1}_{\O} \otimes A^T_{\F})(\dketbra{\mathbb{1}}{\mathbb{1}}_{\I\O} \otimes \mathbb{1}_{\F})\right).
\end{align}
Since the equality above holds for every $A$, we may just set $A=U^\dagger$, so that \begin{align}
    S^c_{\I\O\F} * \dketbra{U}{U}_{\I\O} &= U^{T}_{\F}\Tr_{\I\O}\!\left( S^c_{\I\O\F}(\dketbra{\mathbb{1}}{\mathbb{1}}_{\I\O} \otimes \mathbb{1}_{\F})\right) \overline{U}_{\F}\\
    &= U^{T}_{\F}(p \ketbra{\psi}{\psi}) \overline{U}_{\F}.
\end{align}
This closes the proof for transposition map. 

The argument extends analogously to other homomorphic and antihomomorphic functions \(f\), and which proves the theorem.
\end{proof}

\KNsym*

\begin{proof}
Let us start with transposition function $f(U) = U^{T}$. We are going to construct a covariant operator $S^c$ which respects the commutation relation $[S^c,V_f]=0$ and that attains the same performance of $S$. For that, we define the linear operator 
\begin{equation}
    S^{c}_{\I\O\F} \coloneqq \int_{\mathcal{V}\subseteq SU(d)} \mathrm{d}V \,(\mathbb{1}_{\I} \otimes V^{T \otimes k}_{\O} \otimes \id_F S_{\I\O\F})   (\mathbb{1}_{\I} \otimes \overline{V}^{\otimes k}_{\O} \otimes \id_F ), \quad \quad \textup{ where } V^T\ket{\psi}=\ket{\psi}.
\end{equation}
Here $\mathcal{V} \subseteq \mathrm{SU}(d)$ denotes the subset of unitary operators 
$V \in \mathrm{SU}(d)$ satisfying $V^{T}\ket{\psi} = \ket{\psi}$ for a given $\ket{\psi} \in \mathbb{C}^{d}$.
Since the elements of this subset themselves 
form a compact group, $\mathcal{V}$ carries its own Haar measure, which we use for integration. Direct calculation shows that $S^c$ respects the commutation relation $[S^c,V_f]=0$. We now verify that for $S^{c}_{IOF}$ we obtain
\begin{align}
     S^{c}_{\I\O\F} * \ChoiU^{\otimes k} &= \Tr_{\I\O}\big((\mathbb{1}_{\I} \otimes V^{T \otimes k}_{\O}) S_{\I\O\F} (\mathbb{1}_{\I} \otimes \overline{V}^{\otimes k}_{\O})\dketbra{\overline{U}}{\overline{U}}^{\otimes k}_{\I \O} \otimes \mathbb{1}_{\F}\big) \\
     &= S_{\I\O\F} * \dketbra{V^{\otimes k}U^{\otimes k}}{V^{\otimes k}U^{\otimes k}}_{\I \O} \\
     &= p U^T V^T\ketbra{\psi} \overline{V} \overline{U} \\
     & = p U^{T}\ketbra{\psi}{\psi } \overline{U}.
\end{align}
The argument extends analogously to other homomorphic and antihomomorphic functions \(f\). This proves the theorem.

\end{proof}

\section{Proofs for the formalism of deterministic approximate realisation}\label{append:proofs_det}

In this section, we first demonstrate that the performance operator introduced for known input states in Sec.~\ref{sec:deterministic} serves as a valid tool for evaluating the average performance of a superchannel.

\BipDet*
\begin{proof}

We begin with the definition of the state fidelity between two bipartite states of the form
\begin{equation} 
    F(\ketbra{\phi_{U}}{\phi_{U}}_{\A\F},f(U)_{\F}\ketbra{\psi}{\psi}_{\A\F}f(U)_{\F}^{\dagger} = \bra{\phi_{U}}_{\A\F} f(U)_{\F}\ketbra{\psi}{\psi}_{\A\F}f(U)_{\F}^{\dagger} \ket{\phi_{U}}_{\A\F}. 
\end{equation} 
where $\ketbra{\phi_{U}}{\phi_{U}}_{\A\F}$ denotes a state that depends on the unitary $U$. The average fidelity is then given by
\begin{align} 
    \expval{F^{\psi}}\coloneqq \int \mathrm{d}U \,F(\ketbra{\phi_{U}}{\phi_{U}}_{\A\F},f(U)_{\F}\ketbra{\psi}{\psi}_{\A\F}f(U)_{\F}^{\dagger}). 
\end{align} 
In the context of our theorem, we have $\ketbra{\phi_{U}}{\phi_{U}}_{\A\F} = S_{\I\O\F\A}*\dketbra{U}{U}_{\I\O}^{\otimes k}$ and hence,
\begin{equation}
\begin{aligned}
    \expval{ F^{(\psi)}} &= \int F( S_{\I\O\F\A}*\dketbra{U}{U}_{\I\O}^{\otimes k}, f(U)_{\F} \ketbra{\psi}{\psi }_{\A\F} f(U)^{\dagger}_{\F}) dU\\
    &= \int S_{\I\O\F\A}*\dketbra{U}{U}_{\I\O}^{\otimes k}* f(\overline{U})_{\F} \ketbra{\psi}{\A\F} f(U)^{T}_{\F} \mathrm{d}U\\
    &= \int \Tr( S_{\I\O\F\A}(\dketbra{\overline{U}}{\overline{U}}_{\I\O}^{\otimes k} \otimes f(U)_{\F} \ketbra{\psi}{\psi }_{\A\F} f(U)^{\dagger}_{\F})) \mathrm{d}U
\end{aligned}
\end{equation}
Form here we see that 
\begin{equation}
\begin{aligned}
    \Omega^{\psi}_{A\I\O\F} &= \int\dketbra{\overline{U}}{\overline{U}}_{\I\O}^{\otimes k} \otimes f(U)_{\F} \ketbra{\psi}{\psi }_{\F\A} f(U)^{\dagger}_{\F}) \mathrm{d}U \\
    &= \int\dketbra{\overline{U}}{\overline{U}}_{\I\O}^{\otimes k} \otimes \dketbra{f(U)}{f(U)}_{\F\F^{\prime}} * \ketbra{\psi}{\psi }_{\F^{\prime}\A}) \mathrm{d}U \\
    &= \ketbra{\psi}{\psi }_{\F^{\prime}\A} * \Omega_{\I\O\F\F^{\prime}} = \ketbra{\psi}{\psi }_{\P\A} * \Omega_{\P\I\O\F}
\end{aligned}   
\end{equation} 

\end{proof}

\IndependenceThmDet*

\begin{proof}
For the homomorphism, given $ S^{\ket{0}}_{\I\O\F}$ we can construct $S^{\ket{\psi}}_{\I\O\F} = S^{\ket{0}}_{\I^{\prime}\O\F} * \dketbra{V}{V}_{\I^{\prime}\I}$, where $V \in \mathrm{SU}(d)$ is s.t. satisfies $\ket{\psi} = f(V)\ket{0}$. Then it follows
\begin{equation}
\begin{aligned}
\expval{F}^{\ket{\psi}} &= \int \Tr((S^{\ket{\psi}}_{\I\O\F}*\ChoiU_{\I\O})f(U)\ketbra{\psi}{\psi}_{\F} f(U)^{\dagger}) \mathrm{d}U\\
 &= \int \Tr((S^{\ket{0}}_{\I^{\prime}\O\F} * \dketbra{V}{V}_{\I^{\prime}\I}*\ChoiU_{\I\O})f(U)f(V)\ketbra{0}_{\F} f(V)^{\dagger}f(U)^{\dagger}) dU\\
 &= \int \Tr((S^{\ket{0}}_{\I^{\prime}\O\F} *\dketbra{VU}{VU}_{\I^{\prime}\O})f(UV) \ketbra{0}_{\F} f(UV)^{\dagger}) \mathrm{d}U\\
 &= \int \Tr((S^{\ket{0}}_{\I^{\prime}\O\F} * \ChoiU_{\I\O})f(U)\ketbra{0}_{\F} f(U)^{\dagger}) \mathrm{d}U =\expval{F}^{\ket{0}}.
\end{aligned}
\end{equation}

For the antihomomorphism, given $ S^{\ket{0}}_{\I^{\prime}\O\F}$ we can construct $S^{\ket{\psi}}_{\I\O\F} = S^{\ket{0}}_{\I\O^{\prime}\F} * \dketbra{V}{V}_{\O^{\prime}\O}$, which satisfies
\begin{equation}
\begin{aligned}
\expval{F}^{\ket{\psi}} &= \int \Tr((S^{\ket{\psi}}_{\I\O\F}*\ChoiU_{\I\O})f(U)\ket{\psi}  \bra{\psi}_{\F} f(U)^{\dagger}) \mathrm{d}U\\
 &= \int \Tr((S^{\ket{0}}_{\I\O^{\prime}\F} * \dketbra{V}{V}_{\O^{\prime}\O}*\ChoiU_{\I\O})f(U)f(V)\ketbra{0}_{\F} f(V)^{\dagger}f(U)^{\dagger}) \mathrm{d}U\\
 &= \int \Tr((S^{\ket{0}}_{\I\O^{\prime}\F} * \dketbra{UV}{UV}_{\I\O})f(VU)\ketbra{0}_{\F} f(VU)^{\dagger}) \mathrm{d}U\\
 &= \int \Tr((S^{\ket{0}}_{\I^{\prime}\O\F} * \ChoiU_{\I\O})f(U)\ketbra{0}_{\F} f(U)^{\dagger}) dU =\expval{F}^{\ket{0}}.
\end{aligned}
\end{equation}

\end{proof}

\Omegacovariancef*

\begin{proof}   
The symmetries of a superchannel corresponding to the deterministic task are tightly related to those of the performance operator, as we will show below. We therefore begin by establishing the covariance properties of the performance operator, from which the covariance properties of the superchannel immediately follow.

Let us start with the performance operator for unitary transposition is given by
\begin{align}
    \Omega^{\psi}_{\I\O\F}
    &= \int\!\dketbra{\overline{U}}{\overline{U}}_{\I\O}^{\otimes k}
      \otimes \big(U^{T}_{\F}\,\ketbra{\psi}{\psi}\,\overline{U}_{\F}\big)\,\mathrm{d}U.
\end{align}

For transposition we prove $[\Omega^\psi, A_f] = 0$ for all $A\in \mathrm{SU}(d)$ with
\begin{align}
    A_f &= A_{\I}^{\otimes k}\otimes \mathbb{1}_{\O}^{\otimes k}\otimes \overline{A}_{\F}.
\end{align}
Then we have 
\begin{align}
A_f\,\Omega^{\psi}\,A_f^{\dagger}
   &= \int \dketbra{\overline{U}A^{T}}{\overline{U}A^{T}}^{\otimes k}
      \otimes \Big[\overline{A}\,U^{T}\ketbra{\psi}{\psi}\overline{U}\,A^{T}\Big]\,\mathrm{d}U.
\end{align}
By changing the variable $U$ to $W\coloneqq UA^\dagger$, we have
\begin{align}
A_f\,\Omega^{\psi}\,A_f^{\dagger}
    &= \int \dketbra{\overline{W}}{\overline{W}}^{\otimes k}
      \otimes \Big[W^{T}\ketbra{\psi}{\psi}\overline{W}\Big]\,\mathrm{d}W
    = \Omega^{\psi},
\end{align}
since Haar measure is invariant under right multiplication. This concludes the first part of the proof.

Now let us show the other part of the theorem, namely that $[\Omega^{\psi},V_f]=0$ for,
\begin{align}
V_f &= \mathbb{1}_{\I}^{\otimes k}\otimes \overline{V}_{\O}^{\otimes k}\otimes \mathbb{1}_{\F},
\quad \forall V \in \mathcal{S} \subset \mathrm{SU}(d),\\
\textup{where} \quad \mathcal{S} &\coloneqq
    \big\{
        V \in \mathrm{SU}(d)
        \;\big|\;
        V^{ T}\ket{\psi}=\ket{\psi}
    \big\}.
\end{align}
The action of $V_f$ on the performance operator is given by 
\begin{align}
V_f\,\Omega^{\psi}\,V_f^{\dagger}&= (\mathbb{1}_{\I}^{\otimes k}\otimes \overline{V}_{\O}^{\otimes k}\otimes \mathbb{1}_{\F})\left(\int \dketbra{\overline{U}}{\overline{U}}_{\I\O}^{\otimes k}
   \otimes \big(U^{T}\ketbra{\psi}{\psi}\overline{U}\big)\,\mathrm{d}U\right) (\mathbb{1}_{\I}^{\otimes k}\otimes \overline{V}_{\O}^{\otimes k}\otimes \mathbb{1}_{\F})^{\dagger}\\ 
&= \int \dketbra{\overline{V}\,\overline{U}}{\overline{V}\,\overline{U}}_{\I\O}^{\otimes k}
   \otimes \big(U^{T}V^T\ketbra{\psi}{\psi}\overline{V}\overline{U}\big)\,\mathrm{d}U.
\end{align}
By changing the variable $U$ to $W\coloneqq VU$ and invoking the invariance of Haar measure, we have
\begin{align}
V_f\,\Omega^{\psi}\,V_f^{\dagger}
&= \int \dketbra{\overline{W}}{\overline{W}}_{\I\O}^{\otimes k}
   \otimes \big(W^{T}\ketbra{\psi}{\psi}\overline{W}\big)\,\mathrm{d}W
= \Omega^{\psi},
\end{align}
which shows the theorem for the unitary transposition. The argument extends analogously to other homomorphic and antihomomorphic functions \(f\), and hence follows the claim.

Since $\Omega^{\psi}$ is invariant under these symmetries, we can define the group-averaging twirling map $\map{\tau}(\cdot)\coloneqq\int_U \int_V V_fU_f (\cdot) U_f^\dagger V_f^\dagger \mathrm{d}U\mathrm{d}V$, 
for which $\map{\tau}(\Omega^{\psi})=\Omega^{\psi}$. As proven in~\cite{Ebler_2023}, 
if $S$ is a valid superchannel, then $\map{\tau}(S)$ is also a valid superchannel. 
It then follows that for any given superchannel $S$, the covariant superchannel $\map{\tau}(S)$ attains the same performance. We can see this from the following chain of equalities 
$\Tr(\Omega^{\psi} S) = \Tr(\map{\tau}(\Omega^{\psi}) S) = \Tr(\Omega^{\psi}\map{\tau}(S))$. 
First equality holds because $\Omega^{\psi}$ is invariant under $\map{\tau}$, 
and the second because $\map{\tau}$ is self-adjoint with respect to the 
Hilbert–Schmidt inner product. 

\end{proof}

\subsection{Equivalence of the figures of merit} \label{sec:proofs_fig_mer}

We will prove that, under the assumptions of homomorphic and antihomomorphic functions considered here, the worst-case and average-case fidelities are equivalent. 
\begin{theorem} \label{thm:symm_det}
     Let $S \in \mathcal{L}\left(\mathcal{H}_I \otimes \mathcal{H}_O \otimes \mathcal{H}_F\right)$ be a parallel/sequential/general superchannel that transforms $k$ uses of a unitary operator $U \in \mathcal{S U}(d)$ into $f(U) \in \mathcal{S U}\left(d\right)$ with average fidelity $\expval{F}$. If $[S, U_f]=0$ holds where $U_f = U_{\I} \otimes \overline{U}_{\F}$ for transposition, $U_f = U_{\O} \otimes U_{\F} $ for conjugation and $U_f = U_{\I} \otimes U_{\F}$ for inversion, then the worst-case fidelity is given by $F_{\mathrm{wc}}=\expval{F}$, where $F_{\mathrm{wc}} = \underset{U \in \mathrm{SU}(d)}{\mathrm{inf}}F(S^{\psi}_{\I\O\F}*\ChoiU^{\otimes k}_{\I\O}, f(U)\ketbra{ \psi}{ \psi}_{\F}f(U)^{\dagger}_{\F} )$,as given by Eq.~\eqref{eq:def_F_wc}.
\end{theorem}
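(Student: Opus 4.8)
The plan is to show that, for a covariant superchannel, the pointwise fidelity $F(U)\coloneqq F\!\left(S*\ChoiU^{\otimes k}_{\I\O},\,f(U)\ketbra{\psi}{\psi}_{\F}f(U)^{\dagger}\right)$ is in fact \emph{independent} of $U$. Once this is established the claim $F_{\mathrm{wc}}=\expval{F}$ follows immediately, since a constant function attains its Haar average and its infimum at the same value, namely $F(\id)$.

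To carry this out I would first rewrite $F(U)$ in the trace form already derived in Eq.~\eqref{fidelity},
\begin{equation}
    F(U)=\Tr\!\left(S\,\big[\dketbra{\overline{U}}{\overline{U}}^{\otimes k}_{\I\O}\otimes f(U)\ketbra{\psi}{\psi}_{\F}f(U)^{\dagger}\big]\right),
\end{equation}
and then transfer the entire $U$-dependence onto a single conjugation by the covariance operator $U_f$. The key tool is the Choi-vector identity $\dket{\overline{U}}=(U^{\dagger}\otimes\id)\dket{\id}=(\id\otimes\overline{U})\dket{\id}$, which lets me extract the unitary from $\dketbra{\overline{U}}{\overline{U}}^{\otimes k}$. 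Focusing on transposition, where $f(U)=U^{T}$ and $U_f=U^{\otimes k}_{\I}\otimes\id_{\O}\otimes\overline{U}_{\F}$, I would use the first form to write $\dketbra{\overline{U}}{\overline{U}}^{\otimes k}_{\I\O}=(U^{\dagger\otimes k}_{\I}\otimes\id_{\O})\dketbra{\id}{\id}^{\otimes k}_{\I\O}(U^{\otimes k}_{\I}\otimes\id_{\O})$ and combine it with $U^{T}\ketbra{\psi}{\psi}\overline{U}=(U^{T}_{\F})\ketbra{\psi}{\psi}_{\F}(U^{T}_{\F})^{\dagger}$. The resulting $U$-dependent factor is exactly $U_f^{\dagger}(\,\cdot\,)U_f$ acting on the \emph{fixed} operator $\dketbra{\id}{\id}^{\otimes k}_{\I\O}\otimes\ketbra{\psi}{\psi}_{\F}$, since the $\I$-block $U^{\dagger\otimes k}_{\I}$ matches the Choi contribution and the $\F$-block $U^{T}_{\F}=\overline{U}^{\dagger}_{\F}$ matches the target state.

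With this identification I would invoke the covariance hypothesis $[S,U_f]=0$, which gives $U_f S U_f^{\dagger}=S$, so that by cyclicity of the trace
\begin{equation}
    F(U)=\Tr\!\left(U_f S U_f^{\dagger}\,\big[\dketbra{\id}{\id}^{\otimes k}_{\I\O}\otimes\ketbra{\psi}{\psi}_{\F}\big]\right)=\Tr\!\left(S\,\big[\dketbra{\id}{\id}^{\otimes k}_{\I\O}\otimes\ketbra{\psi}{\psi}_{\F}\big]\right)=F(\id),
\end{equation}
which is manifestly independent of $U$. The conjugation and inversion cases follow the same template with their respective covariance operators; the only extra remark needed there is that $[S,U_f]=0$ for all $U$ is equivalent to $[S,\overline{U_f}]=0$ for all $U$, because $U\mapsto\overline{U}$ is a bijection of $\mathrm{SU}(d)$, and this is precisely what matches the $\overline{U}$-factors produced by the Choi identity in the conjugation case. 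Finally, constancy of $F(U)$ yields $\expval{F}=\int_{\mathrm{Haar}}F(U)\,\mathrm{d}U=F(\id)=\inf_{U}F(U)=F_{\mathrm{wc}}$.

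The main obstacle I anticipate is purely bookkeeping: correctly merging the $U$-dependence coming from the two sources—the input Choi operator $\dketbra{\overline{U}}{\overline{U}}^{\otimes k}$ and the target $f(U)\ketbra{\psi}{\psi}f(U)^{\dagger}$—so that they assemble into a single conjugation by $U_f$ (or $\overline{U_f}$) rather than two mismatched unitary factors. This requires care with the transpose/conjugate conventions in the Choi-vector identity and with which of the two equivalent forms $(U^{\dagger}\otimes\id)$ or $(\id\otimes\overline{U})$ to deploy for each function, but presents no genuinely hard step beyond this alignment.
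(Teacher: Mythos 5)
Your proposal is correct and follows essentially the same route as the paper's proof: both arguments use the covariance $[S,U_f]=0$ together with the Choi-vector identity $\dket{\overline{U}}=(U^{\dagger}\otimes\id)\dket{\id}=(\id\otimes\overline{U})\dket{\id}$ to show that the pointwise fidelity $F(U)$ equals $F(\id)$ for every $U$, from which $F_{\mathrm{wc}}=\expval{F}$ is immediate. The only difference is cosmetic — you assemble the full $U$-dependence into a single conjugation of the fixed operator $\dketbra{\id}{\id}^{\otimes k}_{\I\O}\otimes\ketbra{\psi}{\psi}_{\F}$ and cancel it by cyclicity, whereas the paper first shows $S*\ChoiU_{\I\O}=f(U)\,(S*\dketbra{\id}{\id}_{\I\O})\,f(U)^{\dagger}$ and then evaluates the fidelity — and your explicit remark about replacing $U_f$ by $\overline{U_f}$ in the conjugation case correctly handles the one point that needs care.
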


\begin{proof}
 Let us start with transposition case where corresponding supermap satisfies symmetry $[S, U_{\I} \otimes \overline{U}_{\F} ]=0$. Then we have
\begin{align}
    S^{\prime}_{\I\O\F}*\ChoiU_{\I\O} &= \Tr_{\I\O}((U_{\I} \otimes \overline{U}_{\F})S_{\I\O\F}(U_{\I} \otimes \overline{U}_{\F})^{\dagger}(\ChoiU^{T}_{\I\O} \otimes \mathbb{1}_{\F}))\\
    &= \overline{U}_{\F} \Tr_{\I\O}(U_{\I}S_{\I\O\F}U^{\dagger}_{\I} ((\mathbb{1}_{\I} \otimes \overline{U}_{\O})\dketbra{\mathbb{1}}{\mathbb{1}}_{\I\O} (\mathbb{1}_{\I} \otimes U_{\O}^{T})))\overline{U}_{\F}^{\dagger}\\
    &= \overline{U}_{\F} \Tr_{\I\O}((\mathbb{1}_{\I} \otimes U_{\O}^{T})^{\dagger})U_{\I}S_{\I\O\F}U^{\dagger}_{\I} ((U_{\I} \otimes \mathbb{1}_{\O})\dketbra{\mathbb{1}}{\mathbb{1}}_{\I\O})\overline{U}_{\F}^{\dagger}\\
    &= \overline{U}_{\F} \Tr_{\I\O}(S_{\I\O\F} \dketbra{\mathbb{1}}{\mathbb{1}}_{\I\O})\overline{U}_{\F}^{\dagger}.
\end{align}

We can now calculate state fidelity between implemented state $S^{\prime}_{\I\O\F}*\ChoiU_{\I\O}$ and the perfect realisation $f(U)\ket{\psi}$, where $f(U) = U^T$:
\begin{align}
    F( S^{\prime}_{\I\O\F}*\ChoiU_{\I\O} ) &= \bra{\psi} \overline{U}_{\F} S^{\prime}_{\I\O\F}*\ChoiU_{\I\O} U_{\F}^{T}\ket{\psi}\\
    &= \bra{\psi}S_{\I\O\F} * \dketbra{\mathbb{1}}{\mathbb{1}}_{\I\O} \ket{\psi}.
\end{align}

Similar hold for other functions and their respective symmetries.  

Since fidelity for the case of particular $U$ is independent of $U$, then we have that $\expval{F}= F = F_{\mathrm{wc}}$. 

\end{proof}

\section{A useful lemma}
\begin{lemma}\label{lemma:identities}
Let $\rho$ be an operator $\rho_{12} \in\mathcal{L}(\mathcal{H}_1\otimes\mathcal{H}_2)$ and 
$P^{S}_{23}$, $P^{A}_{23}$ projectors onto the symmetric and antisymmetric subspaces of 
$\mathcal{L}(\mathcal{H}_2\otimes\mathcal{H}_3)$, given by
\begin{align}
P^{S}_{12} &= \frac{\mathbb{1}_{12} + \mathrm{SWAP}_{12}}{2}, \\[1mm]
P^{A}_{12} &= \frac{\mathbb{1}_{12} - \mathrm{SWAP}_{12}}{2},
\end{align}
where $\mathrm{SWAP}_{12}= \sum_{i,j=1}^{d} \ketbra{i}{j}_1 \otimes \ketbra{j}{i}_2 = \dketbra{\mathbb{1}}{\mathbb{1}}^{T_2}$.

Then, the following identities hold:
\begin{align}
\rho_{12} * P^{S}_{23} &=
\frac{\Tr_{2}(\rho_{12})\otimes \mathbb{1}_3 + \rho_{13}^{T_{3}}}{2}, \label{eq:Ps}\\[1mm]
\rho_{12} * P^{A}_{23} &= 
\frac{\Tr_{2}(\rho_{12})\otimes \mathbb{1}_3 - \rho_{13}^{T_{3}}}{2}. \label{eq:Pa}
\end{align}

\end{lemma}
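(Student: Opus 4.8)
The plan is to compute each link product directly from its definition, Eq.~\eqref{eq:link_product}, and isolate the contributions of the two terms in $P^{S}_{23}=\tfrac12(\mathbb{1}_{23}+\mathrm{SWAP}_{23})$ and $P^{A}_{23}=\tfrac12(\mathbb{1}_{23}-\mathrm{SWAP}_{23})$ separately. Since the link product is commutative when the spaces are tracked, I would write $\rho_{12}*P^{S}_{23}=\Tr_2\!\left([\rho_{12}^{T_2}\otimes\mathbb{1}_3][\mathbb{1}_1\otimes P^{S}_{23}]\right)$, and by linearity split this into an ``identity piece'' and a ``swap piece.'' Proving Eq.~\eqref{eq:Pa} then requires no extra work, as it is obtained from Eq.~\eqref{eq:Ps} simply by flipping the sign of the swap piece.

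First I would handle the identity piece, $\tfrac12\Tr_2(\rho_{12}^{T_2}\otimes\mathbb{1}_3)=\tfrac12\Tr_2(\rho_{12}^{T_2})\otimes\mathbb{1}_3$. The key observation here is that the partial trace over a space is invariant under the partial transpose on that same space, since transposition fixes the diagonal entries; hence $\Tr_2(\rho_{12}^{T_2})=\Tr_2(\rho_{12})$, yielding the term $\tfrac12\Tr_2(\rho_{12})\otimes\mathbb{1}_3$ that appears in both identities.

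Next I would treat the swap piece $\tfrac12\Tr_2\!\left([\rho_{12}^{T_2}\otimes\mathbb{1}_3][\mathbb{1}_1\otimes\mathrm{SWAP}_{23}]\right)$. The cleanest route is the elementary ``swap trick'': for any operator $A$ on a single space one has $\Tr_2\!\left([A_2\otimes\mathbb{1}_3]\,\mathrm{SWAP}_{23}\right)=A_3$, i.e.\ the swap transports the operator from slot $2$ to slot $3$. Applying this blockwise, with space $1$ as a spectator, by writing $\rho_{12}^{T_2}=\sum_{ij}\ketbra{i}{j}_1\otimes M^{ij}_2$, transports each block $M^{ij}$ from space $2$ to space $3$ and produces the operator $\rho^{T_2}$ with its second tensor factor relabelled $2\mapsto 3$. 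Tracking the two transpositions carefully then identifies this with $\rho_{13}^{T_3}$, giving the term $\tfrac12\rho_{13}^{T_3}$.

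Combining the identity and swap pieces yields Eq.~\eqref{eq:Ps}, and the minus sign in $P^{A}$ immediately yields Eq.~\eqref{eq:Pa}. The main obstacle is purely bookkeeping: one must keep straight the transposition $T_2$ built into the link-product definition versus the transposition $T_3$ appearing in the stated answer, together with the relabelling convention that defines $\rho_{13}$ as $\rho$ viewed on $\mathcal{H}_1\otimes\mathcal{H}_3$. Verifying that ``$T_2$ followed by the relabelling $2\mapsto3$'' equals ``the relabelling followed by $T_3$'' is the one step where a careless index could flip a transpose, so I would confirm it by an explicit matrix-element computation if any doubt remains.
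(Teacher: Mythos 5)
Your proposal is correct and follows essentially the same route as the paper: expand the link product from its definition, split $P^{S/A}$ by linearity into the identity and swap pieces, observe that the identity piece gives $\Tr_2(\rho_{12})\otimes\mathbb{1}_3$ (the partial transpose being invisible under the partial trace), and that the swap piece transports the second factor of $\rho^{T_2}$ from space $2$ to space $3$, yielding $\rho_{13}^{T_3}$. The paper's own computation of the swap term is just the basis-expanded version of your ``swap trick,'' so there is no substantive difference.
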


\begin{proof}
    Let us start from the identity involving the symmetric and antisymmetric projector,
   \begin{align}\label{eq:proof_lem}
       \rho_{12} * P^{S,A}_{23} &= \rho_{12} * \left( \frac{\mathbb{1}_{23} \pm \mathrm{SWAP}_{23}}{2} \right)
   \end{align}
   We now analyse the two terms separately. For the first term, we have
   \begin{align}
       \rho_{12}* \mathbb{1}_{23}= \Tr_2((\rho_{12} \otimes \mathbb{1}_{3})\otimes \mathbb{1}_{123})=\Tr_{2}(\rho_{12})\otimes \mathbb{1}_{3}.
   \end{align}
For the second term, we have
\begin{align}
       \rho_{12}* \dketbra{\mathbb{1}}{\mathbb{1}}^{T_3}_{23} &= 
       \Tr_{2}((\rho_{12} \otimes \mathbb{1}_3)(\dketbra{\mathbb{1}}{\mathbb{1}}_{12}^{T} \otimes \mathbb{1}_{3}))\\
       &= \sum_{ij}\Tr_{2}((\rho_{12} \otimes \mathbb{1}_3)(\ketbra{i}{j}_{1}\otimes \ketbra{j}{i}_{2}  \otimes \mathbb{1}_{3}))\\
       &= \sum_{ij}\bra{i}\rho_{12}\ket{j}\otimes \ketbra{j}{i}_{2}  \otimes \mathbb{1}_{3}=\rho_{13}^{T_{3}}.
   \end{align}
Substituting these results into Eq.~\eqref{eq:proof_lem} immediately yields the desired relations, completing the proof.
\end{proof}

\section{Proofs for the pure input state (without auxiliary system)}

\subsection{Transposition}\label{subsec:proof_trans}

\ThmProbTransSingle*

\begin{proof}  
To prove the theorem, we determine the supermap that, given a single use of the unitary operation $U \in \mathrm{SU}(d)$ and a known input state $\ketbra{\psi}{\psi}$, implements the transposition $f(U) = U^{T}$ with maximal success probability. Thm.~\ref{thm:twirl_prob_constr} implies that, without loss in
performance, we may restrict our analysis to supermaps whose Choi operators respect the
corresponding commutation relations 
\begin{equation}\label{project_basis_trans}
    [S_{\I\O\F},  U_{\I} \otimes \mathbb{1}_{\O} \otimes \overline{U}_{\F}] = 0,
\end{equation}
which ensure that $S$ can be written as
\begin{align}\label{eq:S_trans_prob}
    S_{\I\O\F} = A_{\O}  \otimes \ketbra{\phi^{+}}{\phi^{+}}_{\I\F} + B_{\O} \otimes (\mathbb{1}_{\I\F} -   \ketbra{\phi^{+}}{\phi^{+}}_{\I\F}).
\end{align}
This leads to the following SDP:
\begin{align}
    &\textup{max} \; \; p \\
    &\textup{s.t.} \; \; (A_{\O}  \otimes \ketbra{\phi^{+}}{\phi^{+}}_{\I\F} + B_{\O} \otimes (\mathbb{1}_{\I\F} -   \ketbra{\phi^{+}}{\phi^{+}}_{\I\F}))*\dketbra{\mathbb{1}}{\mathbb{1}}_{\I\O} = p \ketbra{0}{0 }_{\F}, \label{eq:proof_trans_SDP1}\\ 
    &A_{\O}, B_{\O} \geq 0, \\
    &(A_{\O}  \otimes \ketbra{\phi^{+}}{\phi^{+}}_{\I\F} + B_{\O} \otimes (\mathbb{1}_{\I\F} -   \ketbra{\phi^{+}}{\phi^{+}}_{\I\F})) \leq S_{\textup{ch}}, \label{eq:proof_trans_SDP2}\\
    &\Tr_{\F}(S_{\textup{ch}}) = \frac{\mathbb{1}_{\I}\otimes \mathbb{1}_\O}{d}.\label{eq:proof_trans_SDP3}
\end{align}
From Eq.~\eqref{eq:proof_trans_SDP1} we obtain
\begin{align}
     \frac{1}{d} A + \left(\Tr(B) \mathbb{1} - \frac{1}{d} B\right) = p\ketbra{0}{0}.
\end{align}
We can get two constraints by projecting on $\ket{0}$
\begin{align}
  \bra{0} A \ket{0} + d\Tr(B)\bra{0} \mathbb{1}\ket{0}  - \bra{0} B\ket{0} = dp,
\end{align}
and on $\ket{i} \perp \ket{0}$ in the orthogonal complement,
\begin{align}
   \bra{i} A \ket{i} = \bra{i} B\ket{i} - d\Tr(B)= b_{ii} - d\sum_j b_{jj}, 
\end{align}
where we used $\bra{i} B\ket{i} \coloneqq b_{ii}$ notation. Since $A \geq 0$, we know that $\bra{i} A \ket{i}\geq 0$. However, since $B \geq 0$ we have that $b_{ii} - d\sum_j b_{jj} \leq 0 $. Hence,
\begin{align}\label{eq:eqv_A}
   B = 0, \quad \textup{and} \quad A = pd\ketbra{0}{0}.
\end{align}
Using Eqs.~\eqref{eq:proof_trans_SDP2} and \eqref{eq:proof_trans_SDP3} and setting $B=0$, we obtain:
\begin{align}
    A \leq \mathbb{1}.
\end{align}
Plugging now Eq.~\eqref{eq:eqv_A}
\begin{equation}
    p \leq \frac{1}{d} 
\end{equation}
which further implies that the maximal possible value is,
\begin{equation}
    p_{\textup{max}} = \frac{1}{d}
\end{equation}

In conclusion, the optimal supermap for probabilistic, single-call unitary transposition is given by
\begin{equation}\label{eq:S_trans_pure_opt}
    S^{\textup{max}}_{\I\O\F} = \ketbra{0}{0 }_{\O} \otimes \ketbra{\phi^{+}}{\phi^{+}}_{\I\F}.
\end{equation}
\end{proof}
In order to have a better understanding of the supermap in terms of quantum circuits, we solve for the form of the encoder and decoder. We can write the supermap as  
\begin{align}
S_{\I\O\F} = E_{\I\M} * D_{\M\O/\F}
\end{align}
From here, we have
\begin{align}
\Tr_{\O\F}(S_{\I\O\F}) 
&=\Tr_\M(E_{\I\M}). 
\end{align}
With this and Eq.~\eqref{eq:S_trans_pure_opt}, we can find that, 
\begin{align}
    \Tr_{\O\F}(S_{\I\O\F}) = \frac{\mathbb{1}_{\I}}{d} = \Tr_{\M}(E_{\I\M}).
\end{align}
Hence, we may set the encoder state as a purification of the maximally mixed state, that is,
\begin{align}
 E_{\I\M}= \ketbra{\phi^{+}}{\phi^{+}}_{\I\M}.
\end{align}
Now for the decoder we have,
\begin{align}
    \Tr_{\I}(S_{\I\O\F}) &= \Tr_{\I}\left(\ketbra{\phi^{+}}{\phi^{+}}_{\I\M}*D_{\M\O/\F}\right)\\
    &= \frac{\Tr_{\M}(D_{\M\O/\F})}{d}.
\end{align}
Using this and Eq.~\eqref{eq:S_trans_pure_opt}, we get
\begin{align}
    \Tr_{\I}(S_{\I\O\F}) &= \ketbra{0}{0 }_{\O} \otimes \frac{\mathbb{1}}{d}\\
&=\frac{\Tr_{\I}(D_{\M\O/\F})}{d}
\end{align}
from which we have,
\begin{align}
 D_{\M\O/\F}&= \ketbra{0}{0 }_{\O} \otimes\dketbra{\mathbb{1}}{\mathbb{1}}_{\M\F}.
\end{align}
Thus, we conclude that the supermap has the following circuit form
\begin{align}
    S_{\I\O\F} = \ketbra{\phi^{+}}{\phi^{+}}_{\I\A} * \left(\ketbra{0}{0 }_{\O} \otimes \dketbra{\mathbb{1}}{\mathbb{1}}_{\M\F}\right)
\end{align}
which is illustrated in Figure~\ref{fig:transpositionKNpure}, or equivalently,
\begin{align}
    S^{\psi}_{\I\O\F} = \ketbra{\phi^{+}}{\phi^{+}}_{\I\A} * \left(\ketbra{\psi}{\psi }_{\O} \otimes \dketbra{\mathbb{1}}{\mathbb{1}}_{\M\F}\right)
\end{align}
due to the symmetry of the supermap.

\FidTransPure*
\begin{proof}\label{proof:trans_sing_det}
To prove the theorem, we construct the supermap \( S \) that maximizes the average fidelity 
\(\expval{F^{\psi}} = \mathrm{Tr}(S\,\Omega^{\psi})\), and show that it attains the stated optimal value.
For the unknown input state case, Appendix E.4 of Ref.~\cite{Quintino_2022det} shows that the performance operator for unitary transposition is 
\begin{equation}
    \Omega = \ketbra{\phi^{+}}{\phi^{+}}_{\P\O} \otimes \ketbra{\phi^{+}}{\phi^{+}}_{\I\F}+ \frac{\big(\mathbb{1}_{\P\O} - \ketbra{\phi^{+}}{\phi^{+}}_{\P\O}\big) \otimes \big(\mathbb{1}_{\I\F} - \ketbra{\phi^{+}}{\phi^{+}}_{\I\F}\big)}{d^2 - 1}.
\end{equation}
When the input state is known, without loss of generality \(\ketbra{\psi} = \ketbra{0}\), we may apply Thm.~\ref{thm:performance} to obtain the corresponding performance operator
\begin{align}
    \Omega^{\psi}_{\I\O\F} &= \ketbra{0}{0 }_{\P}* \left( \ketbra{\phi^{+}}{\phi^{+}}_{\P\O} \otimes \ketbra{\phi^{+}}{\phi^{+}}_{\I\F}+ \frac{\big(\mathbb{1}_{\P\O} - \ketbra{\phi^{+}}{\phi^{+}}_{\P\O}\big) \otimes \big(\mathbb{1}_{\I\F} - \ketbra{\phi^{+}}{\phi^{+}}_{\I\F}\big)}{d^2 - 1} \right)\\
    &= \frac{\ketbra{0}{0 }_{\O}\otimes  \ketbra{\phi^{+}}{\phi^{+}}_{\I\F}}{d} + \frac{\left(\mathbb{1}_{\O} - \tfrac{\ketbra{0}{0}_{\O}}{d}\right)\otimes \big(\mathbb{1}_{\I\F} -  \ketbra{\phi^{+}}{\phi^{+}}_{\I\F}\big)}{d^2 - 1}.\label{eq:perform_trans_proof}
\end{align}
Due to the symmetries of the performance operator \(\Omega^{\psi}\), Thm.~\ref{thm:S_det_covariance_f} implies that, without loss in performance, we may restrict our analysis to superchannels whose Choi operators respect the corresponding commutation relations
\begin{equation}\label{symm_S_trans}
    [ S_{\I\O\F}, U_{\I} \otimes \mathbb{1}_{\O} \otimes \overline{U}_{\F}] = 0,
\end{equation}
which ensure that $S$ can be written as
\begin{equation}\label{general_S_trans}
    S = A_{\O}  \otimes \ketbra{\phi^{+}}{\phi^{+}}_{\I\F} + B_{\O} \otimes \big(\mathbb{1}_{\I\F} - \ketbra{\phi^{+}}{\phi^{+}}_{\I\F}\big).
\end{equation}
for some linear operators $A,B\in\mathcal{L}(\mathcal{H}_\O$). The optimization problem then becomes
\begin{align}\label{SDP_trans_k1}
    \textup{given}\quad &\Omega^{\psi}\\
    \textup{max}\quad &\expval{F^{\psi}}=\Tr(S\Omega^{\psi})\label{eq:trans_det_SDP_proof0}\\
    \textup{subject to}\quad &\Tr_{\F}(S) = \frac{\mathbb{1}_{\I} \otimes \mathbb{1}_{\O}}{d},\label{eq:trans_det_SDP_proof1}\\
    & S \geq 0,\label{eq:trans_det_SDP_proof2}\\
    \textup{where}\quad S &= A_{\O}  \otimes \ketbra{\phi^{+}}{\phi^{+}}_{\I\F} + B_{\O} \otimes \big(\mathbb{1}_{\I\F} - \ketbra{\phi^{+}}{\phi^{+}}_{\I\F}\big).\label{eq:trans_det_SDP_proof3}
\end{align}
Now, from Eq.~\eqref{eq:trans_det_SDP_proof0}, Eq.~\eqref{eq:perform_trans_proof} and Eq.~\eqref{eq:trans_det_SDP_proof3} we have
\begin{align}\label{eq:Tr_perform_trans}
\Tr(S\Omega^{\psi})&=\frac{\bra{0}A\ket{0}}{d} + \Tr(B) - \frac{\bra{0}B\ket{0}}{d},
\end{align}
and from Eq.~\eqref{eq:trans_det_SDP_proof1} and Eq.~\eqref{eq:trans_det_SDP_proof3} we obtain
\begin{align}\label{eq:B}
     B=\frac{\mathbb{1} - A}{d^2- 1} \geq 0.
\end{align}
Plugging Eq.~\eqref{eq:B} into Eq.~\eqref{eq:Tr_perform_trans},
\begin{align}\label{eq:tr_proofstep_trans}
\Tr(S\Omega^{\psi})&=\frac{\bra{0}A\ket{0}}{d} + \frac{d - \bra{0}A\ket{0} - \sum_{i\neq 0}\bra{i}A\ket{i}}{d^2- 1} - \frac{1 - \bra{0}A\ket{0}}{d(d^2- 1)}
\end{align}
Additionally, we may now invoke the known input state covariance condition from Thm.~\ref{thm:S_det_covariance_f}: $[S_{\I\O\F}, \overline{V}_{\O}] = 0$ such that $V^{T}\ketbra{0}{0} = \ketbra{0}{0}$ for all 
$V \in \mathrm{SU}(d)$, which is equivalently expressed as 
$[S_{\I\O\F}, V_{\O}] = 0$ with $V\ketbra{0}{0} = \ketbra{0}{0}$ for all 
$V \in \mathrm{SU}(d)$. From this symmetry, we can infer additional structure on the coefficients. In particular, we obtain
\begin{align}\label{eq:symm_coeff}
    A= \alpha \ketbra{0}{0}+ \beta(\mathbb{1} -\ketbra{0}{0}).
\end{align}
Let us notice that~\eqref{eq:B} implies $A \leq \mathbb{1}$. Together with the positivity condition~\eqref{eq:trans_det_SDP_proof2}, we therefore conclude that $\alpha, \beta \in [0,1]$. Now, using~\eqref{eq:symm_coeff} and~\eqref{eq:tr_proofstep_trans},
\begin{align}
    \textup{max}\quad \expval{F^\psi}&=\frac{\alpha}{d} + \frac{d - \alpha}{d^2- 1}-\frac{\beta(d-1)}{d^2- 1} - \frac{1 -\alpha}{d(d^2- 1)} \\
    &=  \frac{1}{d}  + \frac{\alpha}{d + 1} - \beta\frac{d-1}{d^2-1}. 
\end{align}
From this expression, we observe that the dependence on $\beta$ enters with a negative coefficient; hence, we set $\beta = 0$. Conversely, since $\alpha$ appears with a positive coefficient, we maximize it by setting $\alpha = 1$, and obtain
\begin{align}
   \expval{F^\psi}^{\textup{max}}&= \frac{1}{d + 1}+\frac{1}{d}\\
   &=\frac{2d+1}{d(d+ 1)}.
\end{align}
Going back to Eq.~\eqref{eq:symm_coeff} and Eq.~\eqref{eq:symm_coeff}, this means that coefficients of the supermap are,
\begin{align}
    A = \ketbra{0}{0}, \quad B = \frac{\mathbb{\mathbb{1}}- \ketbra{0}{0}}{d^{2} - 1}
\end{align}
Plugging this back into Eq.~\eqref{eq:trans_det_SDP_proof3} we obtain the optimal solution,
\begin{equation}
    S^{\mathrm{(opt)}}_{\mathrm{trans}} = \ketbra{0}{0 }_{\O} \otimes \ketbra{\phi^{+}}{\phi^{+}}_{\I\F} + \frac{\big(\mathbb{1}_{\O} - \ketbra{0}{0 }_{\O}\big)\otimes \big(\mathbb{1}_{\I\F} - \ketbra{\phi^{+}}{\phi^{+}}_{\I\F}\big)}{d^2 - 1}.
\end{equation}

\end{proof}

In Sec.~\ref{sec:transp_det_single}, we analyse the optimal superchannel in terms of its construction with explicit encoders and decoders, and show that the optimal supermap can be implemented via a probabilistic strategy with the  correction step on the failure branch.

\subsection{Conjugation}\label{subsec:proof_conjg}

\ProbConjgSingl*

\begin{proof}
To prove the theorem, we determine the supermap that, given a single use of the unitary operation $U \in \mathrm{SU}(d)$ and a known input state $\ketbra{\psi}{\psi}$, implements the transposition $f(U) = \overline{U}$ with maximal success probability. Thm.~\ref{thm:twirl_prob_constr} implies that, without loss in
performance, we may restrict our analysis to supermaps whose Choi operators respect the
corresponding commutation relations 
\begin{equation}\label{eqv:proof_conjProb1}
    [S_{\I\O\F}, \mathbb{1}_{\I} \otimes U_{\O} \otimes U_{\F}] = 0 
\end{equation}
Any $S$ satisfying~\eqref{symm_S_conjg} can be written in the projector basis $\{P^{S}_{\O\F},  P^{A}_{\O\F} \}$ with operator coefficients, where $P^{S}$ and $P^{A}$ are projectors on symmetric and anti-symmetric space, as defined in Lemma~\ref{lemma:identities}. That is,
\begin{equation}\label{general_S_conjg}
    S_{\I\O\F} = A_{\I}  \otimes P^{S}_{\O\F} + B_{\I} \otimes P^{A}_{\O\F}.
\end{equation}
The optimization problem then becomes
\begin{align}
    &\textup{max} \; \; p\\
    \textup{s.t.} \; \; &(A_{\I}  \otimes P^{S}_{\O\F} + B_{\I} \otimes P^{A}_{\O\F})*\dketbra{\mathbb{1}}{\mathbb{1}}_{\I\O} = p \ketbra{0}{0 }_{\F} \label{eq:poof_c_c1}\\
    &A_{\I}, B_{\I} \geq 0 \\
    &(A_{\I}  \otimes P^{S}_{\O\F} + B_{\I} \otimes P^{A}_{\O\F}) \leq S_{\textup{ch}}\label{eq:poof_c_c2}\\
    &\Tr_{\F}(S_{\textup{ch}}) = \sigma_{\I} \otimes \mathbb{1}_{\O} \; \;\Tr(\sigma_{\I}) = 1.\label{eq:poof_c_c3}
\end{align}
Using the properties of the link product, from Eq.~\eqref{eq:poof_c_c1} we have,
\begin{align}\label{eq:link_conjug_midstep}
    A_{\I} * P^{S}_{\I\F} + B_{\I} * P^{A}_{\I\F} = p \ketbra{0}{0 }_{\F}.
\end{align}
We can now plug the expressions for symmetric and antisymmetric projectors, using 
Lemma~\ref{lemma:identities} and obtain,
\begin{align}\label{eq:conjg_link_expanded}
    \frac{\Tr(A)\,\mathbb{1}_{\F} + A_{\F}^{T}}{2} + \frac{\Tr(B)\,\mathbb{1}_{\F} - B_{\F}^{T}}{2}&= p\ketbra{0}{0}.
\end{align}
Now, on the LHS we have the sum of two terms while on the RHS we have rank one positive semidefinite operator. This impleas that each term in the sum on LHS has to also be rank one positive semidefinite operator. For the first term $\frac{\operatorname{Tr}(A)\,\mathbb{1}_{\F} + A_{\F}^{T}}{2}$, since $A\geq 0$ and $\mathbb{1}$ is full rank, this cannot be rank one: hence $A = 0$. This leaves us with,
\begin{align}
    \frac{\Tr(B)\,\mathbb{1}_{\F} - B_{\F}^{T}}{2}&= p\ketbra{0}{0}.
\end{align}
When $d\geq 2$, $ \mathrm{rank}\left(\frac{\Tr(B)\,\mathbb{1}_{\F} - B_{\F}^{T}}{2} \right )>1$, hence $B=0$ and $p=0$.

When $d=2$, we have,
\begin{align}\label{eq:Bstep}
\frac{(b_{00} + b_{11}) - \bra{0}B_{\F}^{T}\ket{0}}{2}&= p\ketbra{0}{0}
\end{align}
where we used $b_{ij}\coloneqq \bra{i} B\ket{j} $ notation. To maximise the probability then we set $\bra{0}B\ket{0} = 0$ and hence
\begin{equation}
    B = 2p\ketbra{1}{1}.
\end{equation}
Let us recall that, for the case of an unknown input, we have $p = 1$ when $d = 2$. Since the unknown input scenario is less restrictive than the known input state case, any value attainable in the former must also be attainable in the latter. We thus conclude that $p = 1$. The optimal supermap for probabilistic single-call unitary conjugation therefore exists only for $d=2$ and has the form 
\begin{equation}
S^{\textup{max}}_{\I\O\F} = \ketbra{1}{1}_{\I} \otimes P^{A}_{\O\F}.
\end{equation}

\end{proof}

Notice that this is already in the form of the encoder $E_{\I} = \ketbra{1}{1}_{\I}$ and the decoder $D_{\O\F} = P^{A}_{\O\F}$ with no ancillary space, as shown in Fig.~\ref{fig:conjgKNpure}.

\DetConjgSingl*

\begin{proof}
Let us first note that from the previous proof we immediately know that for $d=2$ we have that $\expval{F^{\psi}} = 1$ since $p=1$. Let us then argue $d>2$ case and find the supermap corresponding to the optimal value.

As in the proof for deterministic approximate realisation of transposition ~\ref{proof:trans_sing_det}, we first refer to performance operator for the unknown input state case, which can be also found in Appendix E.3 of Ref.~\cite{Quintino_2022det}
\begin{equation}\label{perform}
    \Omega = \frac{P^{S}_{\P\I} \otimes P^{S}_{\O\F}}{d_{S}} +\frac{P^{A}_{\P\I} \otimes P^{A}_{\O\F}}{d_{A}}.
\end{equation}
where $P^{S}$ and $P^{A}$ are projectors on symmetric and anti-symmetric space, as defined in Lemma~\ref{lemma:identities} and $d_S=\tr(P^S) = \frac{d(d+1)}{2}$, $d_A=\tr(P^A) = \frac{d(d-1)}{2}$ are the dimensions of the symmetric and antisymmetric subspaces. When the input state is known, without loss of generality \(\ketbra{\psi} = \ketbra{0}\), we may apply Thm.~\ref{thm:performance} to obtain the corresponding performance operator 
\begin{align}
\Omega^{\psi}_{\I\O\F} &= \ketbra{0}{0 }_{\P}* \left(\frac{P^{S}_{\P\I} \otimes P^{S}_{\O\F}}{d_{S}} +\frac{P^{A}_{\P\I} \otimes P^{A}_{\O\F}}{d_{A}} \right)\\
    &= \frac{(\mathbb{1}_{\I} + \ketbra{0}{0 }_{\I}) \otimes P^{S}_{\O\F}}{2d_S} + \frac{(\mathbb{1}_{\I} - \ketbra{0}{0 }_{\I}) \otimes P^{A}_{\O\F}}{2d_A}.\label{eq:perform_conjg_proof}  
\end{align}
Due to the symmetries of the performance operator $\Omega^\psi$, it follows from Thm.~\ref{thm:S_det_covariance_f} that, without loss in performance, we may restrict our analysis to superchannels with Choi operator which respect the commutation relations
\begin{equation}\label{symm_S_conjg}
    [ S_{\I\O\F}, \mathbb{1}_{\I} \otimes U_{\O} \otimes U_{\F}] = 0.
\end{equation}
which ensure that $S$ can be written as
\begin{equation}\label{general_S_conj}
    S = A_{\I}  \otimes P^{S}_{\O\F} + B_{\I} \otimes P^{A}_{\O\F}
\end{equation}
for some linear operators $A,B\in\mathcal{L}(\mathcal{H}_\I$). The optimization problem then becomes
\begin{align}\label{SDP_conjg_k1}
    \textup{given}\quad &\Omega^{\psi}\\
    \textup{max}\quad &\expval{F^{\psi}}=\Tr(S\Omega^{\psi})\label{eq:conjg_det_SDP_proof0}\\
    \textup{subject to}\quad &\Tr_{\F}(S) = \sigma_{\I} \otimes \mathbb{1}_{\O}, \label{eq:proof_conjg_1}\\
    &\Tr(\sigma_{\I}) = 1,\quad \Tr(S) = d,\quad S \geq 0,\label{eq:proof_conjg_2}\\
    \textup{where}\quad S &= A_{\I}  \otimes P^{S}_{\O\F} + B_{\I} \otimes P^{A}_{\O\F} \label{eq:conjg_det_SDP_proof3}.
\end{align}
Now, from Eq.~\eqref{eq:conjg_det_SDP_proof0}, Eq.~\eqref{eq:perform_conjg_proof} and Eq.~\eqref{eq:conjg_det_SDP_proof3}
\begin{align}
\Tr(S\Omega^{\psi})&= \Tr(\frac{(A_{\I} + A_{\I}\ketbra{0}{0 }_{\I}) \otimes P^{S}_{\O\F}}{2d_S} + \frac{(B_{\I} - B_{\I}\ketbra{0}{0 }_{\I}) \otimes P^{A}_{\O\F}}{2d_A})\\
&= \frac{(\Tr(A_{\I}) + \bra{0}A_{\I}\ket{0}_{\I})}{2} + \frac{(\Tr(B_{\I}) - \bra{0}B_{\I}\ket{0}_{\I})}{2}\label{eq:Tr_perform_conjg}
\end{align}
From Eq.~\eqref{eq:proof_conjg_1}, ~\eqref{eq:proof_conjg_2} and ~\eqref{eq:conjg_det_SDP_proof3} we have
\begin{align}\label{eq:coef_conjg_det}
      A_{\I} &= \frac{2\sigma_{\I} - (d-1)B_{\I}}{d+1} \geq 0
\end{align}
Going back into Eq.~\eqref{eq:Tr_perform_conjg} we have,
\begin{align}
\textup{max}\quad \Tr(S\Omega^{\psi})&= \frac{\left(\Tr(\frac{2\sigma_{\I} - (d-1)B_{\I}}{d+1}) + \bra{0}\frac{2\sigma_{\I} - (d-1)B_{\I}}{d+1}\ket{0}_{\I}\right)}{2} + \frac{\Tr(B_{\I}) - \bra{0}B_{\I}\ket{0}_{\I}}{2}\\
&= \frac{1}{d+1}+ \frac{\bra{0}\sigma_{\I}\ket{0}}{d+1} - \frac{(d-1)\Tr(B_{\I})}{d+1} - \frac{(d-1)\bra{0}B_{\I}\ket{0}_{\I}}{d+1} + \frac{\Tr(B_{\I}) - \bra{0}B_{\I}\ket{0}_{\I}}{2}\\
&= \frac{1}{d+1}+ \frac{\bra{0}\sigma_{\I}\ket{0}}{d+1} - \Tr(B_{\I})\frac{d-3 }{2(d+1)} - \bra{0}B_{\I}\ket{0}_{\I}\frac{3d-1}{2(d+1)}.
\end{align}
For $d>2$, we see that all the coefficients of the terms having $B$ dependence are negative. Hence, since $B\geq0$, the trace is maximal when $B=0$. To maximise the trace, we set $\sigma_{\I} = \ketbra{0}{0}$ and obtain,
\begin{align}
\expval{F^{\psi}}^{\textup{max}}= \frac{2}{d+1}.
\end{align}
Going back to Eq.~\eqref{eq:coef_conjg_det} and setting $\sigma_{\I} = \ketbra{0}{0}$ we get,
\begin{align}
    A = \frac{2}{d+1}\ketbra{0}{0}.
\end{align}
Hence, the optimal supermap is,
\begin{equation}
    S^{\textup{opt}}_{\I\O\F} = \frac{2}{d+1}\ketbra{0}{0 }_{\I} \otimes P^{S}_{\O\F}
\end{equation}
This means that supermap is realised with encoder $E_{\I} = \ketbra{0}{0}_{\I}$ and decoder $D_{\O\F} =\frac{2}{d+1} P^{S}_{\O\F} $, without using the memory. 
\end{proof}

\subsection{Inversion}\label{subsec:proof_inv}

\InvProbSingl*
\begin{proof}
To prove the theorem, we determine the supermap that, given a single use of the unitary operation $U \in \mathrm{SU}(d)$ and a known input state $\ketbra{\psi}{\psi}$, implements the transposition $f(U) = U^{-1}$ with maximal success probability. 
Thm.~\ref{thm:twirl_prob_constr} implies that, without loss in
performance, we may restrict our analysis to supermaps whose Choi operators respect the
corresponding commutation relations 
\begin{equation}\label{eqv:proof_InvProb1}
    [S_{\I\O\F}, U_{\I} \otimes \mathbb{1}_{\O} \otimes U_{\F}] = 0 
\end{equation}
which ensure that $S$ can be written as
\begin{align}
    S_{\I\O\F} = A_{\O}  \otimes P^{S}_{\I\F} + B_{\O} \otimes P^{A}_{\I\F}.
\end{align}
Plugging this into the constraints, we obtain the following SDP:
\begin{align}
    &\textup{max} \; \; p\\
    \textup{s.t.} \; \; &(A_{\O}  \otimes P^{S}_{\I\F} + B_{\O} \otimes P^{A}_{\I\F})*\dketbra{\mathbb{1}}{\mathbb{1}}_{\I\O} = p \ketbra{0}{0 }_{\F} \label{eq:poof_i_c1}\\
    &A_{\O}, B_{\O} \geq 0 \\
    &(A_{\O}  \otimes P^{S}_{\I\F} + B_{\O} \otimes P^{A}_{\I\F}) \leq S_{\textup{ch}}\label{eq:poof_i_c2}\\
    &\Tr_{\F}(S_{\textup{ch}}) = \sigma_{\I}\otimes \mathbb{1}_{\O}\label{eq:poof_i_c3}
\end{align}
From Eq.~\eqref{eq:poof_i_c1} we have,
\begin{align}
    A_{\I} * P^{S}_{\I\F} + B_{\I} * P^{A}_{\I\F} = p \ketbra{0}{0 }_{\F}.
\end{align}
Let us notice that due to link product, this gives exactly the same equation as Eq.~\eqref{eq:link_conjug_midstep}. Hence, the proof is identical to the proof presented for conjugation and we obtain for $d=2$ 
\begin{equation}
    S^{\textup{max}}_{\I\O\F} = \ketbra{1}{1}_{\O} \otimes P^{A}_{\I\F},
\end{equation}
while for $d>2$ we have $p=0$.

\end{proof}

\InvDetSing*
\begin{proof}
Let us first note that, when $d = 2$, the supermap implementing inversion map has equivalent performance to the supermap implementing transposition map. This equivalence arises because conjugating the transposed operation by the Pauli $Y$ operator, i.e., applying $Y$ before and after the unitary, implements inversion.

As in the proofs before, we first refer to performance operator for the unknown input state case, which can be also found in Appendix E.5 of Ref.~\cite{Quintino_2022det}
\begin{equation}
    \Omega = \frac{P^{S}_{\P\O} \otimes P^{S}_{\I\F}}{d_{S}} +\frac{P^{A}_{\P\O} \otimes P^{A}_{\I\F}}{d_{A}} 
\end{equation}
When the input state is known, without loss of generality \(\ketbra{\psi} = \ketbra{0}\), we may apply Thm.~\ref{thm:performance} to obtain the corresponding performance operator
\begin{align}
\Omega^{\psi}_{\I\O\F} &= \ketbra{0}{0 }_{\P}* \left(\frac{P^{S}_{\P\O} \otimes P^{S}_{\I\F}}{d_{S}} +\frac{P^{A}_{\P\O} \otimes P^{A}_{\I\F}}{d_{A}} \right)\\
    &= \frac{(\mathbb{1}_{\O} + \ketbra{0}{0 }_{\O}) \otimes P^{S}_{\I\F}}{2d_S} + \frac{(\mathbb{1}_{\O} - \ketbra{0}{0 }_{\O}) \otimes P^{A}_{\I\F}}{2d_A}.\label{eq:perform_inv_proof} 
\end{align}
Due to the symmetries of the performance operator $\Omega^\psi$, it follows from Thm.~\ref{thm:S_det_covariance_f} that, without loss in performance, we may restrict our analysis to superchannels with Choi operator which respect the commutation relations
\begin{equation}\label{symm_S_inv}
   [S_{\I\O\F}, U_{\I} \otimes \mathbb{1}_{\O} \otimes U_{\F}] = 0
\end{equation}
which ensure that $S$ can be written as
\begin{equation}\label{general_S_inv}
    S = A_{\O}  \otimes P^{S}_{\I\F} + B_{\O} \otimes P^{A}_{\I\F}
\end{equation}
for some linear operators $A,B\in\mathcal{L}(\mathcal{H}_\O$).
The optimization problem then becomes
\begin{align}
    \textup{given}\quad &\Omega^{\psi}\\
    \textup{max}\quad &\Tr(S\Omega^{\psi})\label{eq:inv_det_SDP_proof0}\\
    \textup{subject to}\quad &\Tr_{\F}(S) = \frac{\mathbb{1}_{\I} \otimes \mathbb{1}_{\O}}{d}{d},\label{eq:inv_det_SDP_proof1}\\
    & S \geq 0,\label{eq:inv_det_SDP_proof2}\\
    \textup{where}\quad S&= A_{\O}  \otimes P^{S}_{\I\F} + B_{\O} \otimes P^{A}_{\I\F}.\label{eq:inv_det_SDP_proof3}
\end{align}
From Eq.~\eqref{eq:inv_det_SDP_proof1} and~\eqref{eq:inv_det_SDP_proof3} we have,
\begin{align}\label{eq:coef_AB}
    B &= \frac{1}{d-1} \left ( \frac{2 \mathbb{1}}{d} - (d+1)A \right) \geq 0
\end{align}
from which follows
\begin{align}\label{eq:A_rel}
    A \leq \frac{2 \mathbb{1}}{d(d+1)}.
\end{align}
Now, from Eq.~\eqref{eq:inv_det_SDP_proof0},~\eqref{eq:inv_det_SDP_proof3} and Eq.~\eqref{eq:perform_inv_proof} we get,
\begin{align}\label{eq:Tr_perform_inv}
\Tr(S\Omega^{\psi})&= \frac{(\Tr(A_{\O}) + \bra{0}A_{\O}\ket{0}_{\O})}{2} + \frac{(\Tr(B_{\O}) - \bra{0}B_{\O}\ket{0}_{\O})}{2}\\
&= \frac{(\Tr(A_{\O}) + \bra{0}A_{\O}\ket{0}_{\O})}{2} +\frac{1}{d} - \frac{d+1}{d-1}\Tr(A_{\O}) +\frac{d+1}{d-1} \frac{\bra{0}A_{\O}\ket{0}_{\O}}{2}
\end{align}
From Thm.~\ref{thm:S_det_covariance_f} we have a following symmetry $[A_{\O}, \overline{V}_{\O}] = 0$ such that $V^{-1}\ketbra{0}{0} = \ketbra{0}{0}$ for all 
$V \in \mathrm{SU}(d)$ (similarly for $B$) which imposes the additional structure on coefficients. In particular, we have
\begin{align}\label{eq:coef_A_proofstep}
    A &= \alpha \ketbra{0}{0} + \beta (\mathbb{1} - \ketbra{0}{0}).
\end{align}
Note that, since $A\geq0$, we must have $\alpha,\beta\geq0$.
Going back to Eq.~\eqref{eq:Tr_perform_inv} we obtain,
\begin{align}
\textup{max}\quad \Tr(S\Omega^{\psi})&= - \beta \frac{ d + 3}{2} +\frac{1}{d} + \alpha
\end{align}
In order to maximise the trace, since $\beta\geq0$, we must set $\beta = 0$. To fix the $\alpha$ term, we use Eq.~\eqref{eq:A_rel},
\begin{align}
    \alpha \ketbra{0}{0}\leq \frac{2 \mathbb{1}_{\I}}{d(d+1)},
\end{align}
from which we conclude that 
\begin{align}\label{eq:alph_proofstep}
    \alpha^{\textup{max}} = \frac{2\ketbra{0}{0}}{d(d+1)}
\end{align}
hence, the maximal value for fidelity is $\expval{F^{\psi}}^{\textup{max}}= \frac{d+3}{d(d+1)}$.

Additionally, in order to conclude the supermap, and hence the circuit structure that attains the optimal performance, we obtain the values of $A$ and $B$. From Eq.~\eqref{eq:alph_proofstep} and~\eqref{eq:coef_A_proofstep} we get $A = \frac{2}{d(d+1)}$. Furthermore, using Eq.~\eqref{eq:coef_AB}, we obtain,
\begin{align}
     B_{\O} &= \frac{1}{d-1} \left ( \frac{2\mathbb{1}_{\I}}{d} - \ketbra{0}{0} )\right) 
\end{align}
The optimal supermap is then given by
\begin{equation}
    S^{\textup{opt}}_{\I\O\F} = \frac{2\ketbra{0}{0 }_{\O}}{d(d+1)}\otimes P^{S}_{\I\F} + \frac{2(\mathbb{1}_{\O} - \ketbra{0}{0 }_{\O})}{d(d-1)}\otimes P^{A}_{\I\F}.
\end{equation}
\end{proof}

\section{Proofs for the probabilistic bipartite task}\label{append:bipart_proofs}

\subsection{Symmetries of the problem}

\SymBipThm*
\begin{proof}

Let us first notice that the first part of the statement, namely, the input state independent covariance follows from the same reasoning as in proof of Thm.~\ref{thm:twirl_prob_constr} and hence, we do not repeat these steps here.

Let us start with transposition function $f(U) = U^{T}$. We are going to construct a covariant operator $S^c$ which respects the commutation relation $[S^c,B_f]=0$ and that attains the same performance of $S$. For that, we define the linear operator 
\begin{equation}
    S^{c}_{\I\O\F} \coloneqq \int_{\mathcal{V}} \mathrm{d}V \,(\mathbb{1}_{\I} \otimes \overline{B}^{\otimes k}_{\O}\otimes \mathbb{1}_{\F}) S_{\I\O\F} (\mathbb{1}_{\I} \otimes B^{T \otimes k}_{\O} \otimes \mathbb{1}_{\F}), \quad \quad \textup{ where } (\mathbb{1}_{\A} \otimes B^T_{\F})\ket{\psi}_{\A\F}=\ket{\psi}_{\A\F}.
\end{equation}
Here $\mathcal{V} \subseteq \mathrm{SU}(d)$ denotes the subset of unitary operators 
$V \in \mathrm{SU}(d)$ satisfying $(\mathbb{1}_{\A} \otimes B^T_{\F})\ket{\psi}_{\A\F}=\ket{\psi}_{\A\F}$ for a given bipartite state $\ket{\psi}_{\A\F} \in \mathbb{C}^{d}$.
Since the elements of this subset themselves 
form a compact group, $\mathcal{V}$ carries its own Haar measure, which we use for integration. Direct calculation shows that $S^c$ respects the commutation relation $[S^c,B_f]=0$.

We now verify that, $\forall U\in\mathrm{SU}(d)$, we have that,
\begin{align}
     S^{c}_{\I\O\F\A} * \ChoiU^{\otimes k} &=\Tr_{\I \O}(S^{c}_{\I\O\F\A} (\dketbra{\overline{U}}{\overline{U}}_{\I\O} \otimes \mathbb{1}_{\F \A}))\\
     &=\Tr_{\I \O}((\mathbb{1}_{\A} \otimes B^{T}_{\O})S_{\I\O\F\A}(\mathbb{1}_{\A} \otimes \overline{B}_{\O}) (\dketbra{\overline{U}}{\overline{U}}_{\I\O} \otimes \mathbb{1}_{\F \A}))\\
     &=\Tr_{\I \O}(S_{\I\O\F\A}(\dketbra{\overline{BU}}{\overline{BU}}_{\I\O} \otimes \mathbb{1}_{\F \A})\\
     &= p ( \mathbb{1}_{\A} \otimes U_{\F}^{T}B^{T}_{\F}) \ketbra{\psi}{\psi}_{\A\F}( \mathbb{1}_{\A} \otimes \overline{BU}_{\F})\\
     &= p (\mathbb{1}_{\A} \otimes U^{T}_{\F})\ketbra{\psi}{\psi}_{\A\F}(\mathbb{1}_{\A} \otimes \overline{U}_{\F}).
\end{align}
This closes the proof for transposition. The same reasoning applies to other functions.
\end{proof}

\BipSymDet*
\begin{proof}
    The proof proceeds analogously to that of Thm.~\ref{thm:S_det_covariance_f}, but now includes auxiliary system, which can be handled straightforwardly owing to the multilinearity of the tensor product, in a manner similar to Thm.~\ref{thm:symbp}.
\end{proof}

\subsection{Proofs for tasks on the known bipartite pure state}
\paragraph{Transposition}

\ThmBipTrans*

\begin{proof}
To prove the theorem, we determine the supermap that, given a single use of the unitary operation $U \in \mathrm{SU}(d)$ and a known bipartite input state $\ketbra{\psi}{\psi}_{\A\F} \in \mathcal{H}_{A} \otimes \mathcal{H}_{\F}$, implements the transposition $(\mathbb{1}_{\A} \otimes U^{T}_{\F})\ketbra{\psi}{\psi}_{\A\F}(\mathbb{1}_{\A} \otimes U^{T}_{\F})^{\dagger}$ with maximal success probability. 
Given a function $f(U) = U^{T}$ acting on single subsystem, Thm.~\ref{thm:symbp} implies that, without loss in
performance, we may restrict our analysis to supermaps whose Choi operators respect the
corresponding commutation relations  \begin{equation}\label{project_basis_trans_bip}
    [S_{A\I\O\F},  \mathbb{1}_{\A} \otimes U_{\I} \otimes \mathbb{1}_{\O} \otimes \overline{U}_{\F}] = 0 
\end{equation}
which ensure that $S$ can be written as
\begin{align}
   S_{\I \O\F\A}=A_{\O \A} \otimes \ketbra{\phi^{+}}{\phi^{+}} _{\I\F} + B_{\O \A}\otimes  \left(\mathbb{1}_{\I\F} -\ketbra{\phi^{+}}{\phi^{+}} _{\I\F}\right)
\end{align}
and leads to the following SDP:
\begin{align}
    \textup{max} \; \; & p\\
    \textup{s.t.} \; \;
    &\left(A_{\O \A} \otimes \ketbra{\phi^{+}}{\phi^{+}} _{\I\F} + B_{\O \A}\otimes  \left(\mathbb{1}_{\I\F} -\ketbra{\phi^{+}}{\phi^{+}} _{\I\F}\right)\right)*\dketbra{\mathbb{1}}{\mathbb{1}}_{\I\O} = p\ketbra{\psi}{\psi}_{\A\F}\label{proof_bip_trans1}\\
    &A_{\O \A} \otimes \ketbra{\phi^{+}}{\phi^{+}} _{\I\F} + B_{\O \A}\otimes  \left(\mathbb{1}_{\I\F} -\ketbra{\phi^{+}}{\phi^{+}} _{\I\F}\right) \leq S_{\textup{ch}} \label{proof_bip_trans2} \\
    &\Tr_{\A\F}(S_{\textup{ch}}) = \frac{\mathbb{1}_{\I} \otimes \mathbb{1}_\O}{d}\\
    &A_{\O \A}, B_{\O \A}\geq 0.
\end{align}
From Eq.~\eqref{proof_bip_trans1} we have,
\begin{align}\label{c1Tb2}
    \frac{A_{\F\A}}{d} + \Tr_{\O}(B_{\O\A})\otimes \mathbb{1}_{\F} - \frac{B_{\F\A}}{d} &= p\ketbra{\psi}{\psi}_{A\F}.
\end{align}
From Eq.~\eqref{proof_bip_trans2} we have,
\begin{align}\label{c2Tb}
   \Tr_{A}(A_{\O \A}) + \Tr_{A}(B_{\O \A}) (d^2 - 1) \leq \mathbb{1}_{\O}.
\end{align}

We now analyse Eq.~\eqref{c1Tb2} on the subspace orthogonal to that spanned by $\ket{\psi}$. To do so, we take the trace of both sides of Eq.~\eqref{c1Tb2} with the operator $\ketbra{\psi_\perp}$, where $\ket{\psi_\perp}$ is an arbitrary vector orthogonal to $\ket{\psi}$, i.e., $\braket{\psi_\perp}{\psi}=0$. This yields
\begin{align}
    \bra{\psi_{\perp}} A_{\F\A}\ket{\psi_{\perp}} + d\bra{\psi_{\perp}} \Tr_{\O}(B_{\O \A})\otimes \mathbb{1}_{\F}\ket{\psi_{\perp}} - \bra{\psi_{\perp}} B_{\F\A}\ket{\psi_{\perp}} &= 0,
\end{align}
from which we obtain the following equation:
\begin{align}
    \bra{\psi_{\perp}} A_{\F\A}\ket{\psi_{\perp}} = \bra{\psi_{\perp}} B_{\F\A}\ket{\psi_{\perp}} - d\bra{\psi_{\perp}} \Tr_{\O}(B_{\O \A})\otimes \mathbb{1}_{\F}\ket{\psi_{\perp}}.
\end{align}
Since $A \geq 0$, we know that the RHS satisfies $\bra{\psi_{\perp}} A_{\F\A}\ket{\psi_{\perp}} \geq 0$. Moreover, using the inequality $\rho_{AB} \leq d, \rho_{A} \otimes \mathbb{1}{B}$ (see Lemma 3 in Ref.~\cite{Bavaresco2019semi} for a proof), we find that the LHS satisfies $\bra{\psi{\perp}} B_{\F\A}\ket{\psi_{\perp}} - d,\bra{\psi_{\perp}} \Tr_{\O}(B_{\O \A})\otimes \mathbb{1}{\F}\ket{\psi{\perp}} \leq 0$. Hence, for any vector $\ket{\psi_\perp}$ orthogonal to $\ket{\psi}$, we have $\bra{\psi_{\perp}} B_{\F\A}\ket{\psi_{\perp}} = 0 = \bra{\psi_{\perp}} A_{\F\A}\ket{\psi_{\perp}}$. This implies that, on the subspace orthogonal to $\ketbra{\psi}$, both operators $A$ and $B$ vanish, and thus we must have $A = \alpha \ketbra{\psi}$ and $B = \beta \ketbra{\psi}$.

With this conclusion, let us go back to Eq.~\eqref{c1Tb2}
\begin{align}
    \alpha \ketbra{\psi}{\psi} + \beta d \Tr_O(\ketbra{\psi})\otimes \mathbb{1} - \beta\ketbra{\psi}{\psi}  = pd\ketbra{\psi}{\psi}.
\end{align}
Note that, since the identity operator is full rank, when $\beta\neq0$, the equality above cannot be satisfied. Hence, we must have that $\beta=0$, and consequently, $B=0$.
This further implies that,
\begin{equation}\label{coeff_A}
     A_{\F\A} = dp\ketbra{\psi}{\psi}_{\F\A}.
\end{equation}
From here and Eq.~\eqref{c2Tb} we have,
\begin{align} 
    dp \Tr_{A}(\ketbra{\psi}{\psi}_{A\O}) &\leq \mathbb{1}_{\O}\\
    p  &\leq \frac{1}{d\norm{\Tr_{A}(\ketbra{\psi}{\psi}_{\O\A})}_{\textup{op}}},
\end{align}
which means that the optimal probability is,
\begin{align}\label{eq:optm_p_biptrans}
 p  &= \frac{1}{d\norm{\Tr_{A}(\ketbra{\psi}{\psi}_{\O\A})}_{\textup{op}}},
\end{align}
Using Eq.~\eqref{coeff_A} and Eq.~\eqref{eq:optm_p_biptrans}, we conclude that the corresponding supermap is
\begin{equation}
     S_{\I\O\F\A} = \frac{\ketbra{\psi}{\psi}_{\O\A} \otimes \ketbra{\phi^{+}}{\phi^{+}} _{\I\F}}{\norm{\Tr_{A}(\ketbra{\psi}{\psi}_{\O\A})}_{\textup{op}}}.
\end{equation}

\end{proof}

Similarly as before, we can find the realisation in terms of the encoder and the decoder:
\begin{align}
    S_{\I\O\F\A} = \ketbra{\phi^{+}}{\phi^{+}}_{\I\M} * \frac{\ketbra{\psi}{\psi}_{\O\A} \otimes \dketbra{\mathbb{1}}{\mathbb{1}}_{\M\F} }{\norm{\Tr_{A}(\ketbra{\psi}{\psi}_{\O\A})}_{\textup{op}}}. 
\end{align}
Using the definition of a filter as $(\mathbb{1}\otimes F)\ket{\phi^+} = \frac{\ket{\psi}}{\sqrt{d\norm{\sigma}_{\textup{op}}}}$, we can rewrite the supermap as
\begin{align}
    S_{\I\O\F\A} = \ketbra{\phi^{+}}{\phi^{+}}_{\I\M} * \left( \dketbra{F}{F}_{\O\A} \otimes \dketbra{\mathbb{1}}{\mathbb{1}}_{\M\F} \right).
\end{align}
This realisation is depicted in Fig.~\ref{fig:bipartTrans}. 

\paragraph{Conjugation}

\ConjgBipThm*

\begin{proof}   

When \(d = 2\), unitary complex conjugation can be implemented deterministically and exactly even for a completely unknown input state~\cite{Miyazaki_2017} (see Fig.~\ref{fig:conjgKNpure}). Consequently, for \(d = 2\), we have \(p^{\psi_{AB}}_{\textup{max,conj}} = 1\) for any quantum state. The nontrivial part of this theorem is therefore to show that for \(d > 2\), \(p^{\psi_{AB}}_{\textup{max,conj}} = 0\).

As shown in Thm.~\ref{thm:symbp}, without any loss in performance, we may assume that the Choi operator of the superinstrument element $S$ is covariant, that is, it satisfies the commutation relations $[S_{\I\O\F\A}, \id_\I \otimes U_\O \otimes U_\F \otimes \id_\A] = 0$ for every $U \in \mathrm{SU}(d)$. By Schur’s lemma, this implies the decomposition $S = A_{\I\A} \otimes P^{S}_{\O\F} + B_{\I\A} \otimes P^{A}_{\O\F}$, where $P^{S}$ and $P^{A}$ denote the projectors onto the symmetric and antisymmetric subspaces, respectively, and $A$ and $B$ are arbitrary operators. Therefore the problem of maximizing the success probability $p^{\psi_{AB}}_{\textup{max,conj}}$ can be expressed as
\begin{align}
    &\textup{max} \; \; p\\
    \textup{s.t.} \; \; &(A_{\I\A}  \otimes P^{S}_{\O\F} + B_{\I\A} \otimes P^{A}_{\O\F})*\dketbra{\mathbb{1}}{\mathbb{1}}_{\I\O} = p \ketbra{\psi}{\psi}_{\A\F} \label{eq:conj_bip_aux}\\
    &A_{\I\A}, B_{\I\A} \geq 0 \\
    &(A_{\I\A}  \otimes P^{S}_{\O\F} + B_{\I\A} \otimes P^{A}_{\O\F}) \leq S_{\textup{ch}}\\
    &\Tr_{\F\A}(S_{\textup{ch}}) = \sigma_{\I} \otimes \mathbb{1}_{\O} \; \;\Tr(\sigma_{\I}) = 1,
\end{align}
where the constraints, $A,B\geq0$ follows from the fact that $S\geq0$ and the projectors $P^{S}$ and $P^{A}$ have orthogonal support. From Eq.~\eqref{eq:conj_bip_aux} we obtain,
\begin{align}\label{eq:link_bip_conjg}
   A_{\O\A}  * P^{S}_{\O\F} + B_{\O\A} * P^{A}_{\O\F} = p\ketbra{\psi}{\psi}_{\A\F}.
\end{align}
Now, we will proceed by showing that when $d > 2$ and both $A \neq 0$ and $B \neq 0$, the linear operator $A_{\O\A} * P^{S}_{\O\F} + B_{\O\A} * P^{A}_{\O\F}$ has rank strictly greater than one. Therefore, the only possible solution to the optimization problem above is to set $A = B = 0$ and $p = 0$.

Let us start by referring to the particular instance where the linear space $\mathcal{H}_\A$ is one dimensional, or, equivalently, that there is no auxiliary space $\mathcal{H}_\A$ and the input state $\ket{\psi}$ is a single party state in $\mathcal{H}_\F$, as presented in Thm.~\ref{single_conjg}. Let us recall that in that case we had
\begin{align}\label{eq:link_conj_bip}
    A_{\O}  * P^{S}_{\O\F} =  \frac{\id_\F \Tr(A) + A^T_\F}{2}, \quad \quad B_{\O}  * P^{A}_{\O\F} =  \frac{\id_\F \Tr(B) - B^T_\F}{2}.
\end{align}
Since the identity operator is full rank, when $A\neq0$, and $d\geq2$, the linear operator  $\frac{\id_\F + A^T_\F}{2}$ has rank strictly greater than one. Analogously, when $B\neq0$, and $d\geq3$, the linear operator  $\frac{\id_\F + A^T_\F}{2}$ has rank strictly greater than one. Hence, when $\mathcal{H}_\A$ is one dimensional and $d>2$, we must have $A=B=0$ and $p=0$.

We now extend this argument to the case where $\mathcal{H}_\A$ has arbitrary dimension $d$. First, note that since $A, B \geq 0$, the operators $A$ and $B$ are proportional to quantum states. Moreover, since $\Tr_\F(P^{S}_{\O\F})$ and $\Tr_\F(P^{A}_{\O\F})$ are proportional to the identity operator, and $P^{S}_{\O\F}, P^{A}_{\O\F} \geq 0$, the projectors $P^{S}_{\O\F}$ and $P^{A}_{\O\F}$ are proportional to the Choi operators of quantum channels. Hence, up to a scalar, the link products $A_{\O\A} * P^{S}_{\O\F}$ and $B_{\O\A} * P^{A}_{\O\F}$ represent a quantum channel acting on part of a bipartite state. To see this clearly, we can expand the link products in Eq.~\eqref{eq:link_conj_bip} using Lemma~\ref{lemma:identities} and conclude
\begin{align}\label{eq:link_cbip_A}
    A_{\O\A} * P^{S}_{\O\F} = \frac{\Tr_{\O}(A_{\O\A}) \otimes \mathbb{1}_{\F} + A^{T_{\F}}_{\A \F}}{2} \sim \textup{Choi}(\map{C}_{\O \rightarrow \F} \otimes \map{\id}_\A(\rho_{\O\A}))
\end{align}
and
\begin{align}\label{eq:link_cbip_B}
    B_{\O\A} * P^{A}_{\O\F} = \frac{\Tr_{\O}(A_{\O\A}) \otimes \mathbb{1}_{\F} - A^{T_{\F}}_{\A \F}}{2} \sim \textup{Choi}(\map{C}_{\O \rightarrow \F} \otimes \map{\id}_\A(\rho_{\O\A})).
\end{align}
Now, by invoking the result from Section~3, ``Tensoring with an ideal channel,'' of Ref.~\cite{Amosov2000additivity}, we can conclude that the rank of Eqs.~\eqref{eq:link_cbip_A} and~\eqref{eq:link_cbip_B} is strictly greater than one for $d > 2$. The cited result establishes that if $\map{C} : \mathcal{L}(\mathcal{H}_1) \to \mathcal{L}(\mathcal{H}_2)$ is a quantum channel such that $\rank(\map{C}(\rho)) > 1$ for any quantum state $\rho \in \mathcal{L}(\mathcal{H}_1)$, then $\rank(\map{C} \otimes \map{\id}_2(\rho_{12})) > 1$ for any $\rho_{12} \in \mathcal{L}(\mathcal{H}_1 \otimes \mathcal{H}_2)$. Since, for $d > 2$, the channels represented in Eqs.~\eqref{eq:link_cbip_A} and~\eqref{eq:link_cbip_B} satisfy $\rank(\map{C}_{2} \otimes \map{\id}_2(\rho_{12})) > 1$, we conclude that the LHS of Eq.~\eqref{eq:link_bip_conjg} has rank greater than one, while the RHS has rank one, hence $A = B = 0$ and $p=0$.

\end{proof}

\paragraph{Inversion}

\InvBipProb*

\begin{proof}
When $d = 2$, the task of unitary inversion is equivalent to unitary transposition, due to the identity $U^{-1} = \sigma_Y U^{T} \sigma_Y$, which holds for all $U \in \mathrm{SU}(2)$.

When $d > 2$, the proof follows the same steps as Thm.~\ref{thm:conjgBip}. Without any loss in performance, we impose that the superinstrument element $S$ is covariant, and can be decomposed as $S = A_{\O\A} \otimes P^{S}{\I\F} + B{\O\A} \otimes P^{A}_{\I\F}$. Using a covariant $S$, the maximization problem reads as
    \begin{align}
    &\textup{max} \; \; p\\
    \textup{s.t.} \; \; &(A_{\O\A}  \otimes P^{S}_{\I\F} + B_{\O\A} \otimes P^{A}_{\I\F})*\dketbra{\mathbb{1}}{\mathbb{1}}_{\I\O} = p \ketbra{\psi}{\psi}_{\A\F} \label{eq:inv_bip_aux}\\
    &A_{\O\A}, B_{\O\A} \geq 0 \\
    &(A_{\O\A}  \otimes P^{S}_{\I\F} + B_{\O\A} \otimes P^{A}_{\I\F}) \leq S_{\textup{ch}}\\
    &\Tr_{\F\A}(S_{\textup{ch}}) = \sigma_1 \otimes \mathbb{1}_{\O} \; \;\Tr(\sigma_1) = 1.
\end{align}
From Eq.~\eqref{eq:conj_bip_aux}, we obtain
\begin{align}
    A_{\O\A} * P^{S}_{\O\F} + B_{\O\A} * P^{A}_{\O\F} = p \ketbra{\psi}{\psi}_{\A\F}.
\end{align}
This is exactly the same equation as Eq.~\eqref{eq:link_bip_conjg} appearing in the previous proof for conjugation (Thm.~\ref{thm:conjgBip}). Hence, by following the same arguments, we conclude that $A = B = 0$ and $p = 0$.

\end{proof}

\section{Proofs for mixed state case}

\LemmaMixVis*

\begin{proof} 
From the form of performance operator $\Omega_{\I\O\F} = \rho_{\P}* \Omega_{\P\I\O\F}$, we have that supermap in the deterministic case admits the analogous form $S_{\I\O\F} = \rho_{\P}* S_{\P\I\O\F}$. From Theorem 2 in \cite{Quintino_2022det}, we have that for a supermap with a universal input implementing a homomorphic or antihomomorphic function, the following holds:
\begin{equation}
    S^{\prime}_{\P\I\O\F} *\ChoiU_{\I \O} ^{\otimes k}=\eta \dketbra{f(U)}{f(U)}_{\P \F}+(1-\eta) \mathbb{1}_\P \otimes \frac{\mathbb{1}_\F}{d},
\end{equation}
where $\expval{F^{\psi}}=\eta+\frac{1-\eta}{d^2}$. We can simply modify now this statement by linking with the known state,
\begin{align}
    \rho_{\P}*S^{\prime}_{\P\I\O\F} *\ChoiU_{\I\O} ^{\otimes k}&=\eta  \rho_{\P}* \dketbra{f(U)}{f(U)}_{P F}+(1-\eta)  \Tr(\rho_{\P})\frac{\mathbb{1}_{\F}}{d},
    \end{align}
from which we obtain
    \begin{align}
    S^{\prime}_{\I\O\F} *\dketbra{U}{U}_{\I\O} ^{\otimes k}&=\eta \rho_{\P} * \dketbra{f(U)}{f(U)}_{P F}+(1-\eta) \frac{\mathbb{1}_{\F}}{d}. 
\end{align}
The fidelity is given in terms of the visibility as
\begin{align}
\expval{F^{\psi}} & =\int_{\text {Haar }} \Tr((S_{\I \O \F} *\ChoiU_{\I \O}^{\otimes k})( \rho_{\P}*\dketbra{f(U)}{f(U)}_{\P\F}))\mathrm{d} U \\
& =\int \Tr\left(\eta \rho_{\P} * \dketbra{f(U)}{f(U)}_{P F}+(1-\eta) \frac{\rho_{\P}*\dketbra{f(U)}{f(U)}_{\P\F}}{d} \right) \\
& =\eta+\frac{1-\eta}{d}.
\end{align}
From here we get
\begin{align}
  \eta &= \frac{d \expval{F^{\psi}}-1}{d-1},
\end{align}
where $\expval{F^{\psi}}$ is optimal average fidelity for the pure states.

\end{proof}

\ThmMixVisib*

\begin{proof}
We start with the equation
\begin{equation}
    \eta = \frac{d \expval{F^{\psi}}-1}{d-1}
\end{equation}
and substitute
\begin{itemize}
    \item $\expval{F^{\psi}}_{\textup{trans}, k=1} = \frac{2d + 1}{d^2 + d} \quad \textup{from which follows} \quad \eta = \frac{d}{d^2 - 1}$
    \item $ \expval{F^{\psi}}_{\textup{conj}, k = 1}  =
\begin{cases}
       1,& \textup{if } d = 2\\
    \frac{2}{d + 1},              &  \textup{if } d >  2 
\end{cases}\quad \textup{from which follows} \quad \eta = \begin{cases}
       1,& \textup{if } d = 2\\
    \frac{1}{d + 1}              &  \textup{if } d >  2 
\end{cases} $
\item $\expval{F^{\psi}}_{\textup{inv,} k = 1}  = \frac{2}{d(d-1)} \; \textup{from which follows} \; \eta = \frac{2}{d^2 - 1}$
\end{itemize}

\end{proof}

\section{Proofs for SAR }\label{sec:proof_SAR}
\ThmSARsingle*
\begin{proof}

We now show that the constructions for the storage-and-retrieval (SAR) task and for transposition are equivalent and yield the same performance. The following argument applies equally to both the probabilistic exact and deterministic approximate realizations.

Let us start by showing that the transposition protocol provides a construction for the SAR that preserves the performance of the task. As stated in Sec.~\ref{symmetriesK}, the transposition protocol is characterized by the symmetry condition $\bigl[S_{\I\O\F},\, U^{\otimes k}_{\I} \otimes \mathbb{1}^{\otimes k}_{\O} \otimes \overline{U}_{\F}\bigr] = 0,
\; \forall\,U\in SU(d)$. Let us recall that the supermap can be expressed as $S_{\I\O\F} = E_{\M\I}*D_{\M\O/\F}$, where $\M$ is a memory subspace representing the wire between the encoder and the decoder. From this, it follows that the encoder satisfies
\begin{align}
\bigl[\Tr_{\M}(E_{\I\M}),\, U^{\otimes k}_{\I}\bigr] = 0,
\qquad \forall\,U\in SU(d).
\end{align}
Any operator $\sigma_{\I} \coloneqq \Tr_{\M}(E_{\I\M})$ that commutes with all $U^{\otimes k}$ must, by Schur-Weyl duality, be block diagonal in the joint representation space
\begin{align}
    (\mathbb{C}^d)^{\otimes k}
\;\cong\;
\bigoplus_{\mu}
\mathcal{H}_{\mu}\otimes \mathbb{C}^{m_\mu},
\end{align}
where $\mathcal{H}_{\mu}$ carries the irreducible representation $U(\mu)$ of $SU(d)$, and $\mathbb{C}^{m_\mu}$ is the corresponding multiplicity space.  
Thus, we have
\begin{align}\label{eq_sigm}
\sigma_{\I}
=
\bigoplus_{\mu}
\frac{\mathbb{1}_{\mu}}{\dim(\mu)}\otimes X_\mu,
\qquad X_\mu\geq0.
\end{align}
To find the encoder $\Tr_{\M}(E_{\I\M}) = \sigma_{\I}$, we need to find the purification of Eq.~\eqref{eq_sigm}. A canonical purification of such a symmetric operator $\sigma_{\I}$ is unique up to an isometry on $\M$ and is precisely the \emph{port-based teleportation (PBT)} resource state
\begin{align}
\ket{\phi_{\mathrm{PBT}}}
\coloneqq
\bigoplus_{\mu}
\sqrt{p_\mu}\,
\ket{\phi^+(\mu)} \otimes \ket{\psi_{m_\mu}},
\end{align}
where
\begin{align}
\ket{\phi^+(\mu)}
\coloneqq
\frac{1}{\sqrt{\dim(\mu)}}
\sum_{i=1}^{\dim(\mu)}
\ket{i_\mu i_\mu}
\;\in\;
\mathcal{H}_{\mu}\otimes\mathcal{H}_{\mu},
\end{align}
$\{p_\mu\}$ is a probability distribution, and $\ket{\psi_{m_\mu}}$ is a normalized state in $\mathcal{H}_{\mu} \otimes \mathcal{H}_{\mu}$, i.e., the maximally entangled state within each irrep block.
Hence, we conclude that the encoder is $E_{\I\M}=\ketbra{\phi_{\mathrm{PBT}}}{\phi_{\mathrm{PBT}}}_{\I\M}$. 

Now, let us recall that for the maximally entangled state we have the following property: $(\mathbb{1} \otimes U)\ket{\phi^{+}} =(U^T \otimes \mathbb{1})\ket{\phi^{+}}$. In the case of a product representation, a similar expression holds for the PBT state. Namely,
\begin{align}\label{PBT_cov}
    (\mathbb{1}\otimes U^{\otimes k})\ket{\phi_{\mathrm{PBT}}}
=
((U^{T})^{\otimes k}\otimes \mathbb{1})\ket{\phi_{\mathrm{PBT}}}.
\end{align}
This identity holds because any tensor product representation
of the form $U^{\otimes k}$ can be decomposed, by Schur-Weyl duality, as
\begin{align}
    U^{\otimes k}
\;\cong\;
\bigoplus_{\mu\in \operatorname{irrep}(U^{\otimes k})}
U(\mu)\otimes \mathbb{1}_{m_\mu},
\end{align}
for some unitaries $U(\mu)$ acting on the irreducible representation spaces
$\mathcal{H}_{\mu}$ and multiplicity spaces of dimension $m_\mu$. Since the PBT resource state is built as a direct sum of maximally entangled states $\ket{\phi^+(\mu)}$ over all irreducible components, it inherits the same
covariance property in each irrep sector. Hence, the identity~\eqref{PBT_cov} follows directly from the property of maximally entangled states and the Schur--Weyl decomposition. Furthermore, from identity~\eqref{PBT_cov} we see immediately that the transposition protocol provides a realization of the SAR protocol with identical performance, for both deterministic approximate and probabilistic exact realizations.

Let us now show that, given an SAR protocol, we can immediately construct a transposition protocol with identical performance. For an SAR protocol, to ensure state-independent performance over all pure input states $\ket{\psi}$, we must impose the covariance condition
\begin{align}
\bigl[\Tr_{\M}(E_{\I\M}),\, U^{\otimes k}_{\I}\bigr] = 0,
\qquad \forall\,U\in SU(d).
\end{align}
This symmetry once again enforces the port-based teleportation form of the encoding state, $E_{\I\M}=\ketbra{\phi_{\mathrm{PBT}}}{\phi_{\mathrm{PBT}}}_{\I\M}$. Using identity~\eqref{PBT_cov}, we again conclude that the SAR construction provides an immediate construction for the transposition protocol with identical performance.
Both tasks therefore admit equivalent implementations and achieve the same performance, regardless of probabilistic or deterministic figures of merit.
\end{proof}

\section{Proofs for unitary transposition, conjugation, and inversion with multiple calls}\label{proofsk}

As mentioned in the main text, in Sec.~\ref{sec:application_k>1}, the proof for transposition is contained in the proof of Thm.~\ref{thm:SAR_transp_Eq}. Let us thus proceed to proving proofs for conjugation and inversion.

\ThmConjgk*

\begin{proof}
Without loss of generality, we can assume that the input state is given by
\begin{align}
    \label{eq:input_state}
\ket{\psi}_{\P} = \ket{+_d} = \frac{1}{\sqrt{d}} \sum_{s=0}^{d-1} \ket{s}
\end{align}
using the computational basis $\{\ket{s}\}_{j=0}^{d-1}$.
We show this no-go theorem holds even if we restrict the input unitary to be the diagonal unitary given by
\begin{align}
    \label{eq:diagonal_unitary}
    U= \sum_{s=0}^{d-1} e^{i\phi_{s}}\ketbra{s}{s}
\end{align}
using an arbitrary phases $\phi_s \in [0, 2\pi)$.
The defining equation of this protocol can be written as
\begin{equation}
    \smap{S}^{\psi}(\map{U}^{\otimes k} ) = p\overline{U}\ketbra{\psi}\overline{U}^\dagger.
\end{equation}
Since any general supermap can be simulated by a parallel supermap with non-zero probability~\cite{Quintino2019prob},
we can assume that the supermap is parallel:
    \begin{equation}
    \smap{S}^{\psi}(\map{U}^{\otimes k} ) = \map{D}^{\psi} \circ [\map{U}^{\otimes k} \otimes \map{\id}] \circ E^{\psi},
\end{equation}
where $E^{\psi}$ is a quantum state corresponding to the encoder and $\map{D}^{\psi}$ is a quantum channel corresponding to the decoder.
Without loss of generality, we can assume that the encoder $\map{E}^{\psi}$ is given by a pure state $\ket{E^\psi} \in \mathcal{H}_{\M} \otimes \mathcal{H}_{\I}$ using an auxiliary space $\M$ and the decoder is an isometry $D^\psi: \mathcal{H}_{\O} \otimes \mathcal{H}_{\M} \to \mathcal{H}_{\M'} \otimes \mathcal{H}_{\F}$ followed by a projective measurement $\{\ketbra{m}\}_m$ on $\mathcal{H}_{\M'}$.
By postselecting the measurement outcome $m$ on $\M'$, we have
\begin{equation}\label{Eq:def_conjg_k}
(\bra{m}_{\M^{\prime}} \otimes \mathbb{1}_{\O})D^{\psi}_{\O \M \rightarrow M^{\prime} \F} \left [ U^{\otimes k}_{\I \rightarrow \O} \otimes \mathbb{1}_{\M} \right ]\ket{E^{\psi}}_{\M \I} = e^{i\theta_{U, \psi, m}}\sqrt{p(U,m)} \overline{U}\ket{\psi},
\end{equation}
where $e^{i\theta_{U, \psi, m}}$ is a phase factor that may depend on $U$, $\psi$ and $m$, and $p(U, m)$ is the success probability that may depend on $U$ and $m$ such that $p = \sum_m p(U, m)$. Let us note that even though the probability of full protocol was assumed to be independent of $U$, a success superinstrument can be decomposed into  can always decompose it into rank-one superinstrument elements $S_1, \ldots, S_n$ such that $S = S_1 + \cdots + S_n$. Hence, the defining equation Eq.~\eqref{Eq:def_conjg_k} on the pure state is a postselection to on one of these rank-one superinstrument elements and probability of that postselection can, in general, depend on $U$ and $m$, hence $p(U,m)$. 
We decompose the isometry $D^\psi$ in the computational basis as
\begin{equation}
    \label{eq:D_psi_m}
    ( \bra{m}_{\M^{\prime}} \otimes \mathbb{1}_{\O})D^{\psi}_{\O \M \rightarrow M^{\prime} \F} = \sum_{\vec{i}jl} D^{m, \psi}_{\vec{i}jl} \ket{l}_{\F}\bra{\smash{\vec{i}}j}_{\O \M},
\end{equation}
where the vector $\vec{i}$ corresponds to a tuple $\vec{i} = [ i_{1}, ..., i_{k}]$ and $D^{m, \psi}_{\vec{i}jl}\in \mathbb{C}$. Similarly, we also decompose the encoder state in the computational basis as
\begin{align}
    \label{eq:E_psi}
        \ket{E^{\psi}}_{\M \I} &= \sum_{m^{\prime}\vec{n}} E^{\psi}_{m^{\prime} \vec{n}}\smash{\ket{m^{\prime} \vec{n}}}_{\M \I} 
\end{align}
using $E^{\psi}_{m^{\prime} \vec{n}}\in \mathbb{C}$.
By substituting Eqs.~\eqref{eq:input_state}, \eqref{eq:diagonal_unitary}, \eqref{eq:D_psi_m}, and \eqref{eq:E_psi} into Eq.~\eqref{Eq:def_conjg_k}, we have
\begin{align}
    \sum_{\vec{s}jl} D^{m, \psi}_{\vec{s}jl} E^\psi_{\vec{s} j} e^{i\sum_{a=1}^{k} \phi_{s_a}} \ket{l} = e^{i\theta_{U, \psi, m}}\sqrt{\frac{p(U, m)}{d}} \sum_{l} e^{-i\phi_l} \ket{l},
\end{align}
By projecting on some $l\in \{1, \ldots, d\}$ we have
\begin{align}
    \label{eq:conjg_final}
    e^{i\theta_{U, \psi, m}}\sqrt{\frac{p(U, m)}{d}} = \sum_{\vec{s}j} D^{m, \psi}_{\vec{s}jl} E^\psi_{\vec{s} j} e^{i\sum_{a=1}^{k} \phi_{s_a} + i\phi_l}.
\end{align}
Defining $\vec{\gamma}_{\vec{s}}$ by the multiplicity of each value in $\vec{s}$, i.e., $\vec{\gamma}_{\vec{s}} = (\gamma_0, \ldots, \gamma_{d-1})$ for $\gamma_i$ being the number of $a\in \{1, \ldots, k\}$ such that $s_a = i$, we can rewrite $\sum_{a=1}^{k} \phi_{s_a}$ as $\vec{\gamma}_{\vec{s}} \cdot \vec{\phi}$, where $\vec{\phi} = (\phi_0, \ldots, \phi_{d-1})$.
We rewrite Eq.~\eqref{eq:conjg_final} as a linear combination of exponential functions $e^{i \vec{\gamma} \cdot \vec{\phi}}$ for $\vec{\gamma}\in \mathbb{Z}^d$:
\begin{align}
    \label{eq:conjg_final_2}
    e^{i\theta_{U, \psi, m}}\sqrt{ \frac{p(U, m)}{d}} = \sum_{\vec{\gamma}} A^{m, \psi}_{\vec{\gamma}, l} e^{i\vec{\gamma}\cdot\vec{\phi}},
\end{align}
where $A^{m, \psi}_{\vec{\gamma}, l}$ is defined by
\begin{align}
    A^{m, \psi}_{\vec{\gamma}, l} \coloneqq \sum_{\vec{s}: \vec{\gamma}_{\vec{s}} = \vec{\gamma}-\vec{e}_l} \sum_j D^{m, \psi}_{\vec{s}jl} E^\psi_{\vec{s} j},
\end{align} 
and $\vec{e}_l$ is a $d$-dimensional vector whose $l$-th entry is $1$ and the others are $0$.
Since $\sum_{i=0}^{d-1} \gamma_{\vec{s}, i} = k$ and $\gamma_{\vec{s},i} \geq 0$ holds for all $i\in \{0, \ldots, d-1\}$, $\vec{\gamma}$ appearing in the summation in Eq.~\eqref{eq:conjg_final_2} satisfies $\sum_{i=0}^{d-1} \gamma_i = k+1 <d$, $\gamma_l\geq 1$, and $\gamma_i \geq 0$ for all $i\in \{0, \ldots, d-1\}$.
Due to the pigeonhole principle, for any $\vec{\gamma}$, there exists $l^*_{\vec{\gamma}}$ such that the $l^*_{\vec{\gamma}}$-th element of $\vec{\gamma}$ is $0$.
Since $l$ in Eq.~\eqref{eq:conjg_final_2} is arbitrary, we can choose $l = l^*_{\vec{\gamma}}$.
Then, we obtain
\begin{align}
    \sum_{\vec{\gamma}} A^{m, \psi}_{\vec{\gamma}, l} e^{i\vec{\gamma}\cdot\vec{\phi}} = \sum_{\vec{\gamma}} A^{m, \psi}_{\vec{\gamma}, l^*_{\vec{\gamma}}} e^{i\vec{\gamma}\cdot\vec{\phi}}.
\end{align}
Let us now notice that $\vec{\gamma}$ appearing in the sum of the right-hand side satisfies $\gamma_{l^*_{\vec{\gamma}}} \geq 1$, due to the linear independence of the exponential functions $e^{i\vec{\gamma}\cdot \vec{\phi}}$ for $\vec{\gamma}\in \mathbb{Z}^d$. Hence, we obtain
\begin{align}
    A^{m, \psi}_{\vec{\gamma}, l} = 0.
\end{align}
Since this holds for all $\vec{\gamma}$, $l$ and $m$, from Eq.~\eqref{eq:conjg_final_2}, we obtain $p(U, m) = 0$ for all diagonal unitary $U$ and $m$. Since $p = \sum_{m}p(U, m)$, and $p(U, m) = 0$ for every $m$, we conclude that $p= 0$.
\end{proof}

\ThmInvgk*

\begin{proof}
   The proof follows directly from the proof for conjugation. In the previous argument for conjugation, we specifically used unitaries that are diagonal in the computational basis, while assuming that the known input state is not an eigenstate of that basis. For diagonal unitaries, however, transposition acts trivially, i.e. $U^T = U.$ Therefore, inversion coincides with conjugation on this subgroup, since $U^{-1} = U^\dagger = (U^T)^\dagger = (U^\dagger)^T. $ Thus, for diagonal unitaries, inversion and conjugation are equivalent transformations, and the same sequence of steps used in the conjugation proof applies.
\end{proof}

\addcontentsline{toc}{section}{References}

\nocite{apsrev42Control} 


%

\end{document}